\newcommand{\ind}[1]{\mathbbm{1}\left\{#1\right\}}
\newcommand{\re}{\mathbb{R}}
\newcommand{\CC}{\texttt{C++}}
\newcommand{\rd}[1]{\mathbb{R}^{#1}}
\newcommand{\rhatbar}{\overline{ \widehat R} }
\newcommand{\scaled}[1]{\scaleto{#1}{4pt}}
\newcommand{\ranki}{\widehat{R}_i}
\newcommand{\norm}[1]{\left\lVert#1\right\rVert}
\newcommand{\eqd}{\stackrel{\text{d}}{=}}
\newcommand{\cond}{\stackrel{\text{d}}{\rightarrow}}
\newcommand{\conp}{\stackrel{\text{p}}{\rightarrow}}
\newcommand{\E}[2]{{\rm E}_{#1}\left(#2\right)}
\newcommand{\Eee}[2]{{\rm E}_{#1}(#2)}
\newcommand{\Var}[1]{{\rm V}ar\left(#1\right)}
\newcommand{\Varr}[1]{{\rm V}ar(#1)}
\newcommand{\limn}{\lim_{N\rightarrow \infty}}
\newcommand{\iid}{i.i.d.}
\newcommand{\Biggg}{\bBigg@{4}}
\newcommand{\Vast}{\bBigg@{5}}
\DeclareMathOperator*{\argmax}{ {\rm argmax}}
\DeclareMathOperator*{\INT}{ {\rm INT}}
\DeclareMathOperator*{\argmin}{ {\rm argmin}}
\DeclarePairedDelimiter\floor{\lfloor}{\rfloor}
\newtheorem{theorem}{Theorem}
\newtheorem{ass}{Assumption}
\newtheorem{definition}{Definition}
\providecommand{\keywords}[1]
{
  \small	
    \begin{center}\textbf{\textit{Keywords---}} #1\end{center}
}
\title{Robust multiple change-point detection for multivariate variability using data depth}
 \author{Kelly Ramsay, Shoja'eddin Chenouri}
\date{November 2021}
\begin{document}

\maketitle
\doublespacing
\begin{abstract}
    In this paper, we introduce two robust, nonparametric methods for multiple change-point detection in the variability of a multivariate sequence of observations.  
    We demonstrate that changes in ranks generated from data depth functions can be used to detect changes in the variability of a sequence of multivariate observations. 
    In order to detect more than one change, the first algorithm uses methods similar to that of wild-binary segmentation. 
    The second algorithm estimates change-points by maximizing a penalized version of the classical Kruskal Wallis ANOVA test statistic. 
    We show that this objective function can be maximized via the well-known PELT algorithm. 
    Under mild, nonparametric assumptions both of these algorithms are shown to be consistent for the correct number of change-points and the correct location(s) of the change-point(s). 
    We demonstrate the efficacy of these methods with a simulation study, where we compare our new methods to several competing methods. 
    We show our methods outperform existing methods in this problem setting, and our methods can estimate changes accurately when the data are heavy tailed or skewed. 
\end{abstract}
\keywords{Depth function, Multiple change-point, Multivariate variability, Nonparametric}
\section{Introduction}
The manufacturing industry motivated the development of change-point methods, that is, methods for detecting and dating distributional changes in a sequence of observations \citep{Page1954}. 
Change-point methods have since been applied to a much wider variety of research areas including climate change \citep{Reeves2007}, speech recognition \citep{Aminikhanghahi2017} and finance \citep{Wied2012}, among others. 
With respect to a sequence of observations, the terms `structural break' and `change-point' refer to time points in the sequence during which there is a sudden change in the distribution from which the data is being generated. 
Change-point detection can be separated into two settings: `online' and `offline'. 
In the online setting, the data are being received by the analyst one datum at a time, and the goal is to detect a change as soon as possible, without too many false alarms. 
In the offline setting, the analyst has access to the entirety (or at least enough) of the data set, and the goal is to identify if and when changes occurred over the course of observation. 
In this work, we focus on the offline setting, for a summary of nonparametric methods in the online setting see \citep[][]{Chakraborti2019}. 

There are different variants of the offline change-point problem. 
Instead of identifying general changes in distribution, one might only be interested in identifying changes in the mean of the sequence \citep{Chenouri2019, Fryzlewicz2014}, changes in the correlations of the sequence \citep{Galeano2014} or changes in the covariance matrix of the sequence \citep{Chenouri2020DD, Wang2021}. 
One may also be interested in another type of distributional change entirely. 
In this paper, we aim to detect changes in the variability of a sequence of multivariate observations. 

To elaborate, suppose that the analyst suspects that there may exist time point(s) at which there are increases or decreases in the variance of one or more variates, or that there may exist time point(s) at which there is a change in the strength of the relationship between at least two variates. 
One can think of an analyst looking for changes in the variability and/or correlation strength of several financial asset returns. 
We call this type of change-point a multivariate variability change-point. 
For example, changes in a commonly used norm of the covariance matrix would be considered a change in variability.
Another way to interpret changes in variability is through how the data cloud is affected; changes in variability produce changes in the magnitude and/or shape of the data cloud, rather than say, rotations or translations of the data cloud. 
A change in variability is the result of expansions and/or contractions of one ore more parameters of the covariance matrix. 
Note that this type of change-point differs from a change in the covariance matrix; it doesn't include cases where the covariance matrix is multiplied by an orthonormal matrix. 
For example, it does not include the situation where the correlation between two variates switches signs. 

In order to detect changes in the variability of the data, we first transform the sequence of observations to a univariate sequence via data depth ranks, as is done by \cite{Chenouri2020DD} for the at most one change-point problem. 
We then try to detect multiple changes in the mean of this univariate sequence. 
We introduce two methods to do this task, the first of which is a wild binary segmentation type algorithm based on rank CUSUM statistics \citep{Fryzlewicz2014, Chenouri2020DD}. 
The second method is based on finding the set of change-points which maximize a penalized version of the classical Kruskal-Wallis test statistic used in nonparametric ANOVA \citep{Kruskal1952}. 
The implementation of this second method is based on the ``pruned exact linear time'' algorithm \citep{Killick2012}. 
To see the benefits of our proposed methods, we must first review existing methods.


There is a vast literature relating to the change-point problem, going back almost a century \citep{Shewhart1931, Page1954}. 
The literature includes a variety of approaches for both univariate, multivariate, single and multiple change-point detection methods \citep[see the following review papers][and the references therein]{Reeves2007, Aue2013, Aminikhanghahi2017}. 
Much of the literature, especially in the multivariate setting, has focused on the detection of shifts in the mean of the process, e.g., \citep{Truong2018}.

Considerably less attention has been given to shifts in the second order behaviour of a sequence of observations. 
When second order change-points in the multivariate setting have been studied, the bulk of the literature has been concerned with detecting changes in the correlation structure. 
\cite{Galeano2007} proposed a parametric framework for detecting changes in the correlation and variance structure of a multivariate time series, using both a likelihood ratio and a CUSUM statistic approach. 
\cite{Wied2012} proposed a nonparametric approach based on cumulative sums of sample correlation coefficients to detect a single change-point in the correlation structure of bivariate observations. 
This was later extended to multiple change-points \citep{Galeano2014} and further to the multivariate setting \citep{Galeano2017}. 
\cite{Posch2019} has further extended the methods of \citep{Galeano2017} to the high-dimensional setting by first applying dimension reduction techniques. 
One draw-back to the methods of \citep{Galeano2014} is that they assume constant variances and expectations over time. 
Rather recently, a few alternative methods have been proposed, which include methods related to eigenvalues \citep{Bhattacharyya2018}, residuals \citep{Duan2018}, semi-parametric CUSUM statistics \citep{Zhao2017} and kernel methods \citep{Cabrieto2018}.

Literature related to estimating a change-point in the covariance matrix is quite recent, and relatively sparse. 
\cite{Aue2009} take a CUSUM statistic approach similar to that of \cite{Galeano2014}. 
\cite{Kao2018} suggested a CUSUM statistic procedure based on eigenvalues. 
\cite{Chenouri2020DD} considered a CUSUM based on ranks generated by data depth functions for detecting a single change-point.
The high-dimensional setting has been tackled by \cite{Dette2018} and \cite{Wang2021}. 
\cite{Dette2018} considers a two-step procedure based on dimension reduction techniques and a CUSUM statistic. 
\cite{Wang2021} is the only paper, to the best of our knowledge, seeking to identify multiple change-points, rather than a single change-point. 
They compare binary segmentation procedures \citep{Venkatraman1992} and wild binary segmentation procedures \citep{Fryzlewicz2014} based on a CUSUM statistic, under the assumption of sub-Gaussian observations.

\cite{Fryzlewicz2014} developed wild binary segmentation as an improvement on the well-known univariate multiple change-point algorithm binary segmentation \citep{Venkatraman1992}. 
Binary segmentation has been used to extend single change-point algorithms to multiple change-point algorithms in many settings \citep[such as][]{Aue2013, Galeano2014, Galeano2017, Duan2018, Wang2021,Chenouri2019}. 
The extension and study of wild binary segmentation in the multivariate setting, with respect to changes in the covariance structure of a time series has only been done by \cite{Wang2021}. 

In addition to methods where the change-point type is specified, there exists several nonparametric algorithms designed to detect general changes in the distribution of the observations. 
\cite{Matteson2014} studied the e-divisive algorithm, which can detect the location and number of change-points in the distribution of a sequence of multivariate observations. 
Their method is based on distances between characteristic functions and a hierarchical clustering inspired iteration. 
Their methods are extended in \cite{Zhang2017}, where a pruning component is added to an existing, dynamic programming-based change-point algorithm. 
These authors apply this pruning method to the e-divisive algorithm and the kernel change-point methods of \cite{2012Arlot}. 
This group of methods are implemented in the \texttt{ecp R} package \citep{ecp_package}. 
At first the rank-based multiple change-point method of \cite{2011arXiv1107.1971L} may seem similar to our methods, but their procedure requires the number of change-points to be fixed. 
Further, their methods are based on component-wise ranks, which have several known issues, such as a lack of transformation invariance \citep{bickel1965}. 


The change-point literature is therefore lacking methods for specifically detecting multiple changes in the variability of multivariate data. 
Many of the papers discussed focus on the at most one change problem, or are not designed to detect changes in variability, or even changes in the covariance matrix of the data. 
The only directly comparable paper is that of \citep{Wang2021}. 
Even this method is designed for the high dimensional setting; in our simulation study, when the dimension is low to moderate, our method outperforms this method. 
The other comparable methods would be those that detect multiple, general changes in the distribution of the data, such as those of \citep{Zhang2017}. 
We demonstrate that our method is able to outperform these general methods in simulation, when the change-points are all variability change-points. 
This is not surprising; our method sacrifices generality for accuracy. 

In addition, the aforementioned change-point methods do not consider robustness to outlying observations. 
For example, many of the existing methods assume that the data are sub-Gaussian \citep{Dette2018, Wang2021}. 
Furthermore, existing papers often present no simulation results concerning a method's performance under heavy tailed data. 
For example, we show in simulation that the methods of \citep{Matteson2014, Wang2021} do not perform well when the data are heavy tailed. 
By contrast, our theoretical and simulation results show that our method works well in scenarios where the data are heavy tailed. 

The rest of the paper is organized as follows, Section \ref{sec::prel} introduces the data model, data depth and depth-based ranks. 
Section \ref{sec::meth} outlines the proposed change-point detection procedures. 
Section \ref{sec::theo} presents consistency results (with rates) for both of our presented methods.  
Section \ref{sec:sim} presents simulation results, including a discussion of the tuning parameters. 
We test the proposed change-point methods in a variety of scenarios and compare the methods to one another as well as to the methods of \citep{Matteson2014, Zhang2017,  Wang2021}. 
In Section \ref{sec::da} we analyze four European daily stock returns. 
This is the same data set analyzed by \cite{Galeano2017} and we compare our results to theirs. 
\section{Preliminary material}\label{sec::prel}
\subsection{Data depth functions}\label{sec::depth}
Data depth functions, among other things, provide a method of defining quantiles and ranks for multivariate data, which in turn facilitates the extension of univariate methods based on these functions to the multivariate setting and beyond. 
A data depth function $\mathcal{D}(\cdot ;F)\colon \rd{d}\rightarrow\re$ assigns each value in $x\in \rd{d}$ a real number which describes how central $x$ is with respect to some distribution $F$ (over $\rd{d}$). 
Often $F=F_{*,\scaled{N}}$, the empirical distribution of the data and the depth values $\mathcal{D}(x ;F_{*,\scaled{N}})$ describe how central or `deep' $x$ is in the sample. 

Sample ranks based on data depth functions can be calculated as follows. Suppose that $X_1,\dots,X_{\scaled{N}}$ is a random sample and $F_{*,\scaled{N}}$ is the associated empirical cumulative distribution function, then the quantity 
\begin{equation}
     \ranki\coloneqq \#\{X_j\colon \mathcal{D}(X_j;F_{*,\scaled{N}})\leq \mathcal{D}(X_i;F_{*,\scaled{N}})\},\ j\in\{1,\ldots,N\}
     \label{eqn:d_ranks}
\end{equation}
represents the depth-based rank of $X_i$. 
The interpretation of $\ranki$ is slightly different than that of univariate ranks, because here the observations have a high rank when they are deep inside the data cloud, rather than on the extreme end of the data. 
In fact, center outward ranks have been used for detecting differences in univariate variability before \citep{Sieg1960, ansari1960}.

Many definitions of depth functions exist \citep{ Tukey1974,Dyckerhoff1996, Zuo2003, Serfling2006, RAMSAY201951} and so we limit ourselves to three popular ones. 
The first of these is halfspace depth \citep{Tukey1974}, the seminal depth function. 
\begin{definition}[Halfspace depth.] Let $S^{d-1}\coloneqq \{x\in \rd{d} \colon \norm{x}=1\}$ be the set of unit vectors in $\rd{d}$. Define the halfspace depth of a point $x\in \rd{d}$ with respect to some distribution $X\sim F$ as,
\begin{equation}
    \mathcal{D}_H(x;F)\coloneqq\inf_{u\in S^{d-1}} \Pr(X^\top u\leq x^\top u)= \inf_{u\in S^{d-1}} F_u(x) ,
\end{equation}
where $F_u$ is the distribution of $X^\top u$ with $X\sim F$ and $\norm{\cdot}$ represents the Euclidean norm. 
\label{def::hs}
\end{definition}
Halfspace depth is the minimum of the projected mass below the projection of $x$, over all directions. 
Halfspace depth satisfies many desirable properties of a depth function such as affine invariance, consistency, maximality at center and decreasing along rays \citep[see][for more details]{Zuo2000}.
It should be mentioned that halfspace depth is frequently cited as being computationally expensive \citep{Serfling2006}, though recently an algorithm for computing half-space depth in high dimensions has been proposed \citep{Zuo2019}.

Another, less computationally prohibitive depth function is spatial depth \citep{Serfling2002}. 
Let $u\in S^{d-1}$ as defined in Definition \ref{def::hs}. 
Spatial depth is based on spatial quantiles: 
$$\mathcal{Q}(u;F)\coloneqq\min_{y} \E{F}{\norm{X-y}+(X-y)^\top u-\norm{X}-X^\top u},$$
where $\mathrm{E}_F$ represents expectation with respect to a distribution $F$. 
Spatial quantiles are extensions of univariate quantiles. 
Inverting this function at a point $x\in \rd{d}$ gives a measure of outlyingness: $\norm{\mathcal{Q}^{-1}(x;F)}$ \citep{Serfling2002}. 
Let 
$$S(x)\coloneqq \left\{\begin{array}{lr}
\frac{x}{\norm{x}} &  x\neq 0\\
0 &  x=0
\end{array}\right .  $$ 
and then define $$\norm{\mathcal{Q}^{-1}(x;F)}\coloneqq \norm{\E{F}{S(x-X)}}.$$
We can now define spatial depth.
\begin{definition}[Spatial Depth]
Define the spatial depth $\mathcal{D}_S$ of a point $x\in \rd{d}$ with respect to some distribution $F$ as
\begin{equation}
    \mathcal{D}_S(x;F)\coloneqq1-\norm{\mathcal{Q}^{-1}(x;F)}.
\end{equation}
\end{definition}
One of the main weaknesses of spatial depth is that it is only invariant under similarity transformations; not under all affine transformations. 
One way to circumvent this issue is to replace $\norm{x}$ with the generalised norm $\norm{x}_\Sigma\coloneqq\sqrt{x^\top \Sigma^{-1} x}$, where $\Sigma$ is the covariance matrix related to $F$. The depth function based on this norm is known as Mahalanobis depth.
\begin{definition}[Mahalanobis Depth]
Define the Mahalanobis depth $\mathcal{D}_M$ of a point $x\in \rd{d}$ with respect to a distribution $F$ as
\begin{equation}
    \mathcal{D}_M(x;F)\coloneqq\frac{1}{1+\norm{x-\E{F}{X}}^2_\Sigma}.
\end{equation}
\end{definition}
One criticism of Mahalanobis depth is that $\Sigma$ and $\E{F}{X}$ are usually replaced by estimators which are not robust, such as the sample covariance matrix and sample mean, respectively. 
In order for the Mahalanobis depth function to remain robust, it is necessary to use robust estimators of $\Sigma$ and $\E{F}{X}$. 
Examples of such estimators are the re-weighted MCD estimators \citep{Rousseeuw1990}. 
We denote the depth values computed using these MCD estimators by $\mathcal{D}_{M75}$, where the 75\% comes from the fact that we are using the $25\%$ breakdown version of the MCD estimators. 
Lastly, note that sample versions of all four of the depth functions discussed in this section can be obtained by replacing expectations and probabilities with sample means and probabilities based on the empirical distribution, respectively. 
\subsection{The data model, variability changes and their relation to depth ranks}\label{sec::model}
We now describe the change-point model that we will focus on. 
Suppose that $X_1,\dots,X_{N}$ is a sequence of random variables such that $X_{k_{i-1}+1},\dots, X_{k_{i}}$ are a random sample from distribution $F_i$, with, $k_0=0<k_1<\dots<k_\ell<k_{\ell+1}=N$ for some fixed, unknown $\ell$. 
Suppose that $k_i/N\rightarrow \theta_i$ as $N\rightarrow\infty$ for all $i\in \{1,\ldots,\ell\}$. 
Let $\vartheta_i=\theta_i-\sum_{j=0}^{i-1}\theta_j$ be the approximate fraction of the observations coming from $F_i$ and define
$$F_*\coloneqq \vartheta_1 F_1+\vartheta_2 F_2+\vartheta_3 F_3+\dots+\vartheta_\ell F_\ell+\vartheta_{\ell+1} F_{\ell+1}.$$
In this paper, the aim is to estimate $\ell$ and each $k_i$; the correct number of change-points along with their location, given only the sample. 
Let $\Sigma_j$ represent the covariance matrix corresponding to the distribution $F_j$, $\Sigma_*$ represent the covariance matrix corresponding to the distribution $F_*$ and let $F_{*,\scaled{N}}$ denote the empirical distribution invoked by the combined sample $X_1,\dots,X_{N}$.
We further suppose that for any $i=1,\ldots,\ell,$ $F_i$ differs from $F_{i+1}$ only in variability. 

In order for our methods to work, we must ensure that a change in variability is reflected by a change in the mean of the combined sample depth values. 
The relationship between variability and combined sample depth values has already been explored by several other authors \citep{2021arXiv210610173R, li2004}, but we give an intuitive explanation below. 
The reader may also view a short simulation study of the distribution of the depth ranks under different covariance changes in Appendix \ref{sim::rankd}. 

The fact that changes in the variability of the data produce a change in the mean of the data depth values is guaranteed from the construction of depth functions, specifically the maximality at center property combined with the quasi-concavity property. Since we assume that the pre-change and post-change data have the same location, we can also assume that the combined sample depth function will be maximized roughly at that location. 
Additionally, recall that a change in variability is a change in the magnitude and/or shape of the post-change data cloud. 
The change in the magnitude and/or shape of the data cloud will result in the post-change data being, on average, a different distance from the centre. 
Due to the quasi-concavity property, this change in distance will result in the post-change data having higher/lower combined sample depth values, on average.

 \begin{figure}[t]
\begin{minipage}[c]{.32\textwidth} 
\centering%
\includegraphics[width=.9\textwidth,trim={0 00 0 0},clip]{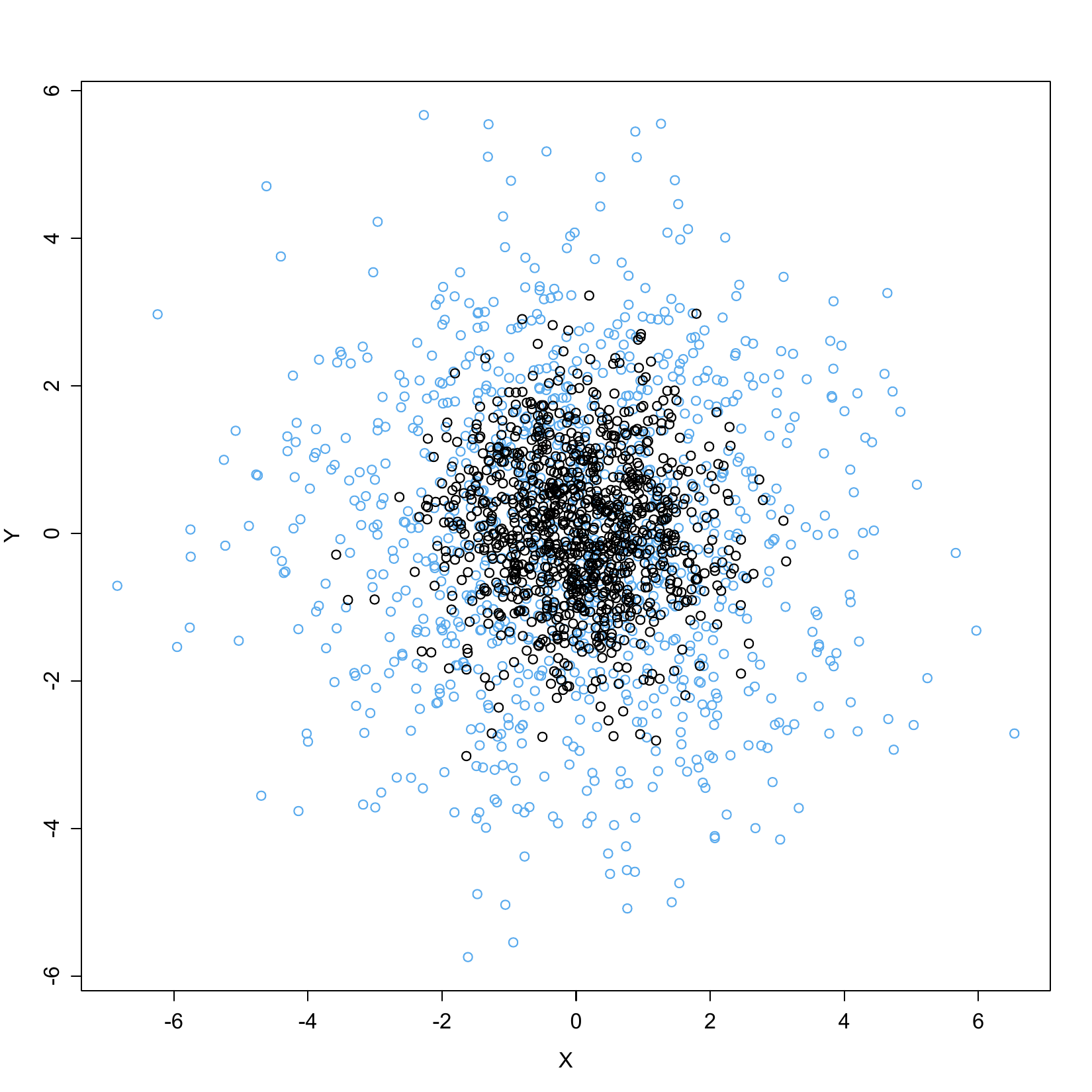}
\caption*{(a)}
\end{minipage}\hfill
\begin{minipage}[c]{.32\textwidth} 
\centering%
\includegraphics[width=.9\textwidth,trim={0 0 0 0},clip]{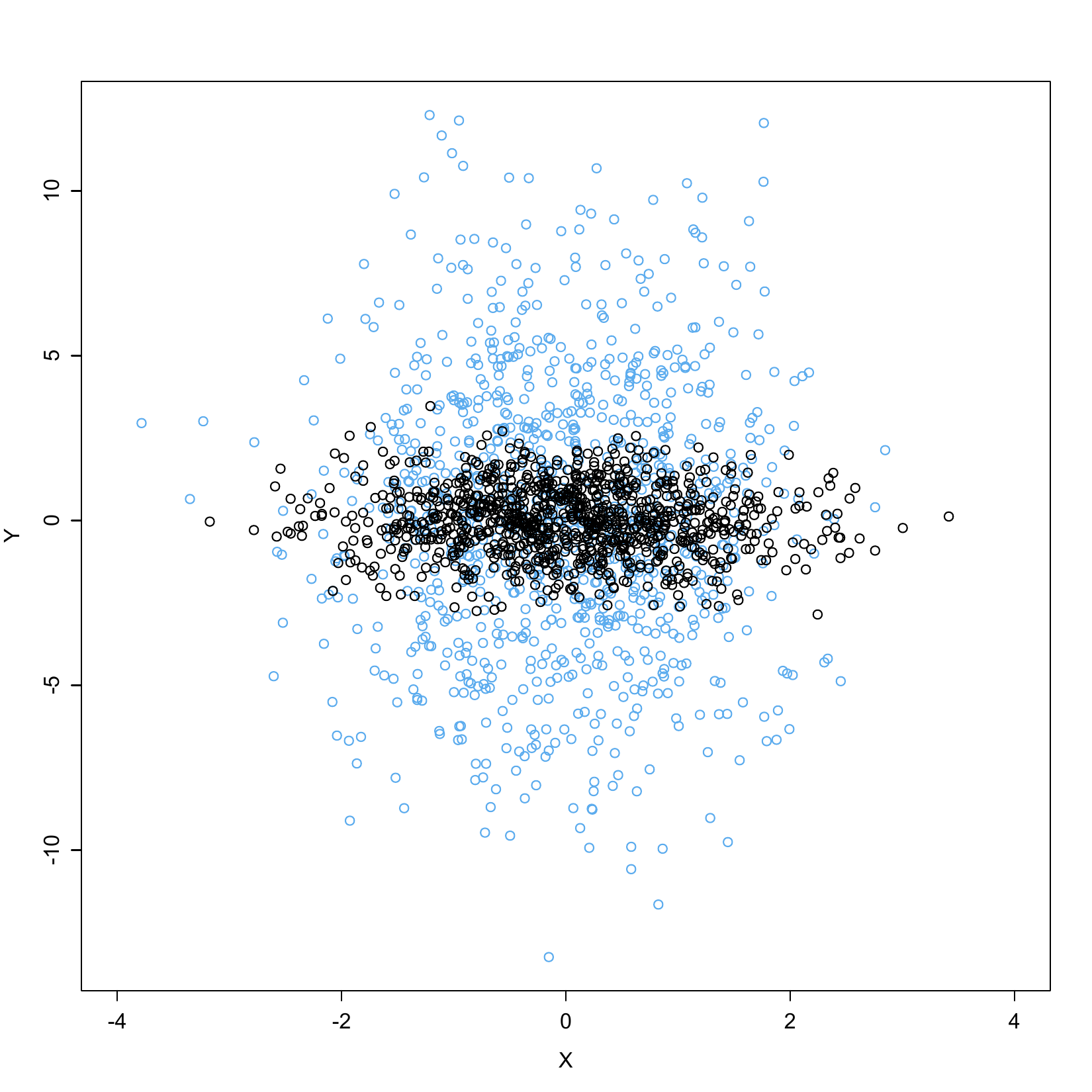}
\caption*{(b)}
\end{minipage}\hfill
\begin{minipage}[c]{.32\textwidth} 
\centering%
\includegraphics[width=.9\textwidth,trim={0 0 0 0},clip]{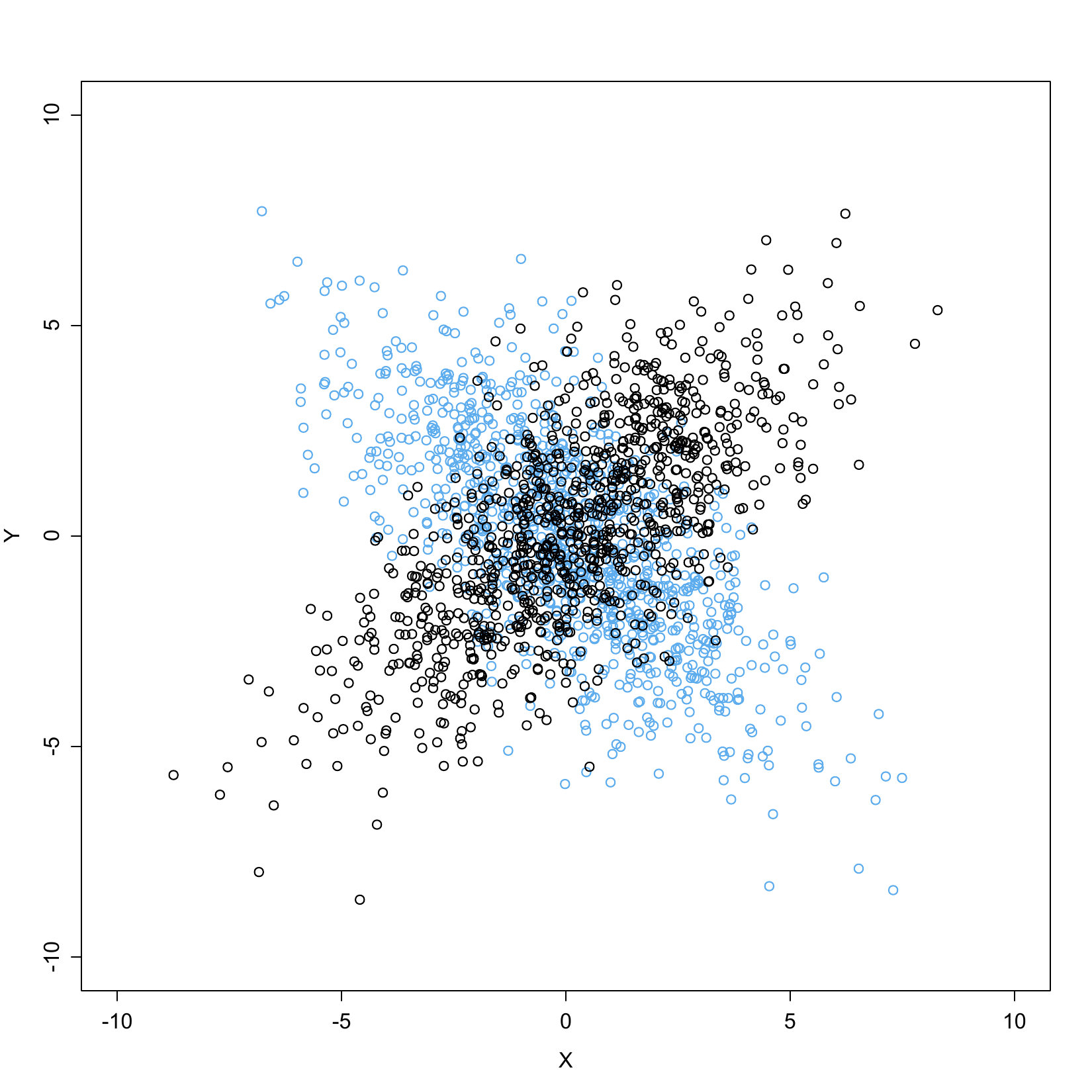}
\caption*{(c)}
\end{minipage}\hfill
\caption{Two samples of 1000 points with different covariance matrices. The differences can be characterised as (a) an expansion difference (b) a sub-matrix expansion difference (c) a sign change. Notice the magnitude or shape of the data cloud changes in panes (a) and (b), but not in pane (c). }%
\label{fig:depth}%
\end{figure}
For example, panel (a) of Figure \ref{fig:depth} shows two bivariate normal samples overlaid. 
The blue sample has an expanded covariance matrix relative to the black sample. 
Notice that both data clouds have the same shape, but the magnitude differs between them. 
It is easily seen that the black points are more central, relative to the shared center of the samples.
Therefore, when we compute the depth values with respect to the combined sample, the black points will have, on average, higher depth values. 
Panel (b) shows again two bivariate normal samples overlaid, but this time the expansion is only in one parameter of the covariance matrix. 
In this case, the shapes of the two data clouds differ, which results in the black points being more central.
This results in the black points having generally higher depth values in the combined sample. 
Panel (c) of Figure \ref{fig:depth} shows again two bivariate normal samples overlaid, except only the sign of the correlation between the two variates differs between the two samples. 
Notice that the data cloud does not change in size or shape, it is simply rotated. 
The combined sample depth values will not change in this case, since one sample is not more central relative to the other.

It is entirely possible that changes in the direction of outlyingness, e.g., a sign change in correlation, are outside of the scope of interesting, or plausible changes in a particular dataset. 
For example, the hypothesis that the spread of the data is increasing or decreasing in a particular direction is considerably different from the hypothesis that the relationship between two or more variates has reversed. 
If there is reason to believe that the only plausible changes in the data process are variability changes, rather than say, general changes, or even general covariance changes, it is beneficial to use the depth-based procedure. 
This is reflected in our simulation study where we compare our method to some change-point methods which make less assumptions about the type of change in the data (see Section \ref{sec:sim}). 
This is intuitive; including more information about the data into the assumptions of the procedure should improve the results of the procedure. 
The downside of course would be missing other types of changes if they are not suspected to be present.

\section{Proposed change-point algorithms}\label{sec::meth}
In this section we describe two multiple change-point algorithms that can be used to detect changes in the variability of multivariate data. 
The first algorithm, takes a local approach, in the sense that the idea is to look at small sections of the data and treat the problem as a single change problem within each small section. 
The second algorithm takes a global approach, such that all of the change-points are simultaneously estimated. 
We will compare the methods in the subsequent sections.

We first restrict ourselves to the `at most one change' setting and review the results of \citep{Chenouri2020DD}. 
\cite{Chenouri2020DD} propose using the following rank CUSUM statistic
$$ Z_{1,\scaled{N}}(m/N) \coloneqq\frac{1}{\sqrt{N}} \sum_{i=1}^{m} \frac{\widehat{R}_{i}-\left(N+1\right) / 2}{\sqrt{\left(N^2-1\right) / 12}},$$
where $\widehat{R}_{i}$ are the ranks described in Section \ref{sec::depth}.

\cite{Chenouri2020DD} show that, when there are no change-points present, $$\sup_{m\in [N]} |Z_{1,\scaled{N}}(m/N)|\cond \sup_{t\in [0,1]}|B(t)|,$$ where $B(t)$ is a standard Brownian Bridge, $[N]$ represents the set $\{1,\dots,N\}$ and $\cond$ refers to convergence in distribution. 
\cite{Chenouri2020DD} also show that under the assumption that their exists a change-point, Assumption \ref{ass:Lipschitz}, Assumption \ref{ass:consDepth} and Assumption \ref{ass:CSthm} given below in Section \ref{sec::theo}, the change-point estimator 
$$\widehat{\theta}=\frac{1}{N}\argmax_{m\in[N]} |Z_{1,\scaled{N}}(m/N)|$$
is weakly consistent and that $$\sup_m |Z_{1,\scaled{N}}(m/N)|\conp \infty,$$
where $\conp$ refers to convergence in probability. 
In the simulation study done in \cite{Chenouri2020DD} this estimator was very robust against skewed or heavy-tailed distributions, especially when compared to the method of \cite{Aue2009}. 
We utilise and extend the aforementioned results of \cite{Chenouri2020DD} to the setting of multiple changes by combining the above CUSUM statistic with either wild binary segmentation or a Kruskal Wallis type statistic. 

\subsection{Wild binary segmentation with a depth rank CUSUM statistic}\label{sec::CUSUM}
Wild binary segmentation, introduced by \cite{Fryzlewicz2014} was originally developed for detecting multiple change-points in the mean of univariate data. 
Seeing as the problem here is essentially to detect changes in the mean of the depth-based ranks, it seems natural to use a similar approach. 
In fact, \cite{Chenouri2019} combined wild binary segmentation with univariate rank statistics with quite favourable results, providing some further motivation for its use with depth-based ranks. 
Let $e,\ s\in [N]$ and $s<e$. 
We define the following rank CUSUM statistic for the set $\{X_{s},\dots, X_{e}\}$ of size $N_{s,e}=e-s+1$
$$ Z_{s,e}(m/N_{s,e}) \coloneqq\frac{1}{\sqrt{N_{s,e}}} \sum_{i=1}^{m} \frac{\widehat{R}_{i,s,e}-\left(N_{s,e}+1\right) / 2}{\sqrt{\left(N_{s,e}^2-1\right) / 12}},$$
where $\widehat{R}_{i,s,e}$ are the linear ranks resulting from ranking the depth values of the observations in the subsample $\{X_{s},\dots, X_{e}\}$, with respect to only the observations in $\{X_{s},\dots, X_{e}\}$. 
More precisely, the depth values are taken with respect to the empirical distribution generated by $\{X_{s},\dots, X_{e}\}$. These ranks range from $1,\dots, N_{s,e}$.

Following the lines of \cite{Fryzlewicz2014} we can now outline our algorithm as follows. First choose $J$ uniformly random intervals and let $$\INT=\{(s_j,e_j)\colon \ j\in [J],\ s_j<e_j,\ s_j,\ e_j\in [N]\}$$ be the set of those intervals. 
After choosing the intervals, the algorithm runs recursively. 
In one run, the algorithm starts with a supplied interval $(s,e)$. 
First, $\INT_{s,e}\subset \INT$ is computed; $\INT_{s,e}$ is the set of intervals  $(s_j,e_j)$ such that $e_j\leq e$ and $s_j\geq s$. 
Then for each interval $(s_j,e_j)\in \INT_{s,e}$, the maximal CUSUM statistic is computed: $$\sup_{s_j\leq m<e_j} |Z_{s_j, e_j}(m/N_{s,e})|.$$
This produces $\binom{e_j-s_j}{2}$ change-point estimates paired with their respective CUSUM statistics. 
The change-point estimate which produces the maximal CUSUM statistic out of all the computed CUSUM statistics is then selected as the candidate change-point
$$(j^*,m^* )_{s,e}=\argmax_{(j,m)\colon (s_j,e_j)\in \INT_{s,e},\ m\in \{s_j,\dots,e_j-1\}} \left|Z_{s_{j},e_{j}}\left(\frac{m-s_{j}+1}{N_{s,e}}\right)\right|.$$ 
If it holds that 
\begin{equation}
    \left|Z_{s_{j^*},e_{j^*}}\left(\frac{m^*-s_{j^*}+1}{N_{s,e}}\right)\right|>T,
    \label{eqn:alg1}
\end{equation}
for some $T$, then the algorithm adds the index to the list of change points. 
Additionally, if \eqref{eqn:alg1} holds then the algorithm calls itself twice, once with the new supplied interval being $(s,m*)$ and once with the new interval being $(m^*+1,e)$. 
If \eqref{eqn:alg1} does not hold then the algorithm stops and returns the set of current change-points. 
Pseudo-code for this algorithm is summarized in Algorithm \ref{alg:wbs}. 
\begin{algorithm}
\setstretch{1.3}
\caption{Rank-Based Wild Binary Segmentation}
\label{alg:wbs}
\begin{algorithmic}
\Procedure{WBS\_Rank}{$e,s,T, \INT$}
\If{$e-s<1$}
\State STOP
\Else
\State $\INT_{s,e}\coloneqq$ intervals $(s_j,\ e_j)\in\INT$ such that $(s_j,\ e_j)\subset (s,e)$
\State $(j^*,m^* )\coloneqq \argmax_{\mathcal{B}} \left|Z_{s_{j},e_{j}}\left(\frac{m-s_{j}+1}{N_{s,e}}\right)\right|$,\\
\qquad\qquad\qquad\qquad\qquad\qquad\qquad with $\mathcal{B}\coloneqq\{ (j,m) \colon (s_j,e_j)\in \INT_{s,e},\ m\in \{s_j,\dots,e_j-1\}\}$
    \If{$\left|Z_{s_{j^*},e_{j^*}}\left(\frac{k^*-s_{j^*}+1}{N_{s,e}}\right)\right|>T$}
    \State    Append $m^*$ to the list of change-points $\widehat{\mathbf{k}}$
    \State    WBS\_Rank($s,m^*,T,\INT$)
     \State   WBS\_Rank($m^*+1,e,T,\INT$)
    \Else 
    \State STOP
\EndIf
\EndIf
\State \Return $\mathbf{\widehat{k}}$
\EndProcedure
\end{algorithmic}
\end{algorithm}
\subsection{KW-PELT: A Kruskal-Wallis change-point algorithm}
As mentioned above, Algorithm \ref{alg:wbs} takes a local approach to the problem, utilising only sections of the data to estimate each change-point. 
Additionally, there is the issue of subjectivity with regard to choosing the number of intervals.
As an alternative, we can instead maximize a single objective function based on the whole data set. 
Recall from Section \ref{sec::model} that a mean change in the depth values is implied by a change in variability.
The Kruskal-Wallis test statistic is used to check for mean differences among multiple groups of univariate data; this value is large for univariate mean differences.
It is very natural to then base the objective function on the Kruskal-Wallis test statistic. To this end, we propose using the following as an estimator of the change-points
\begin{equation}
    \mathbf{\widehat{k}}\coloneqq \argmax_{k_0=0<k_1<\dots<k_\ell<N=k_{\ell+1}} \frac{12}{N(N+1)} \sum_{i=1}^{\ell+1} (k_{i}-k_{i-1}) \rhatbar_{i}^{2}-3(N+1) -\beta_{\scaled{N}} (\ell+1),
    \label{eqn:KW}
\end{equation}
where $\beta_{\scaled{N}}$ is a parameter for which higher values correspond to higher penalization on the number of estimated change-points and $\rhatbar_{i}$ is the mean of the sample depth ranks in group $i$, viz.
$$\rhatbar_i=\frac{1}{k_i-k_{i-1}}\sum_{i=k_{i-1}+1}^{k_i}\ranki\ .$$ 
One can recall that $\widehat{R}_i$ are defined in \eqref{eqn:d_ranks}, or, also in relation to the wild binary segmentation algorithm $\ranki=\widehat{R}_{i,1,N}$. 
Note that the penalization is necessary; without it the solution to this maximization problem is simply choosing every point as a change-point. 
It is apparent that \eqref{eqn:KW} is a difficult maximization problem in the sense that the number of possible solutions is $2^N$.  
However, we can circumvent this issue by applying the pruned exact linear time algorithm \citep{Killick2012}. 
Indeed, rewrite the objective function,  in \eqref{eqn:KW}, by which we denote $\mathbf{G}(N)$, as 
$$\mathbf{G}(N)\coloneqq\sum_{i=1}^{\ell+1}-c(k_{i-1}+1:k_i)-\beta_{\scaled{N}}\ell$$
where 
\begin{equation}
c(s+1:e)=-\frac{12(e-s)}{N(N+1)}\left[\frac{1}{e-s}\sum_{i=s+1}^{e}\ranki-\frac{N+1}{2}\right]^2.
    \label{eqn:cost}
\end{equation}
Letting $k_0=0$ and $k_{\ell+1}=e$, we can write the maximization problem in \eqref{eqn:KW} as
\begin{align*}
   \max_{k_0<k_1<\dots<k_\ell<k_{\ell+1}} \mathbf{G}(e)&=\min_{k_0<k_1<\dots<k_\ell<k_{\ell+1}} \frac{12}{N(N+1)} \sum_{i=1}^{\ell+1} -(k_i-k_{i-1}) \left(\ranki-\frac{N+1}{2}\right)^{2} +\beta_{\scaled{N}}( \ell+1)\\
    &=\min _{s}\left\{\min_{k_0<k_1<\dots<k_\ell<s}\sum_{i=1}^\ell (c(k_{i-1}+1:k_i)+\beta_{\scaled{N}})+c(s+1:e)+\beta_{\scaled{N}} \right\}\\
    &=\min _{s}\left\{-\mathbf{G}(s)+c(s+1:e)+\beta_{\scaled{N}}\right\}.
\end{align*}
It is straightforward to show that \eqref{eqn:KW} satisfies the assumption in \citep{Killick2012} required for PELT to be applicable and so we omit the proof. (One can simply expand the expression out, and make a geometric argument about the number of roots.) 
\begin{algorithm}
\setstretch{1.2}
\caption{KW-PELT}
\label{alg:pelt}
\begin{algorithmic}
\Procedure{KW\_PELT}{$\mathbf{R},\beta$}
\State $N\coloneqq length(\mathbf{R})$
\State $\mathbf{\widehat{k}}(0)=NULL$
\State $\mathcal{N}_0\coloneqq\{0\}$
\State $\mathbf{G}(0)=-\beta$
\For{$k \in 1,\dots,N$}
\State $\mathbf{G}(k)=\min_{s\in \mathcal{N}_k}\{\mathbf{G}(s)+c((s+1):k)+\beta\}$
\State $k^1=\argmin_{s\in \mathcal{N}_k}\{\mathbf{G}(k)+c((s+1):k)+\beta\}$
\State $\mathbf{\widehat{k}}(k)=(\mathbf{\widehat{k}}(k^1),k^1)$
\State $\mathcal{N}_{k+1}\coloneqq\{k\}\cup \{s\in\mathcal{N}_{k}\colon \mathbf{G}(s)+c((s+1):k)\leq \mathbf{G}(k) \}$
\EndFor
\State \Return $\mathbf{\widehat{k}}(N)\backslash\{0\}$
\EndProcedure
\end{algorithmic}
\end{algorithm}
Algorithm \ref{alg:pelt} outlines this procedure, which we call KW-PELT, in pseudo-code. It is simply the PELT algorithm in \citep{Killick2012} applied to the objective function $\mathbf{G}$ in \eqref{eqn:KW}.


We end this section with a remark about computation time. 
Computationally, the limiting factor for both procedures will (in general) be the computation time for the sample depths. 
Consequentially, we expect Algorithm \ref{alg:pelt} to be faster, due to the fact that sample depth functions need only be calculated once rather than once for every sampled interval. 
If $f(N;d)$ is the time it takes to compute the sample depths, then Algorithm \ref{alg:wbs} would take $O(JN\log N+Jf(N;d))$ time as opposed to $O(N\log N+f(N;d))$ time for Algorithm \ref{alg:pelt}. 
It is worth noting that Algorithm \ref{alg:pelt} was implemented partially in \CC 
\ whereas Algorithm \ref{alg:wbs} was implemented completely in \texttt{R} (except for possibly the depth computations, for which existing packages were used) so the empirical times in simulation are not directly comparable. 
This being said, both algorithms ran within minutes on a desktop computer when applied to the data set analyzed in Section \ref{sec::da}.
\section{Consistency of the algorithms}\label{sec::theo}
In this section we provide consistency results for both algorithms under some mild assumptions. 
For $j\in [\ell+1]$, let $Y_j\sim F_j$ and let $$H_j(x)=\Pr(\mathcal{D}(Y_j;F_*)\leq x).$$
The following assumptions are used in the consistency theorems that follow.
\begin{ass}
$H_j(x)$ are Lipschitz continuous with constant $C$, that is
$$|H_j(x)-H_j(y)|\leq C|x-y|,$$
for $x,y\in \re^{d}.$
\label{ass:Lipschitz}
\end{ass}
\begin{ass} It holds that
$$\E{}{\sup_{x\in \rd{d}}|\mathcal{D}(x;F_{*,\scaled{N}})-\mathcal{D}(x;F_*)|}=O(N^{-1/2}).$$
\label{ass:consDepth}
\end{ass}
\begin{ass}
The number of change-points $\ell$ is fixed and the change-points are well spread for all $N$, meaning there is a constant $\Delta$ such that the change-points are separated by at least $ \Delta N$ .
\label{ass:numcp}
\end{ass}
\begin{ass}
For all $j\in [\ell]$, it holds that
$$\int_{\re}\left( \vartheta_j H_j(x)+\vartheta_{j+1} H_{j+1}(x)\right)dH_{j}(x)\neq \frac{\vartheta_{j+1}+\vartheta_{j}}{2}.$$
\label{ass:CSthm}
\end{ass}
\begin{ass}
For the threshold $T$, it holds that $T=o(\sqrt{N}).$
\label{ass:thresh}
\end{ass}
\begin{ass}
Let $p_{ij}=\Pr(\mathcal{D}(Y_i;F_*)>\mathcal{D}(Y_j;F_*))$. Then for any $j\in [\ell+1]$ it holds that 
$$\sum_{i=1}^{\ell+1} \vartheta_i p_{j,i}\neq \frac{1}{2}.$$
\label{ass:KWthm}
\end{ass}
Assumptions \ref{ass:Lipschitz} and \ref{ass:consDepth} are satisfied by most depth functions under absolutely continuous $F$, including those defined in Section \ref{sec::depth} \citep[see][and the references therein]{Liu1999}. 
Assumption \ref{ass:numcp} says that the number of change-points is fixed, and their closeness is not arbitrarily small in $N$. 
Assumptions \ref{ass:CSthm} and \ref{ass:KWthm} are concerned with the type of changes that can be detected, and are related to the discussion in Section \ref{sec::model}. 
Assumption \ref{ass:CSthm} says that the random variables $\mathcal{D}(Y_{j};F_*)$ and $\mathcal{D}(Y_{j+1};F_*)$ are ordered in a probabilistic sense, i.e., $$\Pr(\mathcal{D}(Y_{j};F_*)<\mathcal{D}(Y_{j+1};F_*))\neq 1/2.$$ 
In order for consistency, we must have that the distribution of depth values in one segment is distinguishable from a neighboring segment. 
By distinguishable, we mean that a change in variability implies a probabilistic ordering on the random depth values generated by the observations.
Recall that Section \ref{sec::model} examined this idea. 
Additionally, Assumption \ref{ass:CSthm} implies Assumption \ref{ass:KWthm}; under Assumption \ref{ass:KWthm} for each change-point, we just need two of the segments of $\iid$ observations, not necessarily neighboring, to be distinguishable. 
Assumption \ref{ass:CSthm} says that the distributions of depth values of all neighboring pairs (of segments) must be distinguishable.

Suppose there is a single change-point and that $Y_1\sim \mathcal{N}_d(0,I)$ and that $Y_2\eqd \sqrt{a}Y_1$ with $a>1$. 
Clearly, we have that $\E{F_*}{X}=\mathbf{0}$ and $\Sigma_*=(\vartheta_1+a(1-\vartheta_1))I=\sigma^2_*I$. 
It follows that 
\begin{equation*}
    \norm{Y_1-\E{F_*}{X}}_{\Sigma^{-1}_*}\sim \frac{1}{\sigma^2_*}\chi^2_d \qquad \text{and}\qquad  \norm{Y_2-\E{F_*}{X}}_{\Sigma^{-1}_*}\sim \frac{a}{\sigma^2_*}\chi^2_d,
\end{equation*}
Now, for any $x\in \re$ we have that
$$F_{\chi^2_d}\left(\frac{1}{\sigma^2_*} x\right)<F_{\chi^2_d}\left(\frac{a}{\sigma^2_*} x\right),$$
where $F_{\chi^2_d}$ represents the cumulative distribution function of a $\chi^2_d$ random variable. It follows immediately that $p_{1,2}=1-p_{2,1}\neq \frac{1}{2}$.
Additionally,
$$\E{\sigma^2_*\chi^2_d}{F_{\chi^2_d}\left(\frac{a}{\sigma^2_*} X\right)} > \E{\sigma^2_*\chi^2_d}{F_{\chi^2_d}\left(\frac{1}{\sigma^2_*} X\right)} =\frac{1}{2};$$
both assumptions are satisfied.
Clearly, neither Assumption \ref{ass:CSthm} or Assumption \ref{ass:KWthm} hold if $a=1$.
\begin{theorem} 
Let $C>0,\ 1/2<\phi<1$ be constants independent of $N$.
Let the estimated change-points $\hat{k}_1<\hat{k}_2<\dots<\hat{k}_{\hat{\ell}}$ be as in Algorithm \ref{alg:wbs}. 
Provided Assumptions \ref{ass:Lipschitz}-\ref{ass:thresh} hold, 
and the number of intervals $J_{\scaled{N}}\rightarrow\infty$ as $N\rightarrow\infty$
we have that
$$\Pr\left(\left\{\hat{\ell}=\ell\right\}\cap \left\{\max_{i\in [\ell]}|\hat{k}_i-k_i|\leq C N^\phi \right\}\right)\rightarrow 1\ \text{as } N\rightarrow\infty.$$ 
\label{thm:WBS}
\end{theorem}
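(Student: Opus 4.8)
The plan is to mimic the wild-binary-segmentation consistency argument of \cite{Fryzlewicz2014}, but with the rank-CUSUM statistic $Z_{s,e}$ in place of the Gaussian-noise CUSUM, using the single-change-point results of \cite{Chenouri2020DD} (consistency of $\widehat\theta$ and divergence of $\sup_m|Z_{1,N}|$) as the engine on each detected interval. First I would reduce the random-interval behaviour to a deterministic event: since $J_N\to\infty$ and the intervals are drawn uniformly, with probability tending to $1$ every true segment $(k_{i-1},k_i)$ contains a sampled interval $(s_j,e_j)$ that is ``well-localised'' around the change-point $k_i$ — i.e.\ $s_j,e_j$ are within $O(\delta N)$ of $k_i$ on either side for a small fixed $\delta<\Delta/3$, so that $(s_j,e_j)$ straddles exactly one change-point and has length $\Theta(N)$. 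This is the standard combinatorial lemma (a binomial tail bound on the number of $J_N$ draws landing in a fixed-proportion band), and I would state it up front. Condition on this event for the rest of the proof.

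Next, the core analytic input: on any interval $(s,e)$ of length $N_{s,e}=\Theta(N)$ containing exactly one change-point $k_i$ with both sub-segments of length $\Theta(N)$, Assumptions \ref{ass:Lipschitz}, \ref{ass:consDepth}, \ref{ass:CSthm} guarantee (by the argument in \cite{Chenouri2020DD}) that $\sup_m|Z_{s,e}(m/N_{s,e})|\conp\infty$, in fact at rate $\Theta(\sqrt{N_{s,e}})=\Theta(\sqrt N)$, with the argmax located within $O(N^{\phi})$ of $k_i$ for any $\phi>1/2$; whereas on any interval containing \emph{no} change-point, $\sup_m|Z_{s,e}(m/N_{s,e})|=O_p(\sqrt{\log N})$ by the Brownian-bridge approximation (the linear ranks within a homogeneous block behave like a finite-population permutation, and a maximal-inequality / Hájek–Rényi bound controls the sup over all $O(N^2)$ sampled intervals and split points simultaneously). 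Here the delicate point is the rate of localisation: I would extract from the CUSUM curvature near the true change (the ``signal'' being the gap $\int(\vartheta_iH_i+\vartheta_{i+1}H_{i+1})\,dH_i-\tfrac{\vartheta_i+\vartheta_{i+1}}2\neq0$ from Assumption \ref{ass:CSthm}, which gives a drift of order $|m-k_i|/\sqrt N$ in the expected CUSUM) together with the uniform fluctuation bound of order $\sqrt{\log N}$, so that any $m$ with $|m-k_i|>CN^{\phi}$ has $|Z_{s,e}(m/N_{s,e})|$ strictly below the maximum with high probability — this is where the depth-rank analogue of Fryzlewicz's deviation bounds must be assembled, and I expect it to be the main obstacle, since the ranks $\widehat R_{i,s,e}$ on a subinterval are not independent across $m$ and depend on the estimated depths $\mathcal D(\cdot;F_{*,N})$ rather than the population depths, so Assumption \ref{ass:consDepth} is needed to pass from sample to population depth ranks uniformly.

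With these two bounds in hand, the recursion argument is the usual induction on the number of change-points inside the current interval $(s,e)$. If $(s,e)$ contains at least one true change-point, the no-change bound shows the maximal CUSUM over $\INT_{s,e}$ is dominated by the value achieved on the well-localised straddling interval guaranteed by the combinatorial lemma, which is $\Theta(\sqrt N)$; since $T=o(\sqrt N)$ by Assumption \ref{ass:thresh}, the threshold is exceeded, a change-point is declared, and by the localisation bound the declared $m^*$ satisfies $|m^*-k_i|\le CN^{\phi}$ for the nearest true $k_i$; the two recursive calls on $(s,m^*)$ and $(m^*+1,e)$ then each contain strictly fewer true change-points (using $\Delta N\gg N^{\phi}$ from Assumption \ref{ass:numcp} so no true change-point is ``cut off'' or double-counted), and induction applies. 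If $(s,e)$ contains no true change-point, the no-change bound gives maximal CUSUM $O_p(\sqrt{\log N})=o(T)$ provided $T$ is chosen to grow faster than $\sqrt{\log N}$ (compatible with $T=o(\sqrt N)$), so the recursion terminates with no spurious detection. Intersecting the finitely many high-probability events — the combinatorial lemma, the uniform fluctuation bound over all sampled intervals and splits, and the $O(\ell)$ localisation events — yields $\Pr(\{\hat\ell=\ell\}\cap\{\max_i|\hat k_i-k_i|\le CN^{\phi}\})\to1$, which is the claim.

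Finally, I would remark that the only place $\phi>1/2$ enters is the localisation rate; taking $\phi$ arbitrarily close to $1/2$ would require sharpening the fluctuation bound to $\sqrt{\log N}$ exactly and matching it against the linear signal drift, which is exactly what the chosen range $1/2<\phi<1$ lets us avoid being delicate about.
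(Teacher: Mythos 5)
Your overall architecture is the same as the paper's: a combinatorial event guaranteeing that, since $J_{\scaled{N}}\to\infty$, with probability tending to one each change-point $k_i$ is straddled by a sampled interval of length $\Theta(N)$ containing no other change-point (this is exactly the paper's event $D_{\scaled{N}}$), then detection on that interval because the single-change rank CUSUM diverges at rate $\sqrt{N}$ while $T=o(\sqrt N)$ (Assumption \ref{ass:thresh}), then a union bound over the fixed number $\ell$ of change-points. However, the two quantitative ingredients your recursion rests on are precisely the ones you do not establish, and they are not off-the-shelf. First, the uniform $O_p(\sqrt{\log N})$ bound for $\sup_m|Z_{s,e}(m/N_{s,e})|$ simultaneously over all $O(N^2)$ sampled change-free intervals: the ranks $\widehat R_{i,s,e}$ are ranks of estimated depths $\mathcal D(\cdot;F_{*,s,e})$, not independent observations and not the population ranks, so a H\'ajek--R\'enyi/permutation maximal inequality does not apply directly, and you acknowledge this is ``the main obstacle'' without supplying it. Second, the $O(N^\phi)$ localisation of the argmax via a drift-versus-fluctuation curvature argument is likewise only sketched. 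The paper avoids both: it controls only the discrepancy $G_{s,e}$ between the sample-depth-rank and population-depth-rank CUSUMs uniformly (its event $A_{\scaled{N}}$, obtained by a Markov bound together with the rank-perturbation estimates of \cite{Chenouri2020DD}), imports detection and localisation on the well-placed interval from Theorem 3 of \cite{Chenouri2020DD}, and disposes of the over-estimation event $\{\hat\ell>\ell\}$ by invoking the argument of Theorem 2.1 of \cite{Chenouri2019}; no $\sqrt{\log N}$-type maximal inequality for depth ranks is proved or needed there.

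A second, more structural mismatch: your termination step (no spurious detection on change-free intervals) requires the threshold to dominate the change-free fluctuations, i.e.\ $T\gg\sqrt{\log N}$, or at least $T\to\infty$ at a suitable rate. Assumption \ref{ass:thresh} only provides the upper bound $T=o(\sqrt N)$, so as written your argument proves the theorem under an additional hypothesis on $T$ that is not part of the statement; the paper's proof introduces no such condition, handling $\{\hat\ell>\ell\}$ through the cited argument of \cite{Chenouri2019} after conditioning on $A_{\scaled{N}}\cap D_{\scaled{N}}$. To turn your proposal into a complete proof you would need either to assemble the depth-rank deviation bounds you describe (handling the dependence of the subsample ranks on $F_{*,s,e}$ via Assumptions \ref{ass:Lipschitz} and \ref{ass:consDepth}, uniformly over sampled intervals) or to follow the paper's shortcut of reducing everything to the population-rank CUSUM plus the already-established single-change-point results.
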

Theorem \ref{thm:WBS} states that for large $N$, it is highly probable that the change-point estimates produced by Algorithm \ref{alg:wbs} will be close to the location of the true change-points and that the number of these estimates is equal to the true number of change-points. 
The next theorem gives a similar result for Algorithm \ref{alg:pelt} under a wide range of penalty terms ($O(1)<\beta_{\scaled{N}}<O(N)$). 
\begin{theorem} For $\beta_{\scaled{N}}$ as in \eqref{eqn:KW}, assume that $O(1)<\beta_{\scaled{N}}<O(N)$ and let $\delta>0,\ 1/2<\phi<1$. 
Provided Assumptions \ref{ass:Lipschitz}-\ref{ass:numcp} hold and Assumption \ref{ass:KWthm} holds, for $\widehat{\mathbf{k}}$ and $\hat{\ell}$ as in Algorithm \ref{alg:pelt}, we have that
$$\Pr\left(\left\{\hat{\ell}=\ell\right\}\cap\left\{ \max_{i\in [\ell]}|\hat{k}_i-k_i|\leq \delta N^\phi \right\}\right)\rightarrow 1\ \text{as } N\rightarrow\infty.$$ 
\label{thm:PELT}
\end{theorem}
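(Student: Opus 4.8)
The plan is to run the standard route for consistency of penalised change-point estimators: pass to a deterministic ``population'' version of the objective in \eqref{eqn:KW}, check that the true configuration is asymptotically its unique penalised maximiser under Assumption~\ref{ass:KWthm}, and then control the stochastic error on the partitions that can actually compete.

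\emph{Step 1: reduction to an i.i.d.\ approximation of the ranks.} First I would strip out the dependence of the ranks on the pooled empirical depth. By Assumption~\ref{ass:consDepth}, $\sup_{x}|\mathcal D(x;F_{*,N})-\mathcal D(x;F_*)|=O_P(N^{-1/2})$, so the empirical-depth ordering of $X_1,\dots,X_N$ agrees with the population-depth ordering except at indices whose population depths differ by $O_P(N^{-1/2})$; Assumption~\ref{ass:Lipschitz} bounds the number of such pairs and gives $\max_i|\widehat R_i/N-G_i|=o_P(1)$ with $G_i=\sum_{r}\vartheta_r H_r(\mathcal D(X_i;F_*))$. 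For $X_i$ in true segment $m$ the $G_i$ are i.i.d.\ with mean $\mu_m=\sum_r\vartheta_r p_{m,r}$ (continuity of the depth gives $\Pr(\mathcal D(Y_r;F_*)\le\mathcal D(Y_m;F_*))=p_{m,r}$). Hence for a candidate block $(s,e]$, writing $L=e-s$ and $\bar R$ for its average rank, $c(s+1:e)=-\tfrac{12L}{N(N+1)}\big(\bar R-\tfrac{N+1}{2}\big)^2=-12L\big(\bar\mu_{(s,e]}-\tfrac12\big)^2+R_{(s,e]}$, where $\bar\mu_{(s,e]}$ is the overlap-weighted average of the $\mu_m$ over $(s,e]$ and $R_{(s,e]}$ is, after the $N^{-2}$ scaling, a functional of a rescaled empirical (Brownian-bridge-type) process formed from i.i.d.\ bounded variables.

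\emph{Step 2: population maximiser and the two penalty regimes.} The population fit of a partition, $12N\sum_{\text{blocks}}(\text{length fraction})\big(\bar\mu_{\text{block}}-\tfrac12\big)^2$, is maximised by the true partition: refinements leave it unchanged, while merging a block across a true change-point strictly lowers it by Jensen's inequality, and Assumption~\ref{ass:KWthm} forces the maximal value $NS^{*}$ with $S^{*}=12\sum_m\vartheta_m(\mu_m-\tfrac12)^2>0$; in particular, merging across the $j$-th true change-point costs order $N$ in the fit. Since $\beta_N=o(N)$, this rules out under-estimation and gross mislocation: if no estimated change-point lies within $\delta N$ of some $k_j$ (with $\delta<\Delta$, so that by Assumption~\ref{ass:numcp} the offending estimated block meets only two true segments), adjoining $k_j$ to $\widehat{\mathbf k}$ raises the fit by order $N$ and the penalty by only $\beta_N$, contradicting optimality. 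Conversely, since $\beta_N\to\infty$ (the meaning I take for $O(1)<\beta_N$) and, by Step~1 and the block-splitting identity, inserting a spurious change-point into the interior of a true segment improves the fit by $\tfrac{12}{N(N+1)}\tfrac{L_1L_2}{L}(\bar R_1-\bar R_2)^2$, which is $O_P(1)$ uniformly over interior split points (a maximal inequality for the Brownian-bridge process of Step~1), a pigeonhole argument combined with the localisation just obtained rules out over-estimation. Together these give $\hat\ell=\ell$ and $\max_i|\hat k_i-k_i|\le\delta N$ with probability tending to $1$.

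\emph{Main obstacle.} The delicate half is the control of over-estimation. The empirical objective is \emph{not} uniformly close to its population counterpart over all $2^{N}$ partitions — the finest partition already has deterministic fit $N-1$ — so a crude uniform law of large numbers is useless and the penalty is genuinely doing work. One therefore needs an honest maximal inequality showing that the fit gained by inserting any spurious change-point into a true segment is $O_P(1)$, hence $o(\beta_N)$; split points lying near the endpoints of true segments (where such bounds degrade) must be absorbed into the localisation step, so that a spurious change-point close to a true one is reassigned to that true one rather than counted against over-estimation. Supplying the i.i.d.\ approximation of the ranks in Step~1 with enough uniformity to drive this maximal inequality is precisely where Assumptions~\ref{ass:Lipschitz} and \ref{ass:consDepth} earn their keep.
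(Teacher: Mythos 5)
Your proposal follows the same skeleton as the paper's proof---replace the sample objective by a population counterpart, show that any configuration missing (or grossly mislocating) a true change-point loses order $N$ in fit, which dominates $\beta_{\scaled{N}}=o(N)$, and let the divergent penalty eliminate spurious splits whose fit gain is stochastically bounded---but the technical implementation is genuinely different. The paper never passes to your i.i.d.\ projection $G_i$: it keeps the population-depth ranks $R_i$, which are exchangeable within segments, and applies the central limit theorem for exchangeable arrays of Weber (1980), together with the representation $\widehat R_i=R_i+\mathcal{E}_i$ and the moment bounds of Chenouri et al.\ (2020), to show $\widehat{\mathcal{T}}(\mathbf{x})-\mathcal{T}(\mathbf{x})=O_p(1)$ for partitions with $O(N)$ blocks; it then argues by contradiction separately on $\{\hat\ell<\ell\}$, $\{\hat\ell>\ell\}$ and mislocation, using the augmented configurations $\mathbf{w}_1=\{k_{i^*}\pm\Delta N/2\}\cup\mathbf{k}\setminus\{k_{i^*}\}$, $\mathbf{w}_2=\mathbf{w}_1\cup\widehat{\mathbf{k}}$ and the refinement monotonicity $\widehat{\mathcal{C}}(\widehat{\mathbf{k}})\le\widehat{\mathcal{C}}(\widehat{\mathbf{k}}\cup\mathbf{k})$. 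That refinement trick is how the paper avoids the uniform-over-split-points maximal inequality you identify as the main obstacle: the sample/population comparison is only ever invoked at a single (data-augmented) partition, never over all $2^N$ candidates. Your route buys a more classical empirical-process picture, and your caution is warranted: as literally stated, the claim that the gain from a spurious split is $O_P(1)$ uniformly over all interior split points is too strong (Darling--Erd\H{o}s-type behaviour near segment endpoints contributes a $\log\log N$ factor), so with $\beta_{\scaled{N}}\to\infty$ arbitrarily slowly you need exactly the endpoint-absorption you sketch; note, however, that the paper's corresponding step, $\mathcal{C}(\mathbf{k})-\widehat{\mathcal{C}}(\widehat{\mathbf{k}}\cup\mathbf{k})=O_p(1)$ with $\widehat{\mathbf{k}}$ data-dependent, asserts rather than proves the analogous uniformity, so the two arguments sit at a comparable level of rigor there. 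One caveat shared by both: your Jensen step (and the paper's order-$N$ limit for $\mathcal{C}(\mathbf{k})-\mathcal{C}(\mathbf{w}_1)$) actually requires neighbouring segments to have different mean population ranks, whereas Assumption \ref{ass:KWthm} as written only guarantees each segment mean differs from $1/2$.
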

\section{Simulation study}\label{sec:sim}
In this section we use a simulation study to compare our methodology to existing procedures as well as to investigate different choices of the algorithm parameters $\beta_{\scaled{N}}$ and $T$. 
Specifically, we compare our method to the BSOP and WBSIP algorithms developed by \cite{Wang2021}, as well as to the methods in the \texttt{ecp R} package \citep{ecp_package}. 
It should be noted that the aim of \citep{Wang2021} was to detect change-points in the high dimensional setting, and not necessarily in low or moderate dimensions. 
The WBSIP algorithm performed much better than the BSOP algorithm, and so we only present the results of the WBSIP algorithm. 
Results from the BSOP algorithm can be found in Appendix \ref{app::add_sim}. 
As mentioned above, we also compare to the nonparametric change-point methods in the \texttt{ecp\ R} package \citep{ecp_package}. 
These methods are designed to detect general types of change-points; time points where there was a change in distribution.  
We compared our methods with the \texttt{e.dvisive}, \texttt{e.cp3o\_delta} and \texttt{e.kcp3o} methods. 
The other methods in the package were either too slow to run with our simulation set-up or performed much worse than the chosen methods. 
The best of these three methods in our simulation set-up was by far the \texttt{e.dvisive} method, and so we defer the results of the other two methods to Appendix \ref{app::add_sim}.

The threshold parameter for the WBSIP algorithm was chosen to be 561, which was based on visually assessing the error $\hat{\ell}-\ell$ so that the median was zero in most of the scenarios. 
We also tried choosing the threshold in order to minimize the mean squared error $\hat{\ell}-\ell$ (which would not be known in practice) and the results were similar. 
For Algorithm \ref{alg:wbs} and the WBSIP algorithm we used $100\floor{\log{N}}$ intervals.

The simulation study is limited to evenly spaced change points, from distributions with independent marginals. 
Note that the transformation invariance properties possessed by the depth functions imply the results from similarity transformations of the data would be the same. 
This transformation invariance implies that the study also covers some cases where the marginal distributions of the data are not independent. 
We set the mean of all distributions to be 0.

The simulation study consisted of several scenarios. 
The first scenario is a set of expansions and contractions controlled by the parameter $\sigma^2.$ 
We let $\Sigma_j=\sigma^2_{j}I_d$ for each $F_j$,\ $j\in[\ell+1]. $ 
We set $$\sigma_1^2=1,\ \sigma_2^2=2.5,\ \sigma_3^2=4,\ \sigma_4^2=2.25,\ \sigma_5^2=5,\ \sigma_6^2=1,$$ e.g., for 2 change-points, $\sigma^2$ would vary as follows $1\shortrightarrow 2.5\shortrightarrow 4$. 
The second scenario is another set of expansions and contractions, of which the results were so similar that we defer discussion and results from the second scenario to Appendix \ref{app::add_sim}. 

We simulated data from three different distribution types,  normal, Cauchy and skewed normal with skewness parameter $\gamma=0.1/d$. 
We ran the simulation for values of $d=$2, 3, 5 and 10 under 2, 3 and 5 change-points. 
To see results on zero change-points and one change-point \citep[see][]{Chenouri2020DD}. 
We used sample sizes of $N=$1000, $N=$2500, and $N=$5000, running each scenario 100 times. 
We tested the four depth functions introduced in Section \ref{sec::depth}.

Lastly, we ran several simulations designed to assess the performance of our methods under sparsity and/or high dimensions. 
In these scenarios, $d$ was at most 500 and/or the expansions/contractions were only applied to a submatrix of the covariance matrix. 


\texttt{R} codes to replicate this simulation study, as well as implementations of Algorithm \ref{alg:wbs}, Algorithm \ref{alg:pelt}, the WBSIP algorithm and the BSOP algorithm are available \citep{Ramsay2019}. 

\subsection{Choosing the algorithm parameters}\label{sec::params}
\begin{figure}[t]
\begin{minipage}[c]{.31\textwidth} 
\centering%
\includegraphics[width=\textwidth,trim={0 00 0 0},clip]{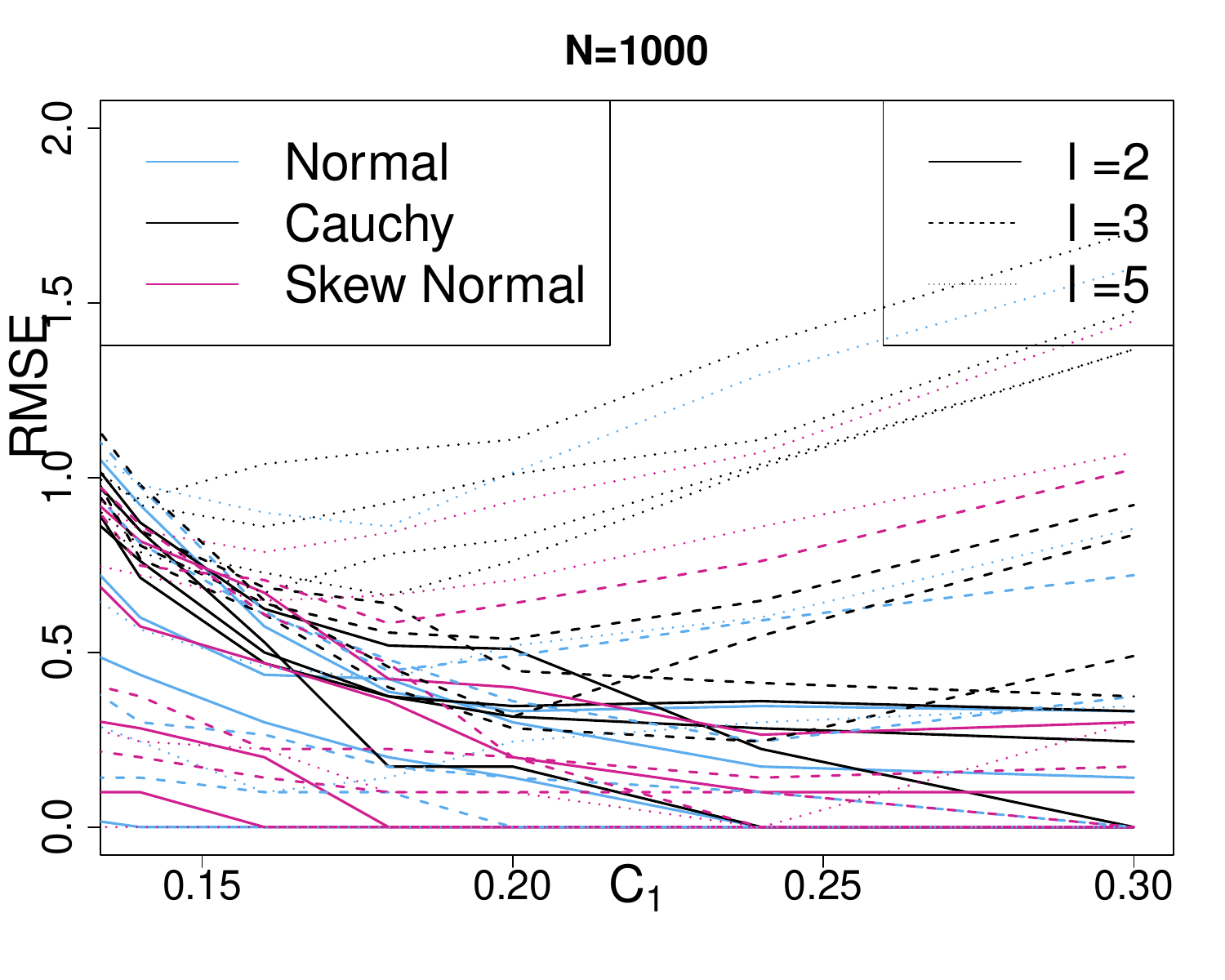}
\end{minipage}
\begin{minipage}[c]{.31\textwidth} 
\centering%
\includegraphics[width=\textwidth,trim={0 0 0 0},clip]{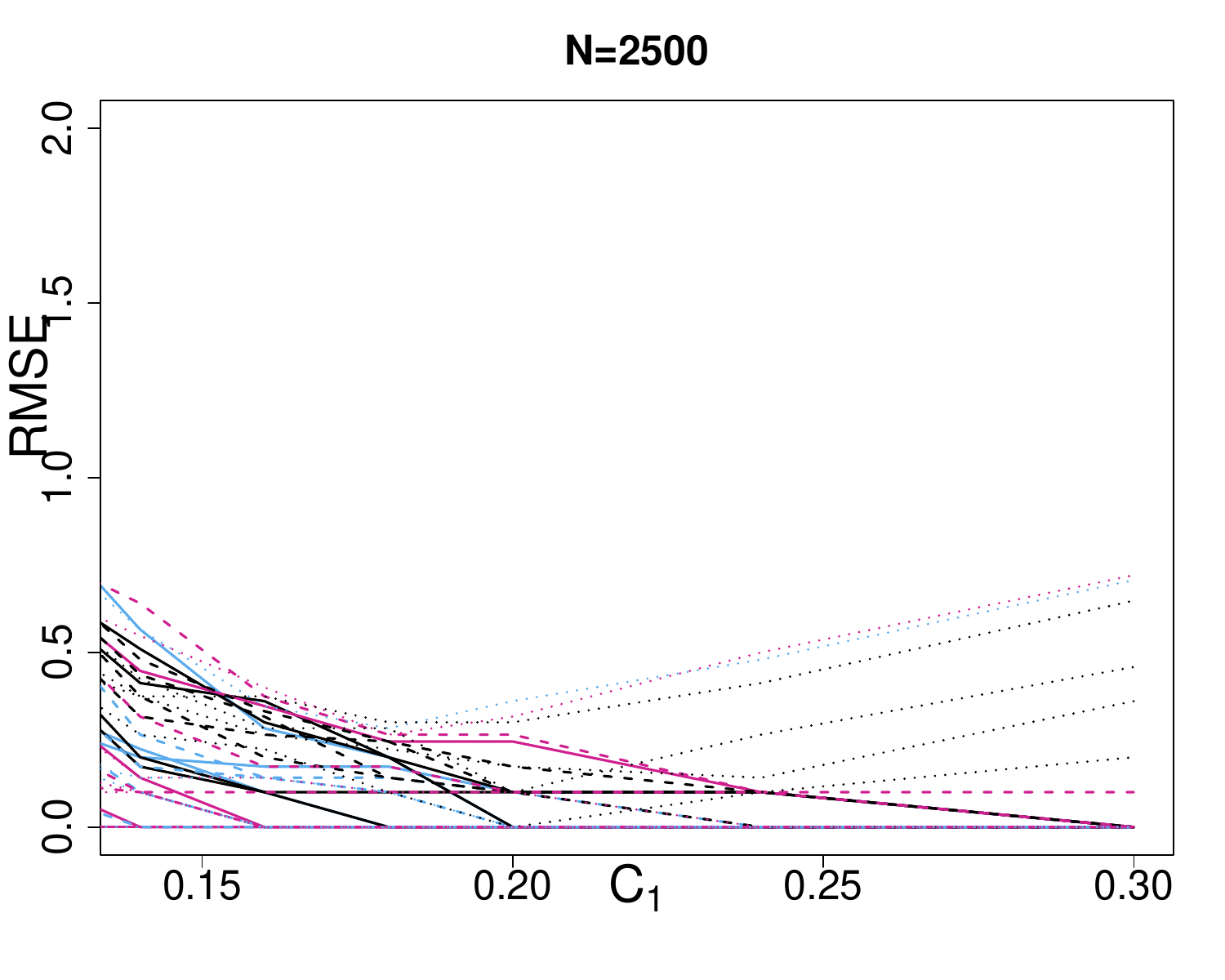}
\end{minipage}
\begin{minipage}[c]{.31\textwidth} 
\centering%
\includegraphics[width=\textwidth,trim={0 00 0 0},clip]{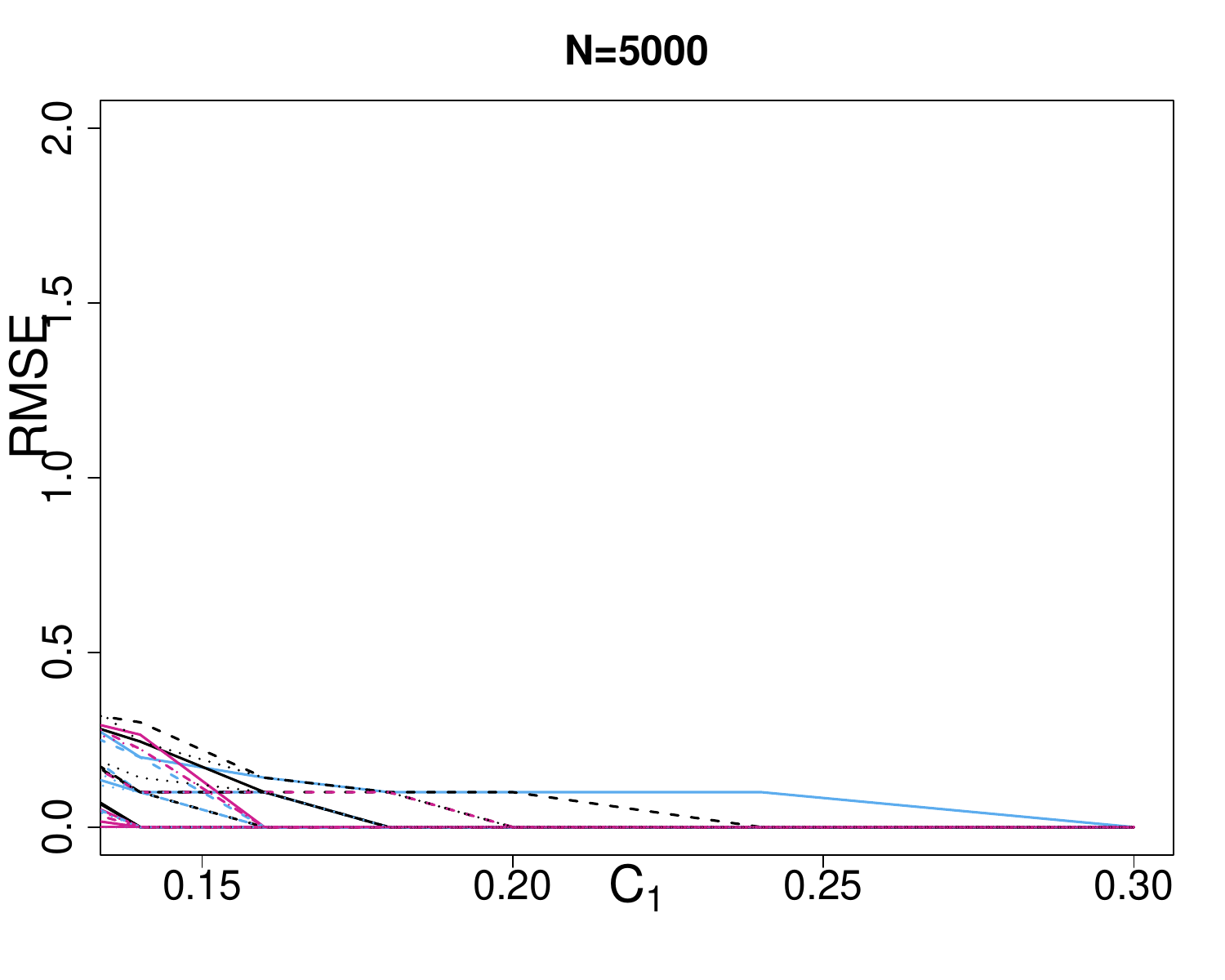}
 \end{minipage}
\caption{Empirical root mean squared error of $\hat{\ell}$ for different values of $C_1$ under spatial depth for all the simulation parameter combinations, under Algorithm \ref{alg:pelt}.}
\label{fig:C_ch}%
\end{figure}

In order to have consistency of the estimates produced by Algorithm \ref{alg:wbs}, the threshold must satisfy $T~=~o(\sqrt{N})$. 
One option is to choose a fixed threshold $T^*$, which will produce a set of change-point estimates and their corresponding CUSUM statistics. 
The final set of change-points could then be chosen by testing each change-point for significance using a Bonferroni correction or Benjamini-Hochberg correction \citep{BH1995} along with the quantiles of $\sup |B(t)|$. 
This would imply a threshold $T\geq T^*$. 
However, it might be that smaller sampled intervals are not large enough for the asymptotic approximation to work well. 
Additionally, one has to choose the significance level, and the threshold $T^*$. 
As a result of these considerations, we suggest a data driven thresholding approach, based on the generalized Schwartz Information Criteria, as done by \cite{Fryzlewicz2014}.

Algorithm \ref{alg:wbs} produces a nested set of models, indexed by the threshold parameter. 
Lowering the threshold can only add new change-points to the model; all previously estimated change-points remain. 
In other words, as the threshold decreases, new change-points are added to the model one at a time. 
It is then easier to re-index the models by the number of estimated change-points $\hat{\ell}$. 
The threshold problem can then be reformulated as a model selection problem.

Suppose we have a univariate sample $Z_1,\dots , Z_{\scaled{N}}$ and the goal is to estimate a change-point in the mean. For this problem, \cite{Fryzlewicz2014} chooses the `best' model by minimizing the following criteria:

\begin{equation}
    \mathcal{G}(\hat{\ell})=\frac{N}{2}\log(\hat{\varsigma}_{\hat{\ell}}^2)+\hat{\ell}\log^\alpha N,
    \label{eqn:sic}
\end{equation}
with $\hat{\varsigma}_{\hat{\ell}}^2$ equal to the average within group squared deviation (a group is an estimated period of constant mean) and $\hat{\ell}$ is still the estimated number of change-points. 
Let $\mu_i,$ for $i \in [N]$, be the within group mean for the group that contains univariate observation $Z_i$. 
Then we can write
$$\hat{\varsigma}_{\hat{\ell}}^2=\frac{1}{N}\sum_{i=1}^N(Z_i-\mu_i)^2.$$ 
Here, $\alpha$ is a parameter such that the larger $\alpha$, the larger the penalty against choosing a model with many change-points.

The only difference for the multivariate, variability problem is that $\hat{\varsigma}$ must be modified. 
We recall from Section \ref{sec::model} that a variability change is equivalent to a change in the mean ranks. 
We can treat the sample ranks produced by the depth functions as a univariate sample, and minimize the within group deviation, amongst the ranks:
$$\hat{\varsigma}_{\hat{\ell}}^2=\frac{1}{N}\sum_{i=1}^N(\ranki-\rhatbar_i)^2.$$
We remark that the use of ranks ensures $\mathcal{G}(\hat{\ell})$ is still robust. 
For a sketch of the proof of consistency, in the sense of Theorem \ref{thm:WBS}, of Algorithm \ref{alg:wbs} paired with this method of thresholding with $\alpha=1$ see Appendix \ref{app::proofs}. 

In order to make a practical recommendation for the parameter $\alpha$, we rely on the simulation study. 
Figure \ref{fig:alpha} shows the empirical root mean squared error of $\hat{\ell}$ under Algorithm \ref{alg:wbs} for a range of $\alpha$ values. Each curve is for a different combination of parameters in the first simulation scenario. 
The depth function used was Mahalanobis depth (for computational ease). 
Figure \ref{fig:alpha} shows that choosing $\alpha$ in the range $(0.75,1.25)$ works well. 
Similar plots under the second simulation scenario can be found in Appendix \ref{app::add_sim}. 

\begin{figure}[t]
\begin{minipage}[c]{.32\textwidth} 
\centering%
\includegraphics[width=\textwidth,trim={0 0 0 0},clip]{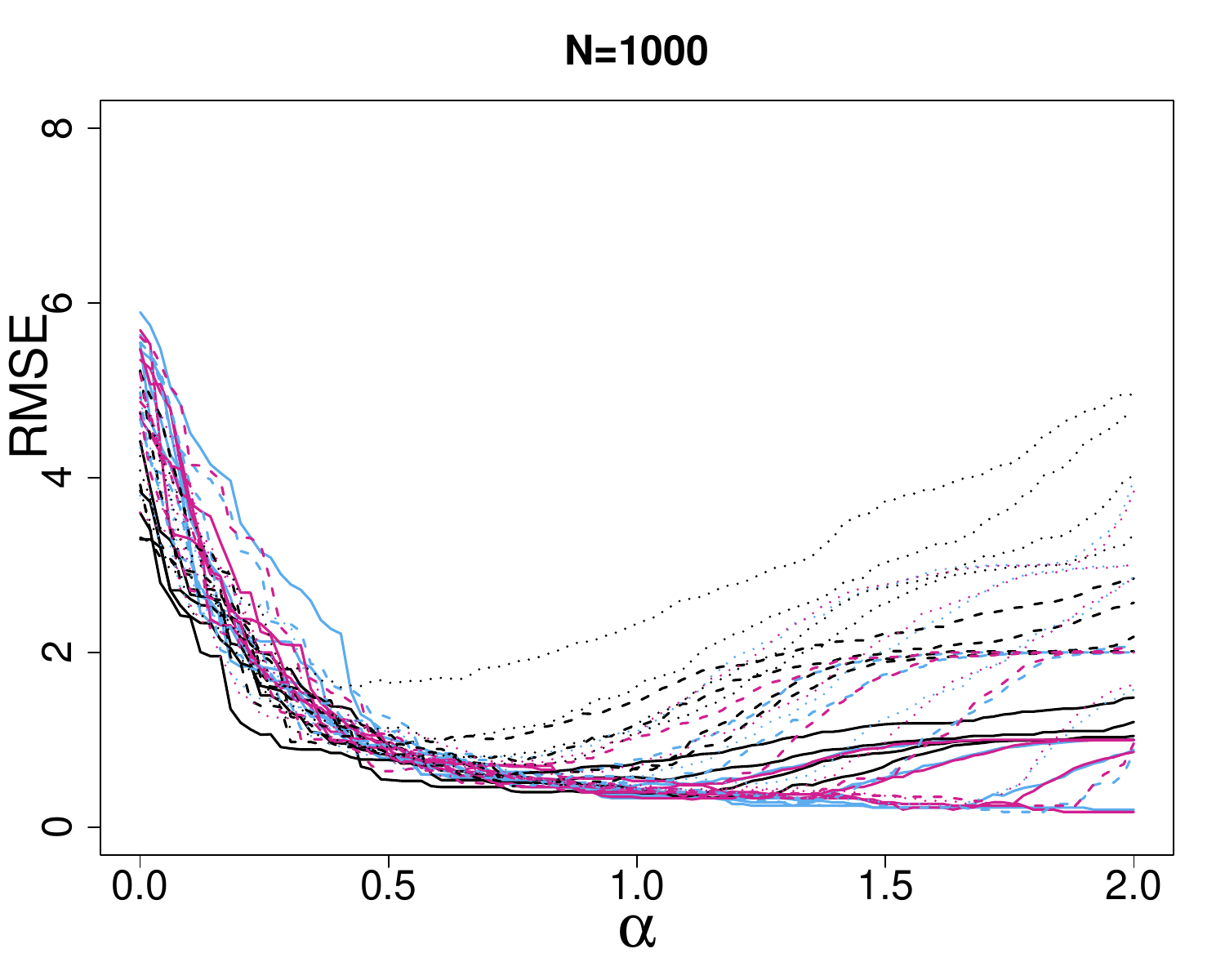}
\end{minipage}
\begin{minipage}[c]{.32\textwidth} 
\centering%
\includegraphics[width=\textwidth,trim={0 0 0 0},clip]{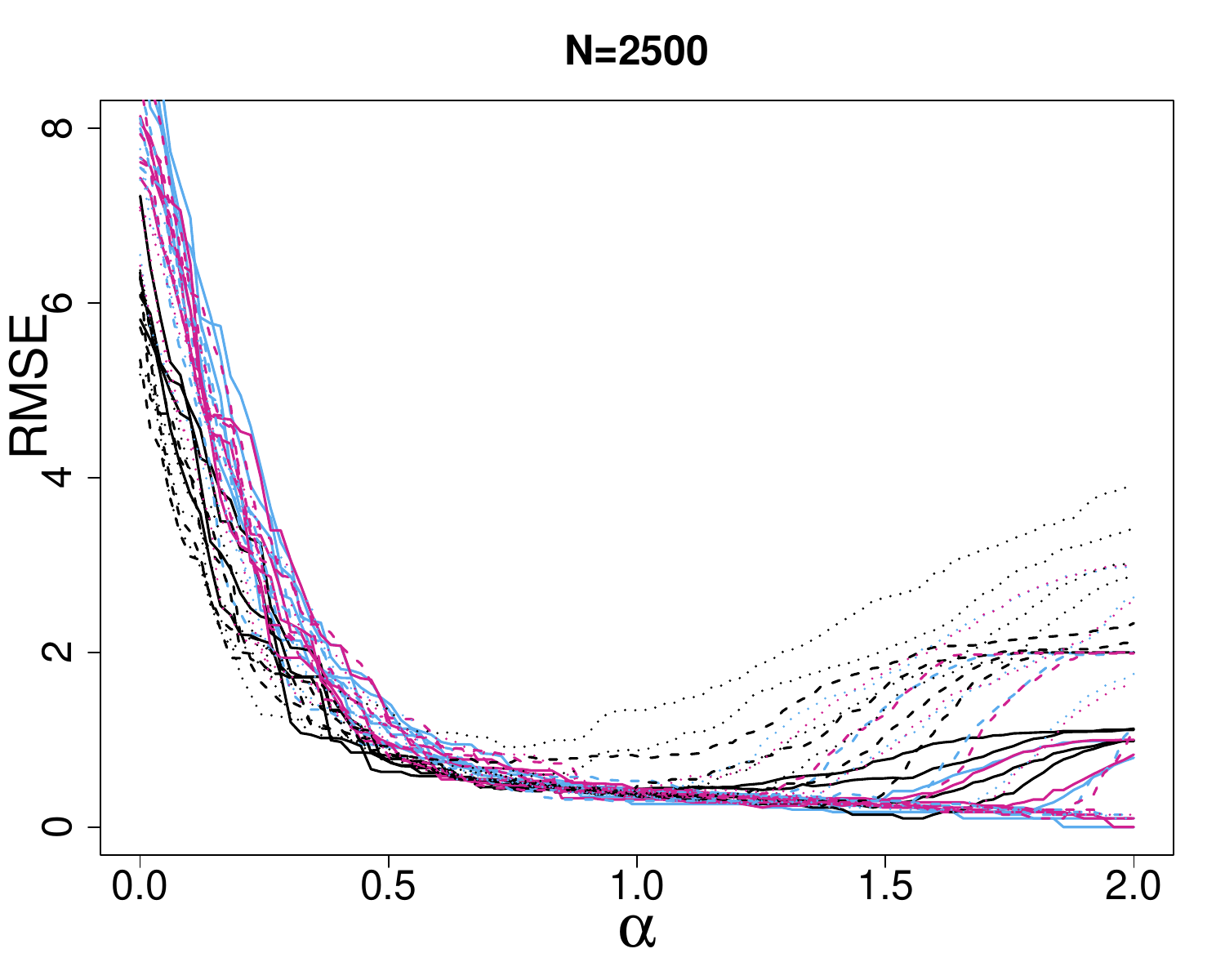}
\end{minipage}
\begin{minipage}[c]{.32\textwidth} 
\centering%
\includegraphics[width=\textwidth,trim={0 0 0 0},clip]{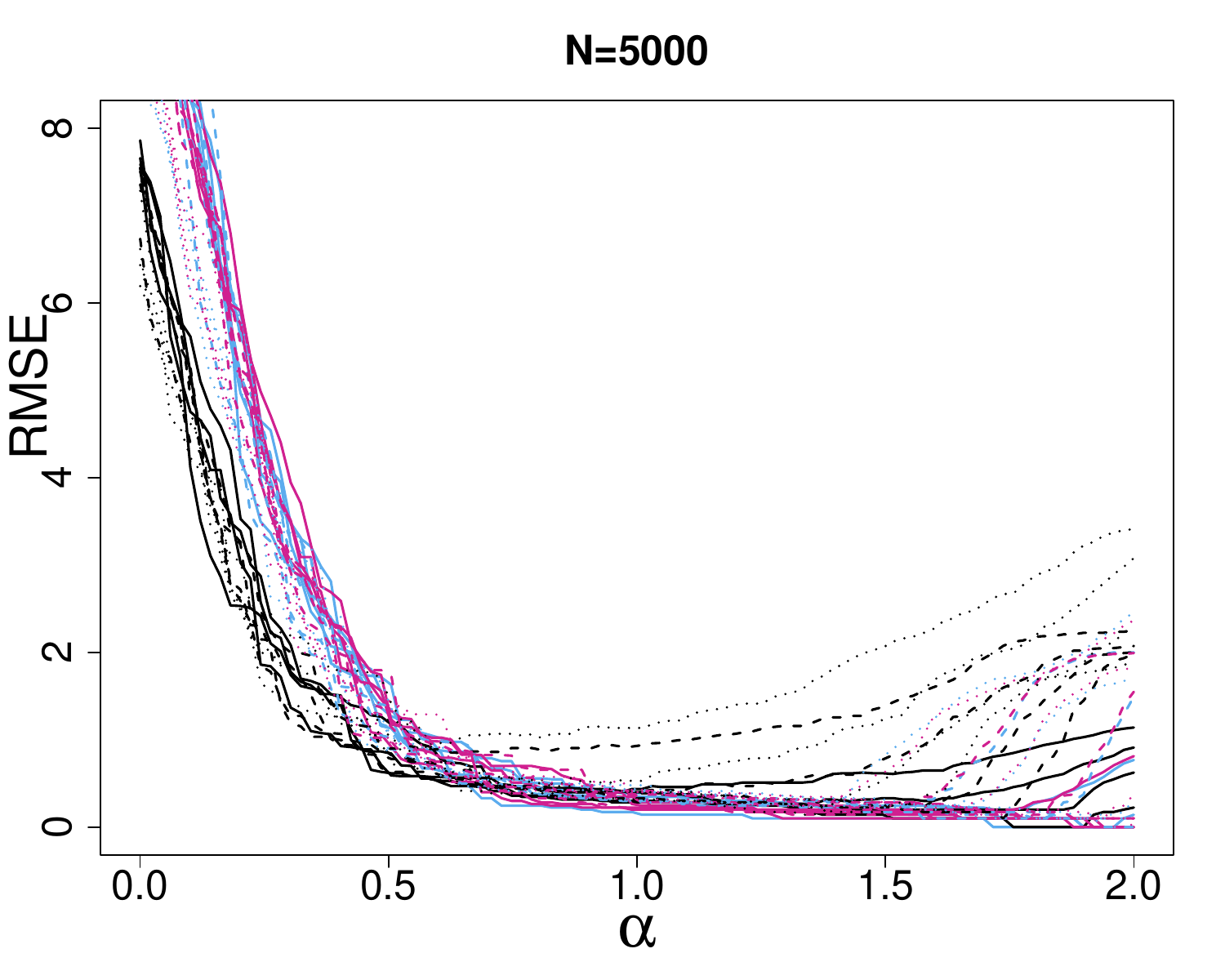}
\end{minipage}
%
\caption{Each curve shows the empirical root mean squared error of $\hat{\ell}$ for different values of $\alpha$ under Algorithm \ref{alg:wbs} paired with Mahalanobis depth, for a given simulation parameter combination, following the legend of Figure \ref{fig:C_ch}. Mahalanobis depth is used rather than spatial depth because of computational efficiency.}
\label{fig:alpha}%
\end{figure}
For consistency of Algorithm \ref{alg:pelt} to hold, the penalty term should satisfy $O(1)<\beta_{\scaled{N}}<O(N)$; this gives a wide range of choices for the penalty term. 
In practice what penalty term should be used? 
The results of the simulation study suggested using a penalty term of the form $\beta_{\scaled{N}}=C_1\sqrt{N}+C_2$. 
Figure \ref{fig:C_ch} plots the empirical root mean squared errors of $\hat{\ell}$ under spatial depth for different values of $C_1$ and $N$, with $C_2$ fixed at 3.74. 
Each curve represents one combination of the parameters in the first simulation scenario described above. 
The same graphs for other depth functions or other simulation scenarios can be seen in Appendix \ref{app::add_sim}. 
Notice that the curves are not shifting laterally as $N$ increases, meaning that an increase in $N$ is sufficiently captured by the $\sqrt{N}$ term in the penalty. 
Additionally, Figure \ref{fig:C_ch} also shows a flattening of the RMSE curves with increased $N$, which is expected from the consistency theorem. 
Based on low root mean squared error in simulation, we recommend to fix $C_2=3.74$ and run Algorithm \ref{alg:pelt} for a grid of penalties defined by $C_1\in (0.15,0.25)$. 
One can then choose the set of change-points according to a model selection criteria or by visual inspection. 
In the simulation study, we fix $C_1$ at 0.18. 

It should also be noted that a non-linear penalty could be applied, as discussed in \cite{Killick2012}. 
Some non-linear penalties were tested in the simulation study, such as $\log{\ell}$, but the results were not as good as when using a linear penalty. 
Our investigation into non-linear penalties was fairly limited, as such, more investigation into non-linear penalties could be done in the future. 

\subsection{Analysing and comparing the algorithm performance}
\begin{figure}[t]
\begin{minipage}[c]{.49\textwidth} 
\centering
\includegraphics[width=\textwidth,trim={0 00 0 0},clip]{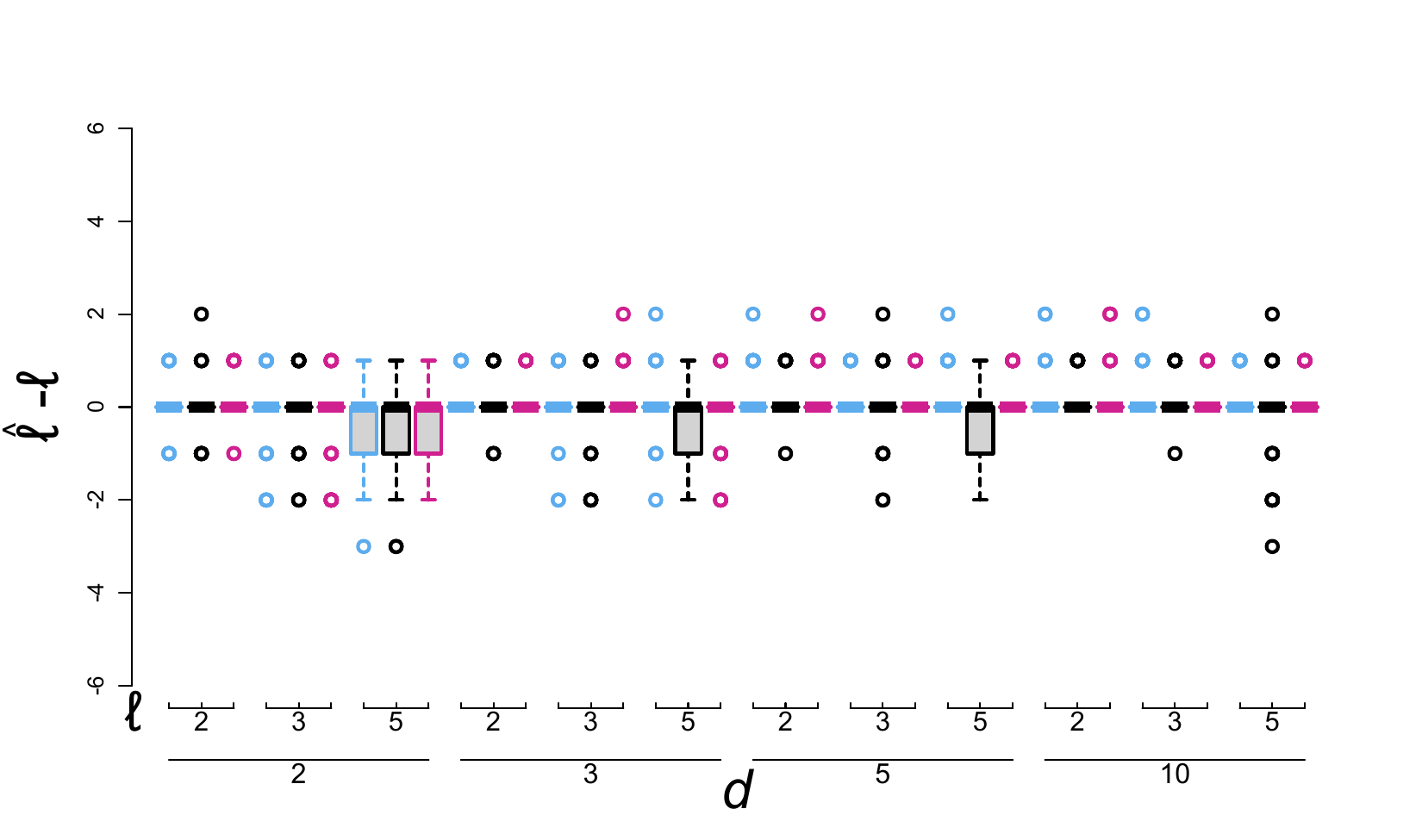}
\caption*{(a) WBS algorithm}
\end{minipage}\hfill
\begin{minipage}[c]{.49\textwidth} 
\centering%
\includegraphics[width=\textwidth,trim={0 0 0 0},clip]{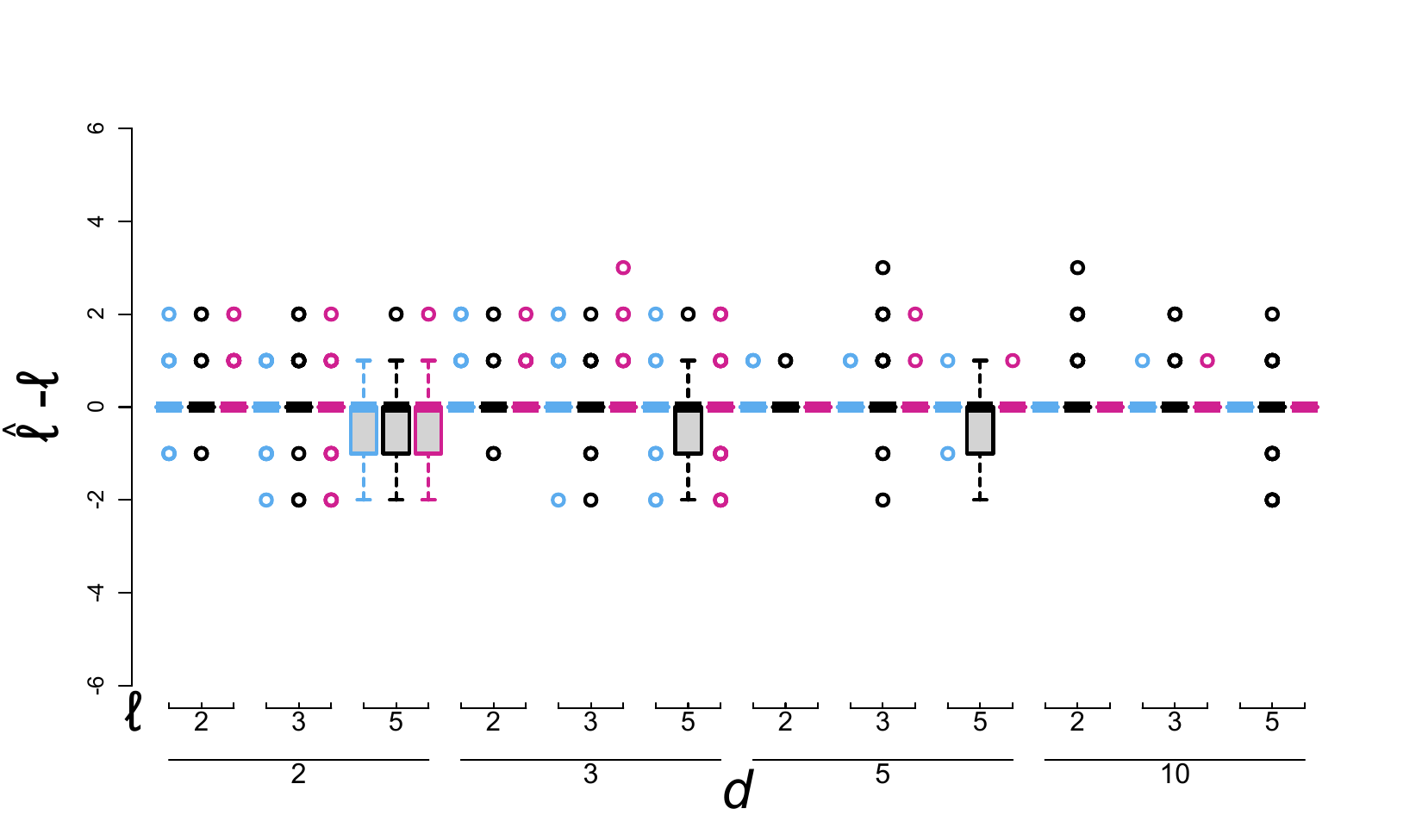}
\caption*{(b) KW-PELT algorithm}
\end{minipage}\hfill\newline
\begin{minipage}[c]{.49\textwidth} 
\centering%
\includegraphics[width=\textwidth,trim={0 00 0 0},clip]{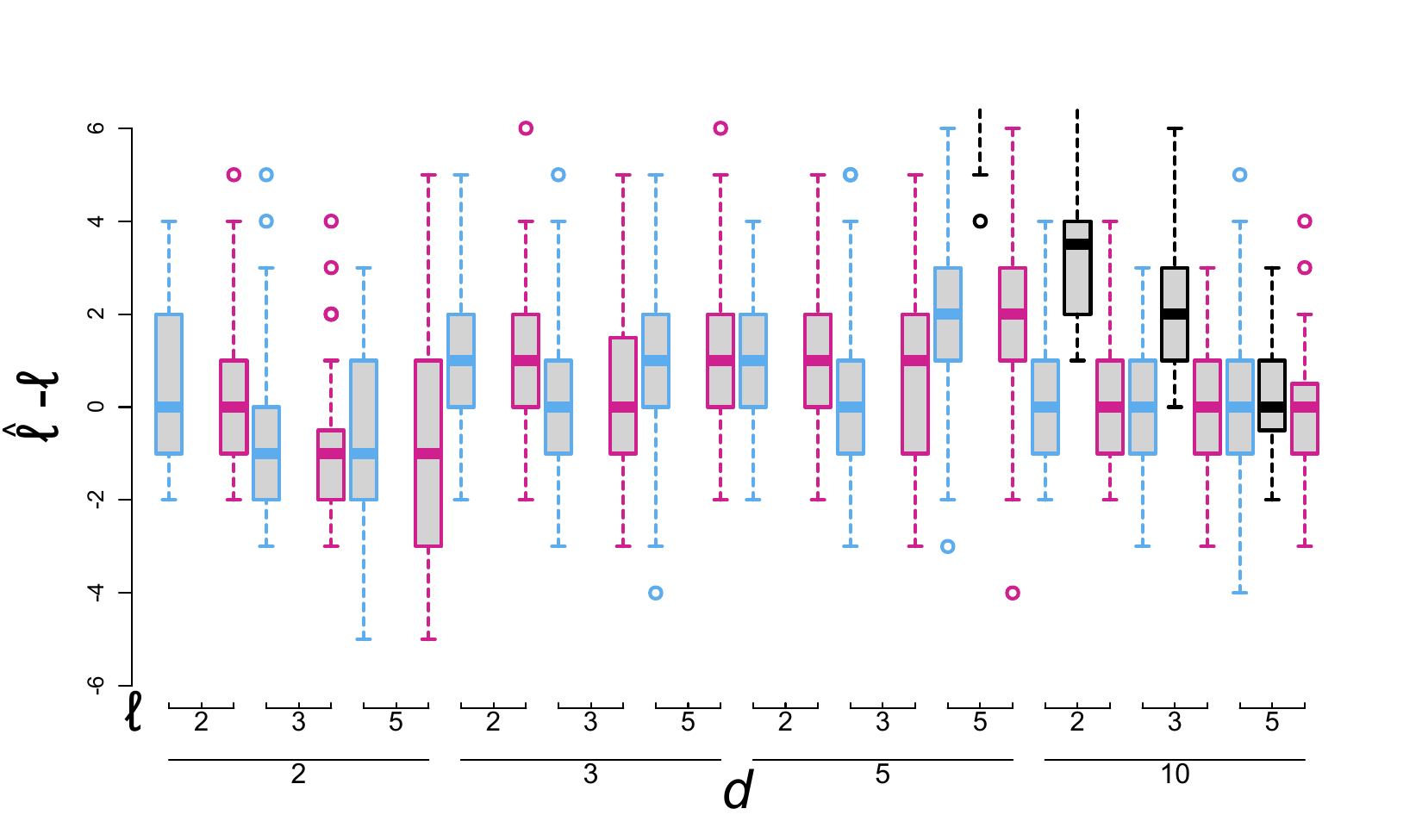}
\caption*{(c) WBSIP algorithm}
\end{minipage}\hfill
\begin{minipage}[c]{.49\textwidth} 
\centering%
\includegraphics[width=\textwidth,trim={0 00 0 0},clip]{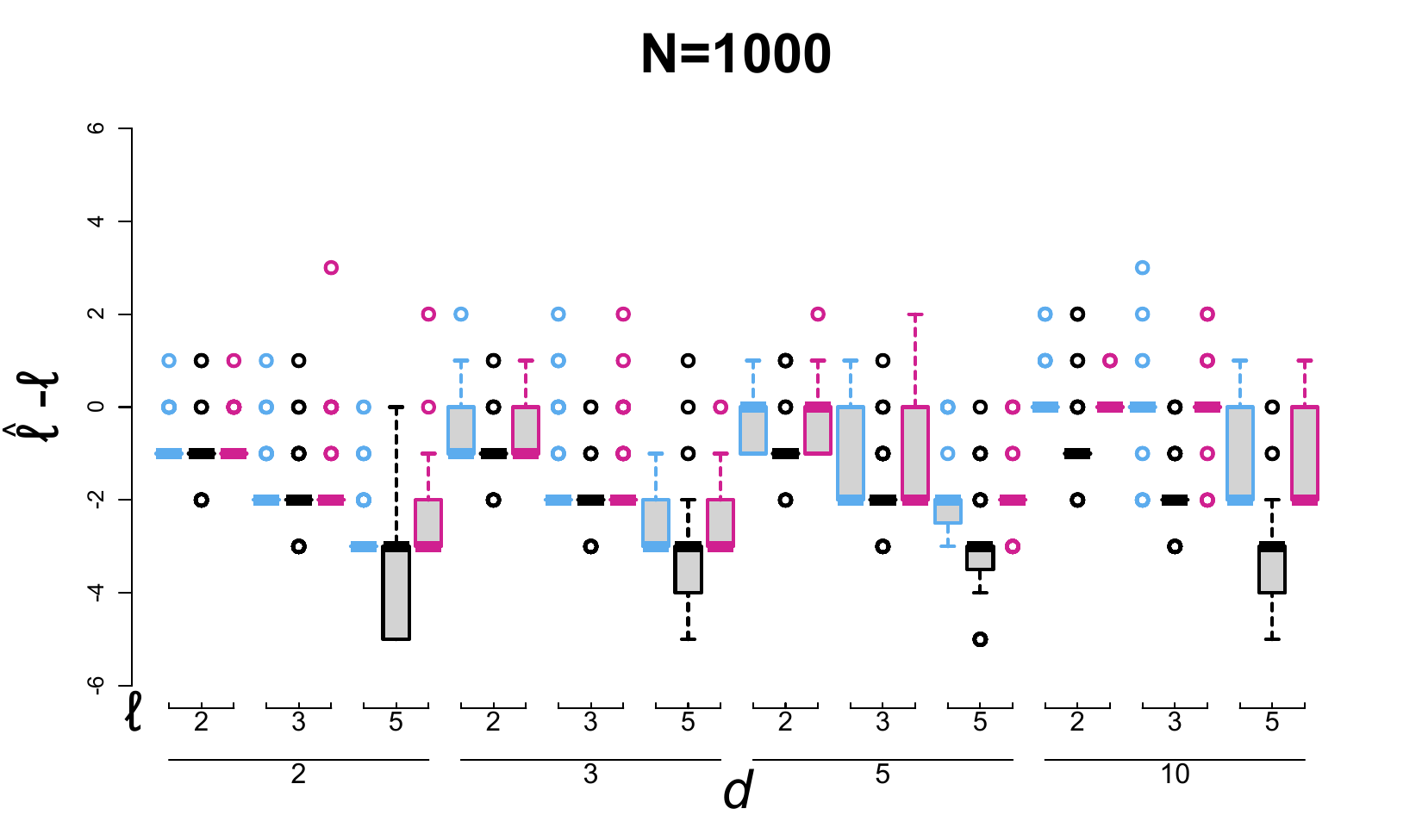}
\caption*{(c) e-divisive algorithm}
\end{minipage}\hfill
\caption{Boxplots of $\hat{\ell}-\ell$ for the different algorithms with $\alpha=0.9$, $C_1=0.18$ and $C_2=3.74$. 
Each boxplot represents the values of $\hat{\ell}-\ell$ for a particular simulation parameter combination. 
Here, the color of the boxplot represents the distribution. 
The top underlying number represents the number of true change-points and the bottom number represents the dimension. The colors follow the legend of Figure \ref{fig:C_ch}.}%
\label{fig:sim1}%
\end{figure}
\begin{figure}[t]
\begin{minipage}[c]{.49\textwidth} 
\centering%
\includegraphics[width=.9\textwidth,trim={0 00 0 0},clip]{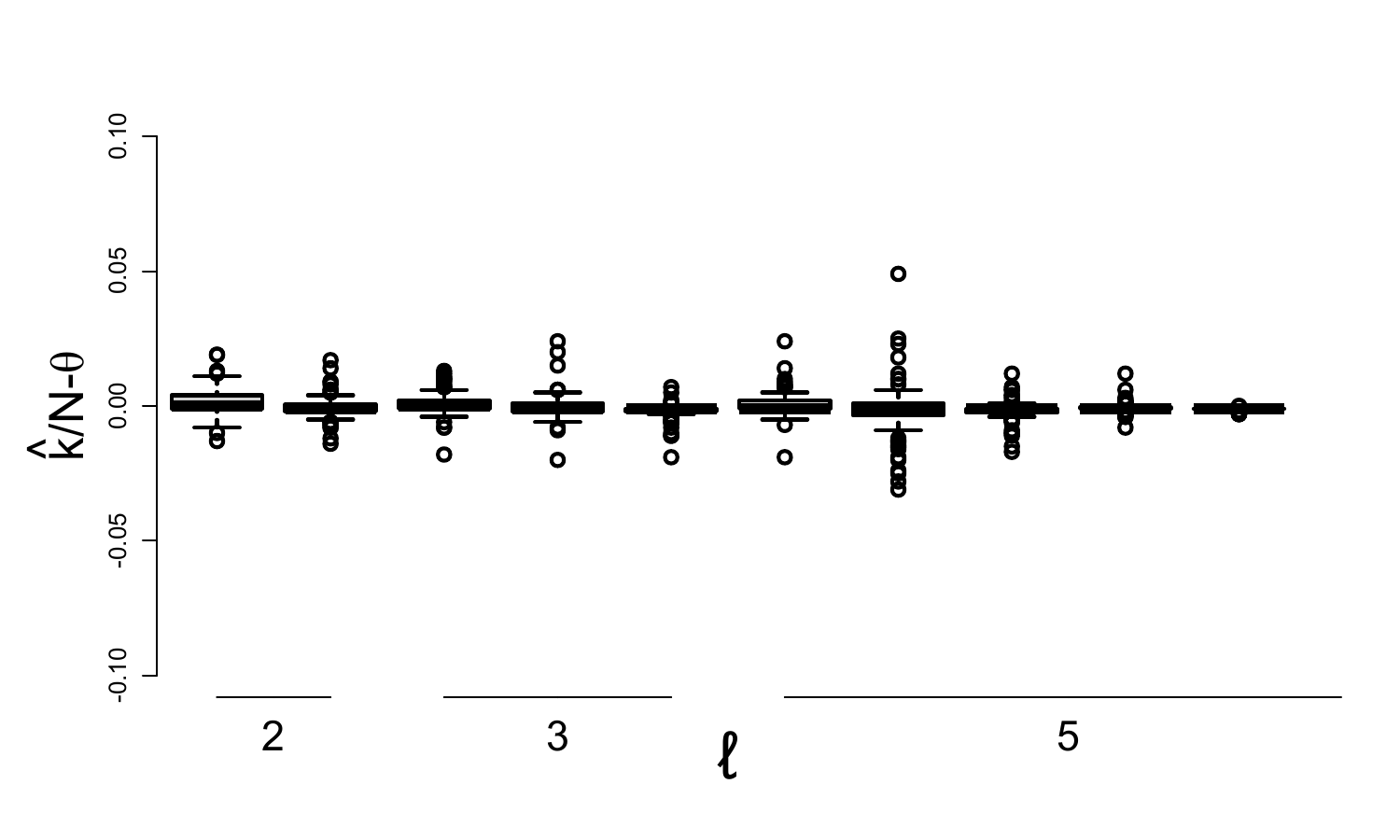}
\caption*{(a) WBS algorithm}
\end{minipage}\hfill
\begin{minipage}[c]{.49\textwidth} 
\centering%
\includegraphics[width=.9\textwidth,trim={0 0 0 0},clip]{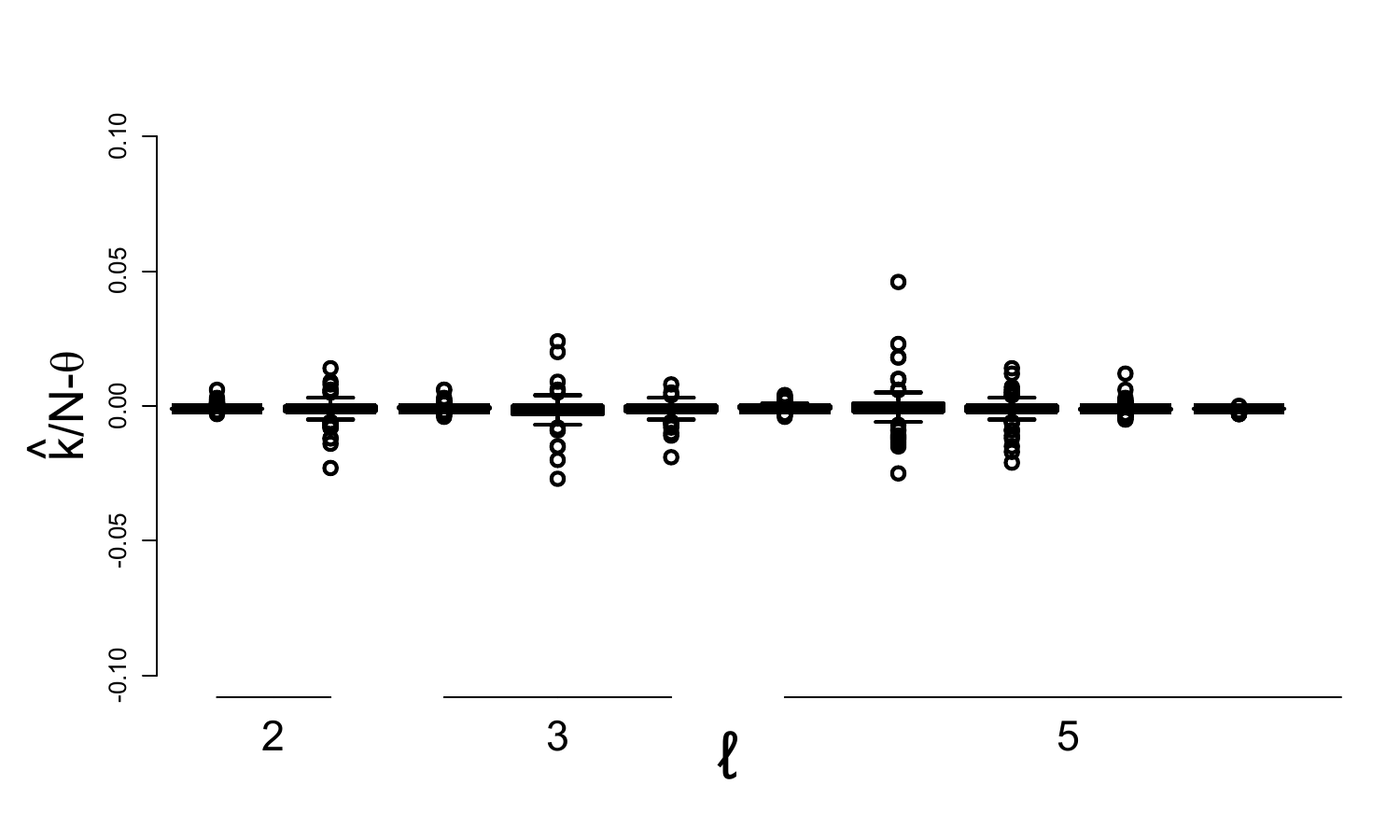}
\caption*{(a) KW-PELT algorithm}
\end{minipage}\hfill\newline
\begin{minipage}[c]{.49\textwidth} 
\centering%
\includegraphics[width=.9\textwidth,trim={0 00 0 0},clip]{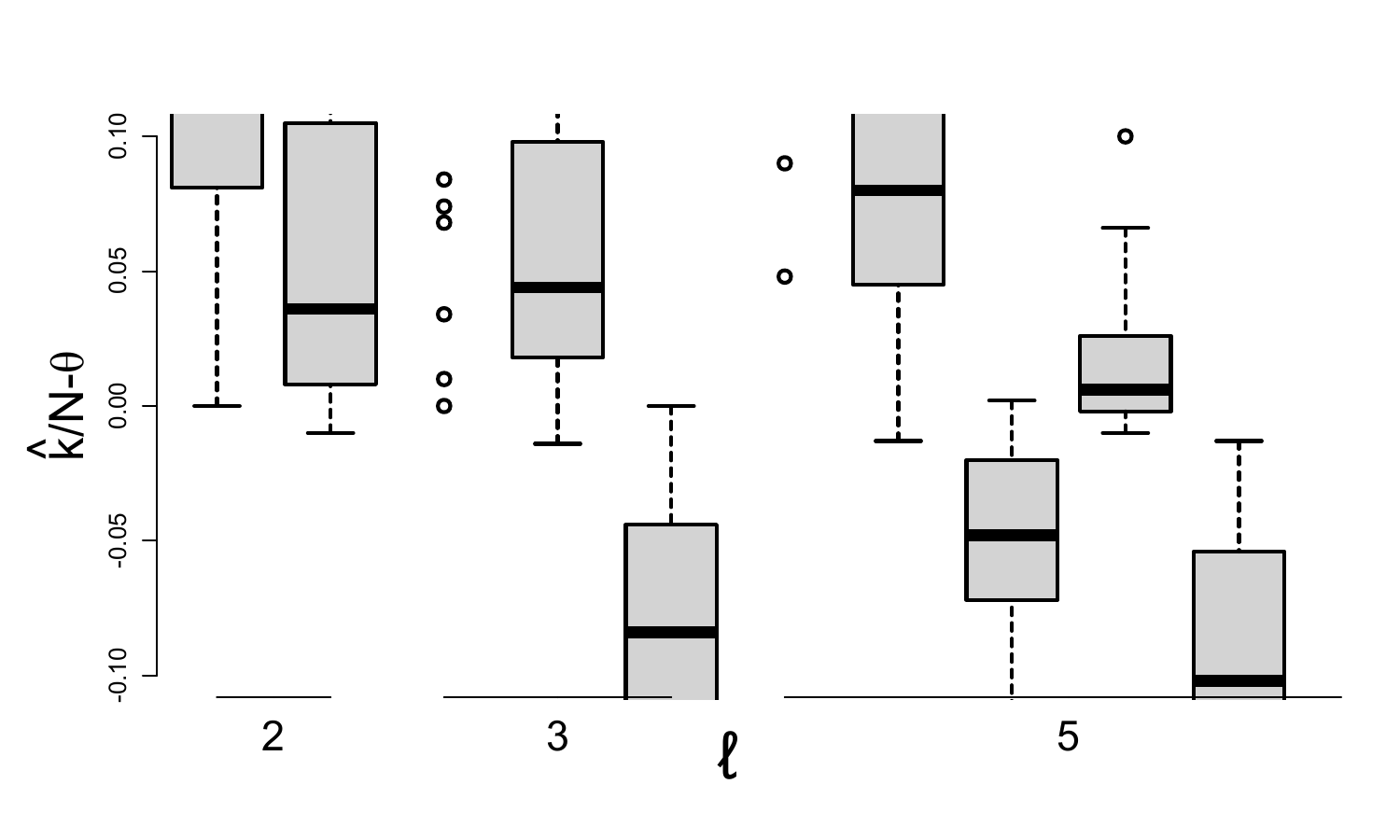}
\caption*{(a) WBSIP algorithm}
\end{minipage}\hfill
\begin{minipage}[c]{.49\textwidth} 
\centering%
\includegraphics[width=.9\textwidth,trim={0 00 0 0},clip]{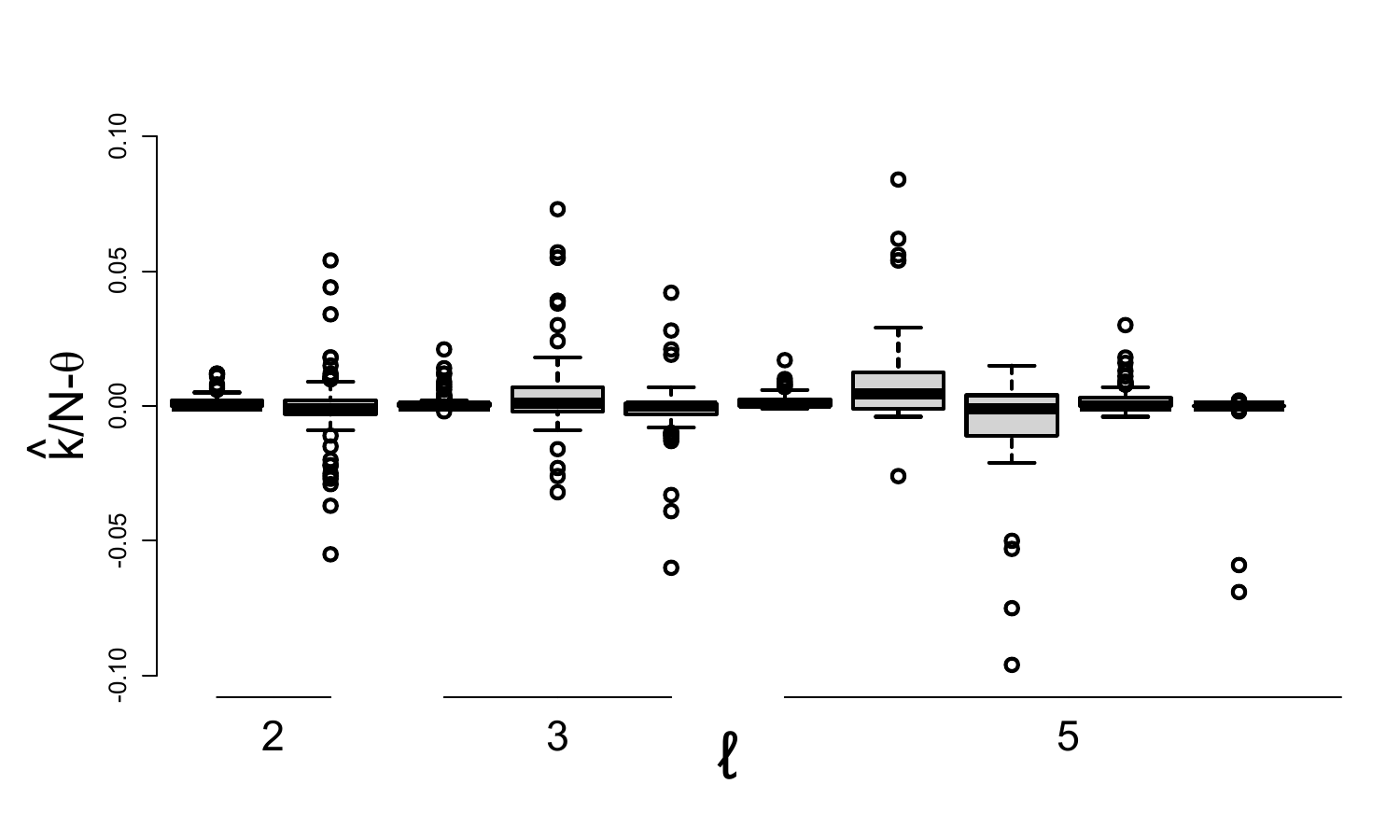}
\caption*{(a) e-divisive algorithm}
\end{minipage}\hfill
\caption{Boxplots of $\widehat{k}/N-\theta$ for the different algorithms with $\alpha=0.9$, $C_1=0.18$ and $C_2=3.74$. 
The distribution of the data was made up of independent normal marginals with $d=10$. 
Each boxplot represents the ability to estimate a particular change-point for a fixed number of true change-points. 
For example, the first two boxplots are the empirical distributions of $\widehat{k}_1/N-\theta_1$ and $\widehat{k}_2/N-\theta_2$ when there were two true change-points. 
The underlying numbers represent the number of true change-points in that simulation parameter combination. 
Each boxplot represents the ability to estimate a single change-point in a run.}%
\label{fig:sim2}%
\end{figure}

We only present the results of the simulation when the algorithms were paired with spatial depth, but one can view the results of the algorithm performance under the other depth functions in Appendix \ref{app::add_sim}. 
Spatial depth was the best performing depth function when taking into account computational speed and estimation accuracy. 
Half-space depth performed similarly to spatial depth, but computationally it was much slower. 
Half-space depth is affine-invariant whereas spatial depth is only similarity-invariant, therefore if affine invariance is desired in the analysis the analyst should use the half-space depth function. 
Modified Mahalanobis depth and Mahalanobis depth both performed slightly worse than the other two depth functions if the data was Gaussian, but performed considerable worse when the distribution of the data was Cauchy, which could be attributed to the robustness considerations of Mahalanobis depth discussed in Section \ref{sec::depth}.

\begin{figure}[t!]
\begin{minipage}[c]{.49\textwidth} 
\centering%
\includegraphics[width=.8\textwidth]{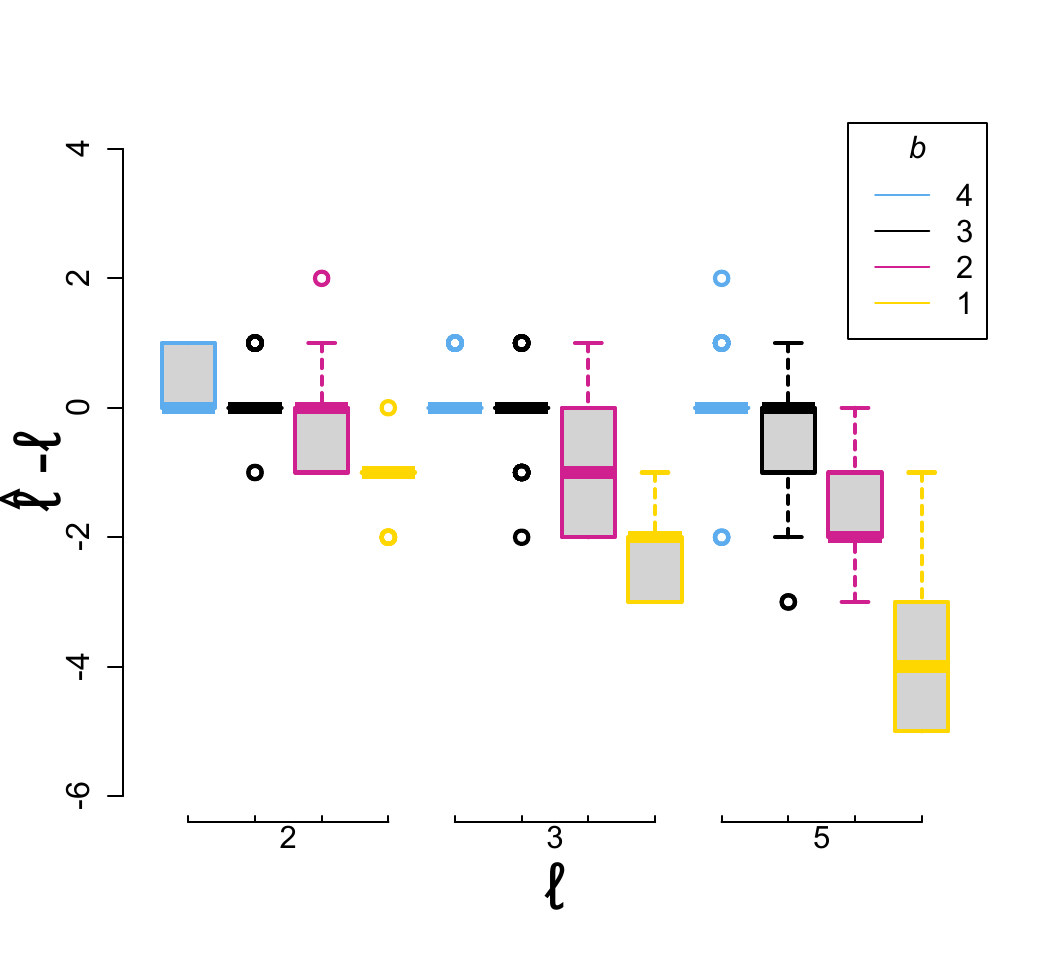}
\caption*{(a) WBS algorithm}
\end{minipage}
\begin{minipage}[c]{.49\textwidth} 
\centering%
\includegraphics[width=.8\textwidth]{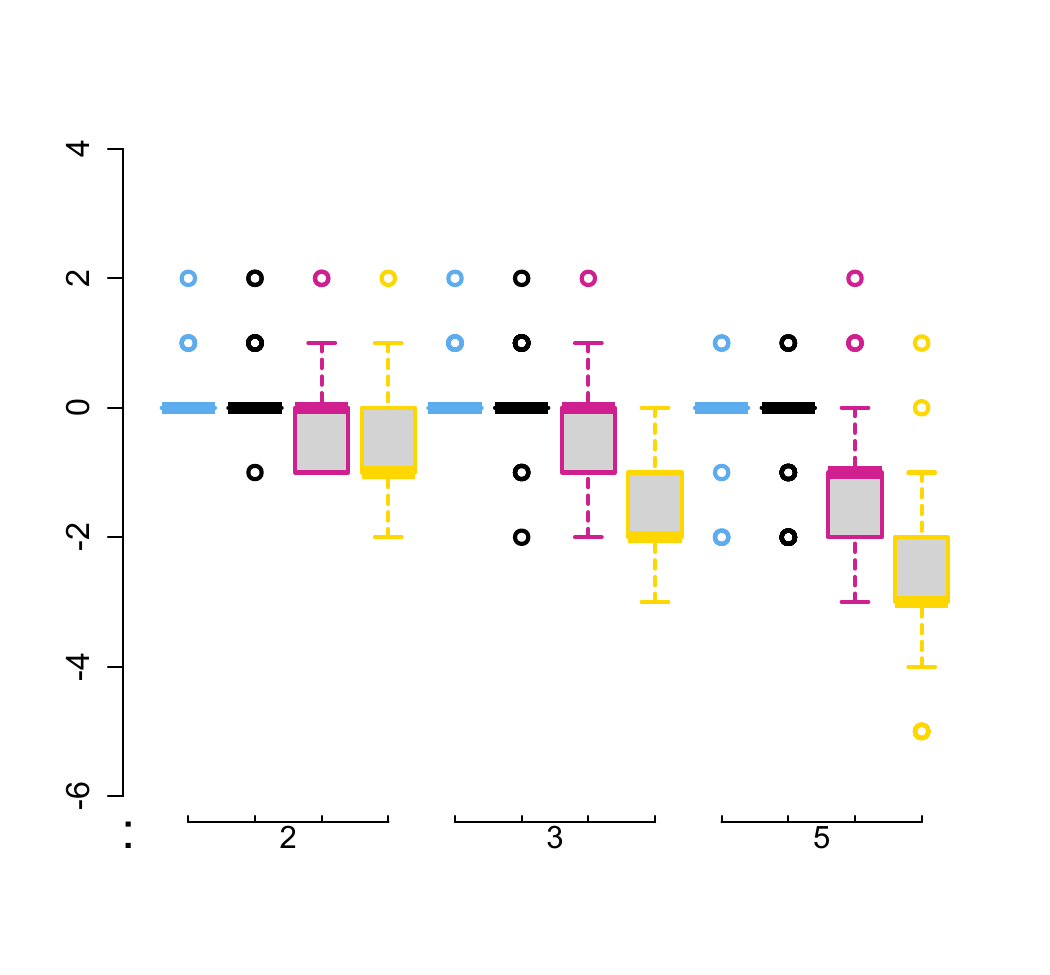}
\caption*{(b) KW-PELT algorithm}
\end{minipage}
\caption{Boxplots of $\hat{\ell}-\ell$ for the third simulation scenario under spatial depth, where the colours indicate $b$, the size of the submatrix to which the expansion/contraction was applied. In other words, the submatrix in which an expansion or contraction was applied had dimension $b\times b$. The underlying numbers represent the number of true change-points in the simulation.}%
\label{fig:sub}
\end{figure}
Figure \ref{fig:sim1} shows boxplots of $\hat{\ell}-\ell$ under Algorithm \ref{alg:wbs}, Algorithm \ref{alg:pelt}, the WBSIP algorithm and the \texttt{e.divisive} algorithm, for the first simulation scenario with $N=1000$. 
Recall that $\hat{\ell}-\ell$ is the estimated number of change-points minus the actual number in the simulation run. 
Each boxplot represents a different combination of simulation parameters, e.g., the first boxplot represents the (empirical) distribution of $\hat{\ell}-\ell$ with simulated two-dimensional Gaussian data that had 2 change-points. 
The empirical distributions are computed over the 100 replications of each combination of simulation parameters.

Figure \ref{fig:sim1} shows that both Algorithm \ref{alg:wbs} and Algorithm \ref{alg:pelt} estimate the number of change-points more accurately than the WBSIP algorithm and the e-divisive algorithm. 
One reason for this is that our method is less general in terms of the types of changes it can detect when compared to the other two algorithms; our method trades generality for accuracy. 
A second reason for these results is robustness. 
When the data is Cauchy, neither of the competing algorithms perform very well. 
Neither of these methods are designed to handle outliers or heavy-tailed data. 
For example, when the data have Cauchy marginals, the assumptions for consistency of the WBSIP procedure are violated. 
One should also recall that WBSIP was designed for high dimensional data which is not the main focus of this simulation study. 
In Figure \ref{fig:sim1} it is easily seen that as the dimension increases WBSIP performs better. 
Speaking of dimension, both Algorithms \ref{alg:wbs} and \ref{alg:pelt} were very insensitive to the dimension and the number of change-points. 

In terms of accuracy when the change-point was detected, both algorithms performed very well. 
Figures \ref{fig:sim2} shows boxplots of $\widehat{k}/N-\theta$ for Algorithm \ref{alg:wbs}, Algorithm \ref{alg:pelt}, WBSIP and the e-divisive algorithm when $d$ was 10 and the distribution was normally distributed. 
Boxplots under other simulation parameters were similar\footnote{They were of course worse when the data came from a Cauchy distribution.} and can be seen in the Appendix \ref{app::add_sim}. 
Generally, the estimates were at most about 5\% off of the true break fraction, with the majority of biases being in the 1\% range. 

This corresponds to 10 time units away when $N=1000$; $\hat{k}$ was typically within 10 time units of the true change-point $k$. 
Again our methods were insensitive to the dimension and the number of change-points. 
When compared with the WBSIP algorithm and e-divisive algorithm, we see that both Algorithm \ref{alg:wbs} and Algorithm \ref{alg:pelt} appear to estimate the location of the change-points more accurately. 
We can then conclude that our algorithms outperforms the WBSIP algorithm and e-divisive algorithm when the data only contains changes in variability. 

\begin{figure}[t]
\begin{minipage}[c]{.5\textwidth} 
\centering%
\includegraphics[width=.9\textwidth,trim={0 00 0 0},clip]{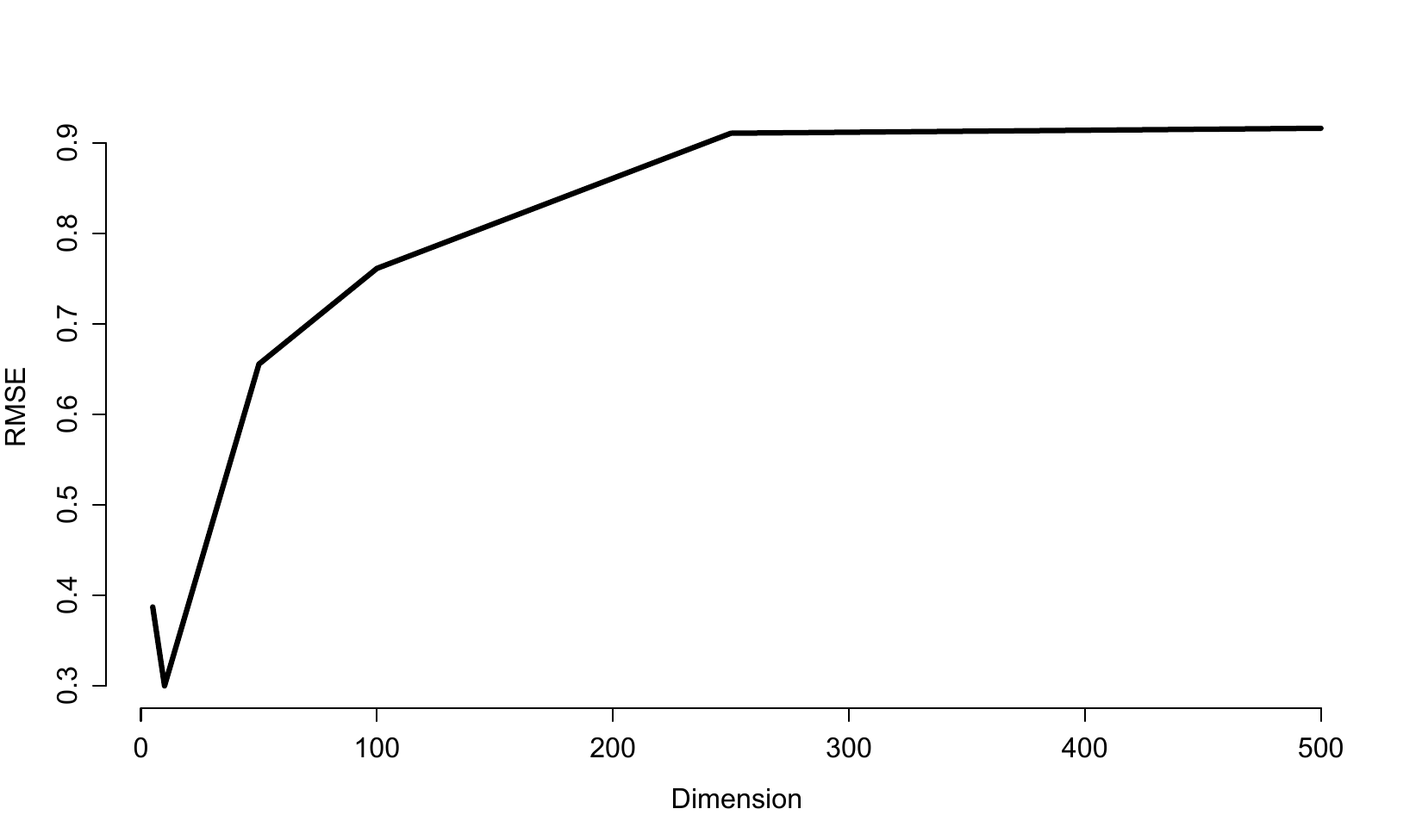}
\caption*{(a)}
\end{minipage}\hfill
\begin{minipage}[c]{.5\textwidth} 
\centering%
\includegraphics[width=.9\textwidth,trim={0 0 0 0},clip]{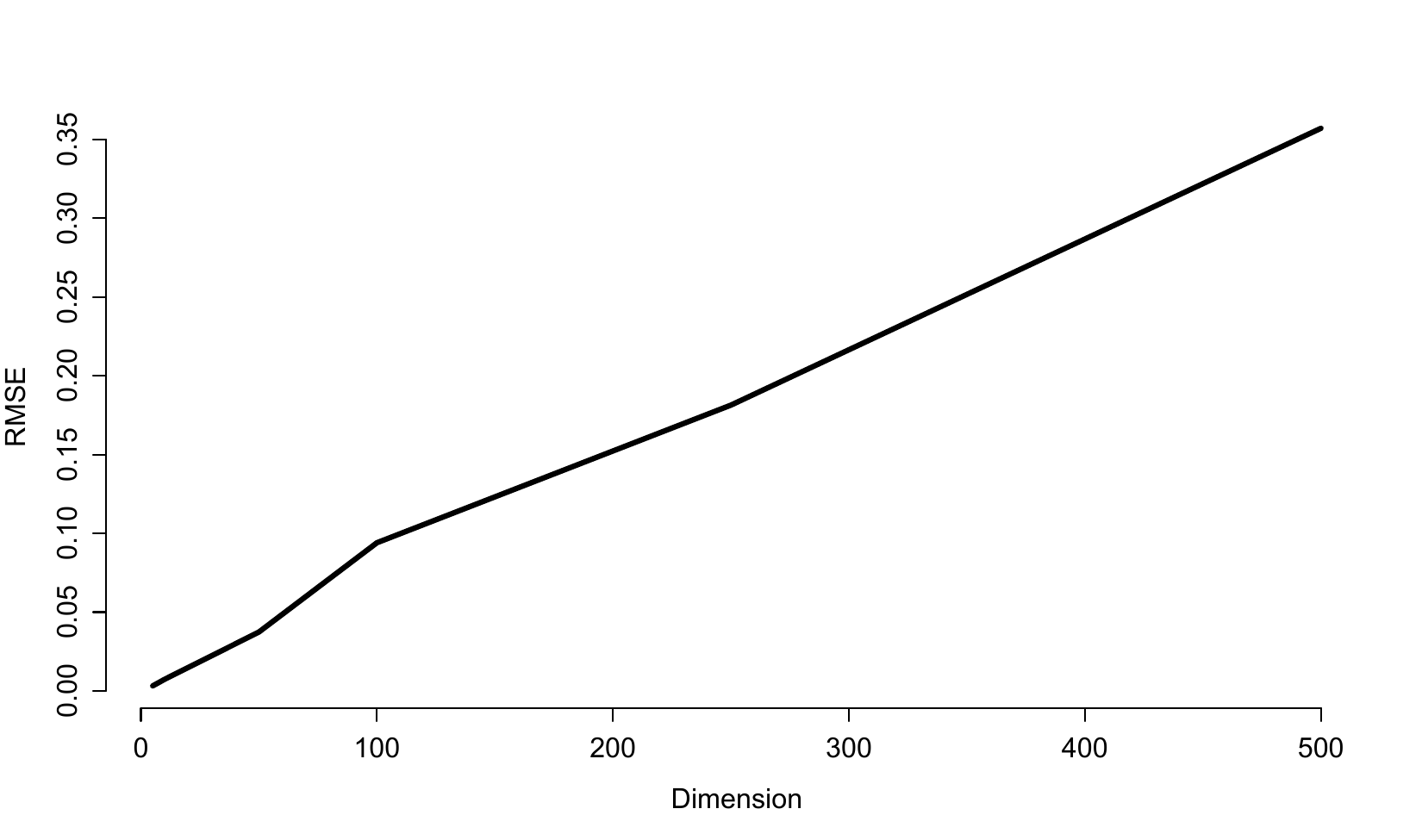}
\caption*{(b)}
\end{minipage}\hfill
\caption{(a) Empirical root mean square error of $\hat{\ell}$ as the dimension increases, but the size of the change remains fixed. (b) Empirical root mean square error of $\hat{k}$ as the dimension increases, but the size of the change remains fixed. The change-points were estimated using the KW-PELT algorithm paired with spatial depth. }%
\label{fig:HD}%
\end{figure}

In a different simulation scenario, we fixed $d=5$ and applied the expansions and contractions of the first scenario to a $b\times b$ submatrix of the covariance matrix. 
Figure \ref{fig:sub} shows boxplots of $\hat{\ell}-\ell$ for both algorithms, under this simulation scenario. 
Figure \ref{fig:sub} shows the results for spatial depth, with $N=1000.$ 
The plots for the other depth functions can be seen in Appendix \ref{app::add_sim}. 
The colour of the boxplot in Figure \ref{fig:sub} represents $b$, the size of the submatrix to which the expansion/contraction was applied, i.e., the submatrix in which an expansion or contraction was applied had dimension $b\times b$. 
The underlying numbers represent the number of change-points in that particular simulation scenario. 
We see that as $b$ decreases, the ability to detect the changes decreases. 
This is expected, since a smaller change should be more difficult to detect. 
We can also see that here, the KW-PELT algorithm performs better than the WBS algorithm. 
One remedy for detecting changes in relatively low dimensions might be to subsample dimensions of the data and run the procedure on each of the subsampled dimensions. 
We leave that for future work.

Lastly, we ran simulations designed to assess the performance of our methods in high dimensions. 
We use the KW-PELT algorithm with spatial depth, due to both its performance and the fact that spatial depth can be computed quickly in high dimensions. 
We simulated normal data, with one change-point and with two change-points, at $N=1000$ for $d=50$ and $d=500$. 
The KW-PELT algorithm estimated both the number of change-points and the location of the change-points were detected with 100\% accuracy. 
This is not surprising since we might view an expansion of a very large matrix as a very large change in variability. 
For example, the trace of the expanded matrix is increasing as the dimension is increased, and so the signal is increasing with the dimension under an expansion-type change. 
The story changes if the data is high dimensional and the data is sparse, i.e., the change only occurs in a submatrix of the covariance matrix which has a fixed dimension. 
We ran another simulation where there is was change point, and the change only occurred in a $5\times 5$ submatrix. 
Figure \ref{fig:HD} shows that as we increase the dimension, the algorithm has a more difficult time estimating the change-point accurately. 
This suggests that when the data is suspected to be very sparse, we may wish to develop a depth function that accounts for sparsity. 

In summary, which both algorithms performed very well relative to competitors in this simulation set-up. 
The results also show that the WBS algorithm and the KW-PELT algorithm are very comparable. 
The KW-PELT algorithm is computationally faster, and can be more accurate under sparsity.
However, its tuning parameter requires some subjectivity. 
Furthermore, both algorithms have the same theoretical rate of convergence. 
We are tempted to recommended the KW-PELT algorithm with the understanding that the performance of the algorithms is very similar. 
In terms of the depth functions, half-space depth and spatial depth performed better than the two Mahalanobis depth variants. 
Since half-space depth takes longer to compute, we would ultimately recommend using spatial depth with either algorithm in practice. 

\section{An application to financial returns}\label{sec::da}
In this section we apply the methodology to four daily stock returns. 
\texttt{R} codes for this analysis can be found on Github at \citep{Ramsay2019}. 
We analyze the same data set analyzed by \cite{Galeano2017} and compare our results to those produced by their method. 
It is expected that algorithms will produce different results, due to the fact that the aim of \cite{Galeano2017} was to detect changes in the correlation structure of the returns; not necessarily the covariance matrix. For example, they assume constant variances over time. 
The results should be seen as complementary to those of \cite{Galeano2017}.

\begin{figure}[t]
\begin{minipage}{.64\linewidth} 
\centering%
\begin{tabular}[b]{ccc}
\toprule
Change-point WBS & Change-point KW-PELT & CUSUM value   \\ \midrule
    Jul 18 `07 &Jul 26 `07 & 2.43  \\
    Sep 05 `08 &Sep 25 `08 & 5.49 \\
    Dec 08 `08 &Dec 08 `08 & 2.75  \\
    May 01 `09 & May 19 `09& 2.13  \\
    Aug 25 `09 & ND & 6.36 \\
    ND & Jul 22 `10& -\\
    Jul 25 `11 &Jul 25 `11 & 2.36 
\end{tabular}
\end{minipage}
\begin{minipage}{.35\textwidth} 
\centering
\includegraphics[width=\textwidth]{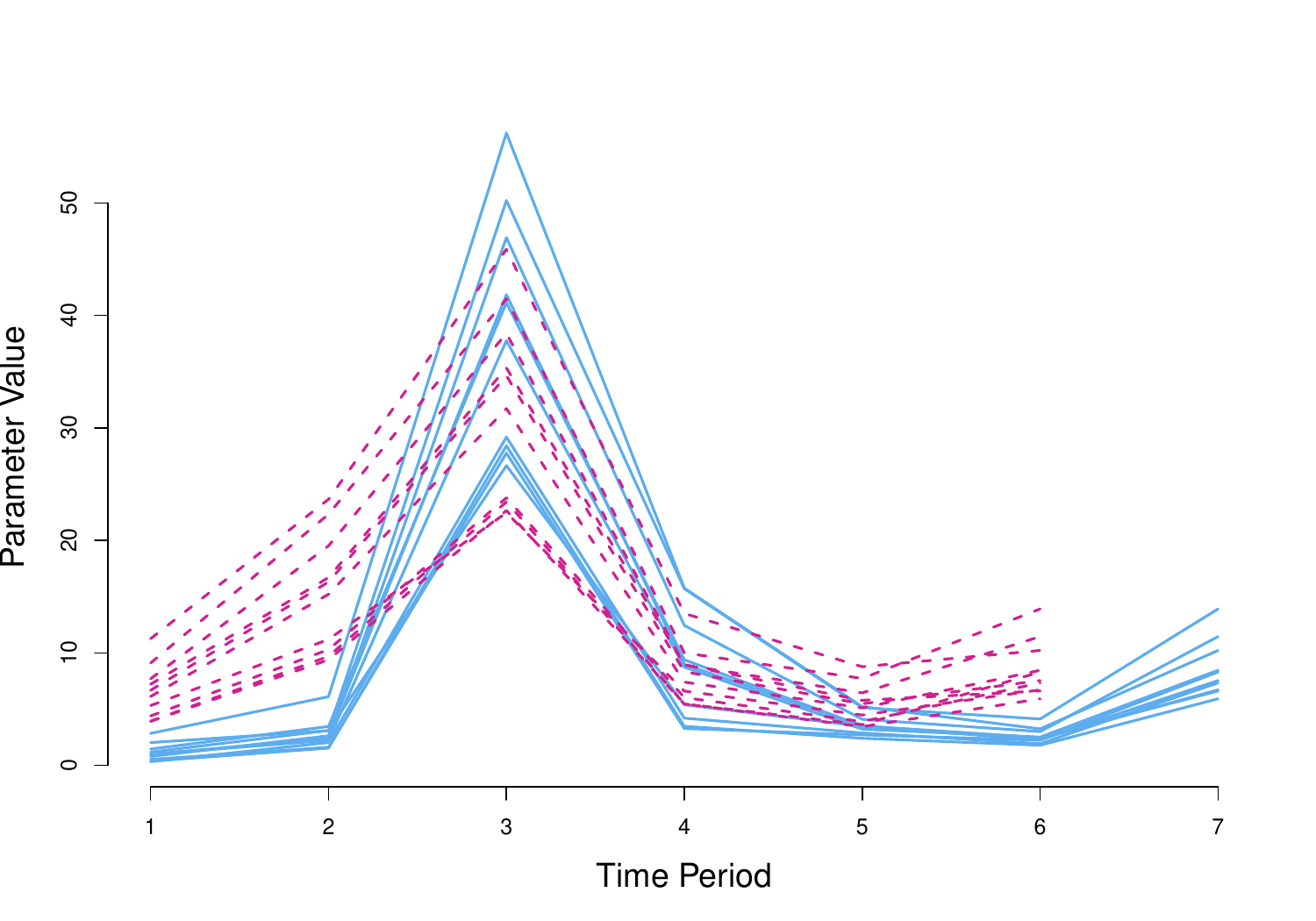}
\end{minipage}
\caption{\textit{Left}: Change-points estimated by Algorithms \ref{alg:wbs} and \ref{alg:pelt}. ND stands for not detected by the Algorithm. Associated CUSUM statistics are also provided. \textit{Right}: Covariance matrix parameters at each interval for both Algorithm \ref{alg:wbs} (pink dashed) and Algorithm \ref{alg:pelt} (blue solid) connected by lines to emphasize the change in the parameter values.}
\label{fig:CovOT}
\end{figure}

It is clear that this data has some serial dependence; it does not fit the independence assumption. 
That being said, we feel that the results still provide some insight into the data. For example, the data appears to admit a weak dependence structure.
As a result of the concentration inequality for rank statistics for $m$-dependent data \citep{Wang2019}, we only need Assumption \ref{ass:consDepth} to hold under $m$-dependence in order for the consistency properties to hold. 
In fact, the consistency of many depth functions is, in part, a result of Glivenko-Cantelli type theorems. 
Seeing as extensions of such theorems exist for $m$-dependent data \citep{Bobkov2010} it is likely possible to extend the results of Section \ref{sec::theo}.
The convergence of depth functions for dependent data is an interesting topic for further research.

We applied the both proposed Algorithms to the raw daily returns.
We ran the WBS algorithm with 700 intervals ($100\floor{\log{N}}$) using all depth functions with $\alpha=0.9$.  
When running Algorithm \ref{alg:pelt}, we used penalty constants $C_1=0.24$ and $C_2=3.74$, these were chosen according to the discussion in Section \ref{sec::params}. 
The results did not vary at all among the different depth functions for the Algorithm \ref{alg:wbs}, and were virtually the same under Algorithm \ref{alg:pelt}, the only difference was that the modified Mahalanobis depth predicted the December 2008 change-point on December 9th rather than on the $8^{th}$.

Table \ref{fig:CovOT} contains the estimated change-points produced by the Algorithms and the associated CUSUM statistic values from Algorithm \ref{alg:wbs}. 
Figure \ref{fig:ret} plots the estimated change-points on the data from both Algorithms. 
Observe that algorithms are also both unaffected by the outliers in the Siemens returns, which can be seen to the left and right of January 2008. 
Some of the change-points have a clear interpretation.
For example, the first change-point (July 18, 2007) signifies the beginning of the global financial crisis and the second (September 05, 2008) is associated with the collapse of Lehman brothers.
In the following months, measures to stem the effects of the crisis may contribute to the next two change-points. 
For example, in early December 2008 the EU agreed to a 200 billion dollar stimulus package. 
The later change-points are associated with the Greek government debt crisis; in July 2011, the Troika approved a second bailout (of the Greek government).

The algorithms reproduced both change-points found by \citep{Galeano2017} (July 18, 2007 and September 05, 2008). 
Changes in correlation could be accompanied by expansions or contractions in the covariance matrix of these returns. 
It is possible that these changes (correlation and covariance) are byproducts of a general increase/decrease in systematic volatility. 
Many financial returns are generally thought to have some systematic/market-wide dependence \citep{zvibodieprofessor2017}. 
Figure \ref{fig:CovOT} shows the estimated pairwise covariances as well as the estimated variances of each stock within each period of `no change'. 
The uniform movement of the parameters indicate contractions and expansions, rather than some other type of change. 
Additionally, we note that all changes under Algorithm \ref{alg:wbs} were significant when the Bonferoni correction was applied to the set of test statistics at the 5\% level of significance. 

\begin{figure}[t]
\begin{minipage}[c]{.5\textwidth} 
\centering%
\includegraphics[width=\textwidth,trim={0 00 0 0},clip]{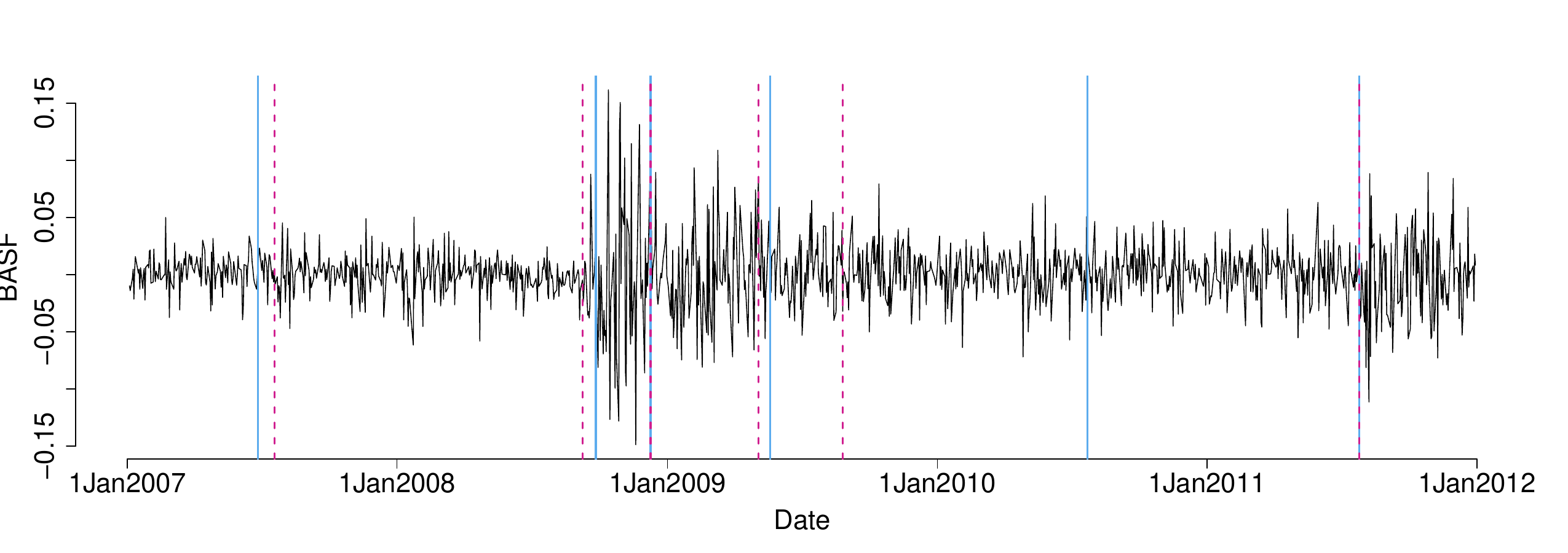}
\end{minipage}\hfill
\begin{minipage}[c]{.5\textwidth} 
\centering%
\includegraphics[width=\textwidth,trim={0 0 0 0},clip]{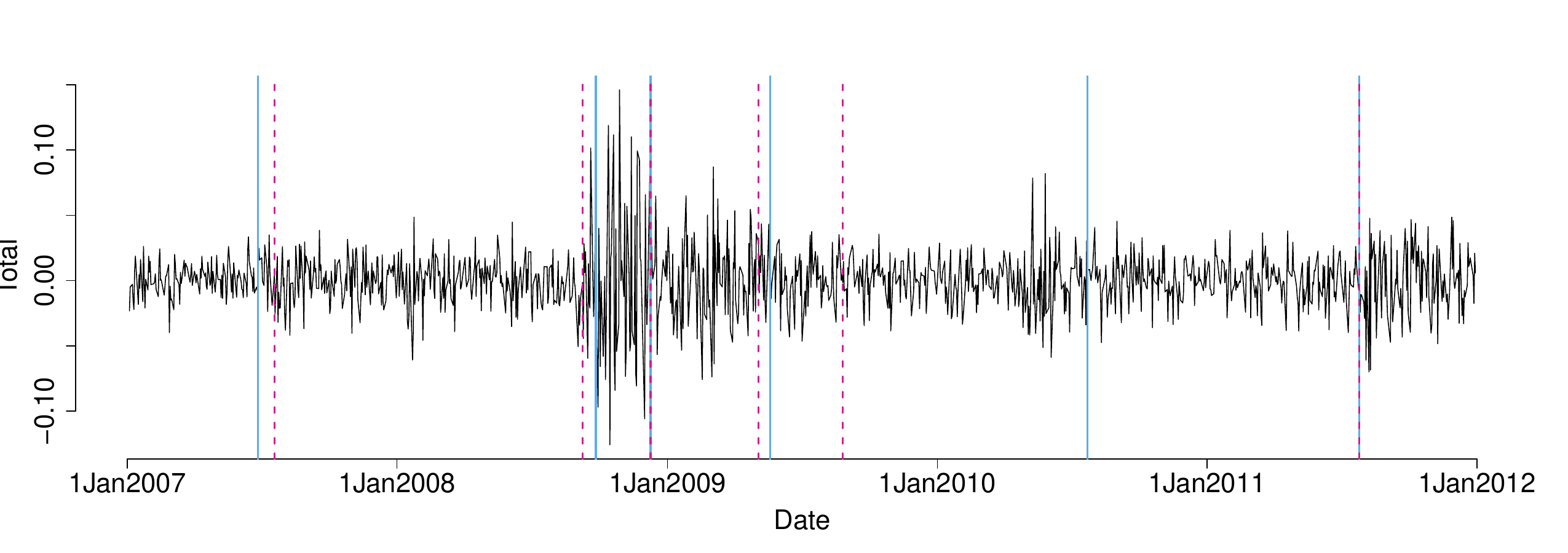}
\end{minipage}\hfill\newline
\begin{minipage}[c]{.5\textwidth} 
\centering%
\includegraphics[width=\textwidth,trim={0 00 0 0},clip]{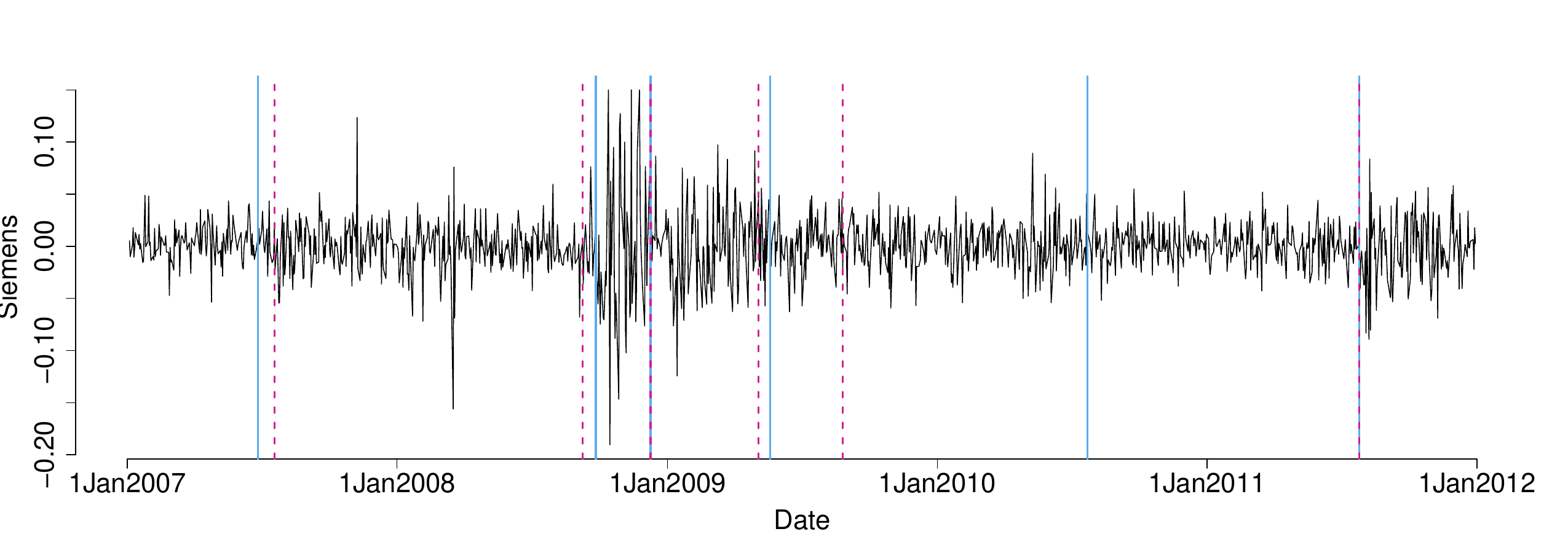}
\end{minipage}\hfill
\begin{minipage}[c]{.5\textwidth} 
\centering%
\includegraphics[width=\textwidth,trim={0 0 0 0},clip]{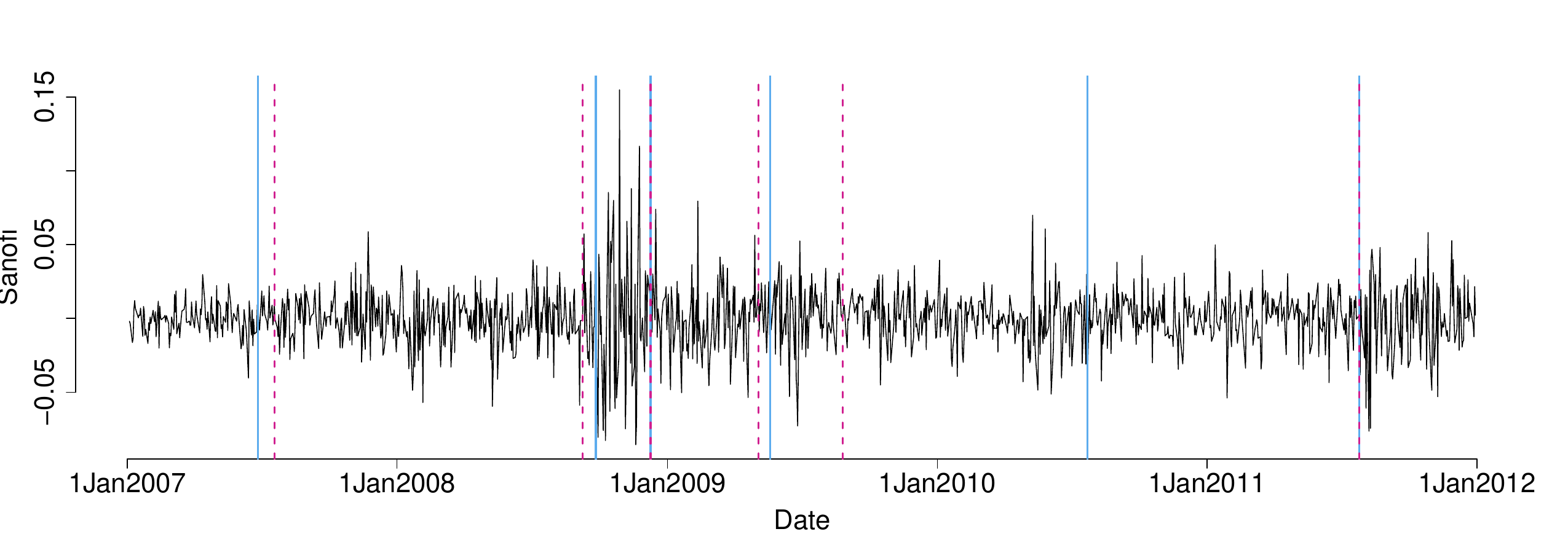}
\end{minipage}\hfill
\caption{Returns with estimated change-points from Algorithm \ref{alg:pelt} marked by solid, blue lines and Algorithm \ref{alg:wbs} marked by dashed pink lines.}%
\label{fig:ret}%
\end{figure}

\bibliography{main} 
\bibliographystyle{apa}
\appendix
\section{Proofs}\label{app::proofs}
\begin{proof}[Proof of Theorem \ref{thm:WBS}]
We first define the following ranks based on the population depth functions
\begin{align*}
    R_{i,s,e}&\coloneqq \#\left\{X_j\colon \mathcal{D}(X_j;F_{*,s,e})\leq \mathcal{D}(X_i;F_{*,s,e}),\ j\in \{s,\ldots ,e\}\right\}, \ i \in \{s,\ldots ,e\}. 
\end{align*}
The distribution $F_{*,s,e}$ is a mixture distribution with weights proportional to the number of observations coming from $F_j$ in the subsample $\{X_{s},\ldots,X_{e}\}$. 
It should be noted that these weights depend on $N$, since they depend on the subsample. 
More specifically, for some interval with length that satisfies $N_{s,e}=O(N)$ we have that 
$F_{*,s,e}\rightarrow \sum_{j=1}^{\ell+1}\widetilde{\vartheta}_j F_j$,  for some $\widetilde{\vartheta}_j\geq 0$, as $N\rightarrow\infty$. 
We also define the quantities
\begin{align*}
    \widetilde{Z}_{s, e}(k/N_{s,e}) &\coloneqq\frac{1}{\sqrt{N_{s, e}}} \sum_{i=1}^{k} \frac{R_{i,s,e}-\left(N_{s, e}+1\right) / 2}{\sqrt{\left(N_{s, e}^{2}-1\right) / 12}}\\
    G_{s, e}(k/N_{s,e}) &\coloneqq\widetilde{Z}_{s, e}(k/N_{s,e})-Z_{s, e}(k/N_{s,e})=\frac{1}{\sqrt{N_{s, e}}} \sum_{i=1}^{k} \frac{R_{i,s,e}-\widehat{R}_{i,s,e}}{\sqrt{\left(N_{s, e}^{2}-1\right) / 12}}. 
\end{align*}
Now, some small fixed $\nu<\Delta$ and for $i\in [\ell]$,  define
\begin{equation}
    D_{\scaled{N},i}=\{\exists\ \mathcal{I}_i=(s_i,e_i)\in \INT\ \colon \nu N<e_i-k_i<\Delta N,\ \nu N<k_i-s_i<\Delta N \}
    \label{eqn::DNi}
\end{equation}
and set
$$D_{\scaled{N}}=\bigcap_{i=1}^{\ell} D_{\scaled{N},i}\ .$$
First, we show that $\Pr(D_{\scaled{N}})\rightarrow 1.$ 
Notice that $D_{\scaled{N},i}$ is the event that there exists some interval which contains $k_i$ and has size that satisfies $2\nu N<N_{s_i,e_i}<2\Delta N$. 
Assumption \ref{ass:numcp} further implies that such an interval does not contain any other true change-points. 
Note that for fixed $i$, the probability that some $\mathcal{I}_i$ as in \eqref{eqn::DNi} is not drawn satisfies
\begin{align*}
\Pr(D_{\scaled{N},i}^c)&\leq\left(1-\frac{(\Delta-\nu)^2 N^2}{\binom{N}{2}}\right)^{J_{\scaled{N}}},
\end{align*}
since there are $2(\Delta-\nu)^2 N^2$ intervals of the desired size. 
We can use this result and sub-additivity of measures to show that,
\begin{align*}
    \limn\Pr(D^c_{\scaled{N}})=\limn\Pr\left(\bigcup_{i=1}^\ell D_{\scaled{N},i}^c\right)\leq\limn\sum_{i=1}^\ell \Pr(D_{\scaled{N},i}^c)\leq\limn\ell\left(1-\frac{(\Delta-\nu)^2 N^2}{\binom{N}{2}}\right)^{J_{\scaled{N}}}= 0.
\end{align*}

Now, define the following event on the appropriate joint probability space of the sample and the execution of Algorithm \ref{alg:wbs}:
\begin{align*}
    A_{\scaled{N}}&=\left\{\max_{s,k,e} |G_{s,e}(k/N_{s,e})|\leq \lambda_{\scaled{N}} \right\}
\end{align*}
where $\lambda_{\scaled{N}}$ is an increasing sequence such that $\lambda_{\scaled{N}}<O(N^{1/2})$. 
First, consider the case where $N_{s,e}=O(N)$. 
By the same reasoning as (A13) on page 439 of \cite{Chenouri2020DD} we have that $$\max_{k,s,e,N_{s,e}=O(N) } |G_{s,e}(k/N_{s, e})|=O_p(1).$$ 
Consider the set of intervals with length bounded above by some fixed constant, i.e.,  $N_{s,e}<C'$. 
It is easily seen from the Markov inequality that
\begin{align*}
\Pr\left(\max_{k,s,e,N_{s,e}<C'}|G_{s,e}(k/N_{s, e})|>\lambda_{\scaled{N}}\right)&\leq\E{}{\max_{s,e,N_{s,e}<C'} \frac{1}{\lambda_{\scaled{N}}\sqrt{N_{s, e}}} \sum_{i=s}^{e-1}\frac{ \left|R_{i,s,e}-\widehat{R}_{i,s,e}\right|}{\sqrt{\left(N_{s, e}^{2}-1\right) / 12}}}\\
&\leq \max_{k,s,e,N_{s,e}<C'}\frac{1}{\lambda_{\scaled{N}}\sqrt{N_{s, e}}} \sum_{i=s}^{e-1}\frac{C'}{\sqrt{\left(N_{s, e}^{2}-1\right) / 12}}\\
&\leq \max_{k,s,e,N_{s,e}<C'}\frac{1}{\lambda_{\scaled{N}}} \sum_{i=s}^{e-1}\E{}{ \left|R_{i,s,e}-\widehat{R}_{i,s,e}\right|}\\
&\leq\frac{1}{\lambda_{\scaled{N}}} (C')^2.
\end{align*}
It then follows that $\limn \Pr(A_{\scaled{N}})=1.$
These results concerning $A_{\scaled{N}}$ and $D_{\scaled{N}}$ allow us to condition on them;
{\footnotesize
\begin{align*}
    \Pr\left(\left\{\hat{\ell}=\ell\right\}\cap \left\{\max _{i \in[\ell]}\left|\hat{k}_{i}-k_{i}\right| \leq C \xi_{N}\right\}\right)&=  \Pr\left(\left\{\hat{\ell}=\ell\right\}\cap \left\{\max _{i \in[\ell]}\left|\hat{k}_{i}-k_{i}\right| \leq C \xi_{N}\right\}\Big|A_{\scaled{N}}\cap D_{\scaled{N}}\right)\Pr(A_{\scaled{N}} \cap D_{\scaled{N}})\\
    &\geq \Pr\left(\left\{\hat{\ell}=\ell\right\}\cap \left\{\max _{i \in[\ell]}\left|\hat{k}_{i}-k_{i}\right| \leq C \xi_{N}\right\}\Big|A_{\scaled{N}}\cap D_{\scaled{N}}\right)(\Pr(A_{\scaled{N}})+\Pr(D_{\scaled{N}})-1),
\end{align*}
}
which means that it suffices to show that for all $0<\epsilon<1$, there exists $n$ such that for all $N>n$
$$\Pr\left(\left\{\hat{\ell}=\ell\right\}\cap \left\{\max _{i \in[\ell]}\left|\hat{k}_{i}-k_{i}\right| \leq C \xi_{N}\right\}\Big|A_{\scaled{N}}\cap D_{\scaled{N}}\right)>1-\epsilon.$$  
We start by analyzing the event $\{\hat{\ell}< \ell\}$. 
The event $\{\hat{\ell}< \ell\}$ implies that there is at least one unidentified change point. 
Suppose that $k_{i^*}$, for some fixed $i^*\in[\ell]$, is an unidentified change-point. 
The interval $\mathcal{I}_{i^*}$ as defined in \eqref{eqn::DNi} satisfies 
\begin{equation}
N_{s_{i^*},e_{i^*}}=O(N).
\label{eqn::int_on}
\end{equation}
Let 
$$\hat{k}=\min\left\{k\colon Z_{s_{i^*},e_{i^*}}(\floor{k/N_{s_{i^*},e_{i^*}}})=\sup_t Z_{s_{i^*},e_{i^*}}(t),\ t\in(0,1),\ k\in[N_{s_{i^*},e_{i^*}}]\right\},$$
and $\theta$ be the true break-fraction in this interval. 

Further, recall that Assumption \ref{ass:thresh} says that the threshold satisfies $T= o(\sqrt{N})$. 
This assumption, combined with page 437 of \cite{Chenouri2020DD} implies that 
$$|Z_{s_{i^*},e_{i^*}}(\widehat{k}/N_{s_{i^*},e_{i^*}})/T|\conp \infty.$$ 
It follows that for any $\delta' <1$ there exists $n'$ such that for $N>n'$ $$\Pr(|Z_{s_{i^*},e_{i^*}}(\widehat{k}/N_{s_{i^*},e_{i^*}})|\geq T)>1-\delta'.$$
Thus, when $N>\max(n,n')$ (where $n$  and $n'$ relate to the above argument)
\begin{equation}
     \Pr(\{\text{change-point } k_{i^*} \text{ is undetected}\}|D_{\scaled{N}}\cap A_{\scaled{N}})\leq \epsilon.
     \label{eqn::od}
\end{equation}
For each true change point that is undetected the above analysis applies, since we have conditioned on $D_{\scaled{N},i}$ occurring for all $i\in [\ell]$. 
Thus,
\begin{align*}
  \limn  \Pr\left(\hat{\ell}<\ell|D_{\scaled{N}}\cap A_{\scaled{N}}\right)&=\limn \Pr\left(\bigcup_{i=1}^\ell \{\text{change-point }k_i\ \text{undetected}\}|D_{\scaled{N}}\cap A_{\scaled{N}}\right)\\
  &\leq \limn \sum_{i=1}^\ell \Pr(\{\text{change-point } k_i \text{ is undetected}\}|D_{\scaled{N}}\cap A_{\scaled{N}})=0,
\end{align*}
where the inequality follows from subadditivity of measures and the last equality follows from \eqref{eqn::od} and the fact that $\ell$ does not depend on $N$. 
Now, it follows directly from the arguments in the proof of Theorem 2.1 of \cite{Chenouri2019} that there exists $n$ such that for $N>n$ we have that 
$$\Pr(\hat{\ell}>\ell|D_{\scaled{N}}\cap A_{\scaled{N}})<\epsilon ',$$ 
for any $\epsilon '>0$. To conclude we have that $$\Pr(\hat{\ell}\neq\ell|D_{\scaled{N}}\cap A_{\scaled{N}})\rightarrow 0\ as\ N\rightarrow \infty.$$

Now, consider the event that $\{ \max_{i\in [\ell]}|\widehat{k}_i-k_i|\leq C\ N^\phi\}$. 
Following the argument in \cite{Chenouri2019}, subadditivity of measures gives that
\begin{align*}
    \limn \Pr\left( \max_{i\in [\ell]}|\widehat{k}_i-k_i|\leq C\ N^\phi\right)&= \limn\left( 1-\Pr\left(\bigcup_{i=1}^\ell\left\{|\widehat{k}_i-k_i|\geq C\ N^\phi\right\}\right)\right)\\
    &\geq \limn\left(1- \sum_{i=1}^\ell\Pr\left( |\widehat{k}_i-k_i|\geq C\ N^\phi\right)\right)\\
    &= 1,
\end{align*}
where the last equality follows from the proof of Theorem \ref{thm:PELT} and the fact that $\ell$ is fixed and the fact that $\frac{1}{2}<\phi<1$. 
In more detail, in the case of one change-point, the objective functions $|Z_{1,\scaled{N}}(t)|$ and $\mathcal{C}(\mathbf{k})$ produce equivalent maximizers. 
This fact, combined with conditioning on the events $A_N$ and $D_N$ and the last two paragraphs of the proof of Theorem \ref{thm:PELT} give that 
$\Pr\left(|\widehat{k}_i-k_i|\leq C\ N^\phi\right)\rightarrow 1$. 
We now have that, for any $ \epsilon''>0$ there exists $n''$ such that for all $N>n''$, we have that 
$$\Pr\left( \max_{i\in [\ell]}|\widehat{k}_i-k_i|\leq C\ N^\phi\right)\geq 1-\epsilon''.$$
Now, both events can be combined with Bonferroni's inequality and we can make the statement that for all $0<\epsilon^*=\ell\epsilon+\epsilon''<1$ there must exist some $n^*=\max(n'',n',n)$, such that for $N>n^*$ we have that 
\begin{align*}
    \Pr\left(\left\{\hat{\ell}=\ell\right\}\cap\left\{ \max_{i\in [\ell]}|\widehat{k_i}-k_i|\leq C\ N^\phi\}\right\}\Big|A_{\scaled{N}}\cap  D_{\scaled{N}}\right)\geq 1-\ell\epsilon+1-\epsilon'-1=1-\epsilon^*.
\end{align*}
Thus, we have that for all $0<\epsilon<1$, there exists $n$ such that for all $N>n$ 
\begin{align*}
      \Pr\left(\left\{\hat{\ell}=\ell\right\}\cap\left\{ \max_{i\in [\ell]}|\widehat{k_i}-k_i|\leq C\ N^\phi\}\right\}\right)&\geq 1-\epsilon. \hfill\qedhere
\end{align*}

\end{proof}
\begin{proof}[Proof of Theorem \ref{thm:PELT}]
Let $C_i$ be fixed positive constants independent of $N$, $|A|$ represent the cardinality of the set $A$, and $$\widetilde{\sigma}^2_{\scaled{N}}\coloneqq\frac{N(N+1)}{12}.$$
Define the set $\mathbf{X}_{\scaled{N}}\coloneqq 2^{[N-1]}\times\{0\}\times\{N\}$; elements of $\mathbf{X}_{\scaled{N}}$ are sets of indices ranging from 0 to $N$, which represent locations of change-points. 
A member of $\mathbf{X}_{\scaled{N}}$ is a set $\mathbf{x}$ that contains 0 and $N$ joined with an element of the power set of $[N-1]$. 
We will represent such an element with $\mathbf{x}=\{x_0,\ldots,x_{p+1}\}$ where $x_0\coloneqq 0 <x_1<\ldots<x_p<x_{p+1}\coloneqq N$. 
$\mathbf{X}_{\scaled{N}}$ forms the space of possible sets of change-points for a fixed $N$. 
We can then write the objective function based on the population depth ranks $\mathcal{T}$ and the objective function based on the sample depth ranks $\widehat{\mathcal{T}}$ as follows
\begin{equation*}
    \widehat{\mathcal{T}}(\mathbf{x})\coloneqq\frac{1}{\widetilde{\sigma}^2_{\scaled{N}}}\sum_{i=1}^{|\mathbf{x}|} (x_{i}-x_{i-1}) \overline{\widehat{R}}_{i}^{2}-3(N+1) -\beta_{\scaled{N}} (|\mathbf{x}|-1)\coloneqq\widehat{\mathcal{C}}(\mathbf{x})-\beta_{\scaled{N}} (|\mathbf{x}|-1)
\end{equation*}
\begin{align}
     \label{eqn:T}
    \mathcal{T}(\mathbf{x})&\coloneqq \frac{1}{\widetilde{\sigma}^2_{\scaled{N}}} \sum_{i=1}^{|\mathbf{x}|} (x_{i}-x_{i-1}) \bar{R}_{i}^{2}-3(N+1) -\beta_{\scaled{N}}
    (|\mathbf{x}|-1)\coloneqq\mathcal{C}(\mathbf{x})-\beta_{\scaled{N}} (|\mathbf{x}|-1),
\end{align}
where $\mathbf{x}_{\scaled{N}}\in \mathbf{X}_{\scaled{N}}$. 
Now, suppose that $\mathbf{x}_{\scaled{N}}\in \mathbf{X}_{\scaled{N}}$ is such for each $x_j\in \mathbf{x}_{\scaled{N}}\backslash 0$, it holds that $x_j-x_{j-1}=O(N)$ and there exists some $k_i>0,\ k_i\in \mathbf{k}$ such that $k_{i-1}\leq x_{j-1}<x_j\leq k_{i}$. 
Colloquially, there are no change-points between neighboring elements of $\mathbf{x}_{\scaled{N}}$. 
Additionally impose that $|\mathbf{x}_{\scaled{N}}|$ is fixed in $N$ . 
It is helpful to note that the elements of $\mathbf{x}_{\scaled{N}}$ depend on $N$, which we omit in the notation for brevity.  
First, we show that $|\widehat{\mathcal{T}}(\mathbf{x}_{\scaled{N}})-\mathcal{T}(\mathbf{x}_{\scaled{N}})|=O_p(1) $. 
To this end, note that for any $j\in [|\mathbf{x}_N|]$ the sequences $R_{x_{j-1}+1},\ldots, R_{x_{j}}$ and $\widehat{R}_{x_{j-1}+1},\ldots, \widehat{R}_{x_{j}}$ are both triangular arrays of exchangeable random variables. 
This form allows us to apply the central limit theorem of \cite{Weber1980}. 
Specifically, it holds that
\begin{align*} 
\frac{\sqrt{(x_i-x_{i-1})}}{\Var{R_{x_i}}}(\overline{R}_{i}-\E{}{R_{x_i}})=O_p(1) \qquad\text{ and }\qquad \frac{\sqrt{(x_i-x_{i-1})}}{\Varr{\widehat{R}_{x_i}}}\left(\overline{\widehat{R}}_{i}-\Eee{}{\widehat{R}_{x_i}}\right)=O_p(1).
\end{align*}
We now relate these to quantities. 
Consider the representation of $\widehat{R}_{i}$
\begin{align}
\label{eqn::rankrep}
    \widehat{R}_{i}&=R_{i}+\sum_{m=1}^N\ind{B_{{i},m}}-\sum_{m=1}^N\ind{A_{{i},m}}\coloneqq R_{i}+\mathcal{E}_i,
\end{align}
where 
\begin{align*} A_{i, j} &=\left\{D\left(X_{j}, F_{*}\right) \leq D\left(X_{i}, F_{*}\right)\right\} \cap\left\{D\left(X_{j}, F_{*,\scaled{N}}\right)>D\left(X_{i}, F_{*,\scaled{N}}\right)\right\} \\ B_{i, j} &=\left\{D\left(X_{j}, F_{*}\right)>D\left(X_{i}, F_{*}\right)\right\} \cap\left\{D\left(X_{j}, F_{*,\scaled{N}}\right) \leq D\left(X_{i}, F_{*,\scaled{N}}\right)\right\}.
\end{align*}
We can use this representation, Assumption \ref{ass:Lipschitz} and Assumption \ref{ass:consDepth} to show that
$$\Eee{}{\mathcal{E}_{x_i}}=\Eee{}{\widehat{R}_{x_i}}-\Eee{}{R_{x_i}}=O(N^{1/2}).$$ 
For more details, see pages 436-437 of \cite{Chenouri2020DD}. 
We next show that 
\begin{equation}
\label{eqn::varsame}
    \Varr{\widehat{R}_{x_i}}/\Varr{R_{x_i}}=O(1)\qquad\text{ and }\qquad \Varr{R_{x_i}}/\widetilde{\sigma}^2_{\scaled{N}}=O(1).
\end{equation}
The right-side identity follows easily from Assumption \ref{ass:numcp}; $\Varr{R_{i}}=O(N^2),$ for any $i\in [N]$. 
Using \eqref{eqn::rankrep}, we can write
\begin{align*}
\Var{\widehat{R}_{x_i}}&=\Var{R_{x_i}+\mathcal{E}_{x_i}}\\
&=\Var{R_{x_i}}+\Var{\mathcal{E}_{x_i}}+2\mathrm{C}ov\left(\mathcal{E}_{x_i},R_{x_i}\right)\\
&\leq \Var{R_{x_i}}+\Var{\mathcal{E}_{x_i}}+2\E{}{|\mathcal{E}_{x_i}-\E{}{\mathcal{E}_{x_i}}|}N\\
&= \Var{R_{x_i}}+\Var{\mathcal{E}_{x_i}}+O(N^{3/2})\\
&=\Var{R_{x_i}}+\E{}{\left(\sum_{m=1}^N\ind{B_{{x_i},m}}-\sum_{m=1}^N\ind{A_{{x_i},m}}\right)^2}+O(N)+O(N^{3/2})\\
&=\Var{R_{x_i}}+\E{}{\sum_{m_1=1}^N\sum_{m_2=1}^N\left[\ind{B_{{x_i},m_1}}-\ind{A_{{x_i},m_1}}\right]\left[\ind{B_{{x_i},m_2}}-\ind{A_{{x_i},m_2}}\right]}+O(N^{3/2})\\
&\leq\Var{R_{x_i}}+\E{}{\sum_{m_1=1}^N\sum_{m_2=1}^N\left[\ind{B_{{x_i},m_1}}+\ind{A_{{x_i},m_1}}\right]}+O(N^{3/2})\\
&\leq \Var{R_{x_i}}+O(N^{3/2}),
\end{align*}
where the fourth line comes from applying equation (A5) of \cite{Chenouri2020DD} and the last line is from the the fact that $\E{}{\ind{B_{i,m}}}=O(N^{-1/2})$ and $\E{}{\ind{A_{i,m}}}=O(N^{-1/2})$ \citep{Chenouri2020DD}. 
Now, 
$$\limn \frac{\Varr{\widehat{R}_{x_i}}}{\Varr{R_{x_i}}}=\limn\frac{\Varr{\widehat{R}_{x_i}}/N^2}{\Varr{R_{x_i}}/N^2}=\limn\frac{\Varr{R_{x_i}}/N^2+o(1)}{\Varr{R_{x_i}}/N^2}=1.$$
It then follows from Slutsky's theorem, continuous mapping theorem and the central limit theorem of \cite{Weber1980} that
\begin{align}
    \widehat{\mathcal{T}}(\mathbf{x}_{\scaled{N}})-\mathcal{T}(\mathbf{x}_{\scaled{N}})&=\frac{1}{\widetilde{\sigma}^2_{\scaled{N}}}\sum_{i=1}^{\ell+2} (x_i-x_{i-1})\left(\overline{\widehat{R}}_{i}^{2}-\overline{R}_{i}^{2}\right)\nonumber\\
    &= \sum_{i=1}^{|\mathbf{x}_{\scaled{N}}|} \left( \frac{\sqrt{(x_i-x_{i-1})}\ \overline{\widehat{R}}_{i}}{\widetilde{\sigma}_{\scaled{N}}}\right)^2-\left( \frac{\sqrt{(x_i-x_{i-1})}\overline{R}_{i}}{\widetilde{\sigma}_{\scaled{N}}}\right)^2\nonumber\\
    &=O_p(1)+ \frac{1}{\widetilde{\sigma}^2_{\scaled{N}}}\sum_{i=1}^{|\mathbf{x}_{\scaled{N}}|} [ (x_i-x_{i-1})\Eee{}{\widehat{R}_{x_i}}^2-\Eee{}{R_{x_i}}^2+\Eee{}{R_{x_i}}\overline{R}_{i}-\Eee{}{\widehat{R}_{x_i}}\overline{\widehat{R}}_{i}]\nonumber\\
    &=O_p(1).\nonumber
\end{align}
This analysis gives the result that 
\begin{equation}
\label{eqn::cost_pop_same}
    \widehat{\mathcal{T}}(\mathbf{x}_{\scaled{N}})-\mathcal{T}(\mathbf{x}_{\scaled{N}})=\widehat{\mathcal{C}}(\mathbf{x}_{\scaled{N}})-\mathcal{C}(\mathbf{x}_{\scaled{N}})=O_p(1).
\end{equation}
Note that if there are some $x_j\in\mathbf{x}_{\scaled{N}} $ such that $x_j-x_{j-1}<C_1$ for some constant $C_1>0$, the above result still holds. 

Next, we want to compare $\widehat{\mathcal{T}}(\widehat{\mathbf{k}})$ and $\mathcal{T}(\mathbf{k})$. 
To this end, we make an argument by contradiction, similar to that of \cite{Wang2021}. 
However, we use the previously discussed exchangeablility results, i.e., \citep{Weber1980} which were not used in their paper. 
Recall, $\widehat{\mathbf{k}}$ is the estimated set of change-points and $\mathbf{k}$ is the true set of change-points. 
We examine the events $\{\hat{\ell}<\ell\}$, $\{\hat{\ell}>\ell\}$ and $\left\{\max_{k\in \mathbf{k}}\min_{\hat{k}\in \widehat{\mathbf{k}}} |\hat{k}-k|\geq\delta N^\phi\right\}$ separately.

Assume $\hat{\ell}<\ell$; by Assumption \ref{ass:numcp}, there is at least one change-point $0<k_{i^*}<N$ such that for any $ j\in [\hat{\ell}]$ it is true that $|k_{i^*}-\widehat{k}_j|\geq \Delta N/2$ with $\Delta$ independent of $N$. 
Now, define $$\mathbf{w}_1=\{k_{i^*}-\Delta N/2,k_{i^*}+\Delta N/2\}\cup \mathbf{k}\backslash k_{i^*}\qquad \text{and}\qquad \mathbf{w}_2=\mathbf{w}_1\cup \widehat{\mathbf{k}}.$$ 
Clearly, $\widehat{\mathcal{C}}(\mathbf{w}_2)\geq \widehat{\mathcal{C}}(\widehat{\mathbf{k}})$ (which is the necessary condition for PELT, recall that $\widehat{\mathcal{C}}$ is the portion of the objective function without the penalty) and so we work with $\widehat{\mathcal{C}}(\mathbf{w}_2)$. 
The goal is to show that following contradiction to the assumption that some $\widehat{\mathbf{k}}$ such that $\hat{\ell}<\ell$ is the maximizer of $\widehat{\mathcal{T}}$. 
To see this, we have 
\begin{align*}
   \mathcal{T}(\mathbf{k})- \widehat{\mathcal{T}}(\widehat{\mathbf{k}})&=\mathcal{C}(\mathbf{k})- \widehat{\mathcal{C}}(\widehat{\mathbf{k}})-O(\beta_{\scaled{N}})\\
   &\geq \mathcal{C}(\mathbf{k})- \widehat{\mathcal{C}}(\mathbf{w}_2)-O(\beta_{\scaled{N}})\\
   &= \mathcal{C}(\mathbf{k})- \mathcal{C}(\mathbf{w}_2)+O_p(1)-O(\beta_{\scaled{N}})\\
   &=\mathcal{C}(\mathbf{k})- \mathcal{C}(\mathbf{w}_1)+O_p(1)-O(\beta_{\scaled{N}})\\
   &=O_p(N)-O(\beta_{\scaled{N}})\conp\infty,
\end{align*}
as $N\rightarrow\infty$, since $\beta_{\scaled{N}}<O(N)$ and we have shown that $\mathcal{C}(\mathbf{w}_2)-\widehat{\mathcal{C}}(\mathbf{w}_2)=O_p(1)$ in \eqref{eqn::cost_pop_same}. 
It remains to show that $$\mathcal{C}(\mathbf{w}_2)=\mathcal{C}(\mathbf{w}_1)+O_p(1)\qquad\text{and}\qquad \mathcal{C}(\mathbf{k})- \mathcal{C}(\mathbf{w}_1)=O_p(N).$$ 
First, we show that $$\mathcal{C}(\mathbf{w}_2)=\mathcal{C}(\mathbf{w}_1)+O_p(1).$$
To this end, letting $w_0=0,\ w_{\ell+\hat{\ell}+2}=N$ and $\mathbf{w}_2=\{w_0,w_1,w_2,\dots,w_{\ell+\hat{\ell}+1},w_{\ell+\hat{\ell}+2}\}$ where $w_m<w_j$ for $m<j$, we can write
\begin{align*}
    \mathcal{C}(\mathbf{w}_1)-\mathcal{C}(\mathbf{w}_2)&=\frac{1}{\widetilde{\sigma}^2_{\scaled{N}}}\sum_{j=1}^{|\mathbf{w_2}|} (w_j-w_{j-1})\left[\bar{R}_j(\mathbf{w}_1)^2-\bar{R}_j(\mathbf{w}_2)^2\right]
\end{align*}
where $$\bar{R}_j(\mathbf{x})=\frac{1}{n_{j,2}(\mathbf{x})-n_{j,1}(\mathbf{x})}\sum_{i=n_{j,1}(\mathbf{x})+1}^{n_{j,2}(\mathbf{x})} R_i,$$
with
$$  n_{j,1}(\mathbf{x})=\argmin_{x\in \mathbf{x}\colon x\leq w_{j-1}}|x-w_{j-1}|,\qquad  n_{j,2}(\mathbf{x})=\argmin_{x\in \mathbf{x}\colon x\geq w_{j}}|x-w_{j}|.$$
In this context, 
$$\bar{R}_j(\mathbf{w}_2)=\frac{1}{(w_j-w_{j-1})}\sum_{m=w_{j-1}+1}^{w_j} R_m \qquad\text{and}\qquad\bar{R}_j(\mathbf{w}_1)=\frac{1}{n_{j,2}(\mathbf{w}_1)-n_{j,1}(\mathbf{w}_1)}\sum_{m=n_{j,1}(\mathbf{w}_1)+1}^{n_{j,2}(\mathbf{w}_1)} R_m\ .$$
To elaborate, ordering the points in $\mathbf{w}_1$ defines $\ell+2$ disjoint groups of ranks and therefore $\ell+2$ group means. 
The value $\bar{R}_j(\mathbf{w}_1)$ is the mean of such a group of ranks which also contains the ranks $\{R_{w_{j-1}},\ldots,R_{w_{j}}\}$.    

Let $j^*$ represent $w_{j^*}=k_{i^*}+\Delta N/2$. Then we have that 
\begin{align}
    \mathcal{C}(\mathbf{w}_1)-\mathcal{C}(\mathbf{w}_2)&=\frac{1}{\widetilde{\sigma}^2_{\scaled{N}}}\sum_{j=1}^{|\mathbf{w}_2|} (w_j-w_{j-1})\left(\bar{R}_j(\mathbf{w}_1)^2-\bar{R}_j(\mathbf{w}_2)^2\right)\nonumber\\ 
        \label{eqn:sum1}
    &=\frac{1}{\widetilde{\sigma}^2_{\scaled{N}}}\sum_{j\in [\ell+\hat{\ell}+1] \backslash {j^*}} (w_j-w_{j-1})\left(\bar{R}_j(\mathbf{w}_1)^2-\bar{R}_j(\mathbf{w}_2)^2\right)\\
    &=O_p(1),
\end{align}
where the second equality is due to the fact that  $\bar{R}_{j^*}(\mathbf{w}_1)^2=\bar{R}_{j^*}(\mathbf{w}_2)^2$ and the last equality follows from the central limit theorem of \cite{Weber1980} and the analysis of $\mathcal{T}(\mathbf{x}_{\scaled{N}})-\widehat{\mathcal{T}}(\mathbf{x}_{\scaled{N}})$. 
To elaborate, note that for any $j\neq j^*$, if $w_j-w_{j-1}=O(N)$ it holds that
{
\begin{align*}
    \frac{(w_j-w_{j-1})}{\widetilde{\sigma}^2_{\scaled{N}}}\left(\bar{R}_j(\mathbf{w}_1)^2-\bar{R}_j(\mathbf{w}_2)^2\right) = O(1)\frac{(w_j-w_{j-1})}{\Var{R_{w_j}}}\left(\bar{R}_j(\mathbf{w}_1)^2-\bar{R}_j(\mathbf{w}_2)^2\right) = O_p(1),
\end{align*}
}
where the first equality follows from \eqref{eqn::varsame} and the second equality comes from a direct application of the central limit theorem of \cite{Weber1980} followed by Slutsky's Lemma and continuous mapping theorem. 
If $w_j-w_{j-1}<C_2$ for some $C_2>0$ then 
\begin{align*}
    \frac{(w_j-w_{j-1})}{\widetilde{\sigma}^2_{\scaled{N}}}\left(\bar{R}_j(\mathbf{w}_1)^2-\bar{R}_j(\mathbf{w}_2)^2\right)=  o(1).
\end{align*}
Now, we want to show that $$ \limn   \mathcal{C}(\mathbf{k})- \mathcal{C}(\mathbf{w}_1)=O_p(N).$$ 
Let $k_{i^*-1}$ and $k_{i^*+1}$ be the true change-points immediately preceding and following $k_{i^*}$ respectively. 
Recall $k_{i^*}$ is the change-point that is at least $\Delta N/2$ points away from any estimated change-point. 
Note that $\mathbf{k}-\mathbf{w}_1=\{k_{i^*}\}$ and  $\mathbf{w}_1-\mathbf{k}=\{k_{i^*}\pm \Delta N/2\}$.
We have
{\small
\begin{align*}
   \frac{N+1}{N}\left(\mathcal{C}(\mathbf{k})-\mathcal{C}(\mathbf{w}_1)\right) 
   &=\frac{12\vartheta_{i^*}N}{N^2}\left[\frac{1}{N\vartheta_{i^*}}\sum_{j=k_{i^* \text{-}1}+1}^{k_{i^*}}R_j\right]^2+\frac{12\vartheta_{i^*+1}N}{N^2}\left[\frac{1}{N\vartheta_{i^*+1}}\sum_{j=k_{i^*}+1}^{k_{i^*+1}}R_j\right]^2\\
    &\indent-\frac{12\Delta N}{N^2}\left[\frac{1}{N\Delta}\sum_{j=k_{i^*}\text{-}\Delta N/2}^{k_{i}+\Delta N/2} R_j\right]^2-\frac{12N(\vartheta_{i^*}-\Delta/2)}{N^2}\left[\frac{1}{N(\vartheta_{i^*}-\Delta/2)}\sum_{j=k_{i^* \text{-}1}+1}^{k_{i^*}-\Delta N/2}R_j\right]^2\\
    &\indent-\frac{12N(\vartheta_{i^*+1}-\Delta/2)}{N^2}\left[\frac{1}{N(\vartheta_{i^*+1}-\Delta/2)}\sum_{j=k_{i^*}+\Delta N/2}^{k_{i^*+1}}R_j\right]^2.
    \end{align*}
    }
For arbitrary $k_m\in \mathbf{k}$ choose $j\in \{k_{m-1}+1,\dots,k_m\}$, then
\begin{align*}
    \E{}{R_j}&=\sum_{j\in [\ell+1]\backslash m }N\vartheta_j p_{m,j}-\frac{N\vartheta_i-1}{2}=N\left[\sum_{j=1}^{\ell+1}\vartheta_j p_{m,j}-\frac{1}{2}\right]\\
    \Var{R_j}&\leq N-1+N(N-1)/2.
\end{align*}
It follows from continuous mapping theorem and \citep{Weber1980} that
\begin{align*}
    \frac{1}{N^2}\left[\frac{1}{N\vartheta_i}\sum_{j=k_{i^*-1}+1}^{k_{i^*}}R_j\right]^2&\conp \left[\sum_{j=1}^{\ell+1}\vartheta_j p_{i^*,j}-\frac{1}{2}\right]^2,\\%
    \frac{1}{N^2}\left[\frac{1}{N(\vartheta_i-\Delta /2)}\sum_{j=k_{i^*-1}+1}^{k_{i^*}-\Delta N/2}R_j\right]^2&\conp \left[\sum_{j=1}^{\ell+1}\vartheta_j p_{i^*,j}-\frac{1}{2}\right]^2,\\%
    \frac{1}{N^2}\left[\frac{1}{N\vartheta_{i+1}}\sum_{j=k_{i^*-1}+1}^{k_{i^*}}R_j\right]^2&\conp \left[\sum_{j=1}^{\ell+1}\vartheta_j p_{i^*+1,j}-\frac{1}{2}\right]^2,\\%
    \frac{1}{N^2}\left[\frac{1}{N(\vartheta_{i+1}-\Delta /2)}\sum_{j=k_{i^*}+\Delta N/2}^{k_{i^*+1}}R_j\right]^2&\conp \left[\sum_{j=1}^{\ell+1}\vartheta_j p_{i^*+1,j}-\frac{1}{2}\right]^2,\\%
    \frac{1}{N^2}\left[\frac{1}{N\Delta}\sum_{j=k_{i^*}-\Delta N/2}^{k_{i^*}+\Delta N/2}R_j\right]^2&\conp \frac{1}{4}\left[\sum_{j=1}^{\ell+1}\vartheta_j p_{i^*,j}-\frac{1}{2}+\sum_{j=1}^{\ell+1}\vartheta_j p_{i^*+1,j}-\frac{1}{2}\right]^2.
\end{align*}
Slutsky's lemma and continuous mapping theorem directly imply that 
{\footnotesize
\begin{align*}\frac{N+1}{N^2}\left(\mathcal{C}(\mathbf{k})-\mathcal{C}(\mathbf{w}_1)\right)&\conp \frac{12\Delta}{4}\Biggg[\left(\sum_{j=1}^{\ell+1}\vartheta_j p_{i^*,j}-\frac{1}{2}\right)^2+\left(\sum_{j=1}^{\ell+1}\vartheta_j p_{i^*+1,j}-\frac{1}{2}\right)^2 -2\left(\sum_{j=1}^{\ell+1}\vartheta_j p_{i^*+1,j}-\frac{1}{2}\right)\left(\sum_{j=1}^{\ell+1}\vartheta_j p_{i^*,j}-\frac{1}{2}\right)\Biggg]\\
&=3\Delta\left[\sum_{j=1}^{\ell+1}\vartheta_j p_{i^*+1,j}-\frac{1}{2}-\sum_{j=1}^{\ell+1}\vartheta_j p_{i^*,j}+\frac{1}{2}\right]^2>0.
\end{align*}
}
We can then conclude that $\mathcal{C}(\mathbf{k})-\mathcal{C}(\mathbf{w}_1)\rightarrow +\infty$ in probability at a rate of $O_p(N)$.
Then, we have that 
$$\mathcal{T}(\mathbf{k})-\widehat{\mathcal{T}}(\widehat{\mathbf{k}})= O_p(N)-\beta_{\scaled{N}} \rightarrow\infty,$$
providing a contradiction to the assumption that $\hat{\ell}<\ell$. 

Now assume that $\hat{\ell}>\ell$. 
It is easy to see that $\widehat{\mathcal{C}}(\widehat{\mathbf{k}})\leq \widehat{\mathcal{C}}(\widehat{\mathbf{k}}\cup \mathbf{k})$. 
Using this fact and a similar analysis as to that of the event $\{\hat{\ell}<\ell\}$, we can write that $$\mathcal{C}(\mathbf{k})-\widehat{\mathcal{C}}(\widehat{\mathbf{k}})\geq\mathcal{C}(\mathbf{k})- \widehat{\mathcal{C}}(\widehat{\mathbf{k}}\cup \mathbf{k})=O_p(1).$$
We then have that 
\begin{align*}
    \mathcal{T}(\mathbf{k})-\widehat{\mathcal{T}}(\widehat{\mathbf{k}})&=\mathcal{C}(\mathbf{k})-\widehat{\mathcal{C}}(\widehat{\mathbf{k}}) +\beta_{\scaled{N}}(\hat{\ell}-\ell)\geq \mathcal{C}(\mathbf{k})-\widehat{\mathcal{C}}(\widehat{\mathbf{k}}\cup \mathbf{k})+\beta_{\scaled{N}}(\hat{\ell}-\ell)=O(\beta_{\scaled{N}})+O_p(1)\rightarrow\infty,
\end{align*}
as $N\rightarrow\infty$.

Lastly, we want to show that $\max_{k\in\mathbf{k}}\min_{\hat{k}\in \widehat{\mathbf{k}}} \frac{1}{N}|\hat{k}-k|\conp 0$. We take the contradiction approach again; consider there exists $k_{i^*}\in \mathbf{k}$ such that $\min_{k\in\mathbf{k}}|\widehat{k}-k_{i^*}| >\delta N^\phi$. 
Define $\mathbf{w}'_1$ in the same way as $\mathbf{w}_1$ but replace $\Delta$ with $\delta$: 
$$\mathbf{w}'_1=\{k_{i^*}-\delta N^\phi/2,k_{i^*}+\delta N^\phi/2\}\cup \mathbf{k}\backslash k_{i^*}\qquad \text{and}\qquad \mathbf{w}'_2=\mathbf{w}'_1\cup \widehat{\mathbf{k}}.$$ 
Similar to the analysis of $\{\hat{\ell}<\ell\}$, we can write 
{\footnotesize
\begin{align*}
   \frac{N+1}{N}\left(\mathcal{C}(\mathbf{k})-\mathcal{C}(\mathbf{w}_1')\right) 
   &=\frac{12\vartheta_{i^*}N}{N^2}\left[\frac{1}{N\vartheta_{i^*}}\sum_{j=k_{i^* \text{-}1}+1}^{k_{i^*}}R_j\right]^2+\frac{12\vartheta_{i^*+1}N}{N^2}\left[\frac{1}{N\vartheta_{i^*+1}}\sum_{j=k_{i^*}+1}^{k_{i^*+1}}R_j\right]^2\\
    &\indent-\frac{12\delta N^\phi}{N^2}\left[\frac{1}{\delta N^\phi}\sum_{j=k_{i^*}\text{-}\delta N^\phi/2}^{k_{i}+\delta N^\phi/2} R_j\right]^2-\frac{12N(\vartheta_{i^*}-\delta N^{\phi-1}/2)}{N^2}\left[\frac{1}{N(\vartheta_{i^*}-\delta N^{\phi-1}/2)}\sum_{j=k_{i^* \text{-}1}+1}^{k_{i^*}-\delta N^\phi/2}R_j\right]^2\\
    &\indent-\frac{12N(\vartheta_{i^*+1}-\delta N^{\phi-1}/2)}{N^2}\left[\frac{1}{N(\vartheta_{i^*+1}-\delta N^{\phi-1}/2)}\sum_{j=k_{i^*}+\delta N^\phi/2}^{k_{i^*+1}}R_j\right]^2.
    \end{align*}
    }
We can let $\mu_{k_i^*-1}=\E{}{R_{k_i^*}}$ and $\mu_{k_i^*}=\E{}{R_{k_i^*+1}}$ be the means of the ranks before and after the change-point $k_{i^*}$. 
Similarly, we can let $\varsigma^2_{k_i^*-1}=\Var{R_{k_i^*}}$ and $\varsigma^2_{k_i^*}=\Var{R_{k_i^*+1}}$ be the variances of the ranks before and after the change-point $k_{i^*}$. 
We also define $b_{N,k_i^*-1}=\frac{12\varsigma^2_{k_i^*-1}}{N^2}=O(1)$ and $b_{N,k_i^*}=\frac{12\varsigma^2_{k_i^*}}{N^2}=O(1)$. 
Now, let $$\tilde{R}_j=\frac{R_j-\E{}{R_{j}}}{\Var{R_j}}.$$
It follows that 
\begin{align*}
    \frac{12\vartheta_{i^*}N}{N^2}\left[\frac{1}{N\vartheta_{i^*}}\sum_{j=k_{i^* \text{-}1}+1}^{k_{i^*}}R_j\right]^2&=b_{N,k_i^*}\left[\sqrt{N\vartheta_{i^*}}\sum_{j=k_{i^* \text{-}1}+1}^{k_{i^*}}\tilde{R}_j+\sqrt{N\vartheta_{i^*}}\frac{\mu_{k_i^*}}{\varsigma_{k_i^*}}\right]^2\\
    &=b_{N,k_i^*}N\vartheta_{i^*}\left(\frac{\mu_{k_i^*}}{\varsigma_{k_i^*}}\right)^2+O_p(N^{1/2}). 
\end{align*}
The last line follows from the central limit theorem of \citep{Weber1980} and the previous paragraph. 
we can produce similar analyses to give 
\begin{align*}
  \frac{N+1}{N}\left(\mathcal{C}(\mathbf{k})-\mathcal{C}(\mathbf{w}_1')\right) 
   &=b_{N,k_i^*-1}N\vartheta_{i^*}\left(\frac{\mu_{k_i^*-1}}{\varsigma_{k_i^*-1}}\right)^2+b_{N,k_i^*}N\vartheta_{i^*+1}\left(\frac{\mu_{k_i^*}}{\varsigma_{k_i^*}}\right)^2-\frac{\delta N^{\phi}}{4}b_{N,k_i^*-1}\left(\frac{\mu_{k_i^*-1}}{\varsigma_{k_i^*-1}}\right)^2\\
   &\indent -\frac{\delta N^{\phi}}{4}b_{N,k_i^*}\left(\frac{\mu_{k_i^*}}{\varsigma_{k_i^*}}\right)^2-b_{N,k_i^*-1}(\vartheta_{i^*}-\delta N^{\phi-1}/2)\left(\frac{\mu_{k_i^*-1}}{\varsigma_{k_i^*-1}}\right)^2\\
    &\indent-b_{N,k_i^*}N(\vartheta_{i^*+1}-\delta N^{\phi-1}/2)\left(\frac{\mu_{k_i^*}}{\varsigma_{k_i^*}}\right)^2+O_p(N^{1/2})\conp\infty,
\end{align*}
where the conclusion follows from the fact that $\phi>1/2$ and the right expression is positive. 
From which it follows that
\begin{align*}
   \mathcal{T}(\mathbf{k})- \widehat{\mathcal{T}}(\widehat{\mathbf{k}})&=\mathcal{C}(\mathbf{k})- \widehat{\mathcal{C}}(\widehat{\mathbf{k}})\geq\mathcal{C}(\mathbf{k})- \widehat{\mathcal{C}}(\mathbf{w}'_2)=\mathcal{C}(\mathbf{k})- \mathcal{C}(\mathbf{w}'_1)+O_p(1)\conp\infty. \hfill\qedhere
\end{align*}
\end{proof}
\begin{proof}[Sketch of consistency of WBS with the Schwartz Criteria]
Let $\rhatbar_i'$ be the mean of the observations in the sample with the same distribution as that of $\widehat{R}_i$.
Recall that from the proof of Theorem \ref{thm:WBS} that $\Pr(|k_i-\hat{k}_i|<N^{\phi}C)\rightarrow 1$, where this result is independent of the thresholding technique. We can then condition on the event that $|k_i-\hat{k}_i|<N^{1/2+r}C$ for all $i\in[\ell]$ and some very small $r>0$. 
Consider some estimated change-point $\hat{k}_j$, it follows that
\begin{align*}
    O(N^2)\frac{1}{N}\sum_{i=\hat{k}_{j-1}}^{\hat{k}_j-1} \left(\frac{\widehat{R}_i-\rhatbar_j}{\Var{\widehat{R}_i}^{1/2}}\right)^2&=O(N^2)\frac{1}{N}\sum_{i=k_{j-1}}^{k_j-1} \left(\frac{\widehat{R}_i-\rhatbar_j}{\Var{\widehat{R}_i}^{1/2}}\right)^2+O(N^{3/2+r}),
\end{align*}
which follows from the fact that $$ O(N^2)\frac{1}{N}\sum_{i=k_j}^{k_j+\floor{N^{1/2+r}C}} \left(\frac{\widehat{R}_i-\rhatbar_j}{\Var{\widehat{R}_i}^{1/2}}\right)^2=O(N)N^{1/2+r}O(1)=O(N^{3/2+r}).$$
We can know work with the sum over the true change-point interval:
\begin{align*}
   \sum_{i=k_{j-1}}^{k_j-1} \left(\frac{\widehat{R}_i-\rhatbar_j}{\Var{\widehat{R}_i}^{1/2}}\right)^2&= \sum_{i=k_{j-1}}^{k_j-1} \left(\frac{\widehat{R}_i-\rhatbar_j'+\rhatbar_j'-\rhatbar_j}{\Var{\widehat{R}_i}^{1/2}}\right)^2\\
    &=\sum_{i=k_{j-1}}^{k_j-1} \left(\frac{\widehat{R}_i-\rhatbar_j'}{\Var{\widehat{R}_i}^{1/2}}\right)^2-(k_j-k_{j-1}) \left(\frac{\rhatbar_j'-\rhatbar_j}{\Var{\widehat{R}_i}^{1/2}}\right)^2\\
    &=\sum_{i=k_{j-1}}^{k_j-1} \left(\frac{\widehat{R}_i-\rhatbar_j'}{\Var{\widehat{R}_i}^{1/2}}\right)^2+O(N)O_p(N^{-1+2r}).
\end{align*}
It follows that 
\begin{equation}
\label{eqn::true_app}
    O(N^2)\frac{1}{N}\sum_{i=\hat{k}_{j-1}}^{\hat{k}_j-1} \left(\frac{\widehat{R}_i-\rhatbar_j}{\Var{\widehat{R}_i}^{1/2}}\right)^2=O(N^2)\frac{1}{N}\sum_{i=\hat{k}_{j-1}}^{\hat{k}_j-1} \left(\frac{\widehat{R}_i-\rhatbar_j}{\Var{\widehat{R}_i}^{1/2}}\right)^2+O(N^{3/2+r})+O_p(N^{1+2r}).
\end{equation}
When $\alpha=1$, we have that 
$$\mathcal{G}(\hat{\ell})=\frac{N}{2}\log(\hat{\varsigma}_{\hat{\ell}}^2)+\hat{\ell}\log N,$$
where one recalls that
$$\hat{\varsigma}_{\hat{\ell}}^2=\frac{1}{N}\sum_{i=1}^N(\ranki-\rhatbar_i)^2.
$$
Minimizing $\mathcal{G}(\hat{\ell})$ is equivalent to minimizing 
$$\hat{\varsigma}_{\hat{\ell}}^N N^{\hat{\ell}}.$$
Assume first that $\hat{\ell}=\ell$. 
\begin{align*}
\hat{\varsigma}_{\hat{\ell}}^2&=\frac{1}{N}\sum_{i=1}^N(\ranki-\rhatbar_i)^2\\
&=O(N^2)\frac{1}{N}\sum_{i=1}^{k_1-1} \left(\frac{\widehat{R}_i-\rhatbar_1}{\Var{\widehat{R}_i}^{1/2}}\right)^2+\ldots+O(N^2)\frac{1}{N}\sum_{i=k_\ell}^{N} \left(\frac{\widehat{R}_i-\rhatbar_\ell}{\Var{\widehat{R}_i}^{1/2}}\right)^2+O_p(N^{3/2+r})\\
&=O(N^2)O_p(N^{-1/2})+O_p(N^{3/2+r})=O_p(N^{3/2+r})
\end{align*}
since $\ell$ is fixed and $$\frac{1}{N}\sum_{i=k_\ell}^{N}  \left(\frac{\widehat{R}_i-\rhatbar_\ell}{\Var{\widehat{R}_i}^{1/2}}\right)^2$$ is asymptotically normal. 
Using \eqref{eqn::true_app}, we have that 
It then follows that 
$$\hat{\varsigma}_{\hat{\ell}}^N N^{\hat{\ell}}=O_p(N^{3N/4+rN/2+\ell}).$$
Suppose that $\hat{\ell}<\ell$, and WLOG assume that $k_1$ is undetected. 
We have that
\begin{align*}
\frac{1}{N}\sum_{i=1}^{k_1+k_2-1}(\ranki-\rhatbar_i)^2&=\frac{1}{N}\sum_{i=1}^{k_1-1}(\ranki-\rhatbar_1'-\rhatbar_i+\rhatbar_{1}')^2+\frac{1}{N}\sum_{i=k_1}^{k_2-1}(\ranki-\rhatbar_{k_1}'-\rhatbar_i+\rhatbar_{k_1}')^2.
\end{align*}
Looking at the left-hand term,
\begin{align*}
\frac{1}{N}\sum_{i=1}^{k_1-1}(\ranki-\rhatbar_1'-\rhatbar_i+\rhatbar_{1}')^2&=\frac{1}{N}\sum_{i=1}^{k_1-1}(\ranki-\rhatbar_1')^2+\frac{k_1-1}{N}(\rhatbar_i-\rhatbar_1')^2+\frac{2}{N}(\rhatbar_i-\rhatbar_1')\sum_{i=1}^{k_1-1}(\ranki-\rhatbar_1')\\
&= O_p(N^{3/2})+O(N^2).
\end{align*}
We then have that 
$$\hat{\varsigma}_{\hat{\ell}}^N N^{\hat{\ell}}>O_p(N^{N})>O_p(N^{3N/4+\ell}),$$
which gives a contradiction. 

Assume that $\hat{\ell}>\ell$. 
Then, it will still hold that $\hat{\varsigma}_{\hat{\ell}}^2\approx O_p(N^{3/2+r})$
for $\hat{\ell} < O(N)$ and of course $ N^\ell< N^{\hat{\ell}}$. 
If $\hat{\ell}=O(N)$ then $\hat{\varsigma}_{\hat{\ell}}^2$ is roughly a sum of $O(N)$ random variables with non-zero mean and we have that $\hat{\varsigma}_{\hat{\ell}}^N N^{\hat{\ell}}\approx N^{2N}>O_p(N^{3N/4+rN/2+\ell})$. 
\end{proof}

\section{Simulation on the rank distributions}\label{sim::rankd}
We show here that more types of changes in the covariance matrix are exhibited by changes in the depth rankings.
Consider two samples each from a 6-dimensional multivariate normal distribution. 
We fix
$$\Sigma_1=\left[\begin{array}{cccccc}
1  & 0.4 & 0.4 & 0  & 0 & 0  \\ 
0.4 & 1  & 0.4 & 0  & 0 & 0  \\ 
0.4 & 0.4 & 1  & 0  & 0 & 0  \\
0  & 0  & 0  & 1  & 0 & 0.4 \\
0  & 0  & 0  & 0  & 1 & 0  \\
0  & 0  & 0  & 0.4 & 0 & 1  \\ 
\end{array}\right]$$
as the covariance matrix of the first sample. 
Additionally, let $\sigma_{d_1,d_2,m}$ be the $(d_1,\ d_2)^{th}$ entry of the covariance matrix for sample $m$ (where $d_1,\ d_2 \in \{1,\dots,6\}$ and $m\in\{1,2\}$). We test four specifications of $\Sigma_2$, the covariance matrix of the second sample, and check for a difference in the distribution of ranks:
\begin{enumerate}
    \item \textbf{Submatrix on the diagonal change:} $\sigma_{d_1,d_2,2}=2\sigma_{d_1,d_2,1}$ for $d_1,d_2>3$ and $\sigma_{d_1,d_2,2}=\sigma_{d_1,d_2,1}$ otherwise.
    \item \textbf{Submatrix off the diagonal change:} $\sigma_{6,4,2}=\sigma_{4,6,2}=2\sigma_{6,4,1}$ and $\sigma_{d_1,d_2,2}=\sigma_{d_1,d_2,1}$ otherwise.
    \item \textbf{Mixed change scenario:} $\sigma_{6,4,2}=\sigma_{4,6,2}=-\sigma_{6,4,1}$, $\sigma_{4,4,2}=0.2\sigma_{4,4,1}$, $\sigma_{d_1,d_2,2}=2\sigma_{d_1,d_2,1}$ for $d_1,d_2\leq 3,\ d_1\neq d_2$ and $\sigma_{d_1,d_2,2}=\sigma_{d_1,d_2,1}$ otherwise.
    \item \textbf{Offsetting Expansion and Contraction:} $\sigma_{4,4,2}=0.5\sigma_{4,4,1}$,  $\sigma_{6,6,2}=2\sigma_{6,6,1}$ and $\sigma_{d_1,d_2,2}=\sigma_{d_1,d_2,1}$ otherwise.
\end{enumerate}
\begin{figure}[t]
\begin{minipage}[c]{.5\textwidth} 
\centering%
\includegraphics[width=\textwidth,trim={0 00 0 0},clip]{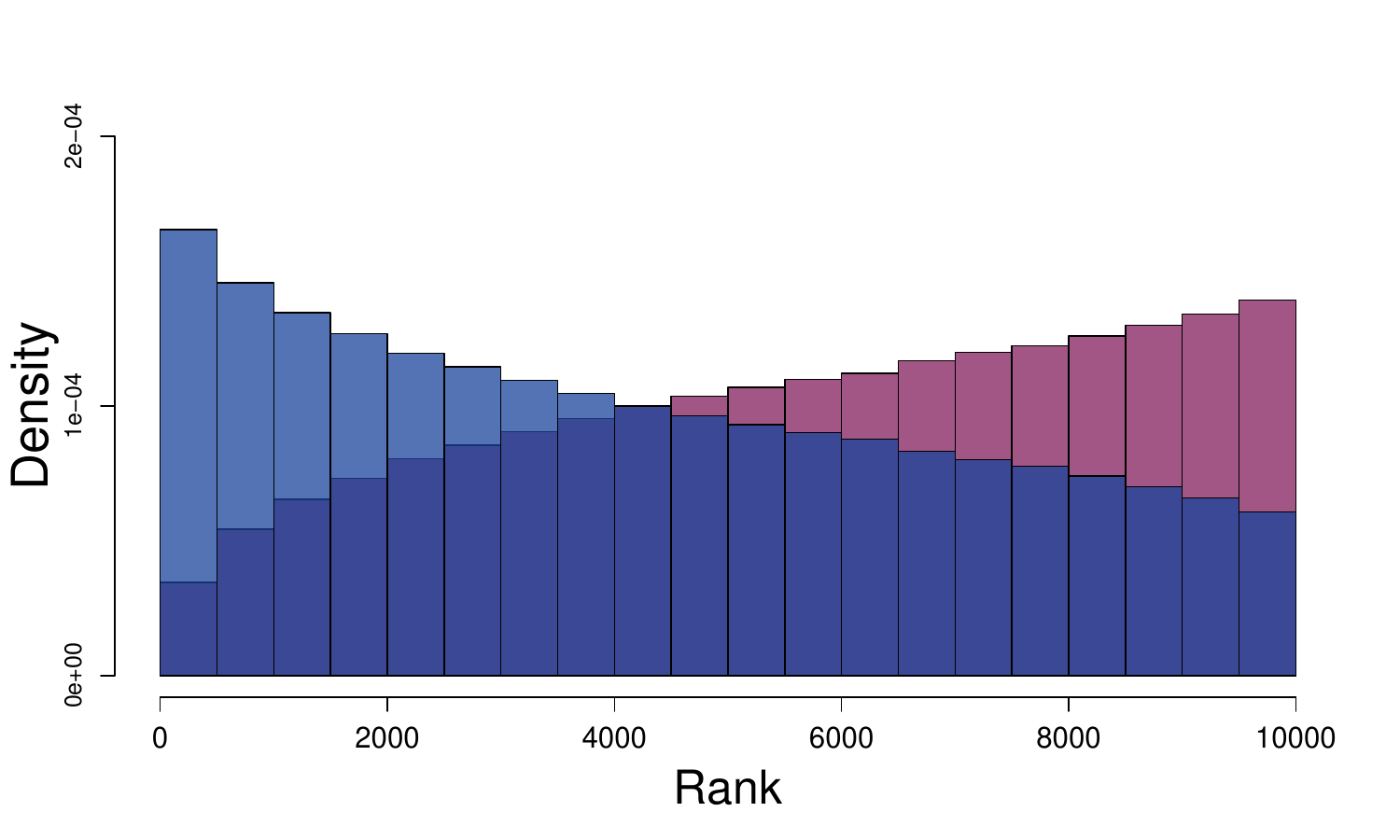}
\caption*{(a)}
\end{minipage}\hfill
\begin{minipage}[c]{.5\textwidth} 
\centering%
\includegraphics[width=\textwidth,trim={0 0 0 0},clip]{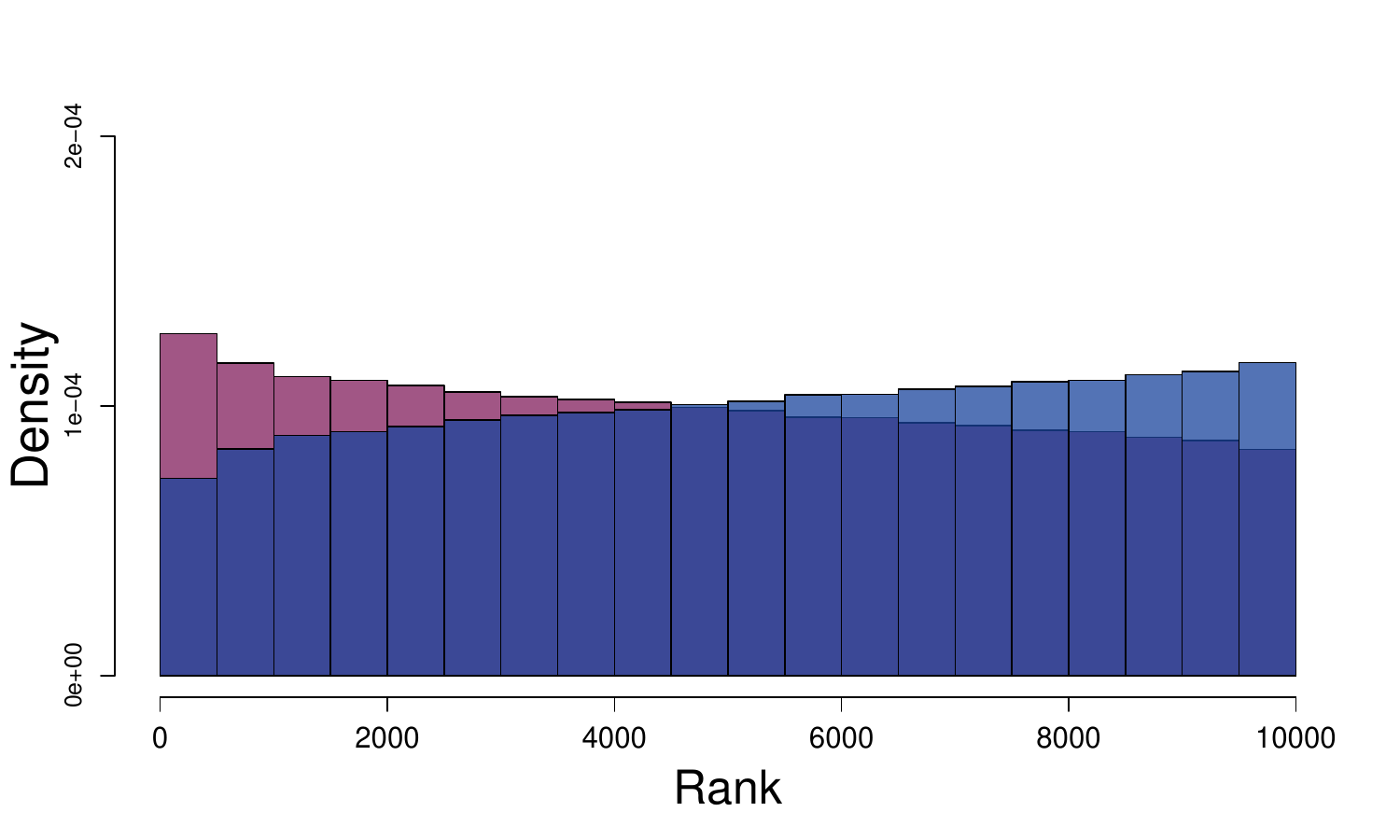}
\caption*{(b)}
\end{minipage}\hfill\newline
\begin{minipage}[c]{.5\textwidth} 
\centering%
\includegraphics[width=\textwidth,trim={0 00 0 0},clip]{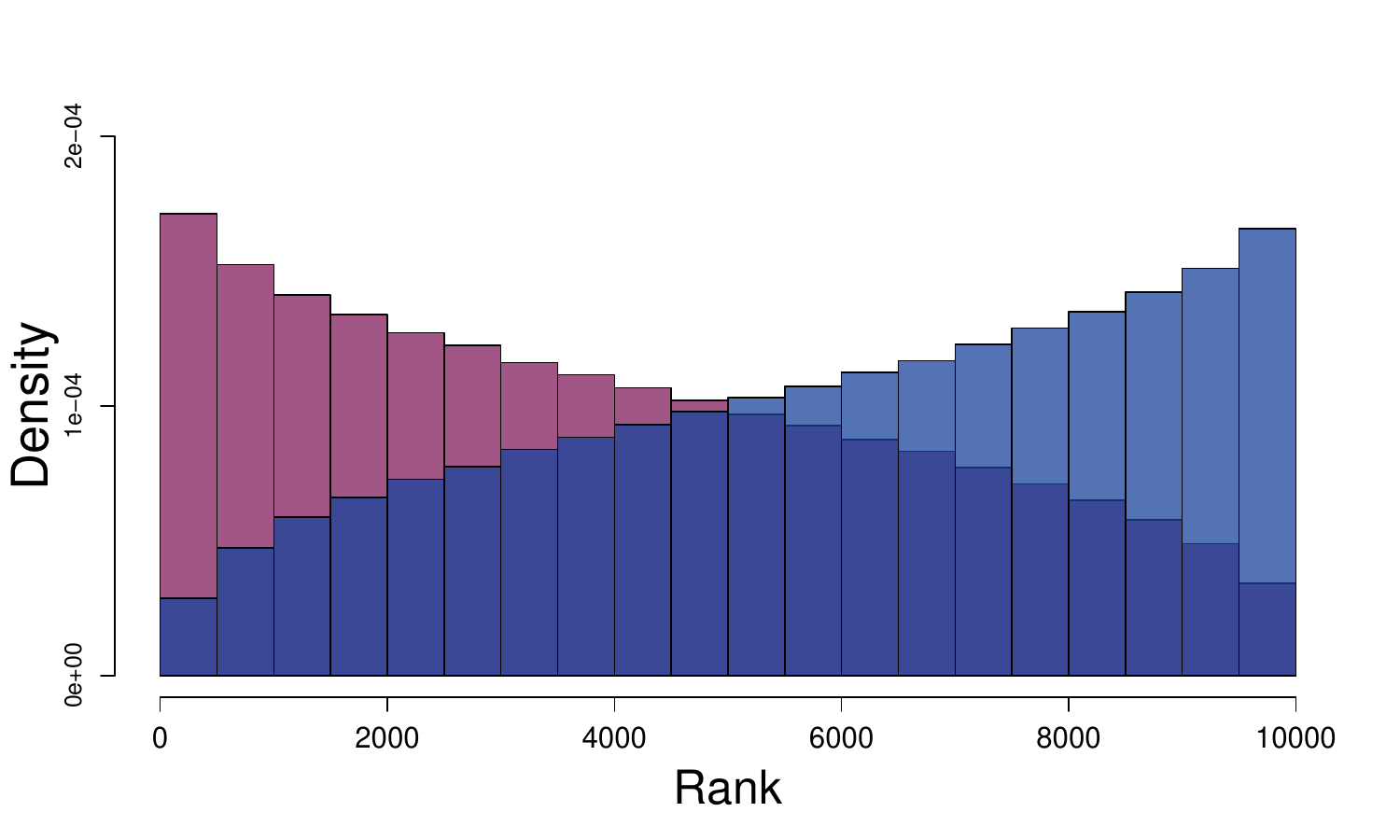}
\caption*{(c)}
\end{minipage}\hfill
\begin{minipage}[c]{.5\textwidth} 
\centering%
\includegraphics[width=\textwidth,trim={0 0 0 0},clip]{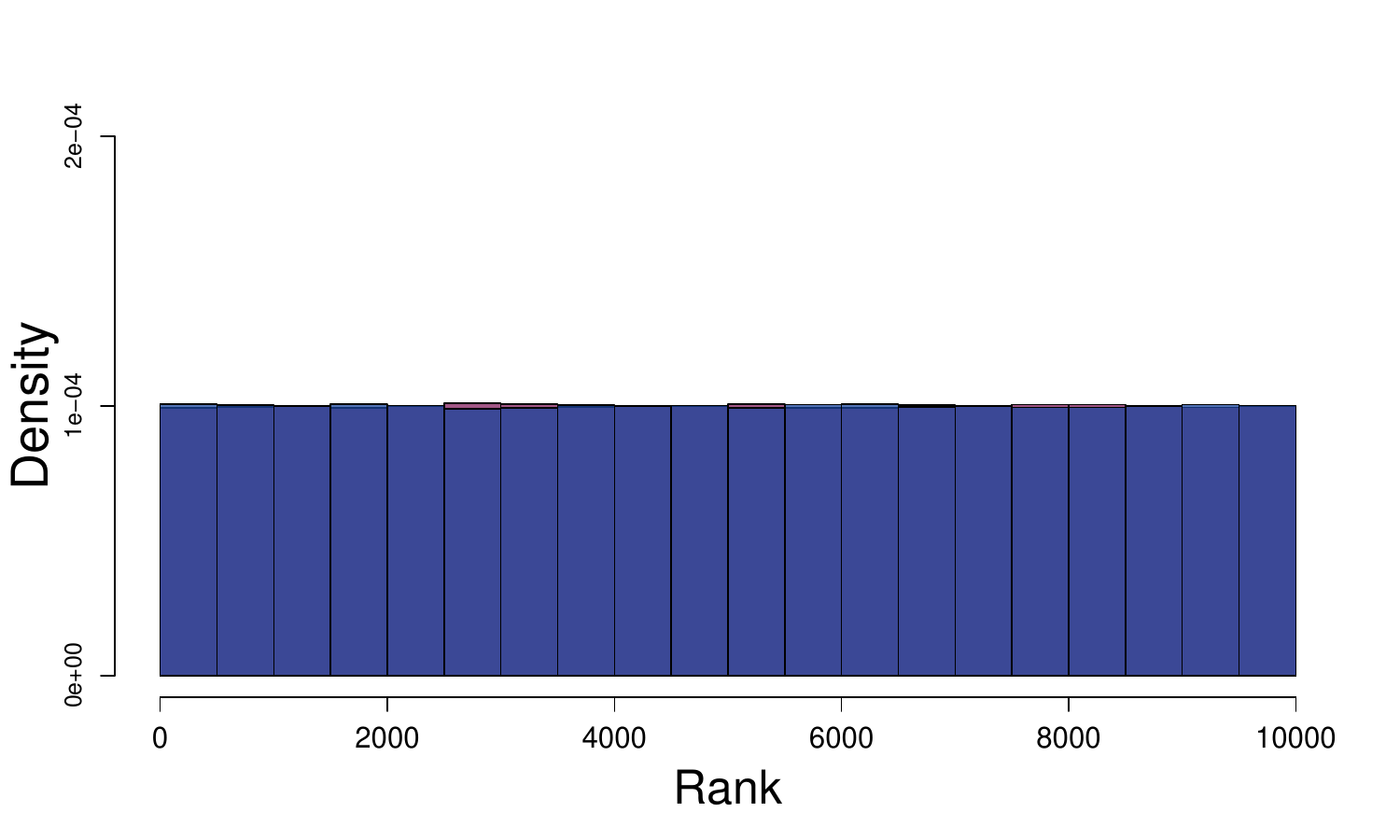}
\caption*{(d)}
\end{minipage}\hfill
\caption{Normalised histograms of the depth ranks of sample 1 (red) and sample 2 (blue) under a (a) submatrix on the diagonal change, (b) submatrix on the off diagonal change, (c) mixed change and (d) offsetting expansion and contraction.}%
\label{fig:scen}%
\end{figure}
We drew samples of size $N=5000$ from each population and computed the combined sample depth ranks. We then repeated this 100 times for each scenario. Figure \ref{fig:scen} shows histograms of each samples' depth ranks, one graph for each scenario. We see that expansions and contractions of submatrices correspond to changes in the rank distribution. 
Scenario three represents a mixture of these expansions and contractions (of different submatrices) and a change in the rank distributions is still exhibited. 
Scenario four shows that if we have two simultaneous contractions and expansions that `perfectly' offset each other, there won't be a change in the rank distributions. We note that if the offset is not perfect, (such as $\sigma_{4,4,2}=0.49\sigma_{4,4,1}$ instead) a change in the rank distribution will appear. 
This is fairly intuitive; since depth functions focus on the \textit{magnitude of outlyingness} and not necessarily the \textit{direction of outlyingness}.
We can summarize the results as follows:
\begin{itemize}
    \item Expansions/contractions in the submatrices produce a change in the rank distributions.
    \item The smaller the submatrix, the smaller the change in rank distribution.
    \item Certain combinations of expansions/contractions also admit changes in the rank distribution, provided the expansion(s) does not offset the contraction(s).
    \item Sign changes cannot be detected.
\end{itemize}

In conclusion, we aim to detect changes that can be expressed as contractions or expansions of submatrices. 
Additionally, we remark that many combinations of contractions and expansions can be detected, with the caveat that offsetting combinations of such changes make the change more difficult to detect, or in a special case, impossible. 
\section{Additional Simulation Results}\label{app::add_sim}
The second scenario is a set of expansions and contractions again, instead with $$\sigma_1^2=1,\ \sigma_2^2=3,\ \sigma_3^2=5,\ \sigma_4^2=3,\ \sigma_5^2=5,\ \sigma_6^2=1.$$ 
For the second scenario, we tested all depths paired with Algorithm \ref{alg:pelt} but only tested the Algorithm \ref{alg:wbs} paired with Mahalanobis depth.  
The results from the second scenario were very similar to the first and can be found in Appendix \ref{app::add_sim}. 
Additionally, due to computational limitations, for $N=2500$ and $N=5000$ Algorithm \ref{alg:wbs} was only ran with Mahalanobis depth and we did not run the WBSIP algorithm nor the BSOP algorithm. 
We did not run \texttt{ecp} methods with the second scenario, based on the other results being so similar. 

\begin{figure}[h!]
\begin{minipage}[c]{.32\textwidth} 
\centering%
\includegraphics[width=\textwidth]{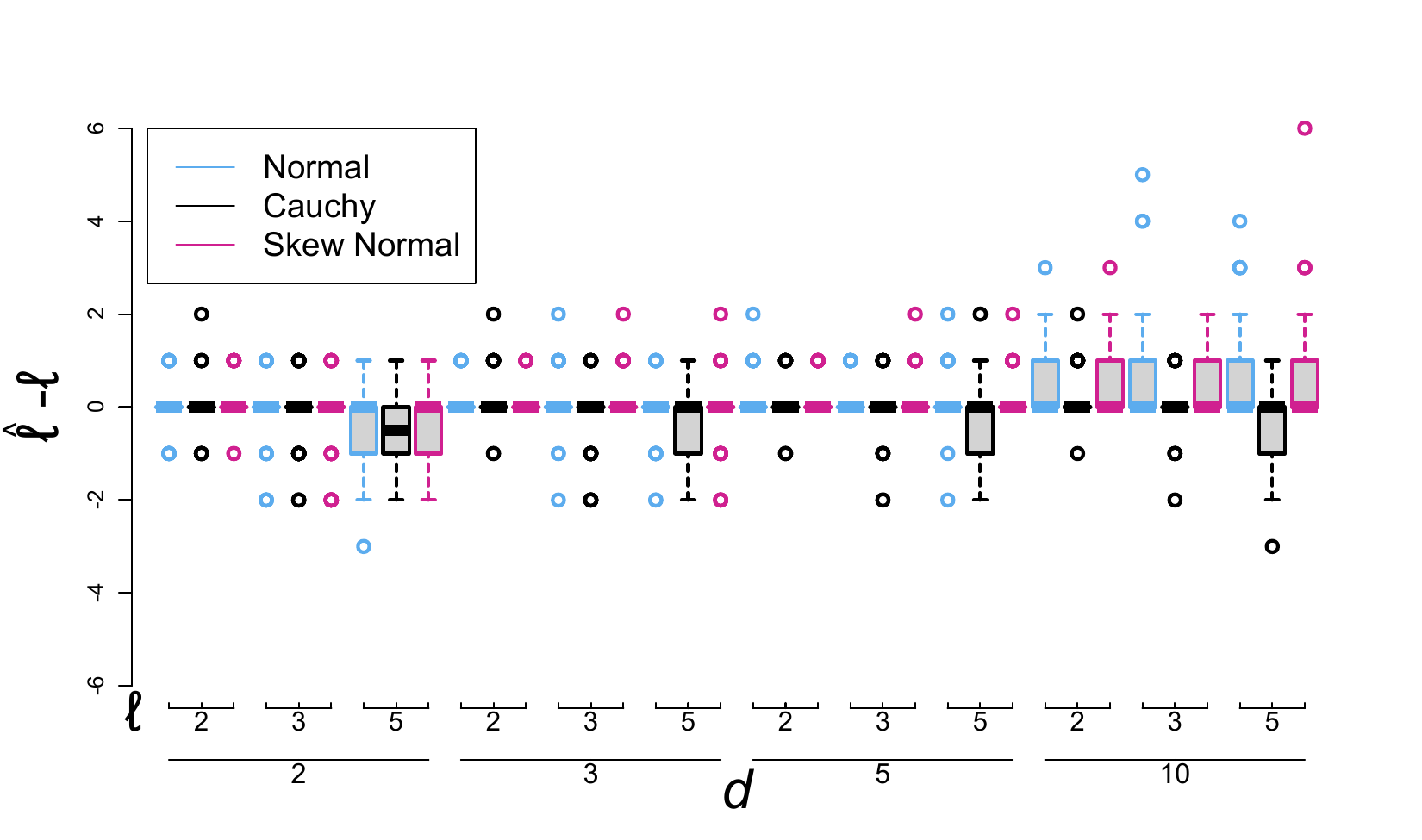}
\end{minipage}
\begin{minipage}[c]{.32\textwidth} 
\centering%
\includegraphics[width=\textwidth]{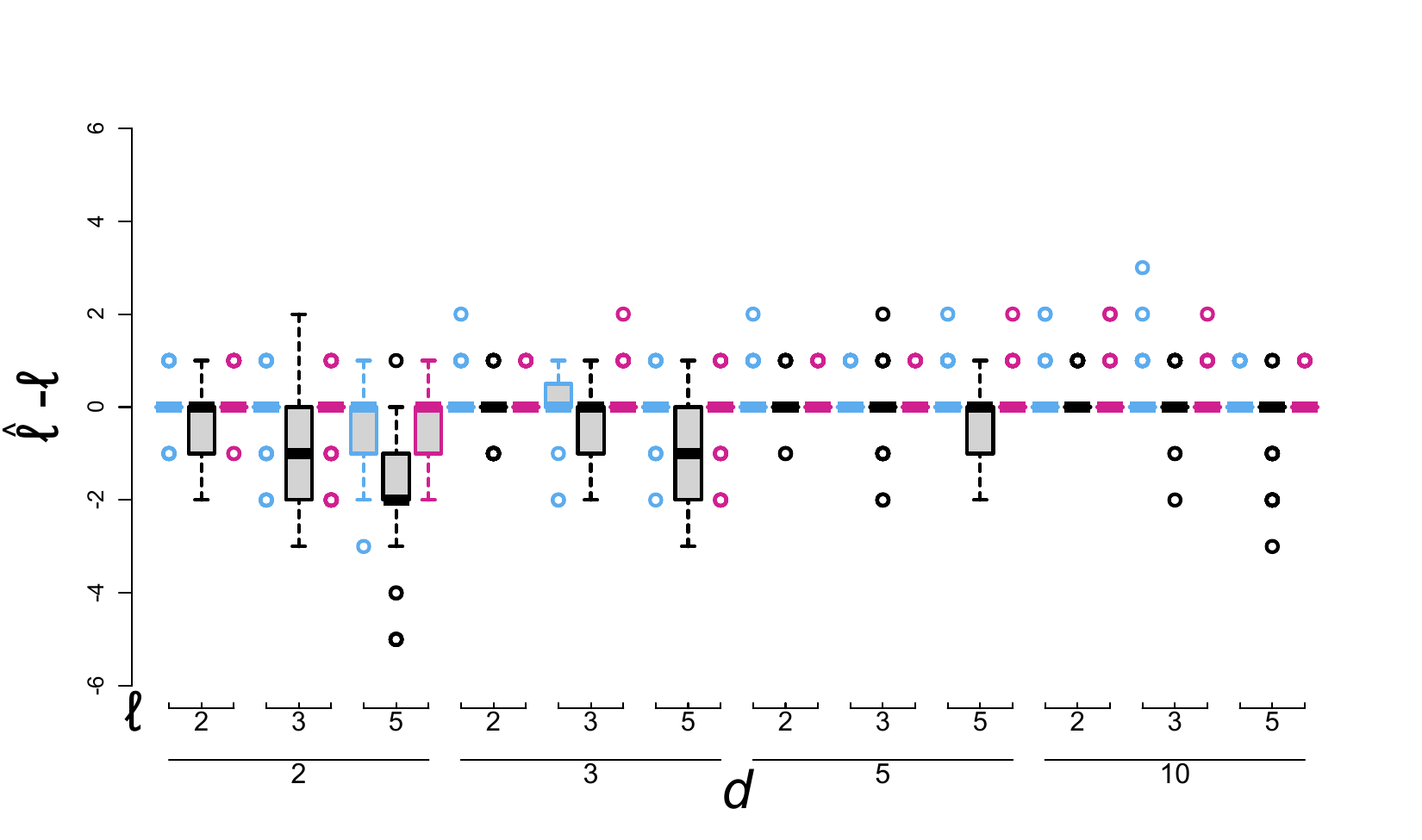}
\end{minipage}\hfill
\begin{minipage}[c]{.32\textwidth} 
\centering%
\includegraphics[width=\textwidth]{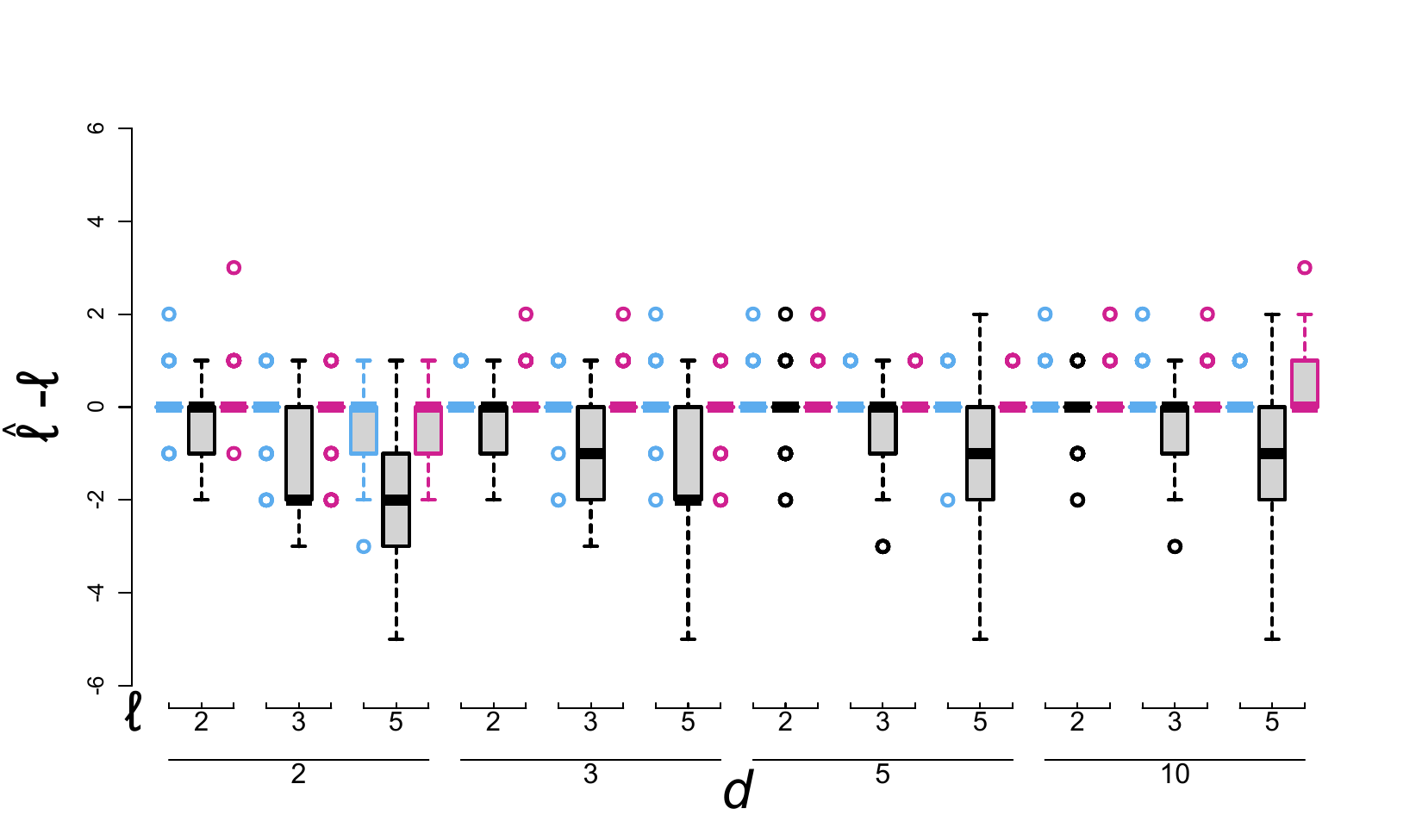}
\end{minipage}
\caption{Boxplots of $\hat{\ell}-\ell$ under the WBS algorithm in simulation scenario number 1. The parameters were $\alpha=0.9$, $N=1000$ and the depth functions used were half-space depth, Mahalanobis depth, and modified Mahalanobis depth, respectively.}%
\end{figure}
\begin{figure}[h!]
\begin{minipage}[c]{.49\textwidth} 
\centering%
\includegraphics[width=\textwidth]{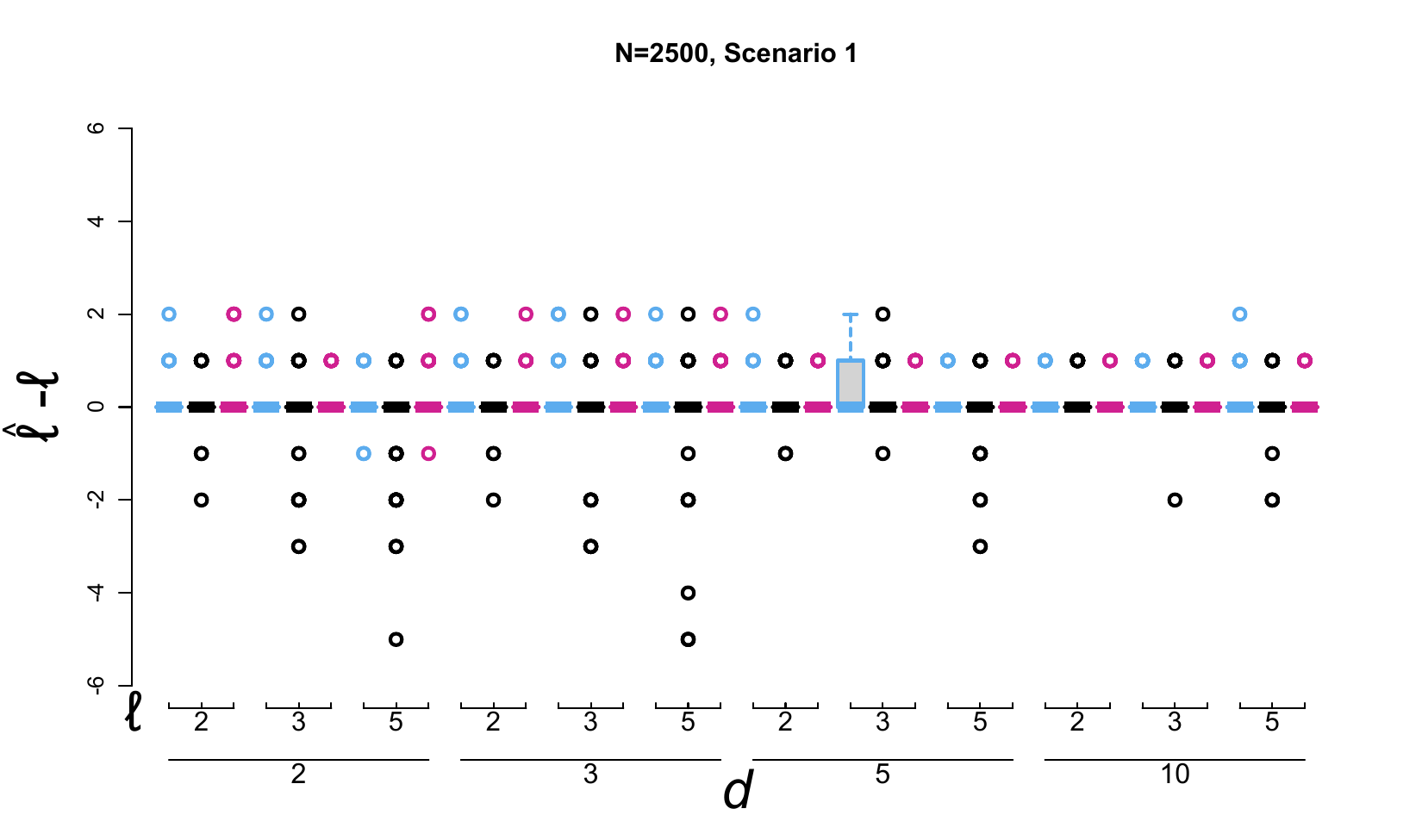}
\end{minipage}
\begin{minipage}[c]{.49\textwidth} 
\centering%
\includegraphics[width=\textwidth]{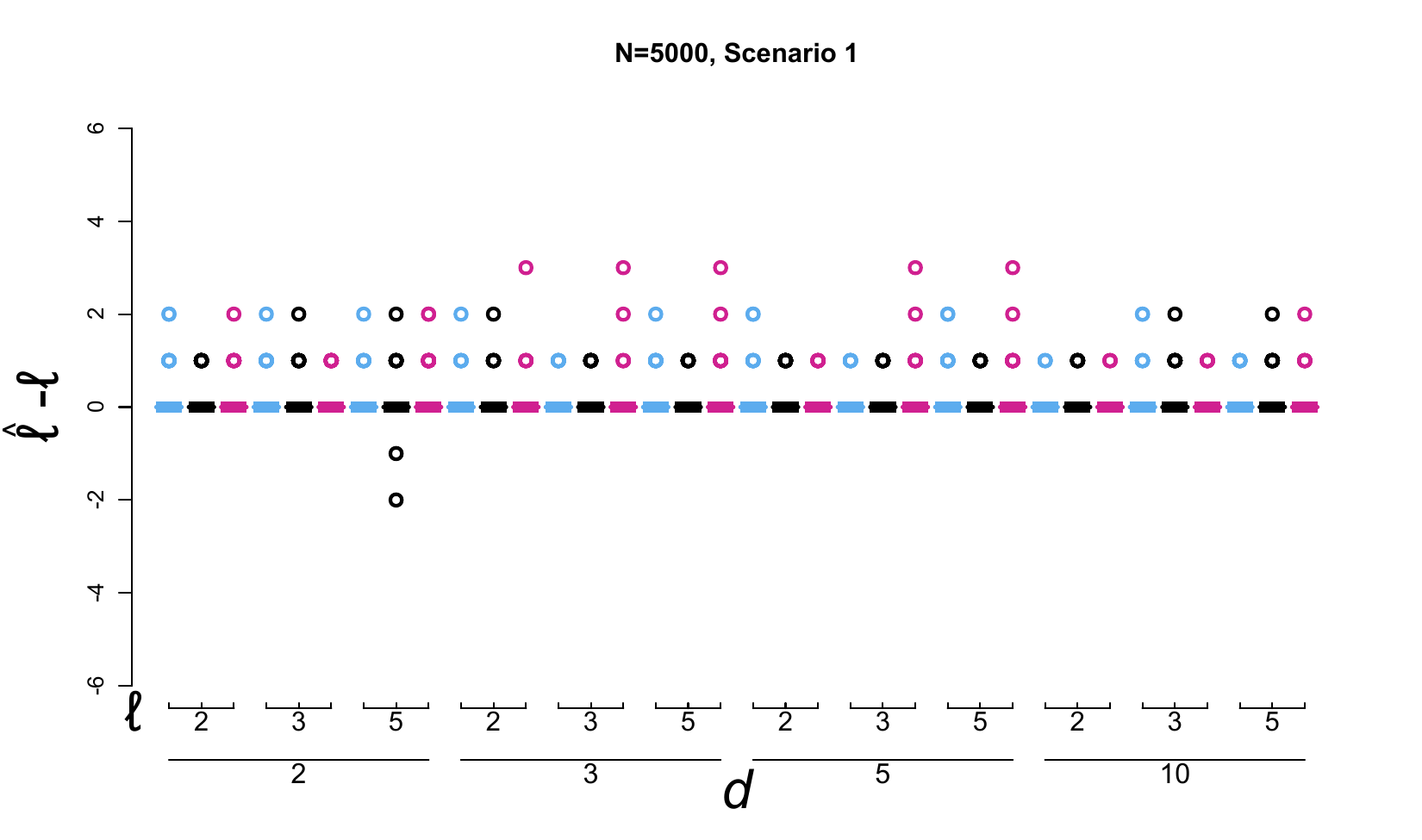}
\end{minipage}\hfill\newline
\begin{minipage}[c]{.49\textwidth} 
\centering%
\includegraphics[width=\textwidth]{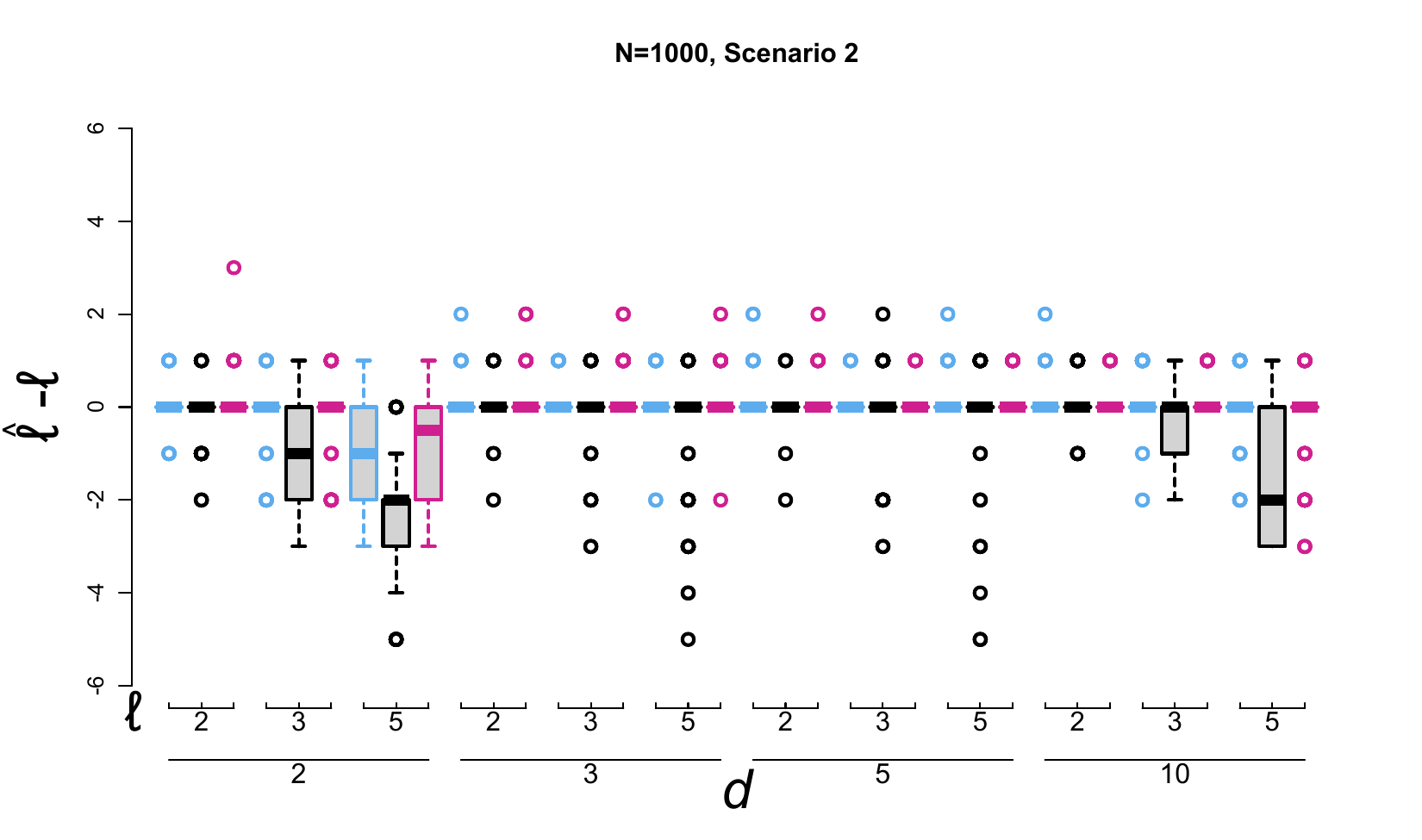}
\end{minipage}
\begin{minipage}[c]{.49\textwidth} 
\centering%
\includegraphics[width=\textwidth]{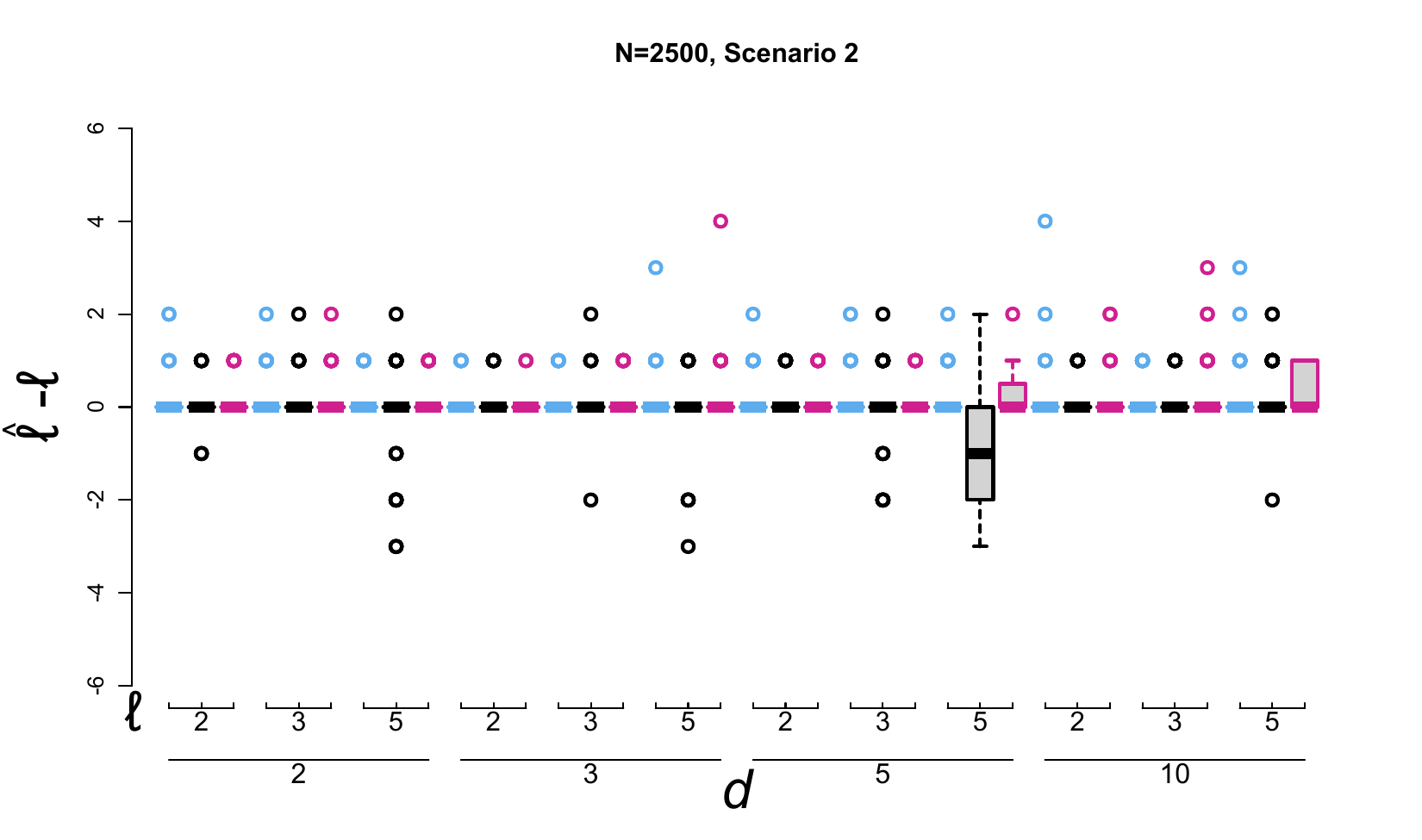}
\end{minipage}\hfill\newline
\begin{minipage}[c]{.49\textwidth} 
\centering%
\includegraphics[width=\textwidth]{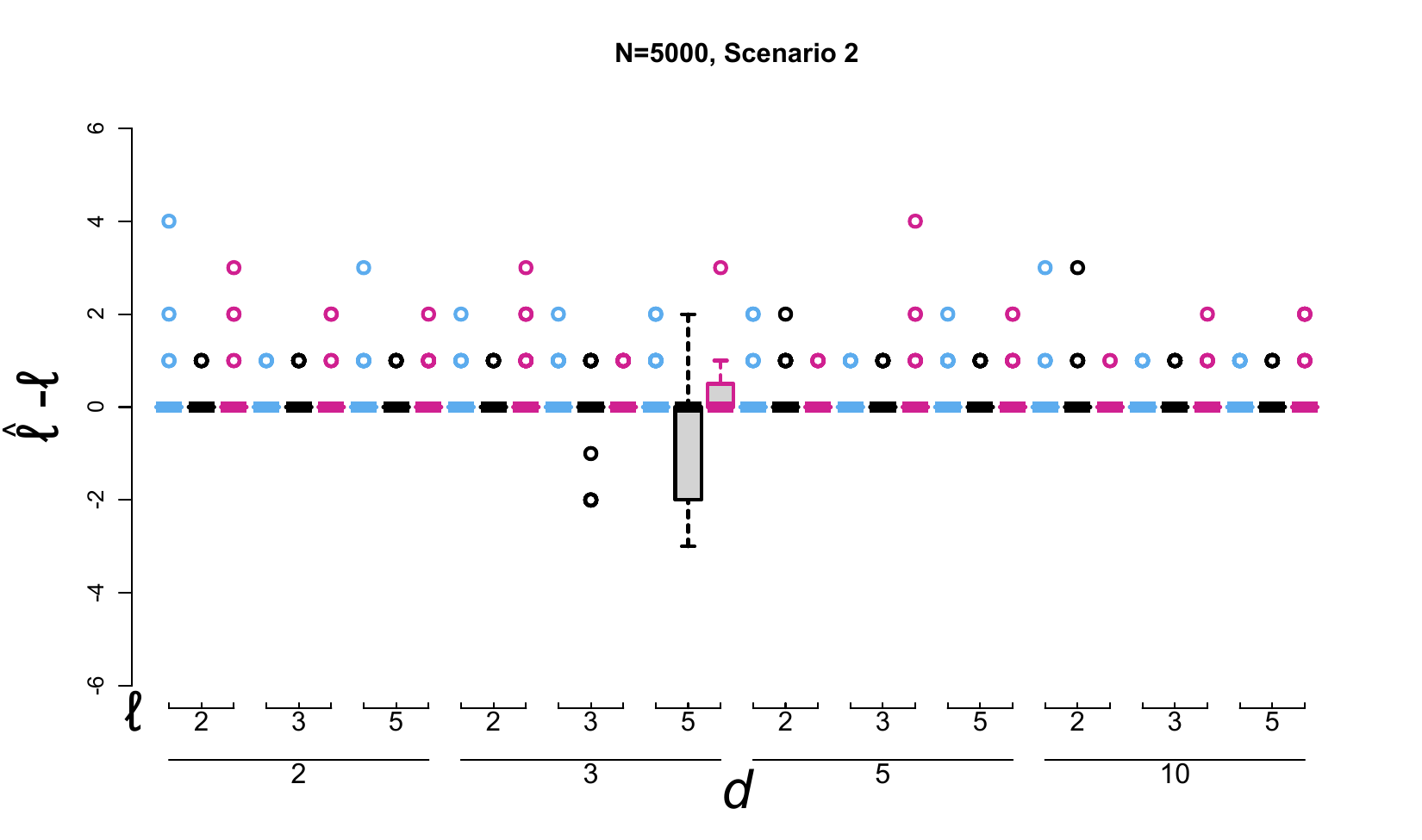}
\end{minipage}
\caption{Boxplots under Mahalanobis Depth of $\hat{\ell}-\ell$ under the WBS algorithm 
with $\alpha=0.9$.}%
\end{figure}

\begin{figure}
\begin{minipage}[c]{.32\textwidth} 
\centering%
\includegraphics[width=\textwidth]{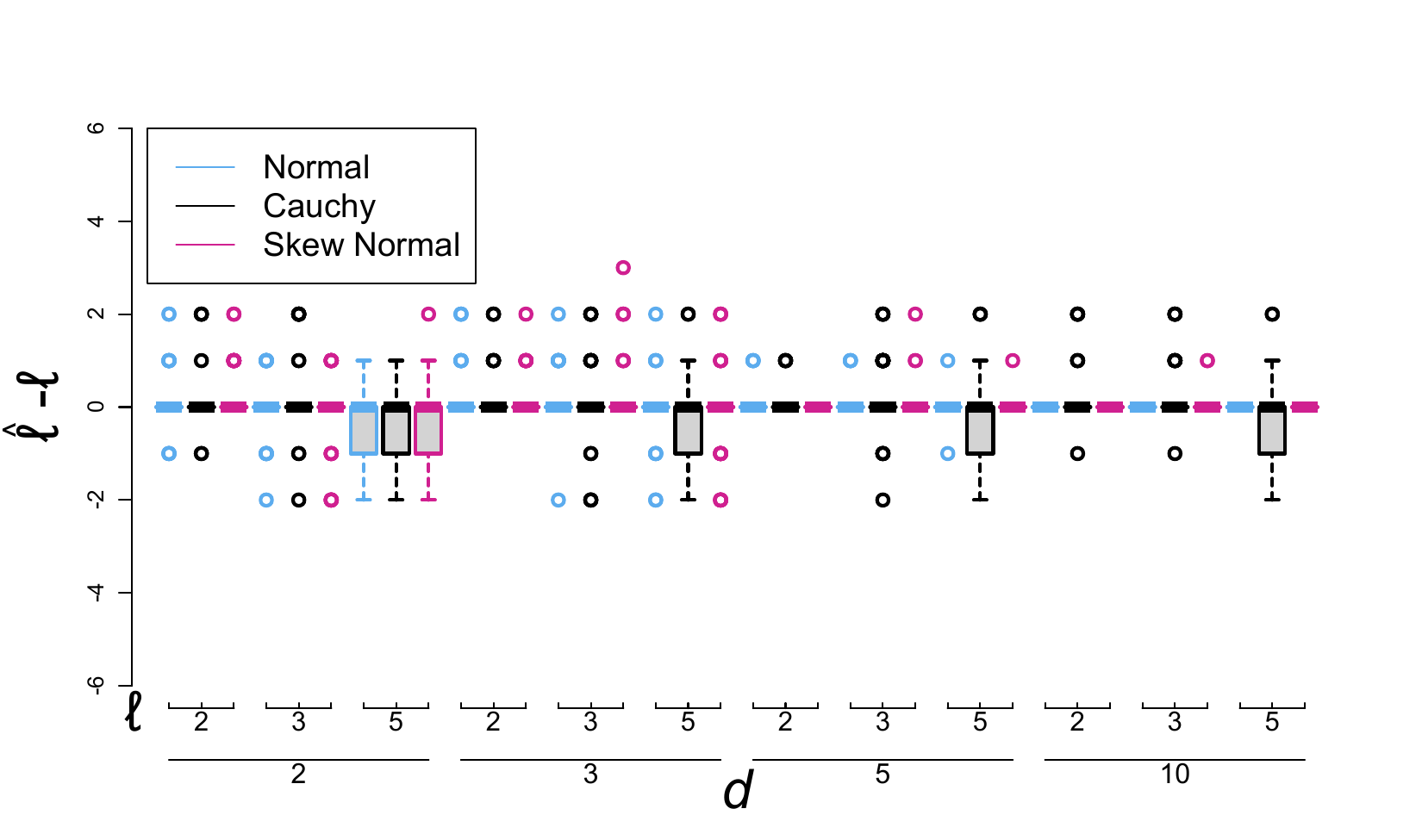}
\end{minipage}
\begin{minipage}[c]{.32\textwidth} 
\centering%
\includegraphics[width=\textwidth]{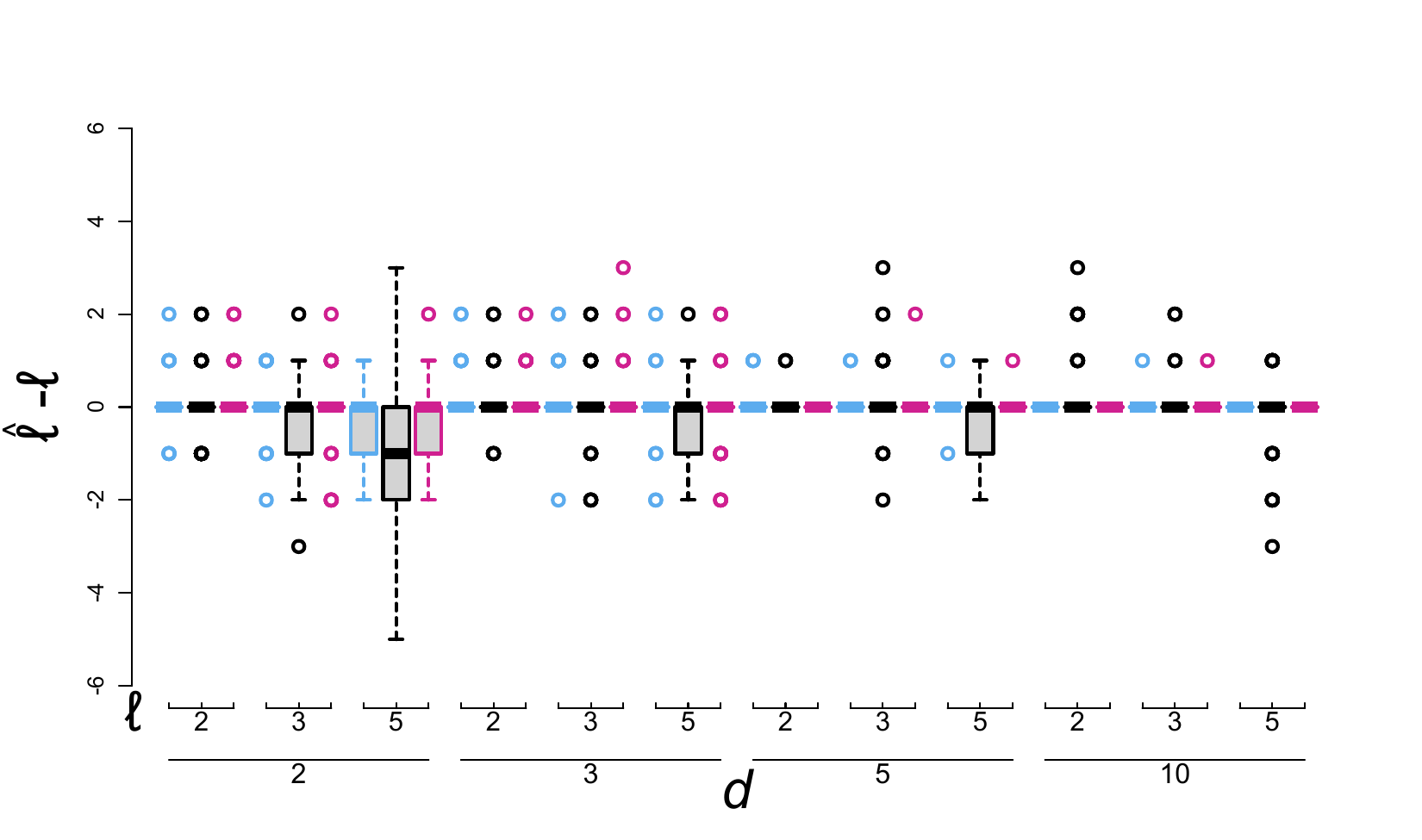}
\end{minipage}
\begin{minipage}[c]{.32\textwidth} 
\centering%
\includegraphics[width=\textwidth]{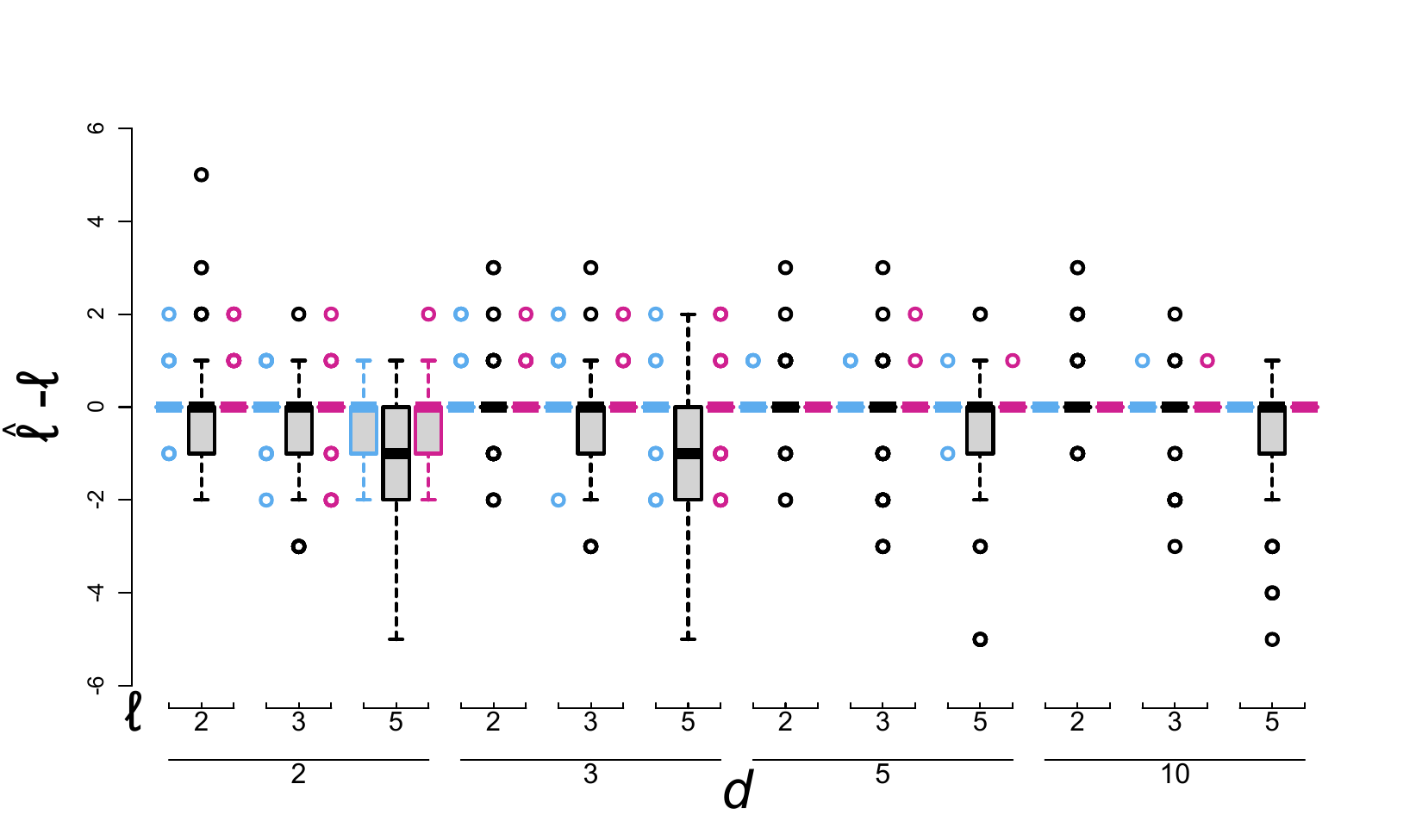}
\end{minipage}\hfill
\caption{Boxplots of $\hat{\ell}-\ell$ under the KW-PELT algorithm in simulation scenario number 1. The parameters were $C_1=0.18$ and $C_2=3.74$ for $N=1000$, with the depths being half-space depth, Mahalanobis depth, and modified Mahalanobis depth, respectively.}
\end{figure}
\begin{figure}
\begin{minipage}[c]{.49\textwidth} 
\centering%
\includegraphics[width=\textwidth]{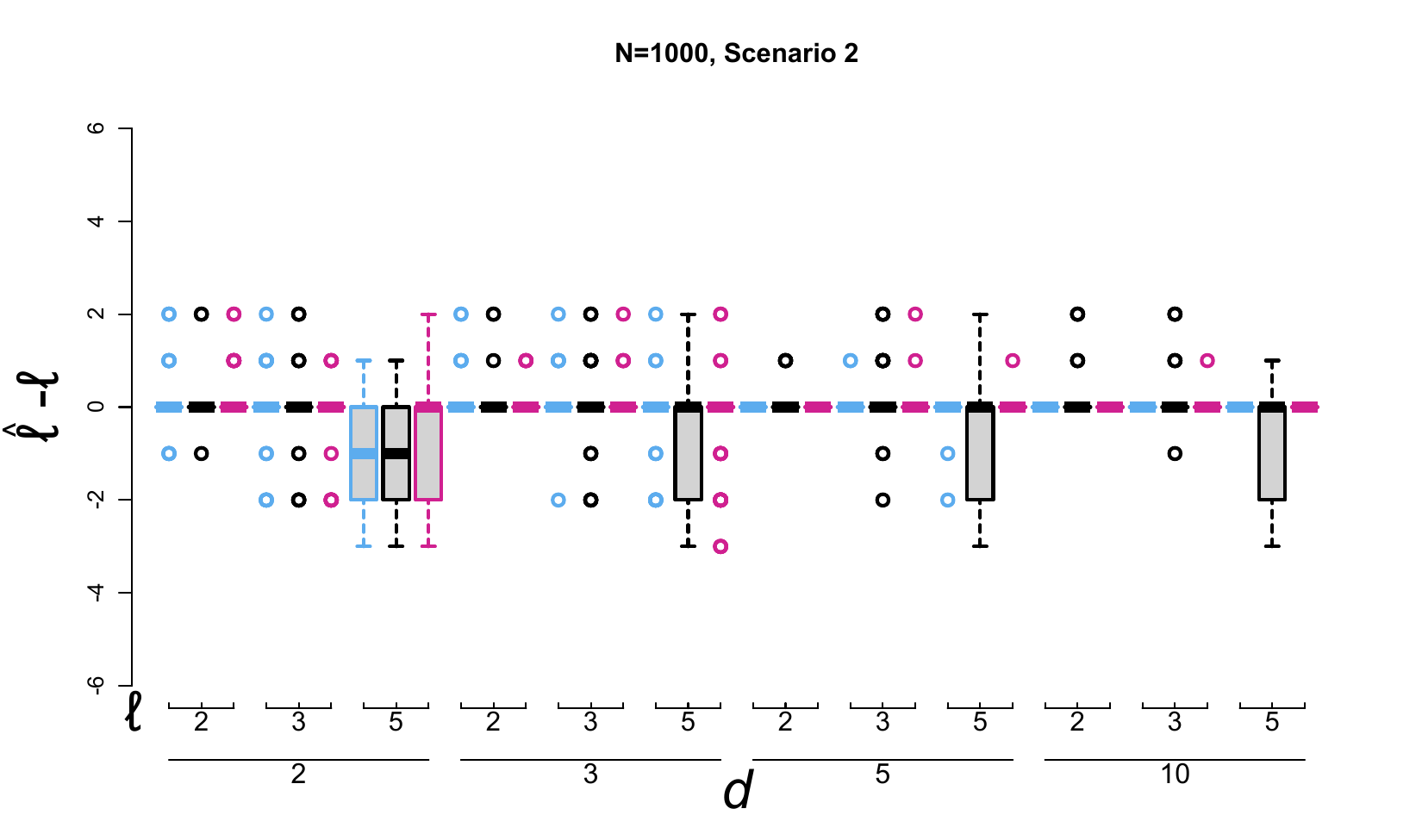}
\end{minipage}
\begin{minipage}[c]{.49\textwidth} 
\centering%
\includegraphics[width=\textwidth]{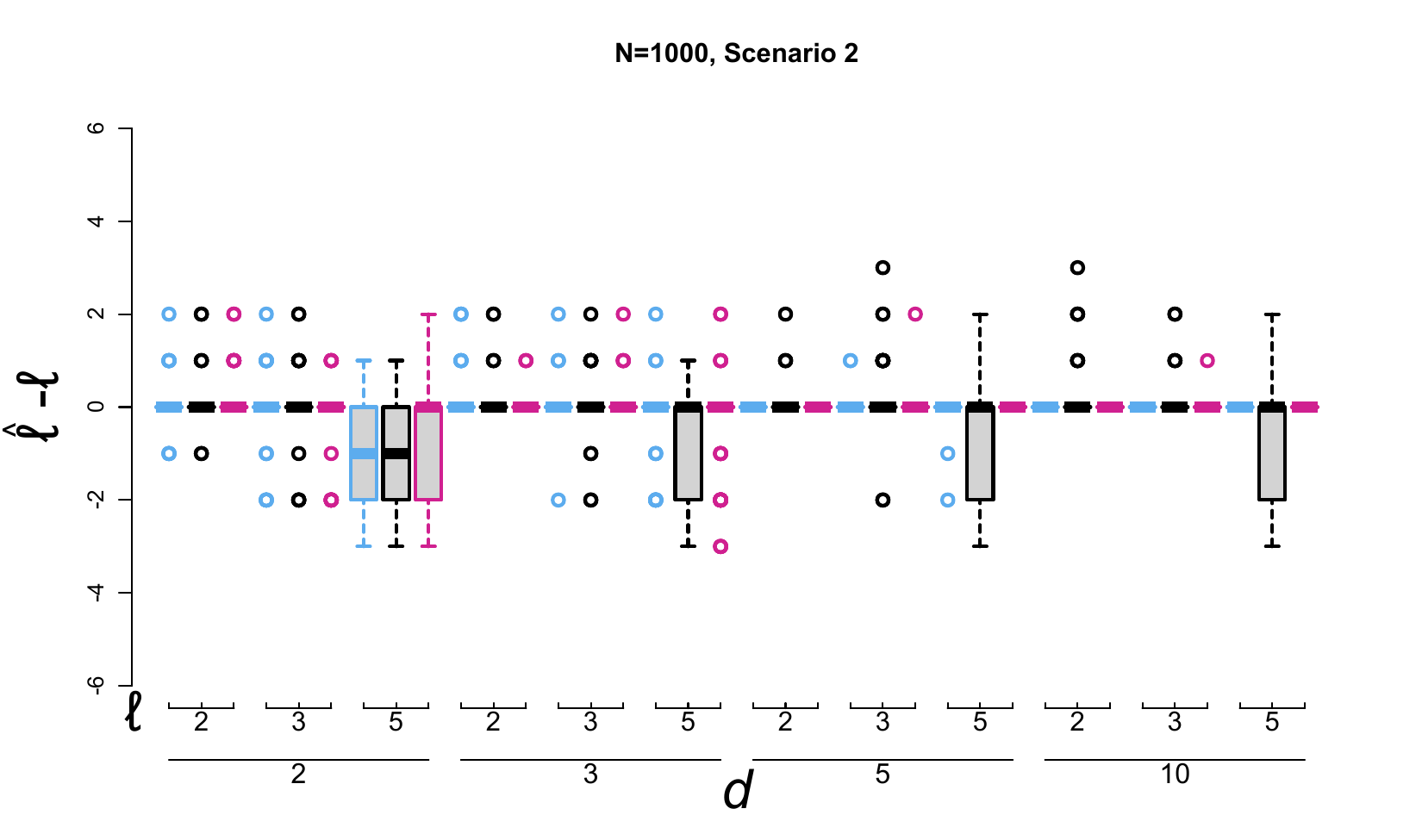}
\end{minipage}\hfill\newline
\begin{minipage}[c]{.49\textwidth} 
\centering%
\includegraphics[width=\textwidth]{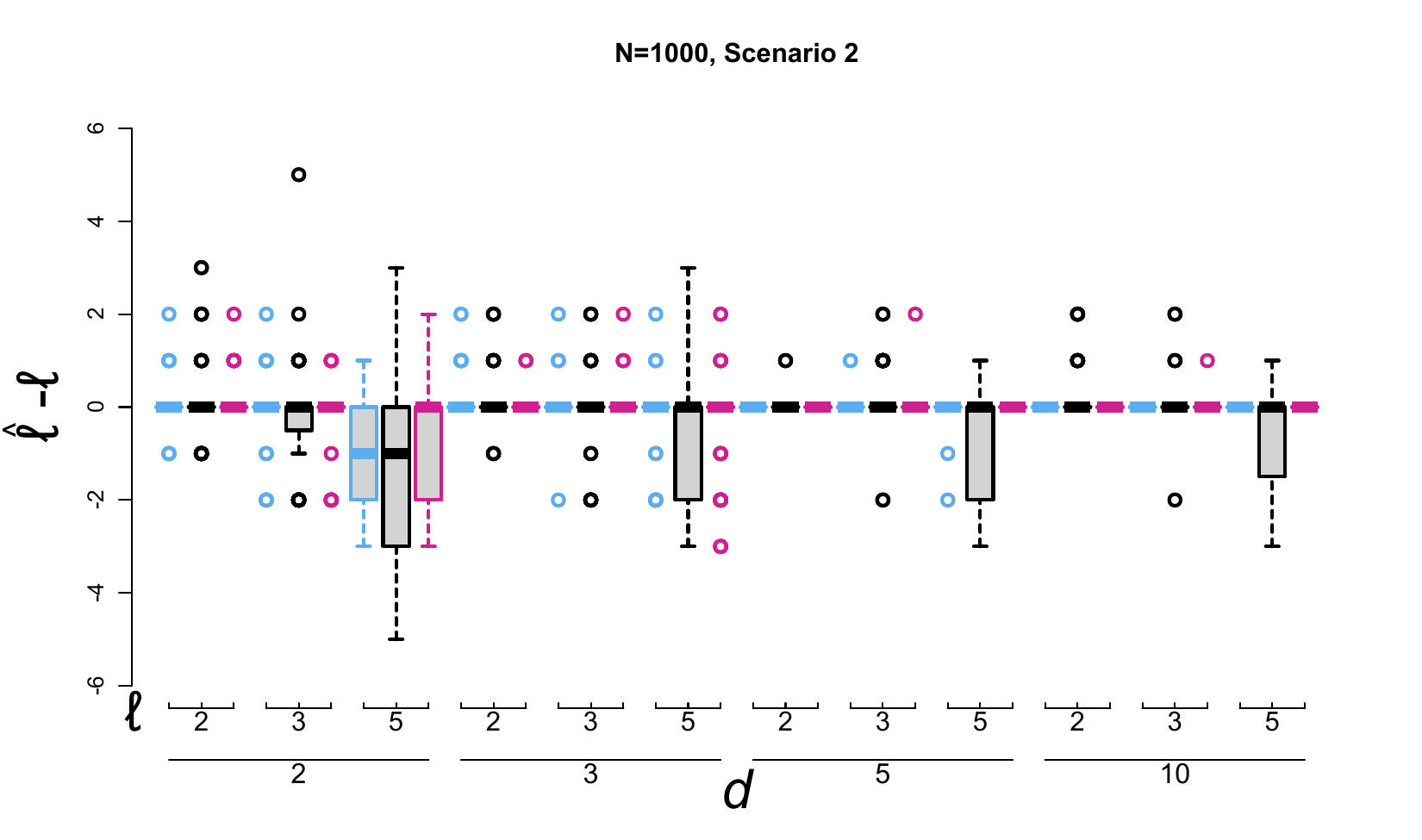}
\end{minipage}
\begin{minipage}[c]{.49\textwidth} 
\centering%
\includegraphics[width=\textwidth]{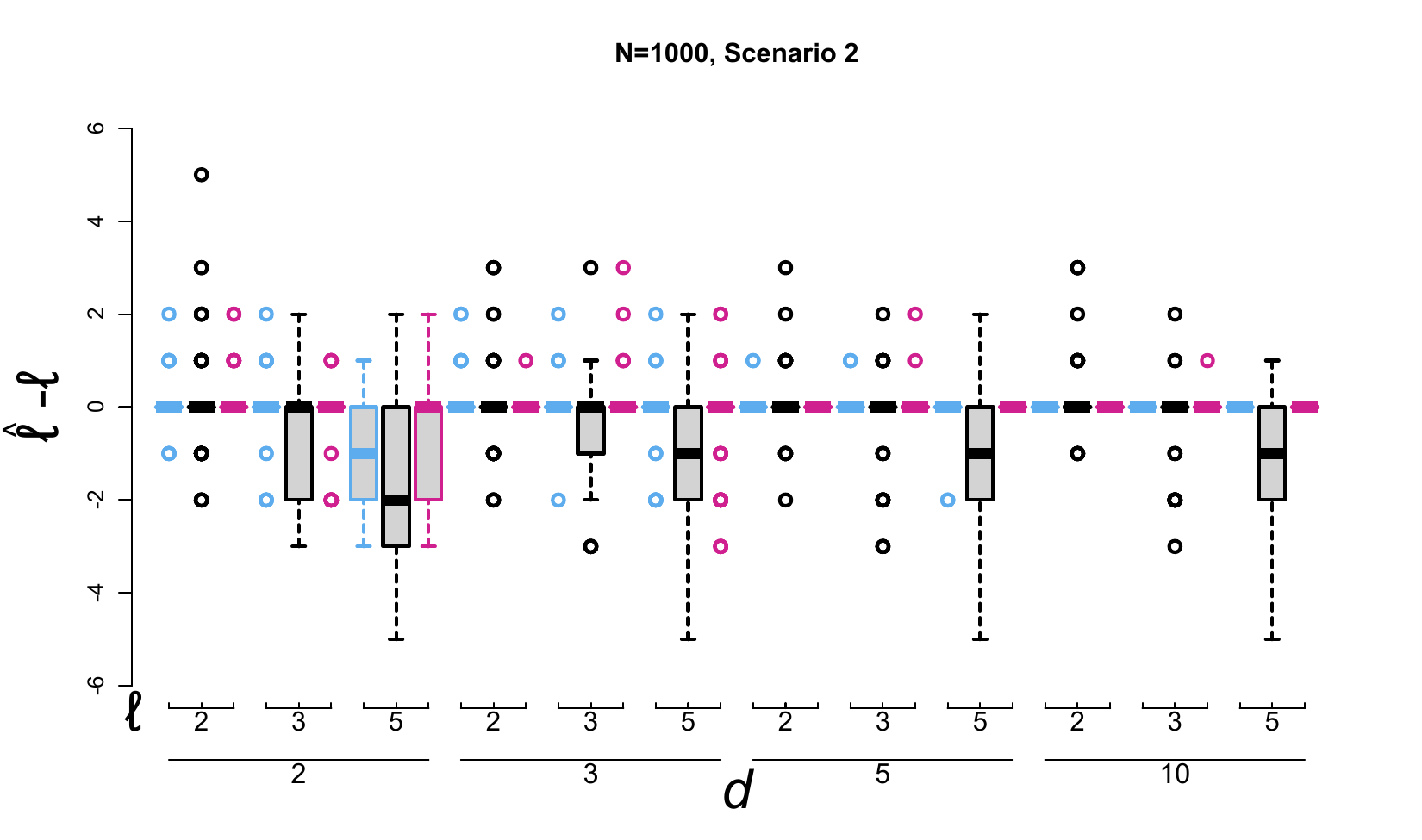}
\end{minipage}\hfill\newline
\caption{Boxplots of $\hat{\ell}-\ell$ under the KW-PELT algorithm with $C_1=0.18$ and $C_2=3.74$.}
\end{figure}
\begin{figure}
\begin{minipage}[c]{.49\textwidth} 
\centering%
\includegraphics[width=\textwidth]{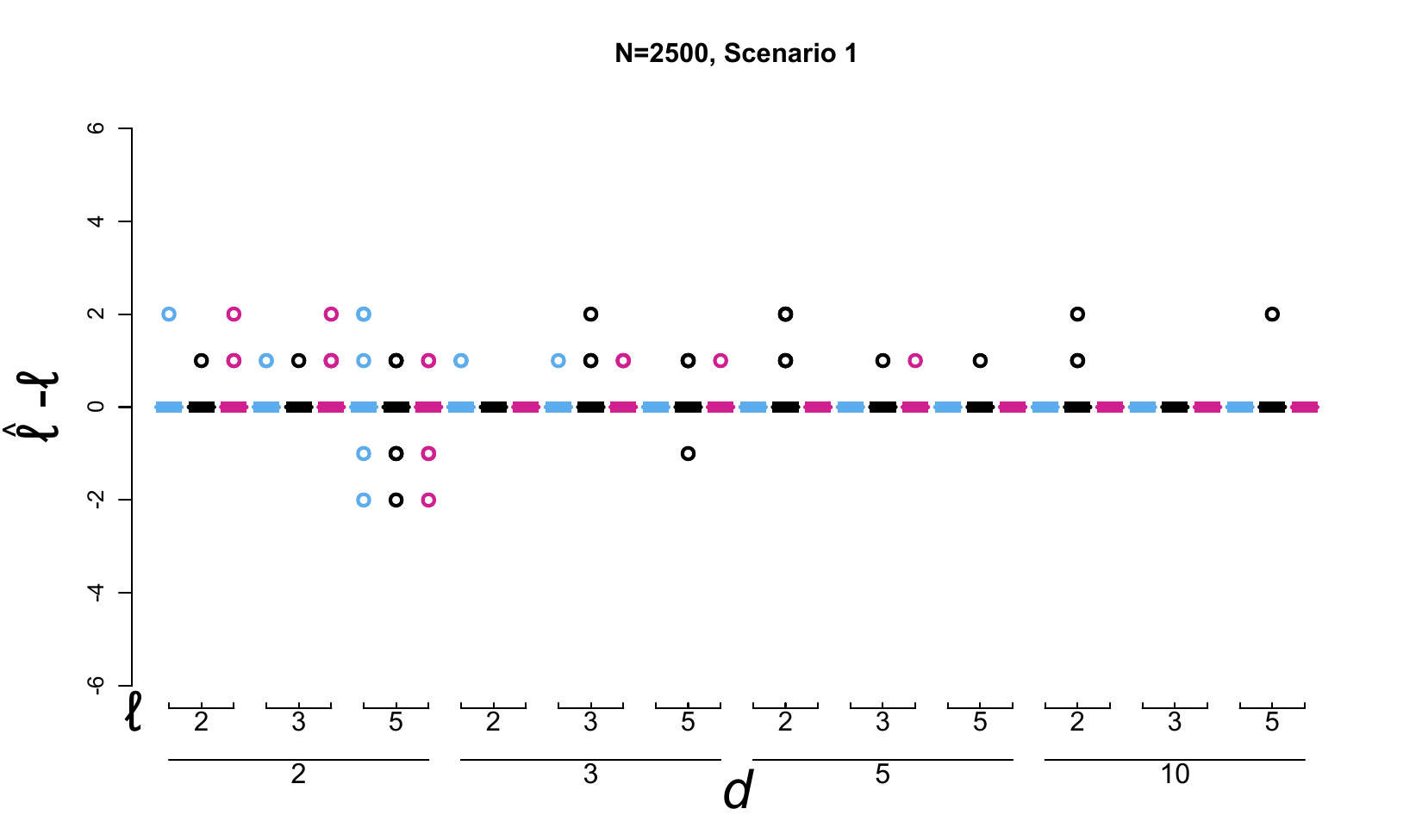}
\end{minipage}
\begin{minipage}[c]{.49\textwidth} 
\centering%
\includegraphics[width=\textwidth]{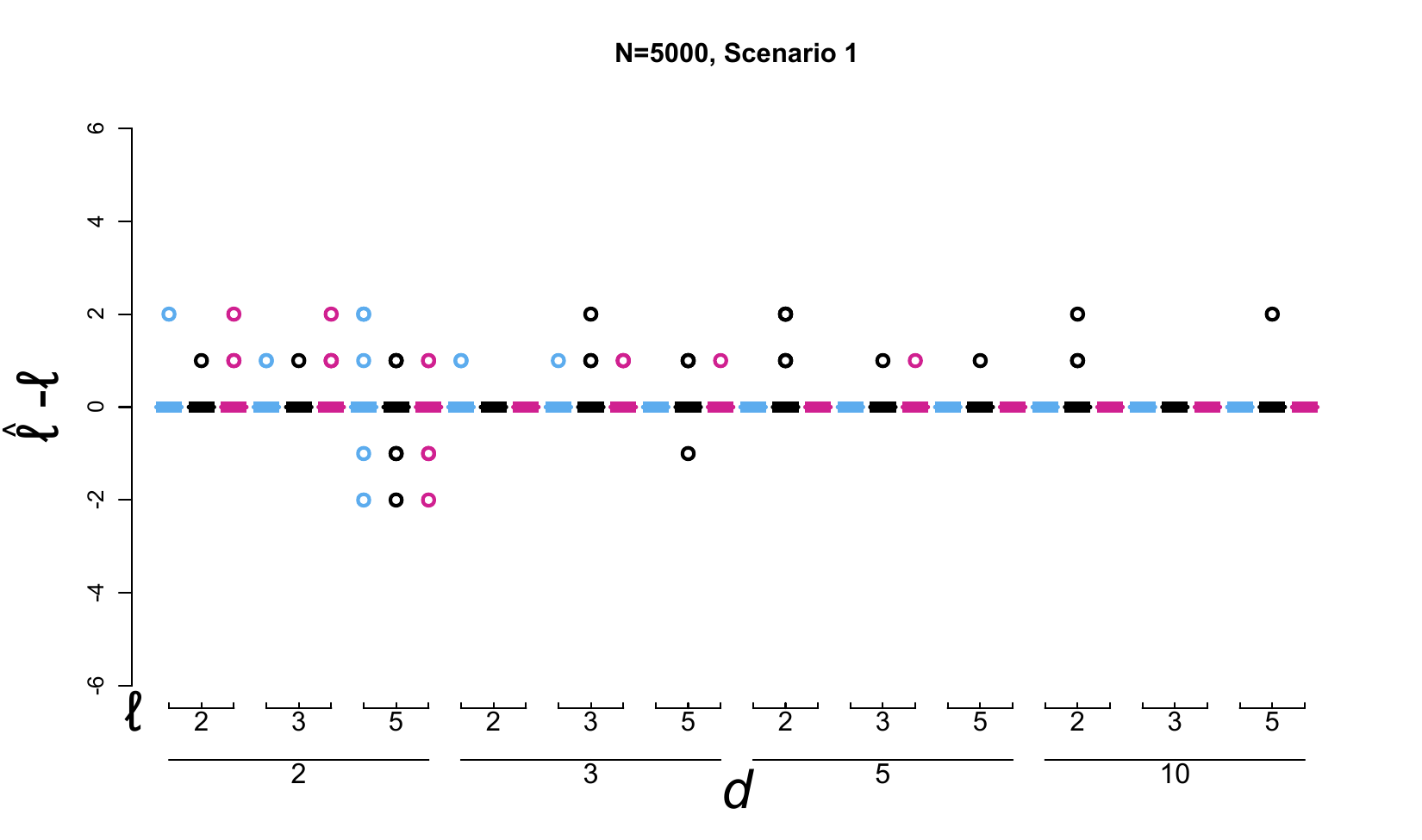}
\end{minipage}\hfill\newline
\begin{minipage}[c]{.49\textwidth} 
\centering%
\includegraphics[width=\textwidth]{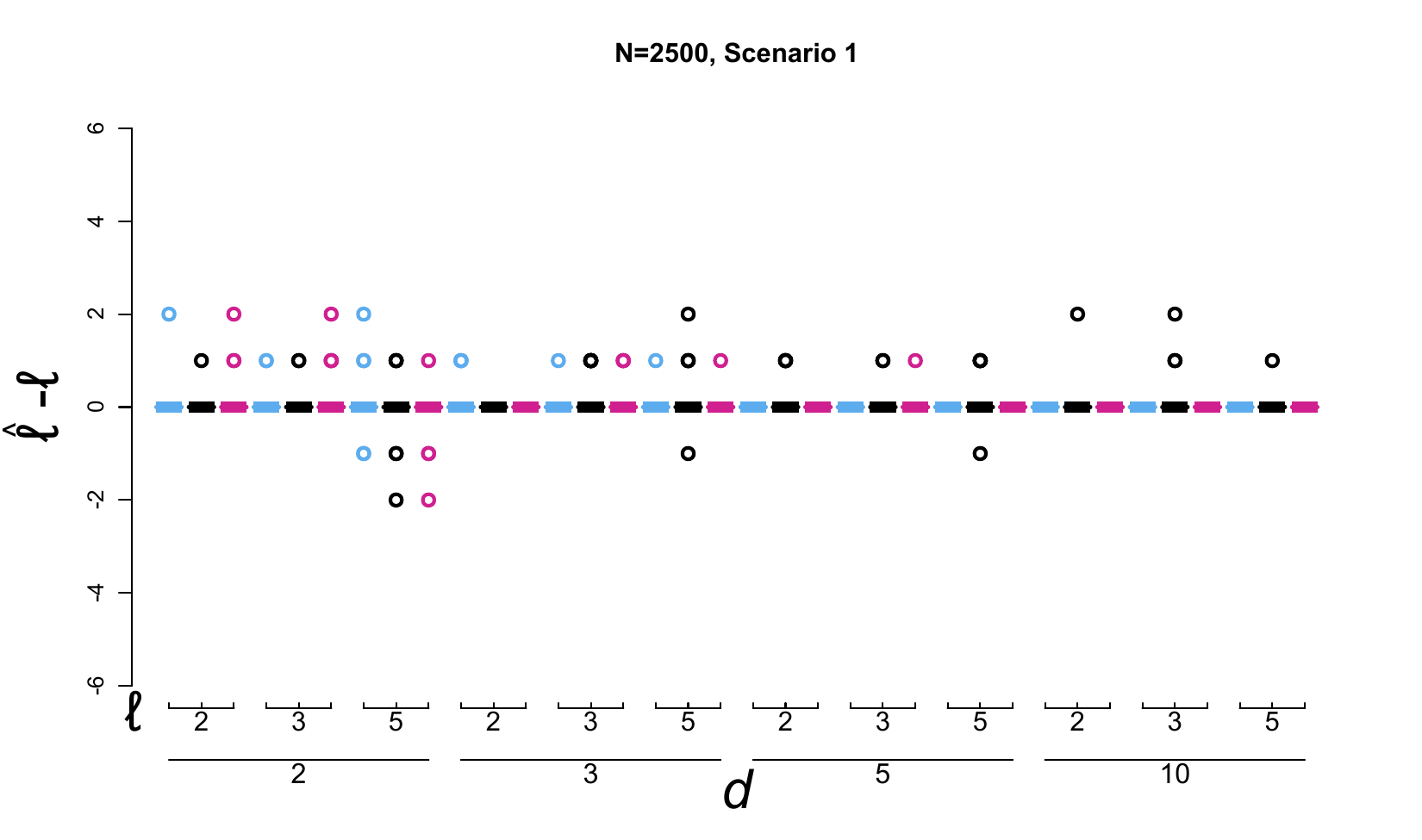}
\end{minipage}
\begin{minipage}[c]{.49\textwidth} 
\centering%
\includegraphics[width=\textwidth]{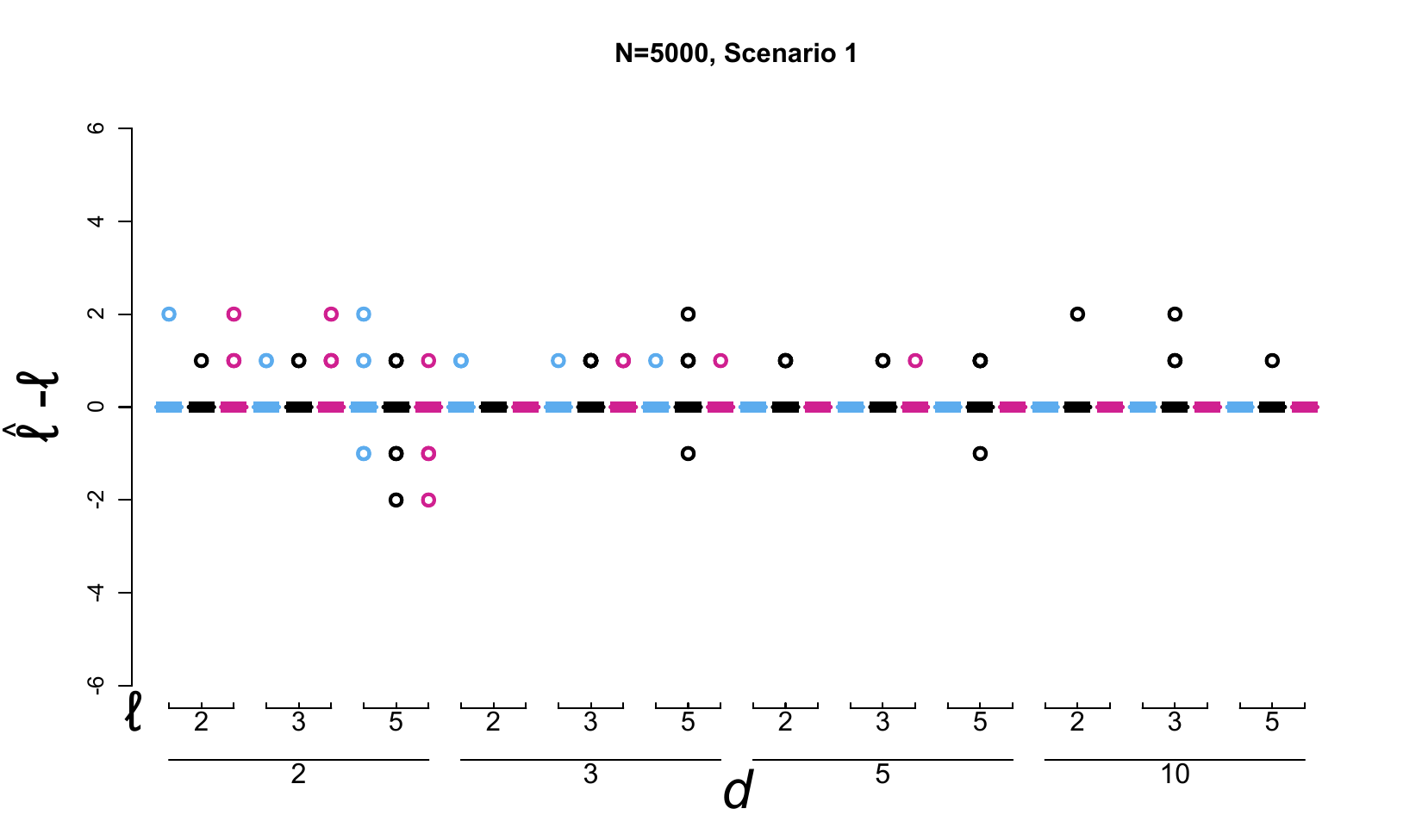}
\end{minipage}\hfill\newline
\begin{minipage}[c]{.49\textwidth} 
\centering%
\includegraphics[width=\textwidth]{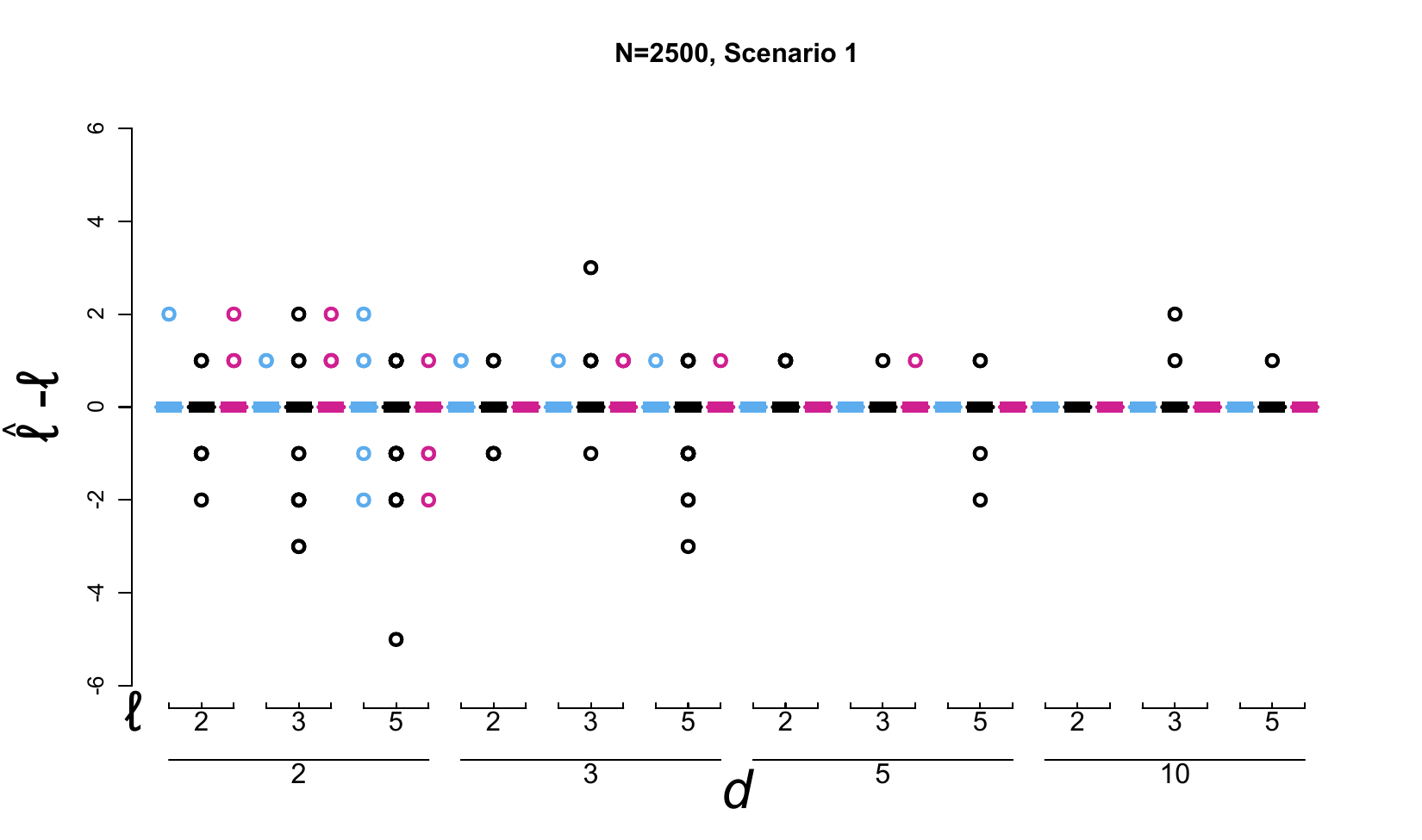}
\end{minipage}
\begin{minipage}[c]{.49\textwidth} 
\centering%
\includegraphics[width=\textwidth]{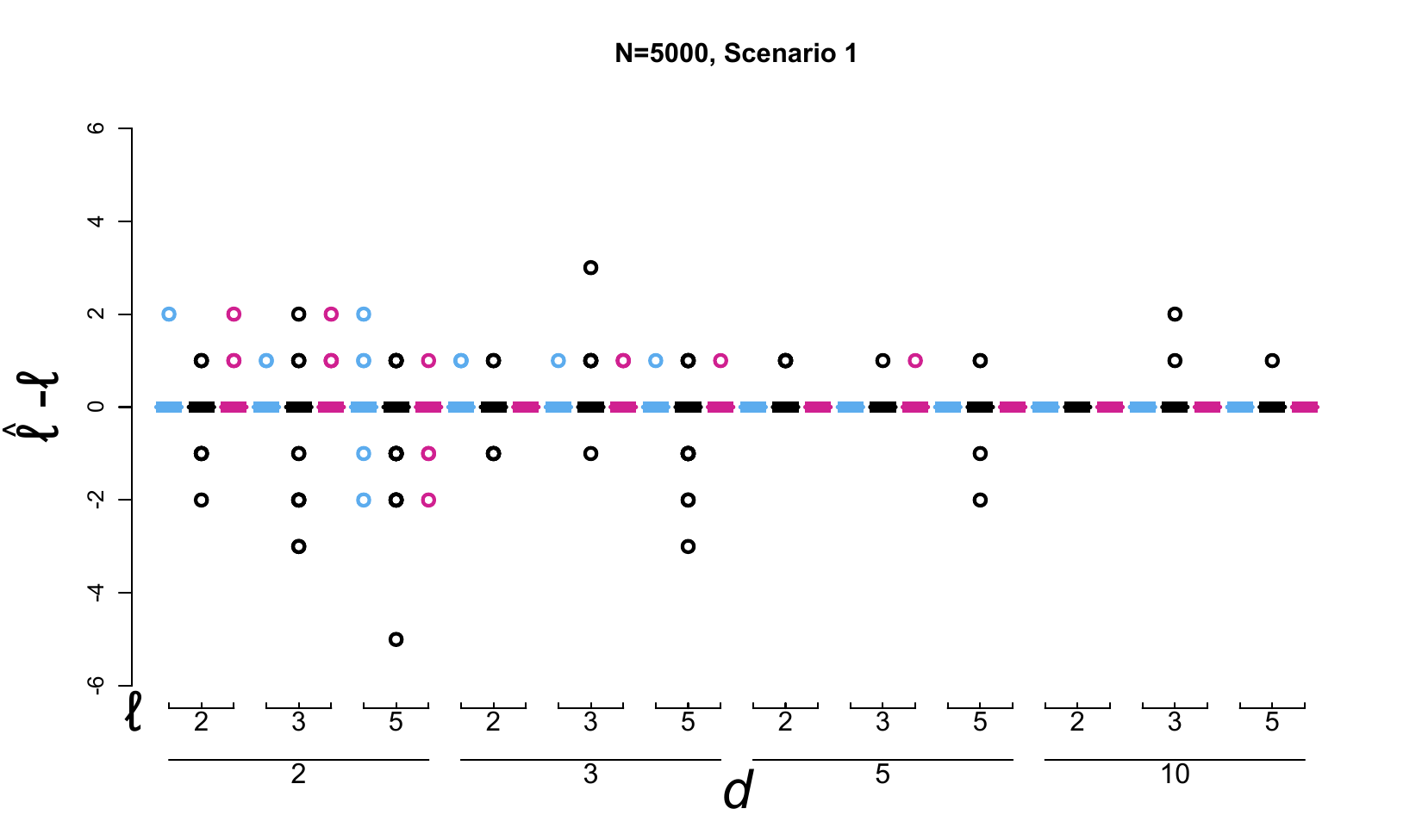}
\end{minipage}\hfill
\newline
\begin{minipage}[c]{.49\textwidth} 
\centering%
\includegraphics[width=\textwidth]{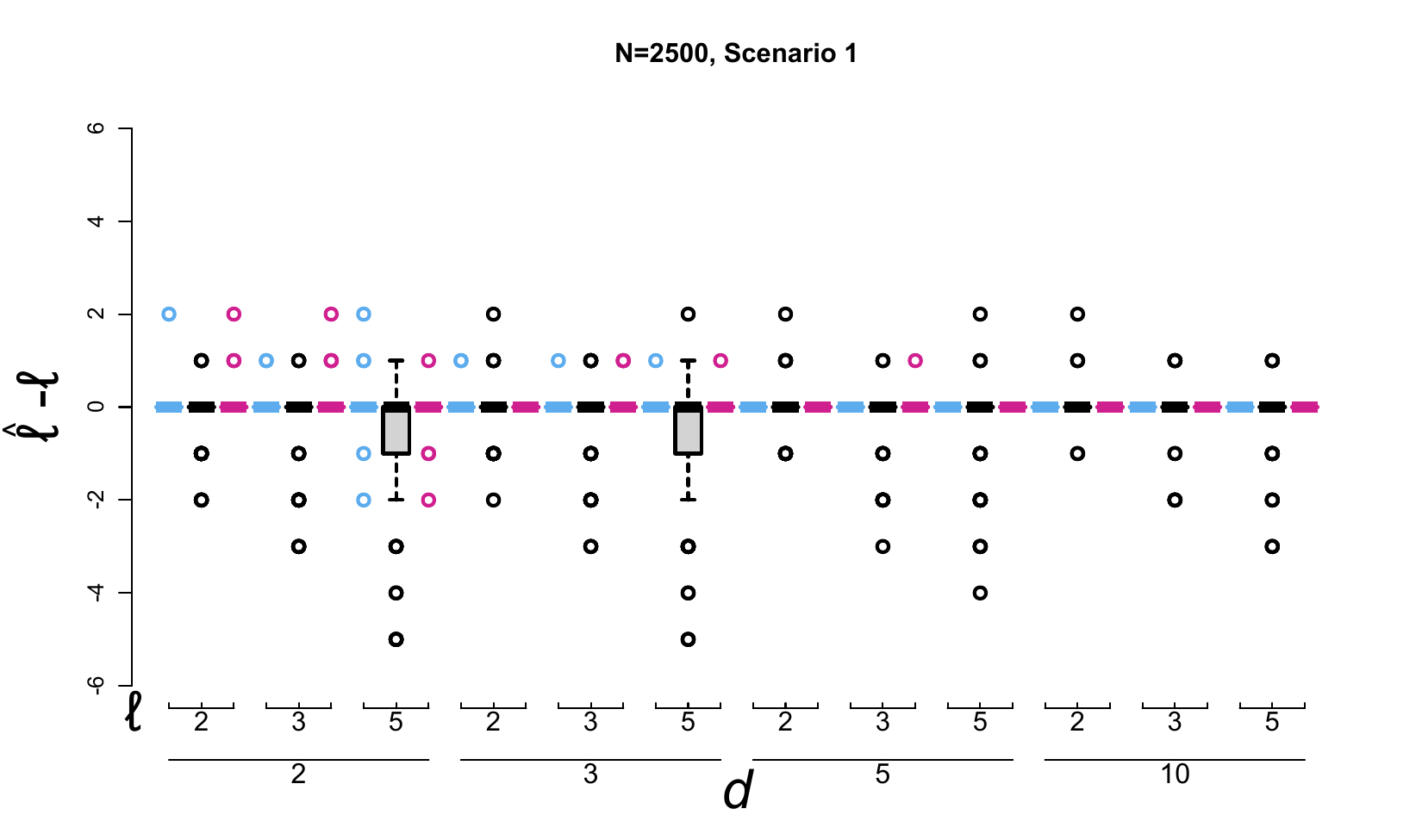}
\end{minipage}
\begin{minipage}[c]{.49\textwidth} 
\centering%
\includegraphics[width=\textwidth]{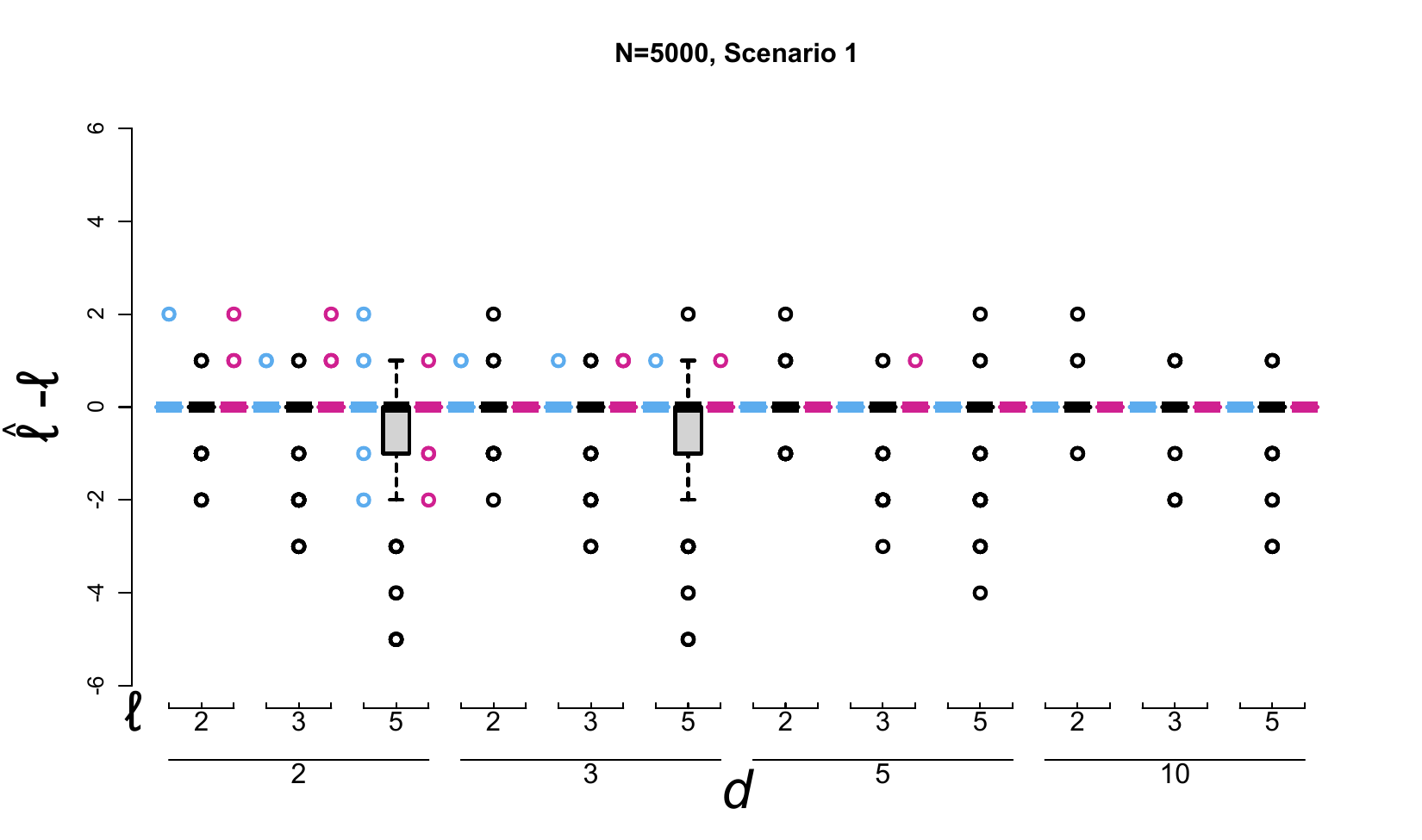}
\end{minipage}\hfill
\caption{Boxplots of $\hat{\ell}-\ell$ under the KW-PELT algorithm with $C_1=0.18$ and $C_2=3.74$ for (top row) halfspace depth, (second row) spatial depth, (third row) Mahalanobis depth, (last row) Modified Mahalanobis depth..}%
\end{figure}
\begin{figure}
\begin{minipage}[c]{.49\textwidth} 
\centering%
\includegraphics[width=\textwidth]{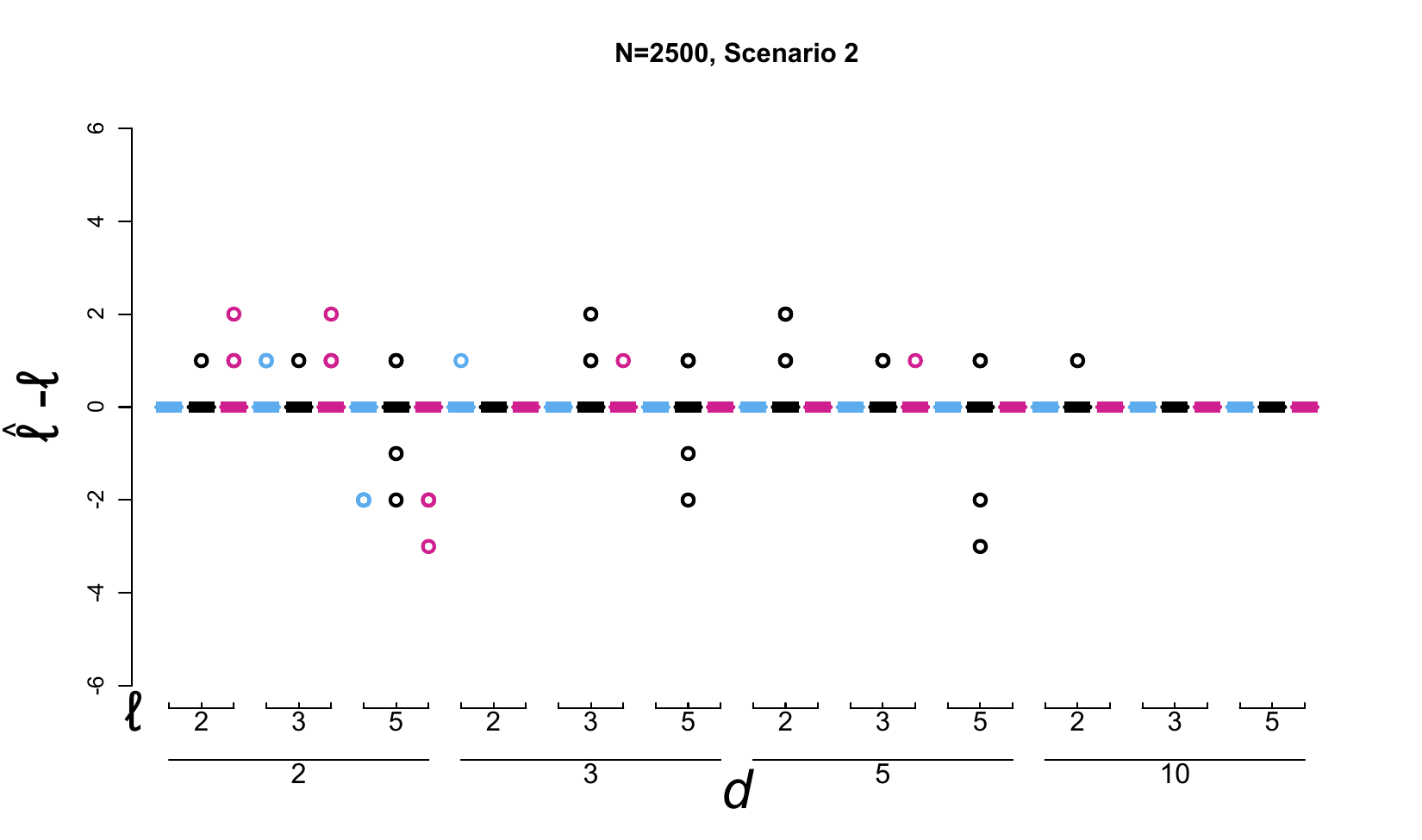}
\end{minipage}
\begin{minipage}[c]{.49\textwidth} 
\centering%
\includegraphics[width=\textwidth]{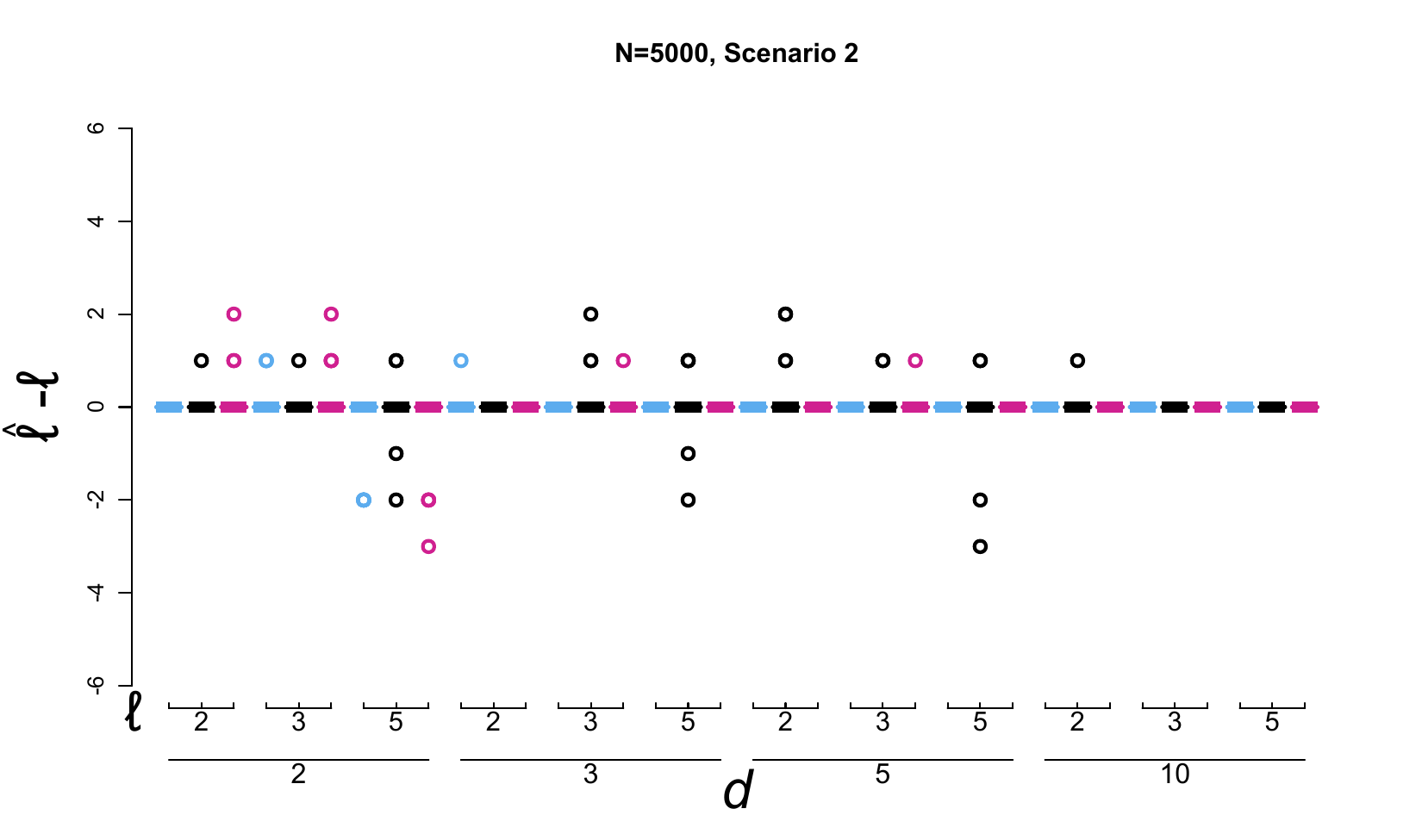}
\end{minipage}\hfill\newline
\begin{minipage}[c]{.49\textwidth} 
\centering%
\includegraphics[width=\textwidth]{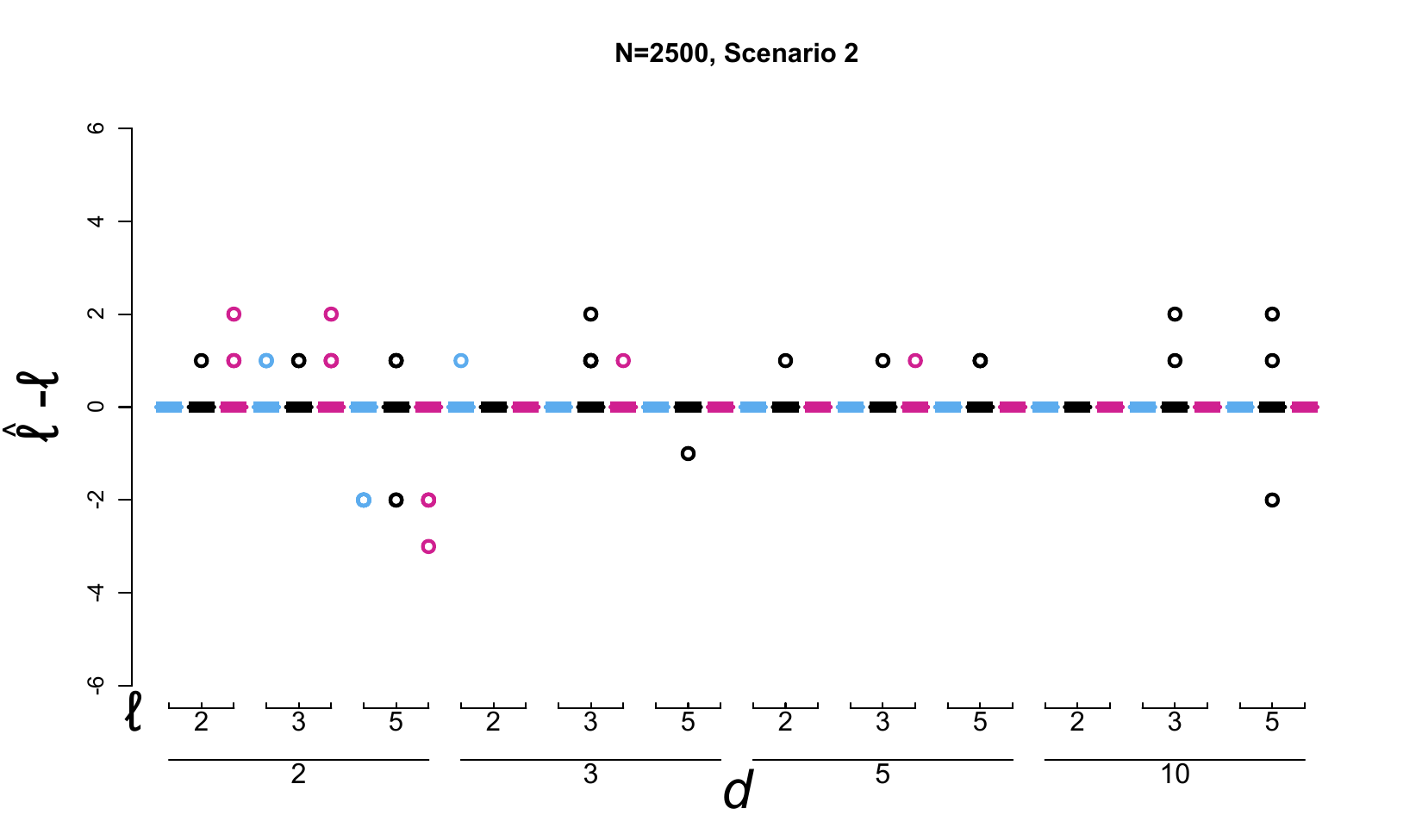}
\end{minipage}
\begin{minipage}[c]{.49\textwidth} 
\centering%
\includegraphics[width=\textwidth]{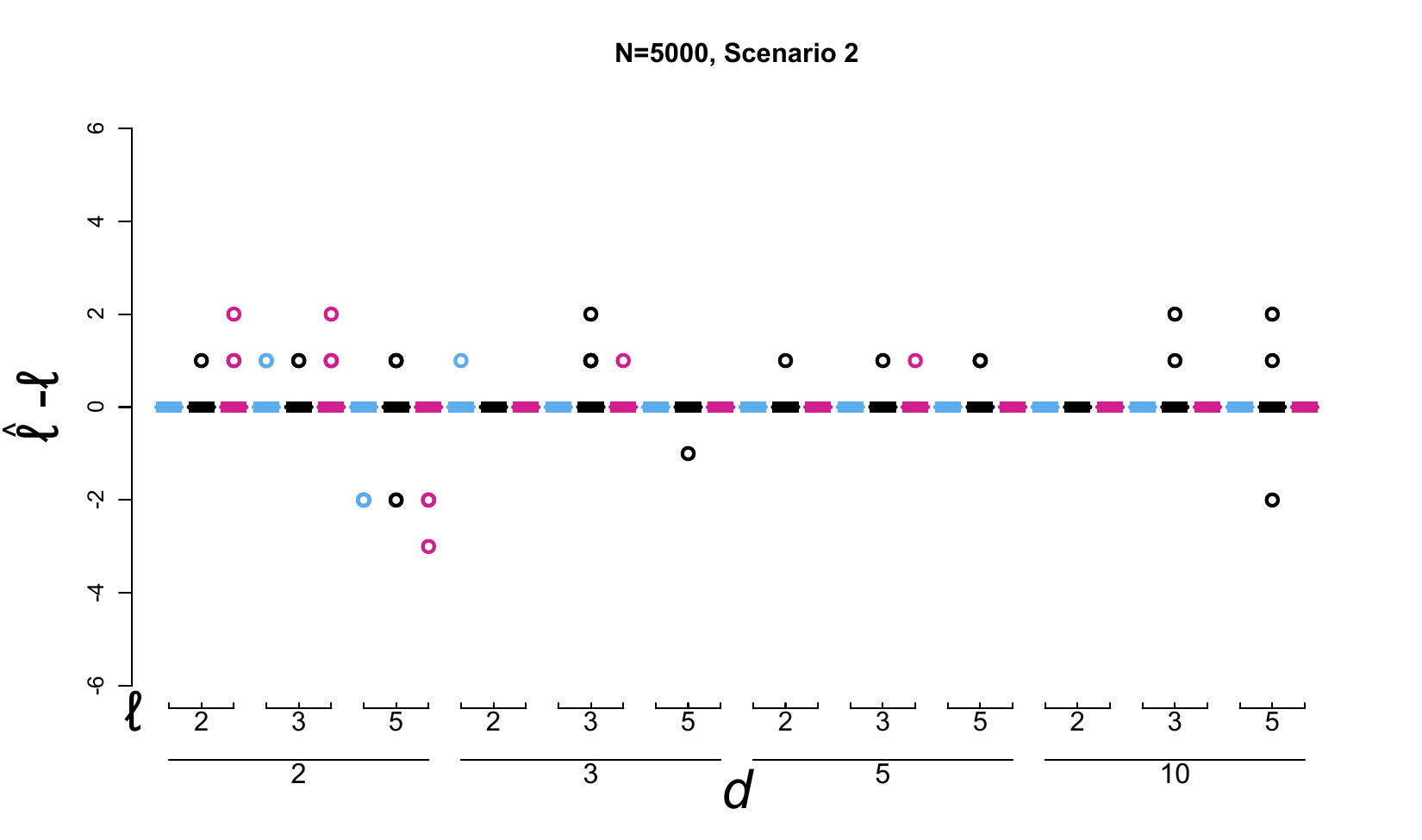}
\end{minipage}\hfill\newline
\begin{minipage}[c]{.49\textwidth} 
\centering%
\includegraphics[width=\textwidth]{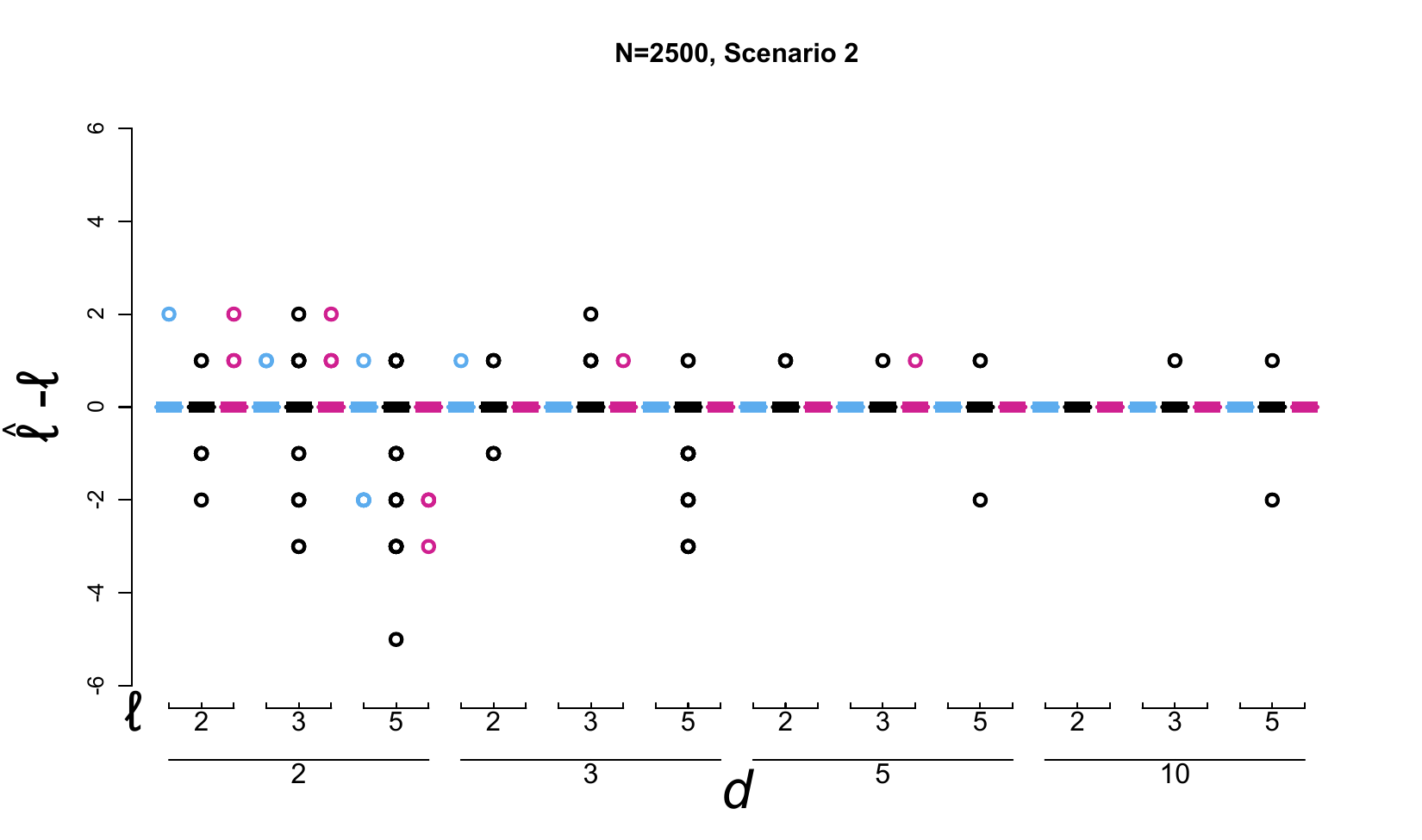}
\end{minipage}
\begin{minipage}[c]{.49\textwidth} 
\centering%
\includegraphics[width=\textwidth]{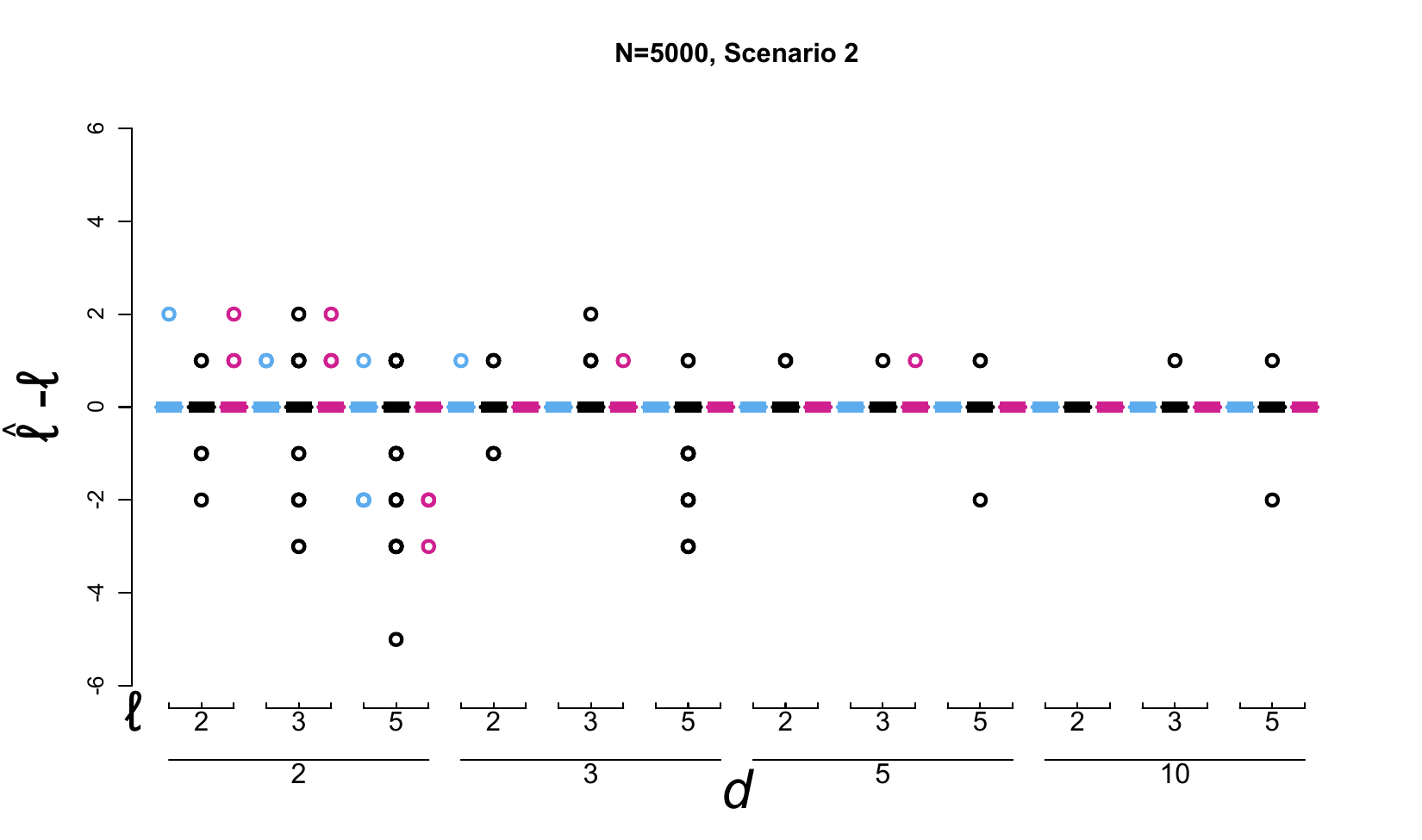}
\end{minipage}\hfill
\newline
\begin{minipage}[c]{.49\textwidth} 
\centering%
\includegraphics[width=\textwidth]{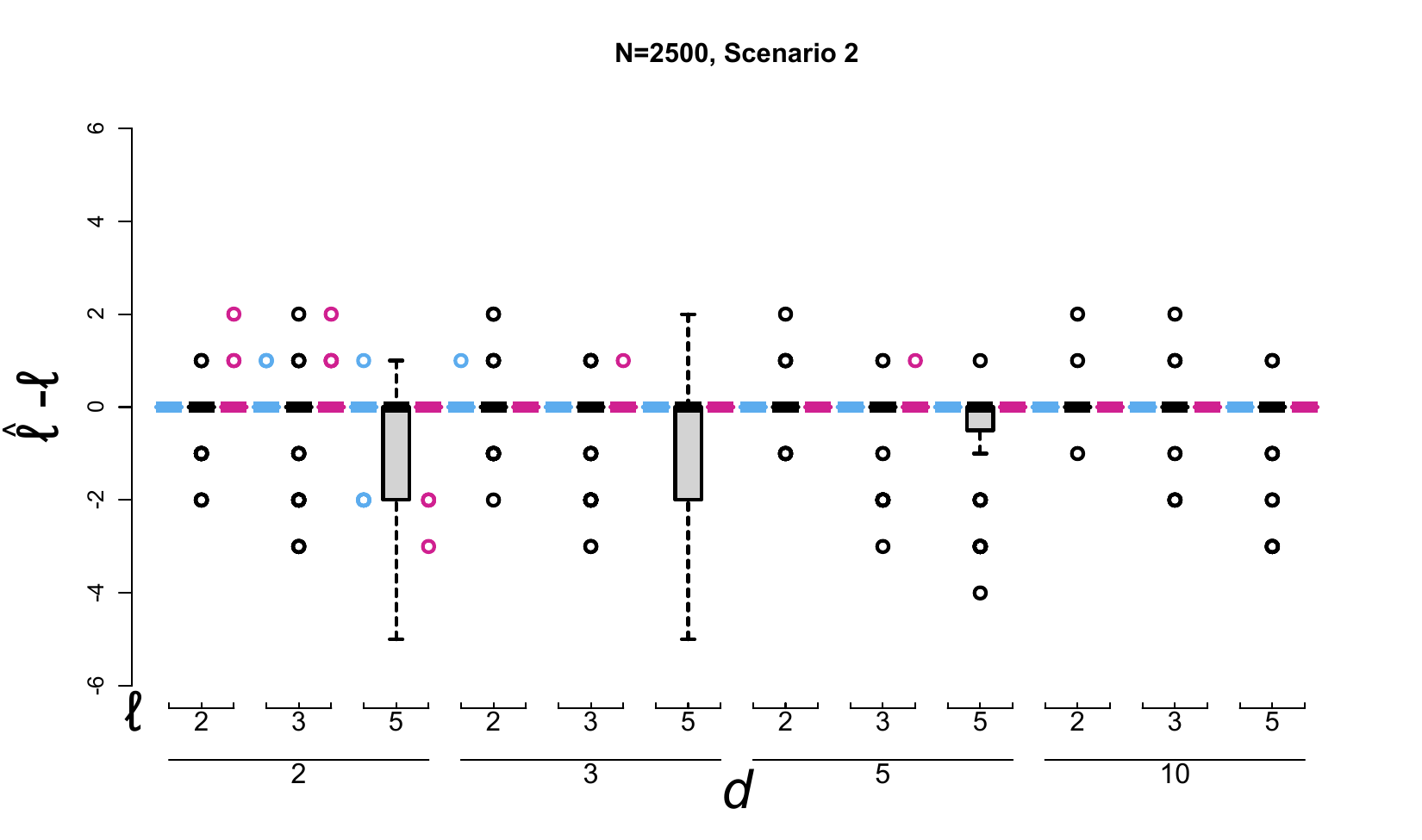}
\end{minipage}
\begin{minipage}[c]{.49\textwidth} 
\centering%
\includegraphics[width=\textwidth]{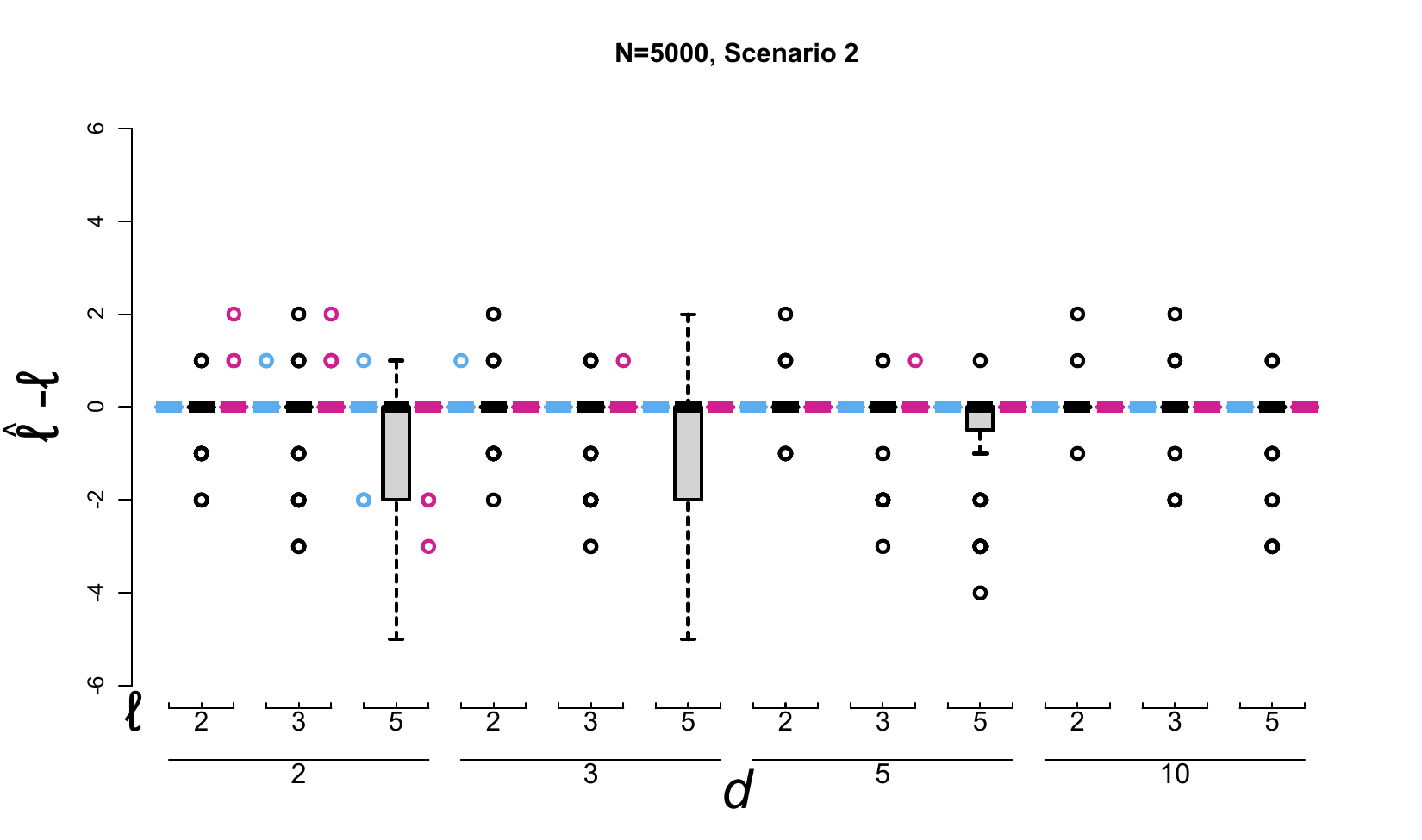}
\end{minipage}\hfill
\caption{Boxplots of $\hat{\ell}-\ell$ under the KW-PELT algorithm with $C_1=0.18$ and $C_2=3.74$ for (top row) halfspace depth, (second row) spatial depth, (third row) Mahalanobis depth, (last row) Modified Mahalanobis depth.}%
\end{figure}
\begin{figure}
\begin{minipage}[c]{.49\textwidth} 
\centering%
\includegraphics[width=\textwidth]{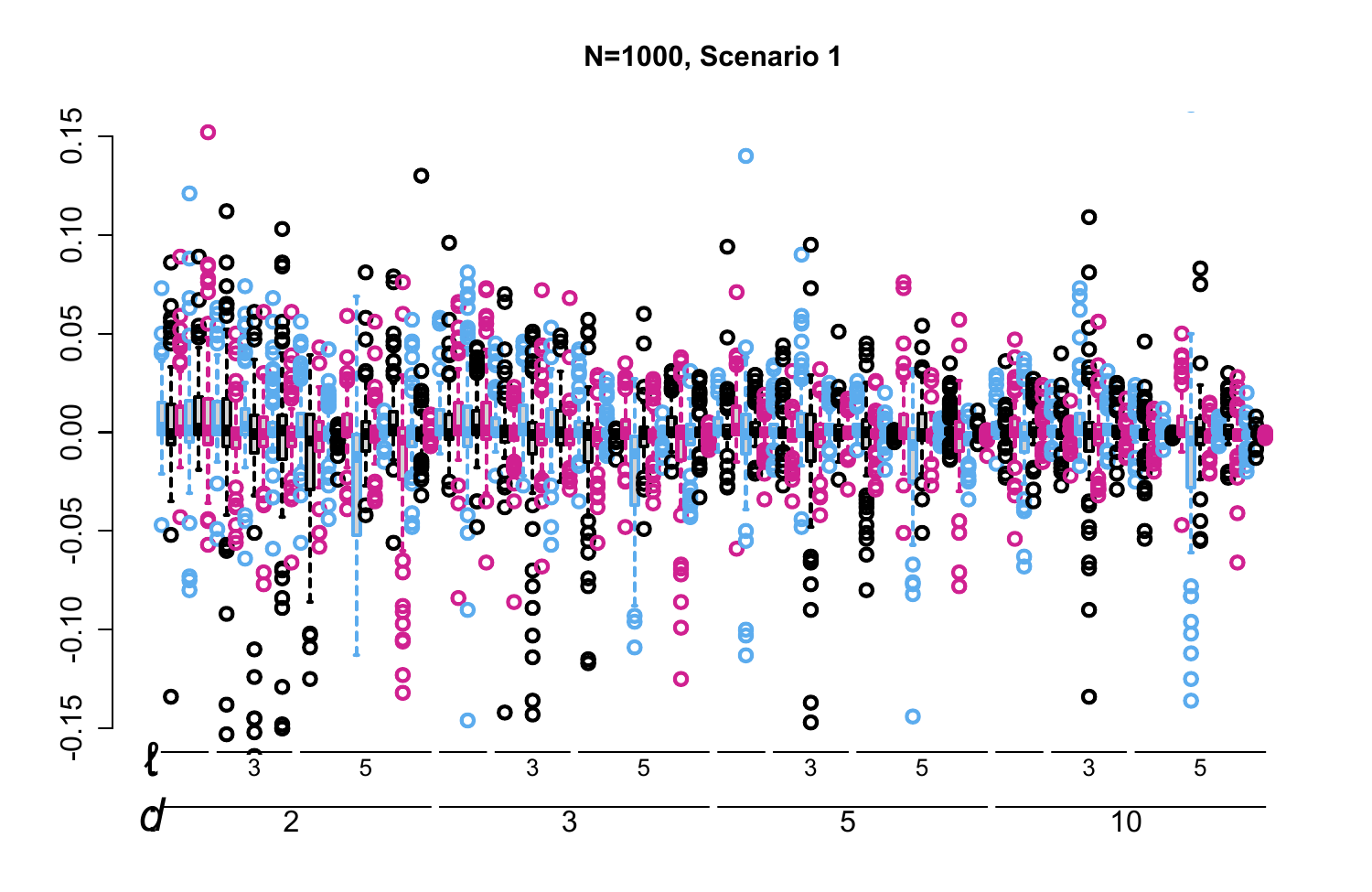}
\end{minipage}
\begin{minipage}[c]{.49\textwidth} 
\centering%
\includegraphics[width=\textwidth]{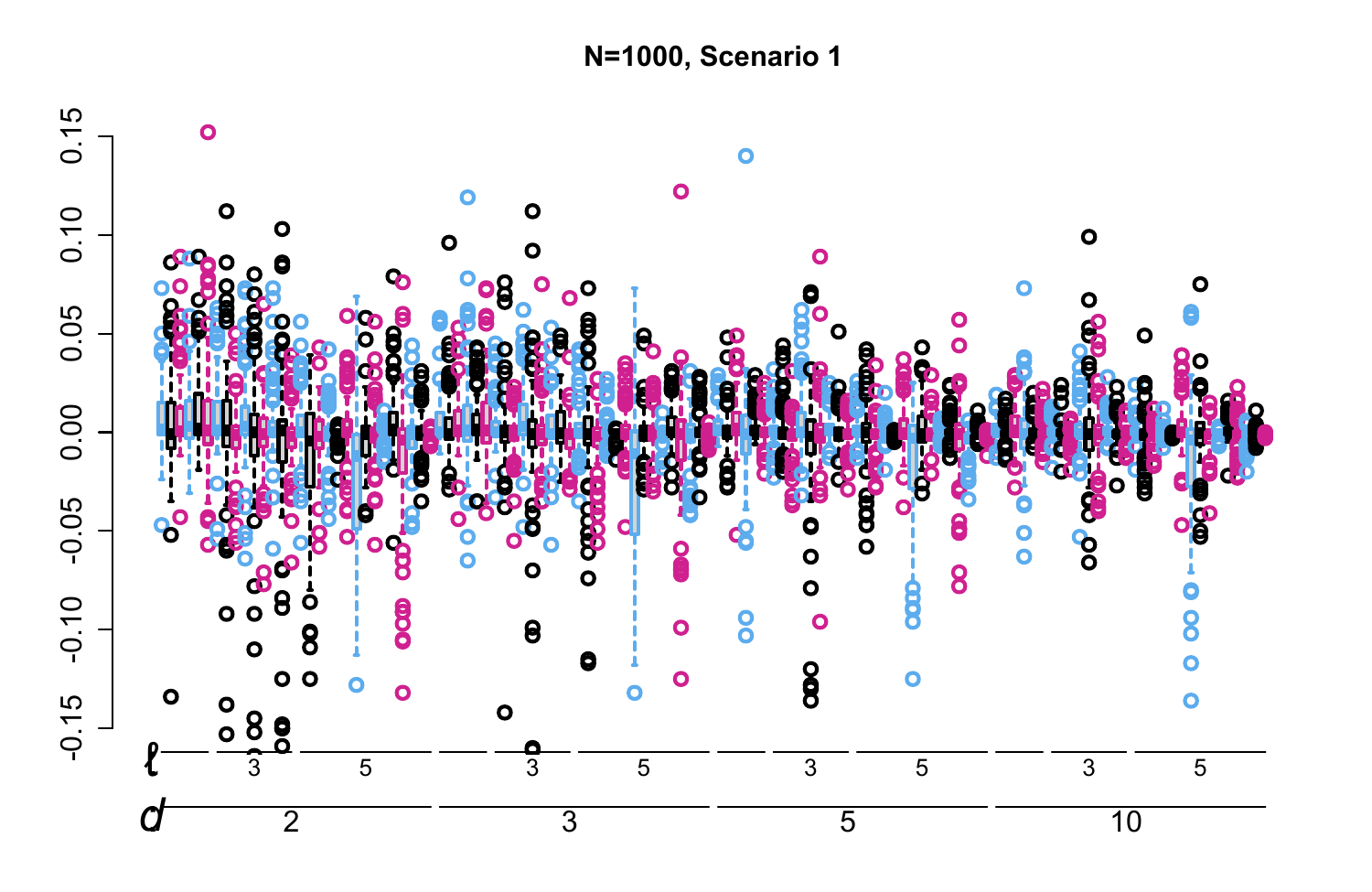}
\end{minipage}\hfill\newline
\begin{minipage}[c]{.49\textwidth} 
\centering%
\includegraphics[width=\textwidth]{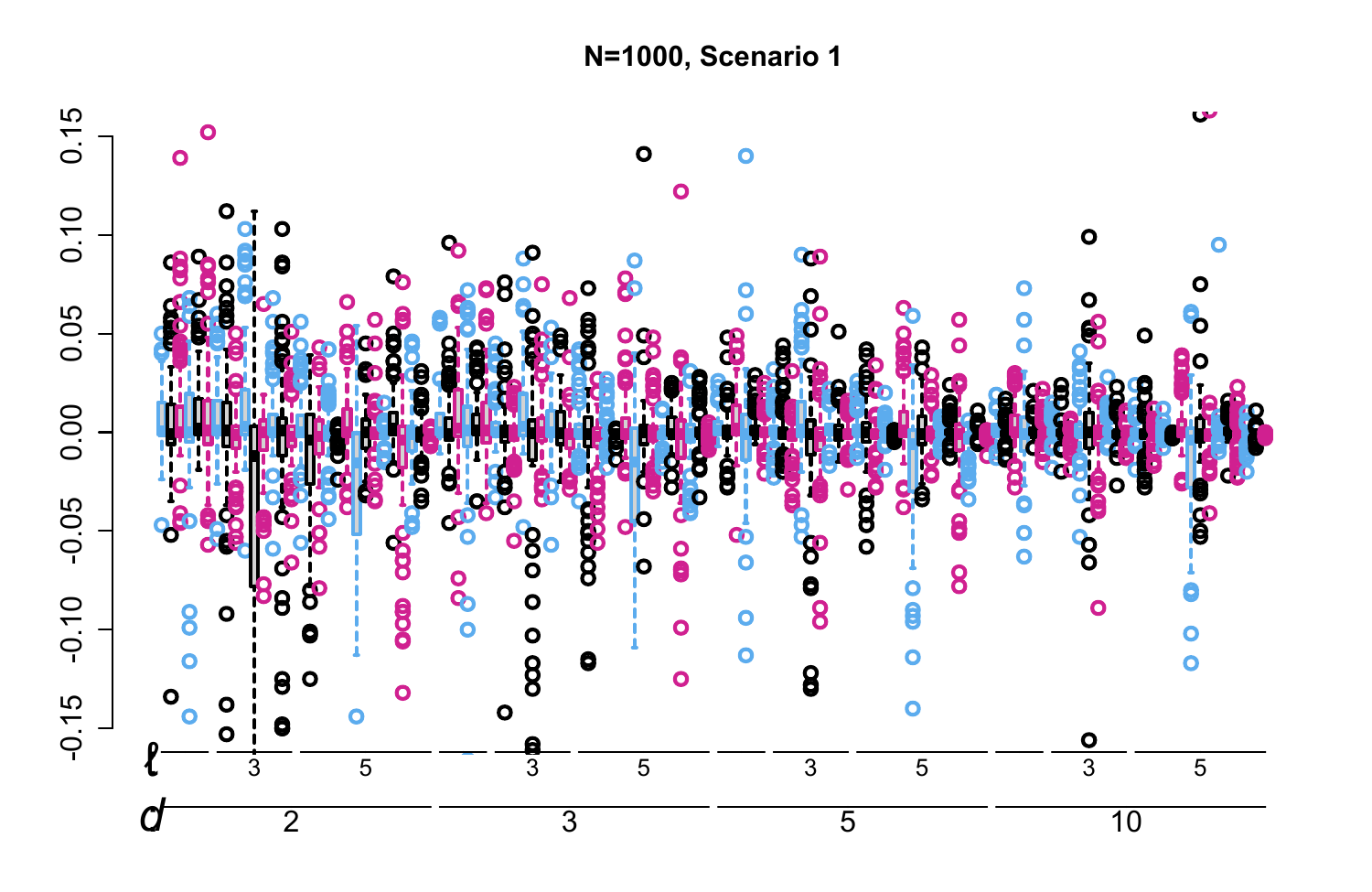}
\end{minipage}
\begin{minipage}[c]{.49\textwidth} 
\centering%
\includegraphics[width=\textwidth]{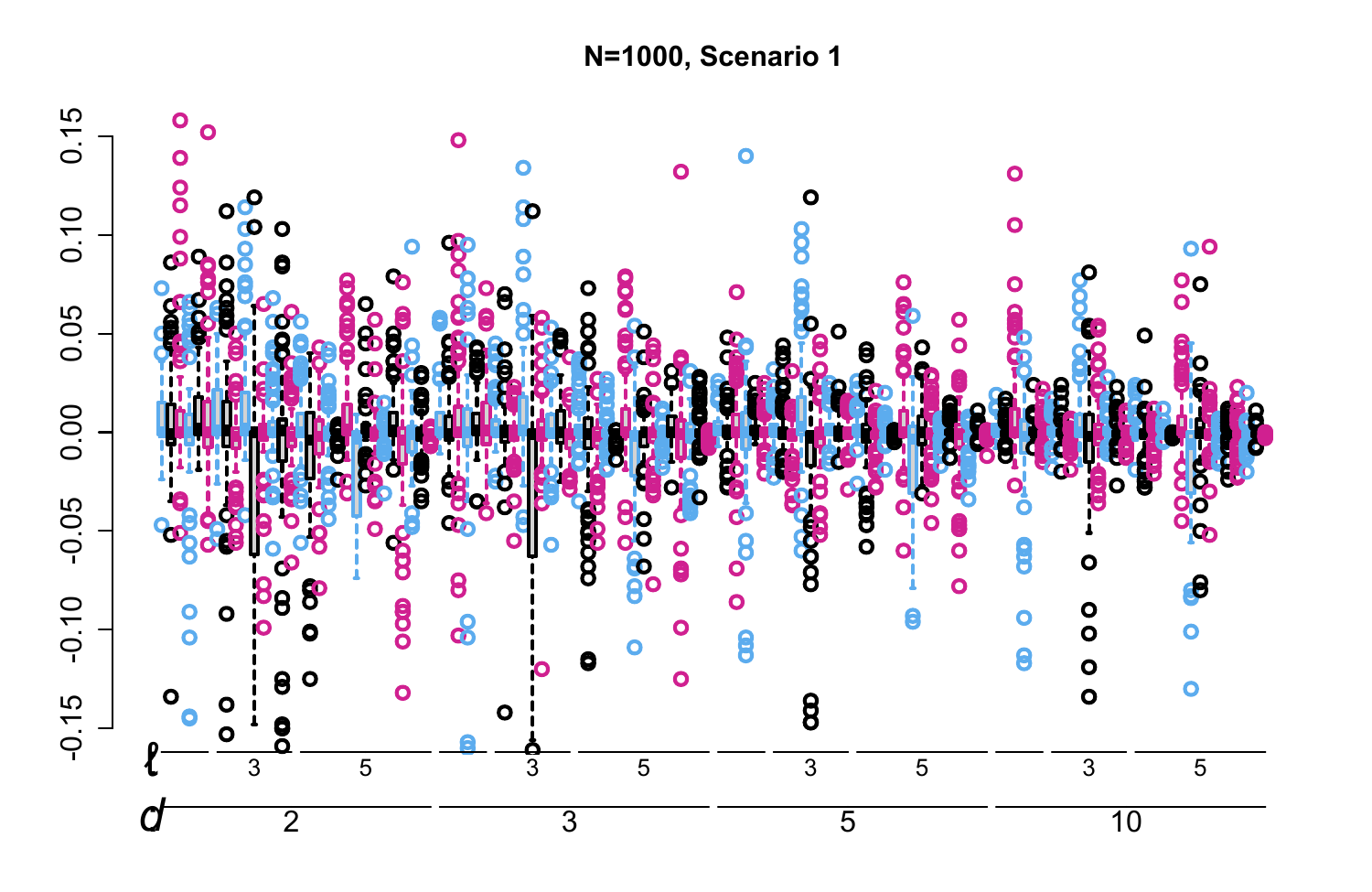}
\end{minipage}
\hfill\newline
\begin{minipage}[c]{.49\textwidth} 
\centering%
\includegraphics[width=\textwidth]{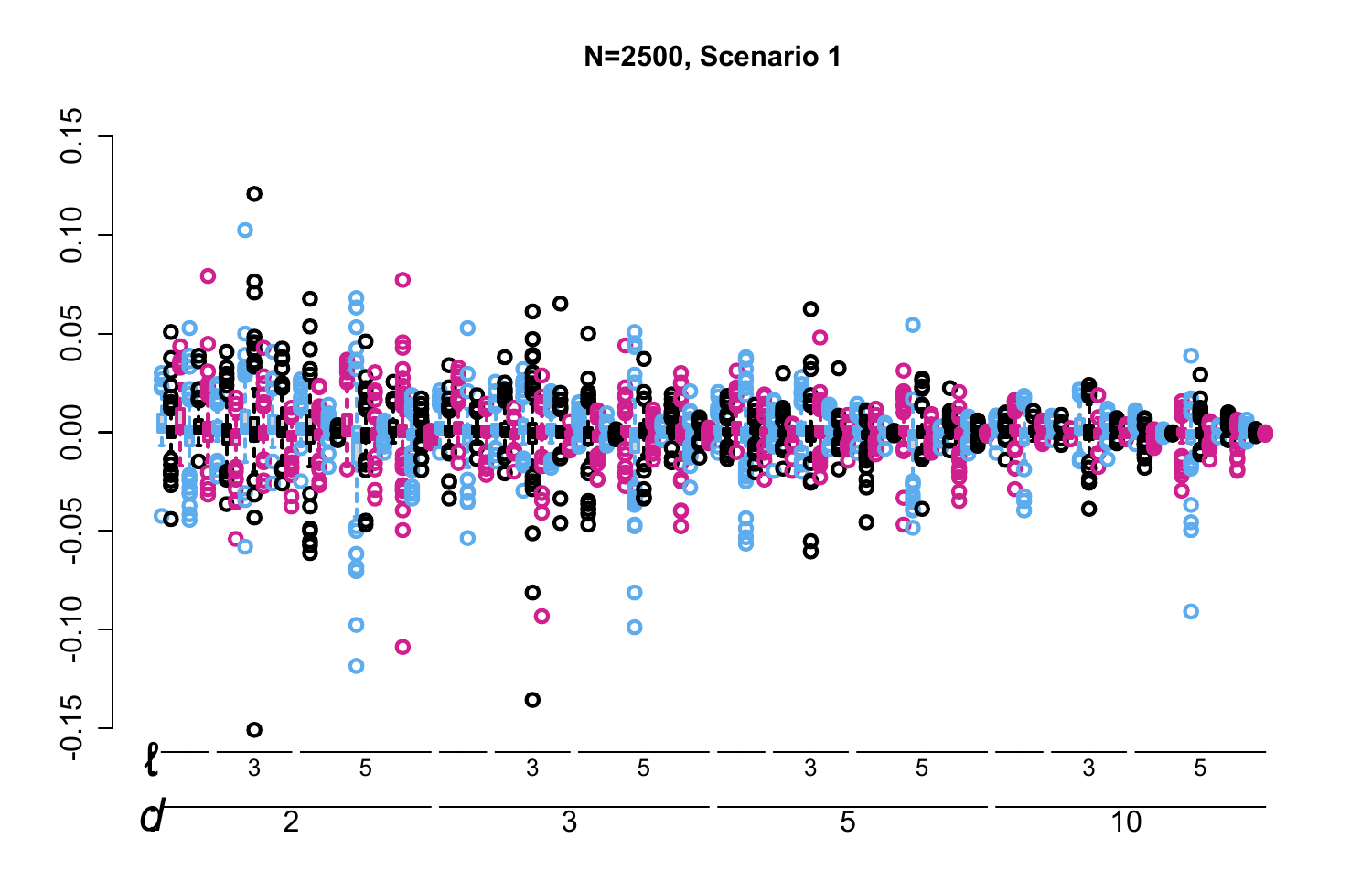}
\end{minipage}
\begin{minipage}[c]{.49\textwidth} 
\centering%
\includegraphics[width=\textwidth]{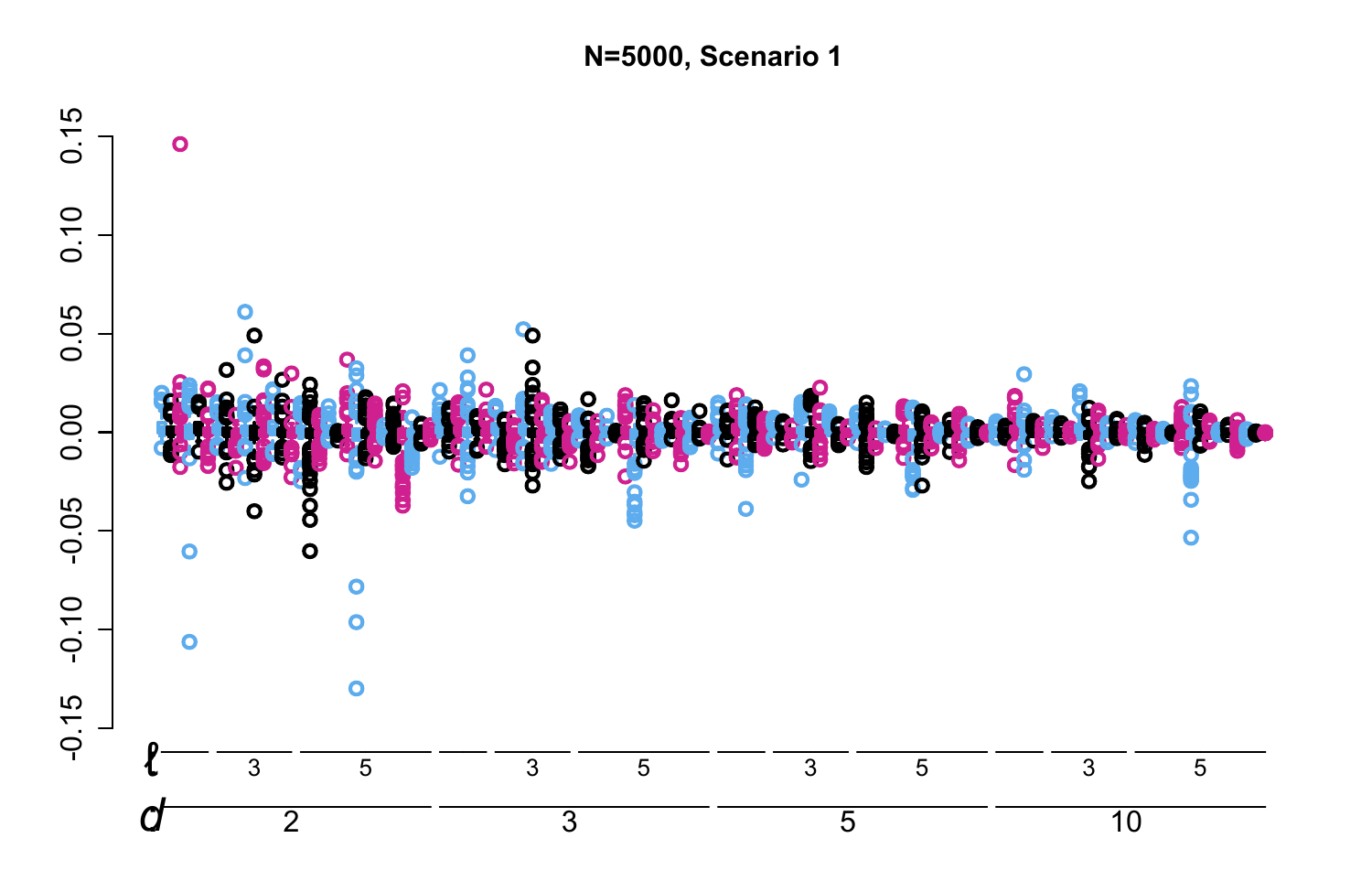}
\end{minipage}
\caption{Distribution of $\widehat{k}/N-\theta$ under the WBS algorithm for halfspace, spatial, Mahalanobis and modified Mahalanobis depth respectively. Followed by Mahalanobis depth for larger values of $N$.}%
\end{figure}
\begin{figure}
\begin{minipage}[c]{.49\textwidth} 
\centering%
\includegraphics[width=\textwidth]{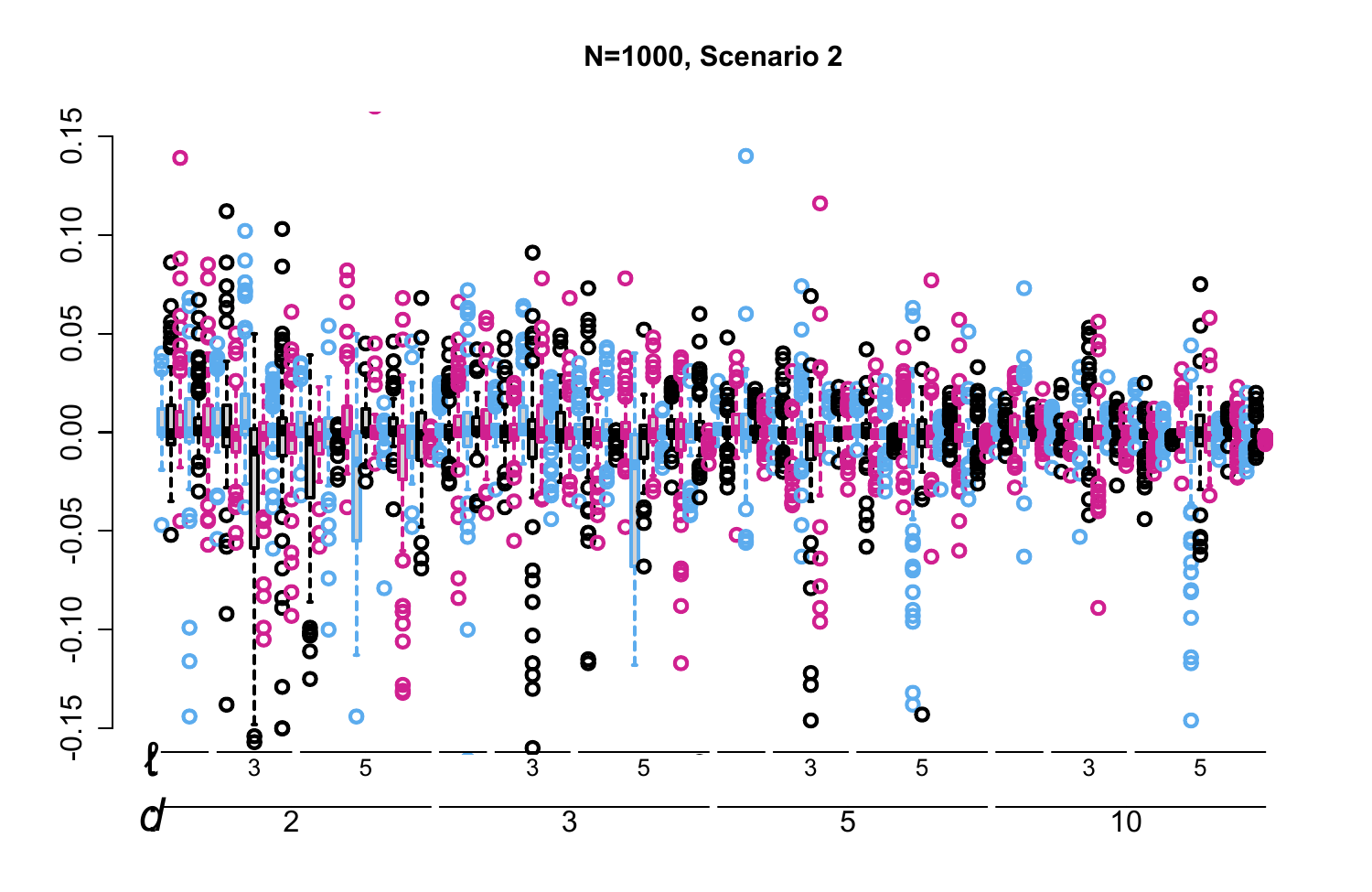}
\end{minipage}
\begin{minipage}[c]{.49\textwidth} 
\centering%
\includegraphics[width=\textwidth]{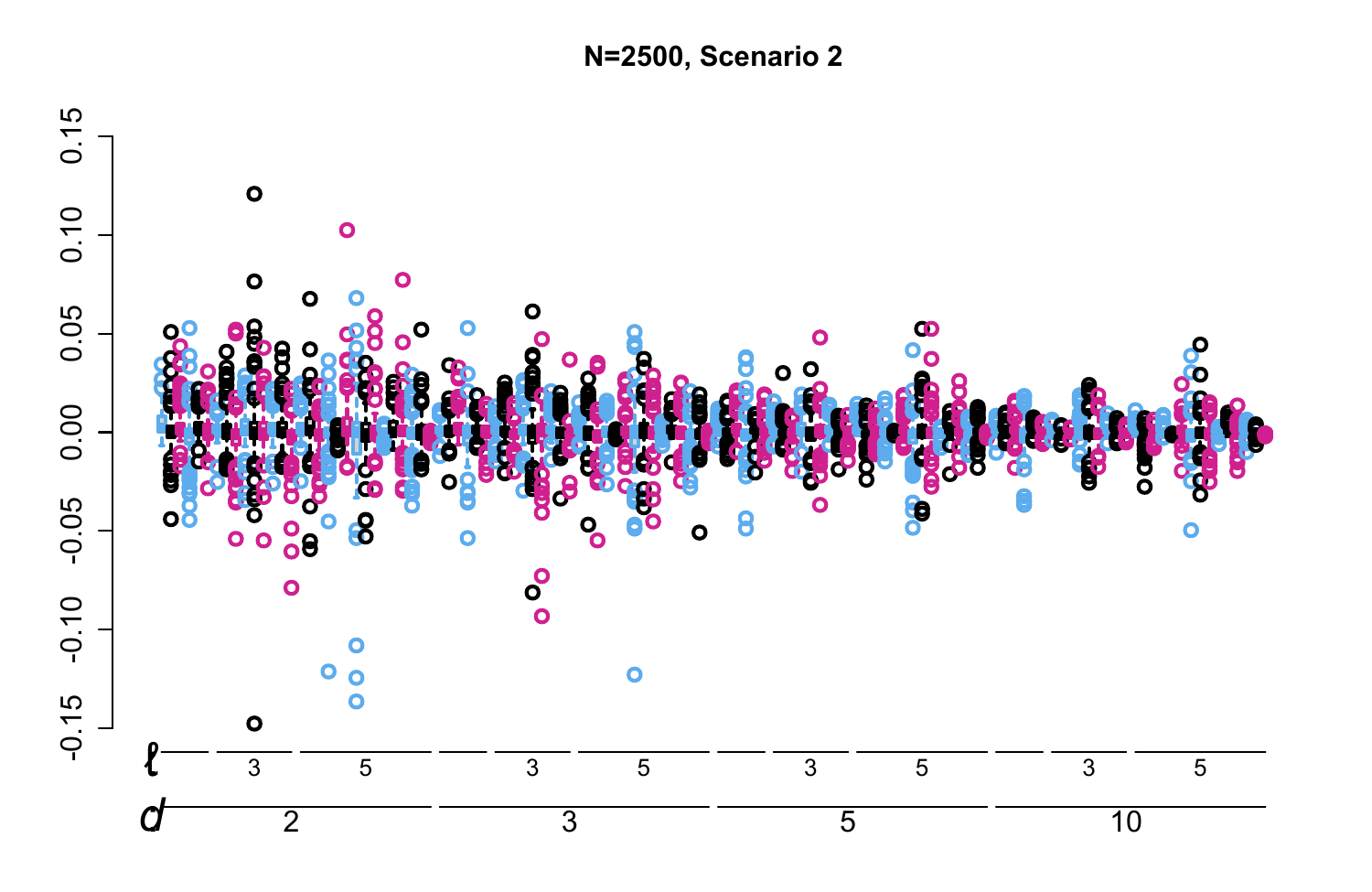}
\end{minipage}
\hfill\newline
\begin{minipage}[c]{.49\textwidth} 
\centering%
\includegraphics[width=\textwidth]{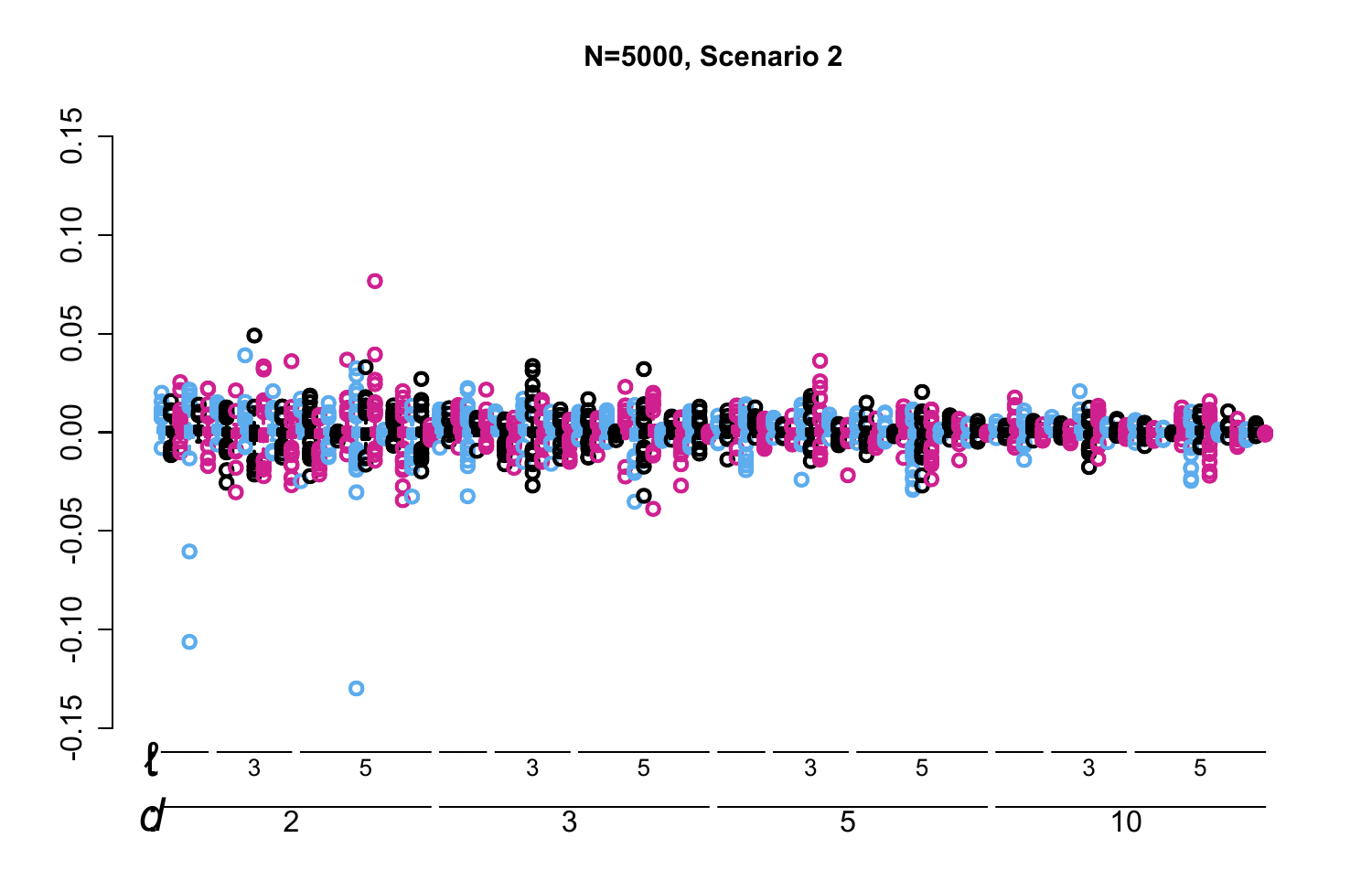}
\end{minipage}
\caption{Distribution of $\widehat{k}/N-\theta$ under the WBS algorithm for Mahalanobis depth.}%
\end{figure}

\begin{figure}
\begin{minipage}[c]{.49\textwidth} 
\centering%
\includegraphics[width=\textwidth]{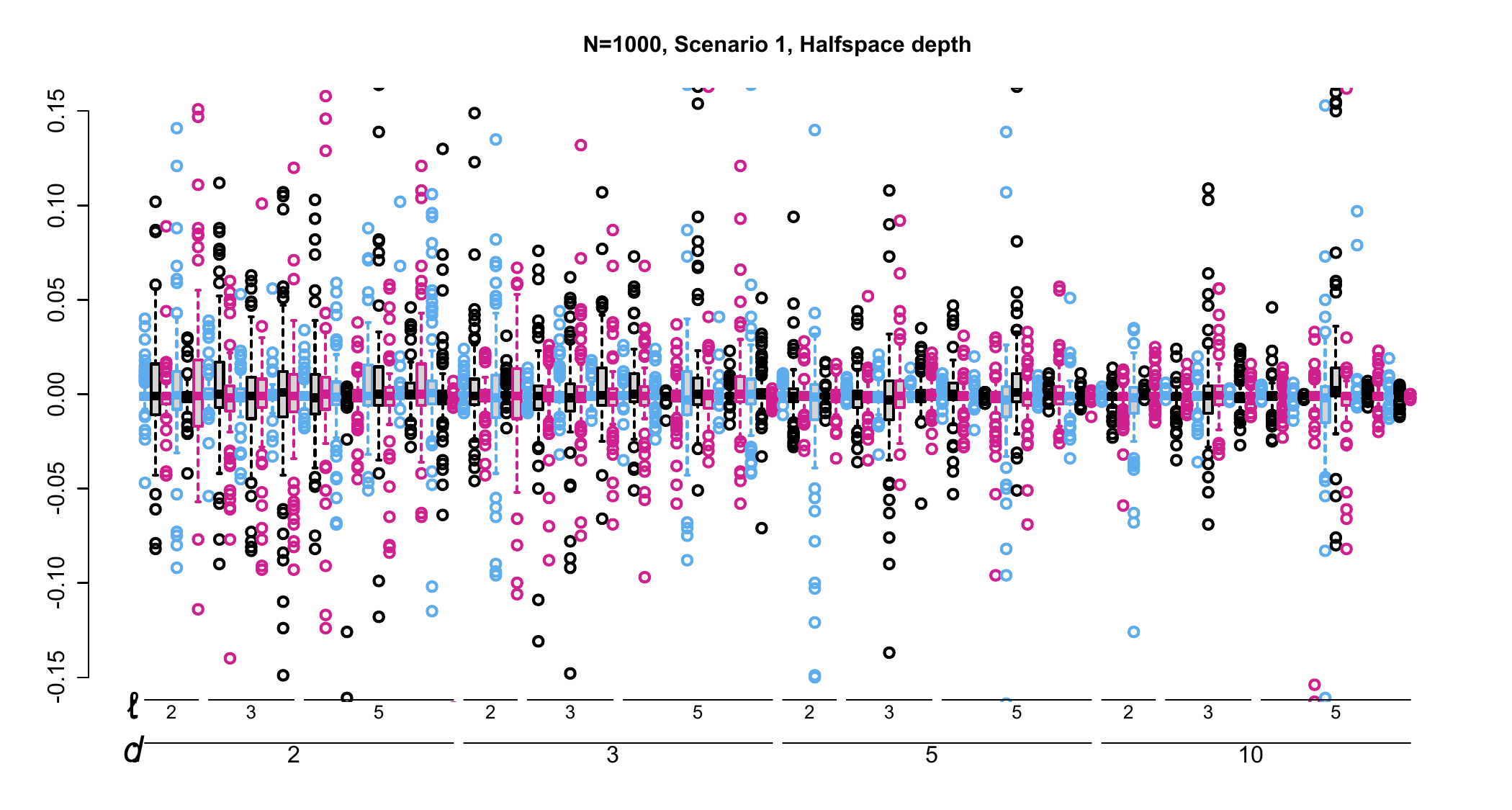}
\end{minipage}
\begin{minipage}[c]{.49\textwidth} 
\centering%
\includegraphics[width=\textwidth]{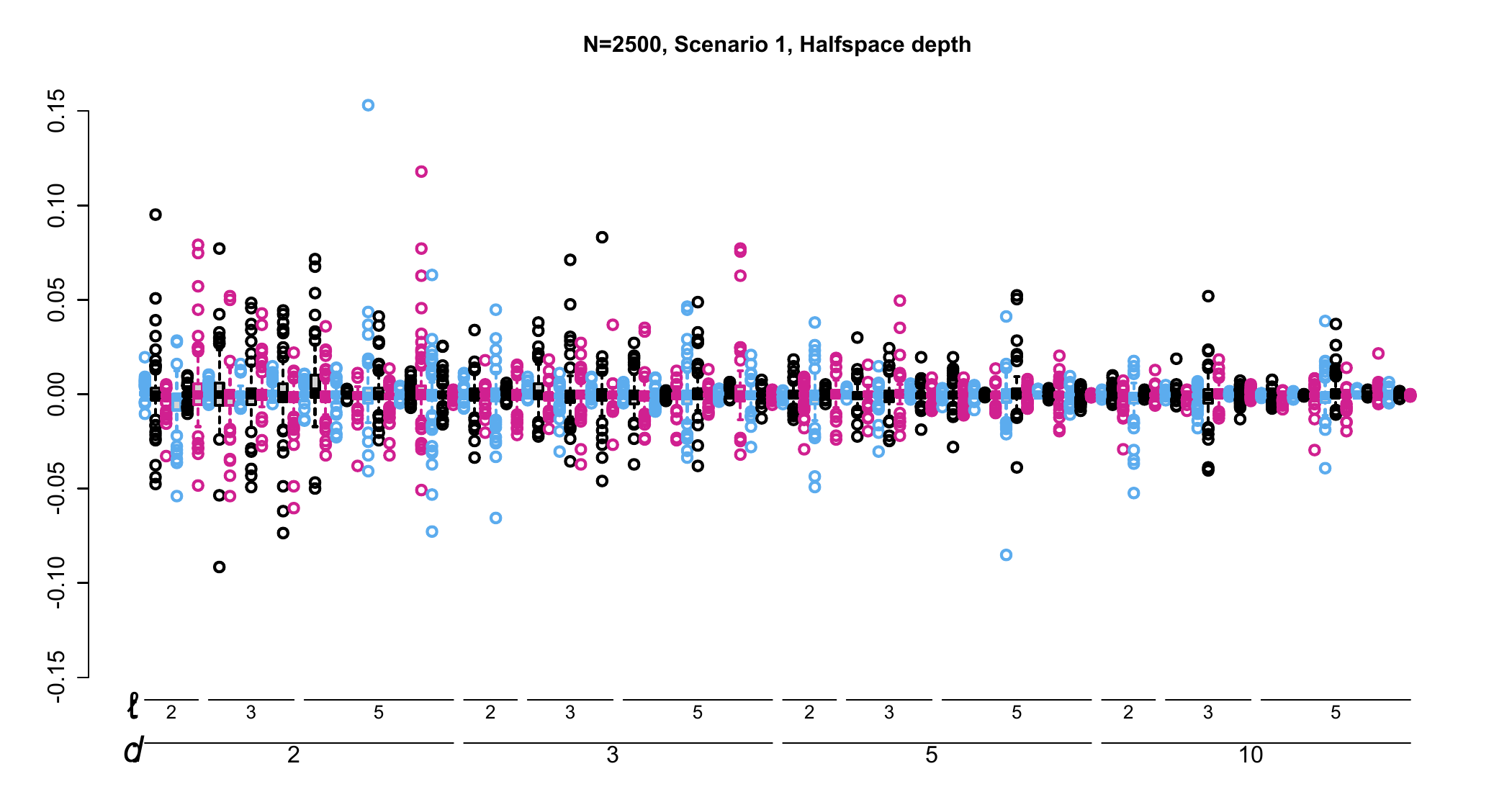}
\end{minipage}\hfill\newline
\begin{minipage}[c]{.49\textwidth} 
\centering%
\includegraphics[width=\textwidth]{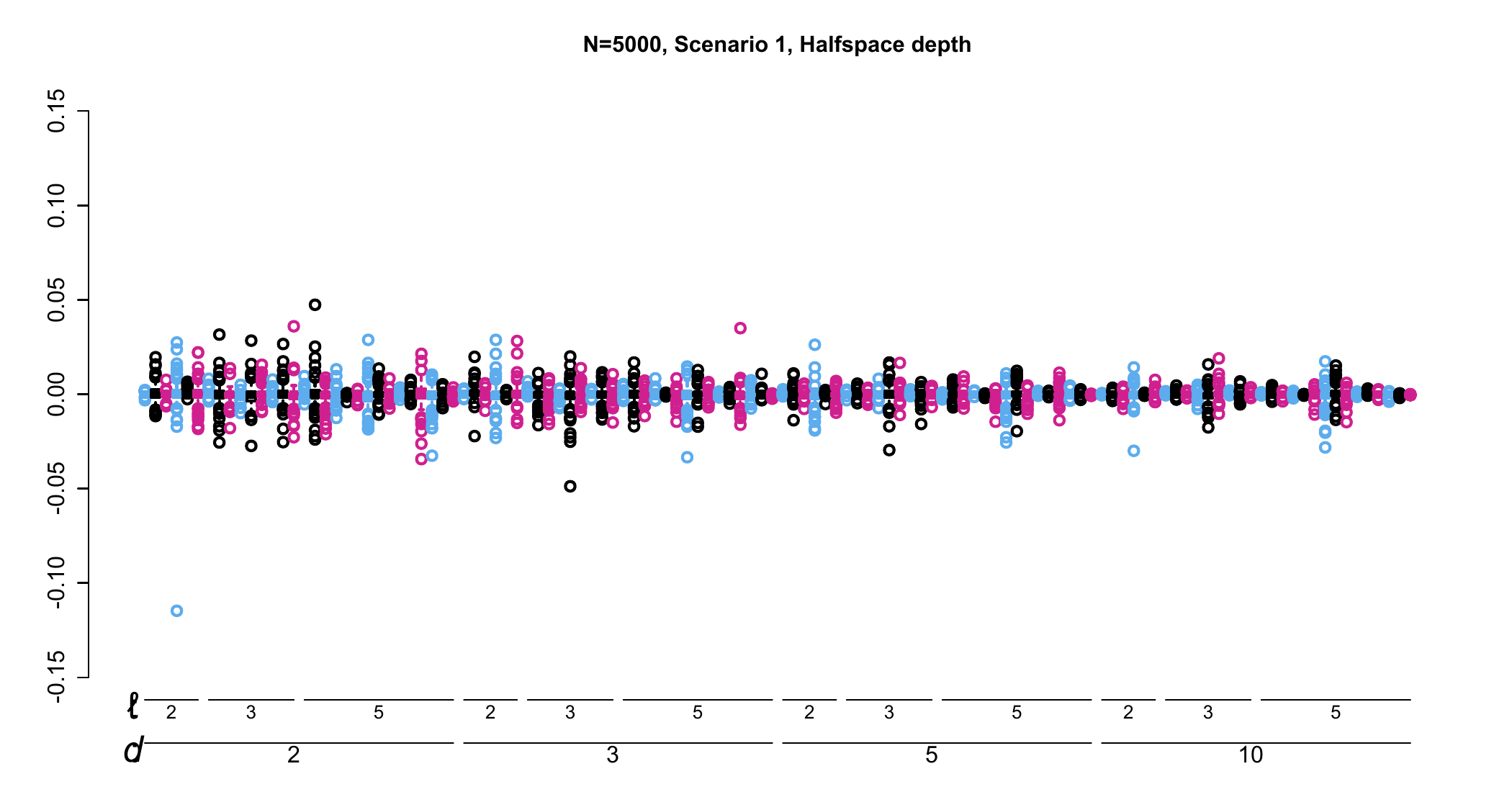}
\end{minipage}
\begin{minipage}[c]{.49\textwidth} 
\centering%
\includegraphics[width=\textwidth]{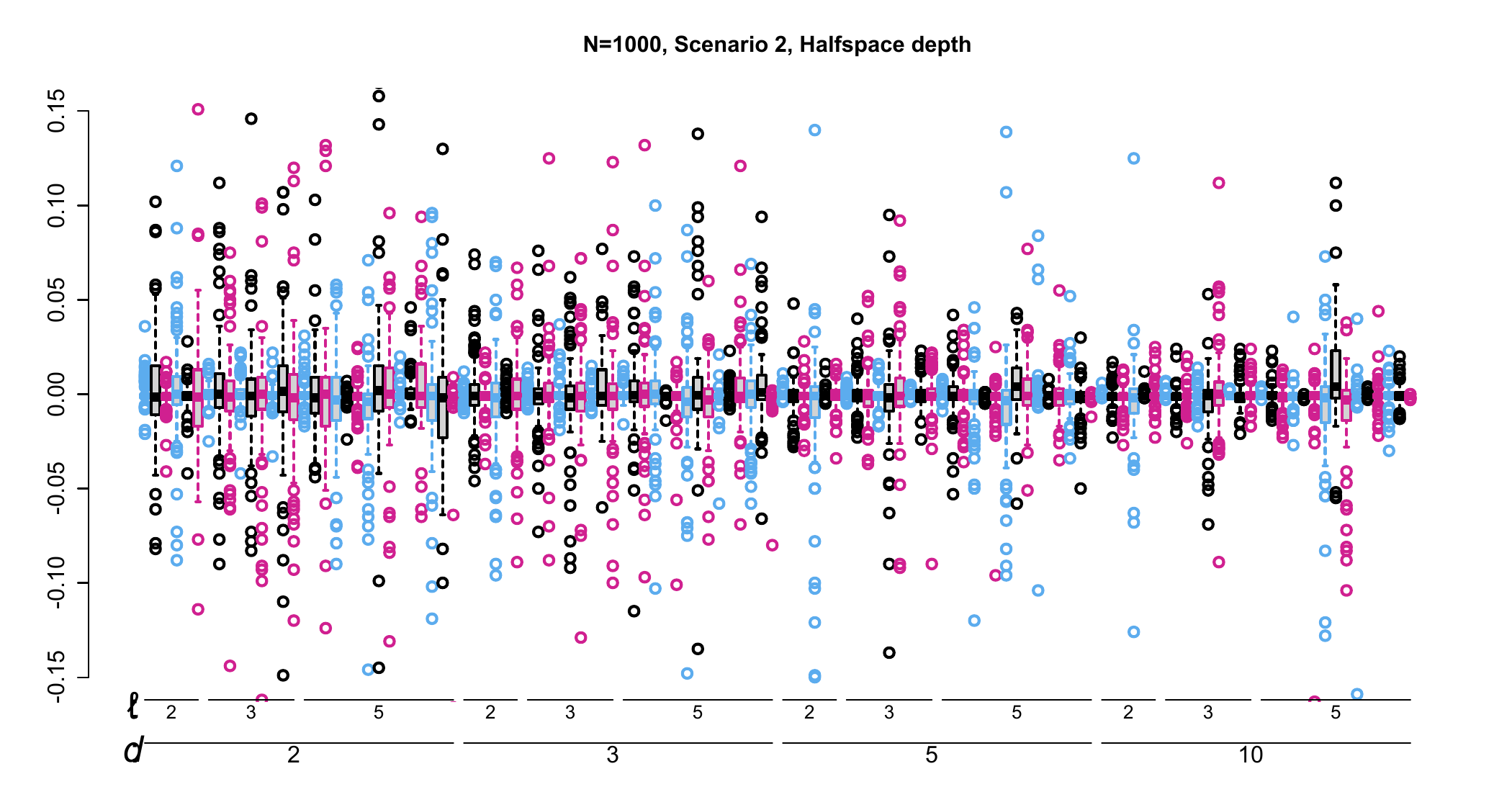}
\end{minipage}\hfill\newline
\begin{minipage}[c]{.49\textwidth} 
\centering%
\includegraphics[width=\textwidth]{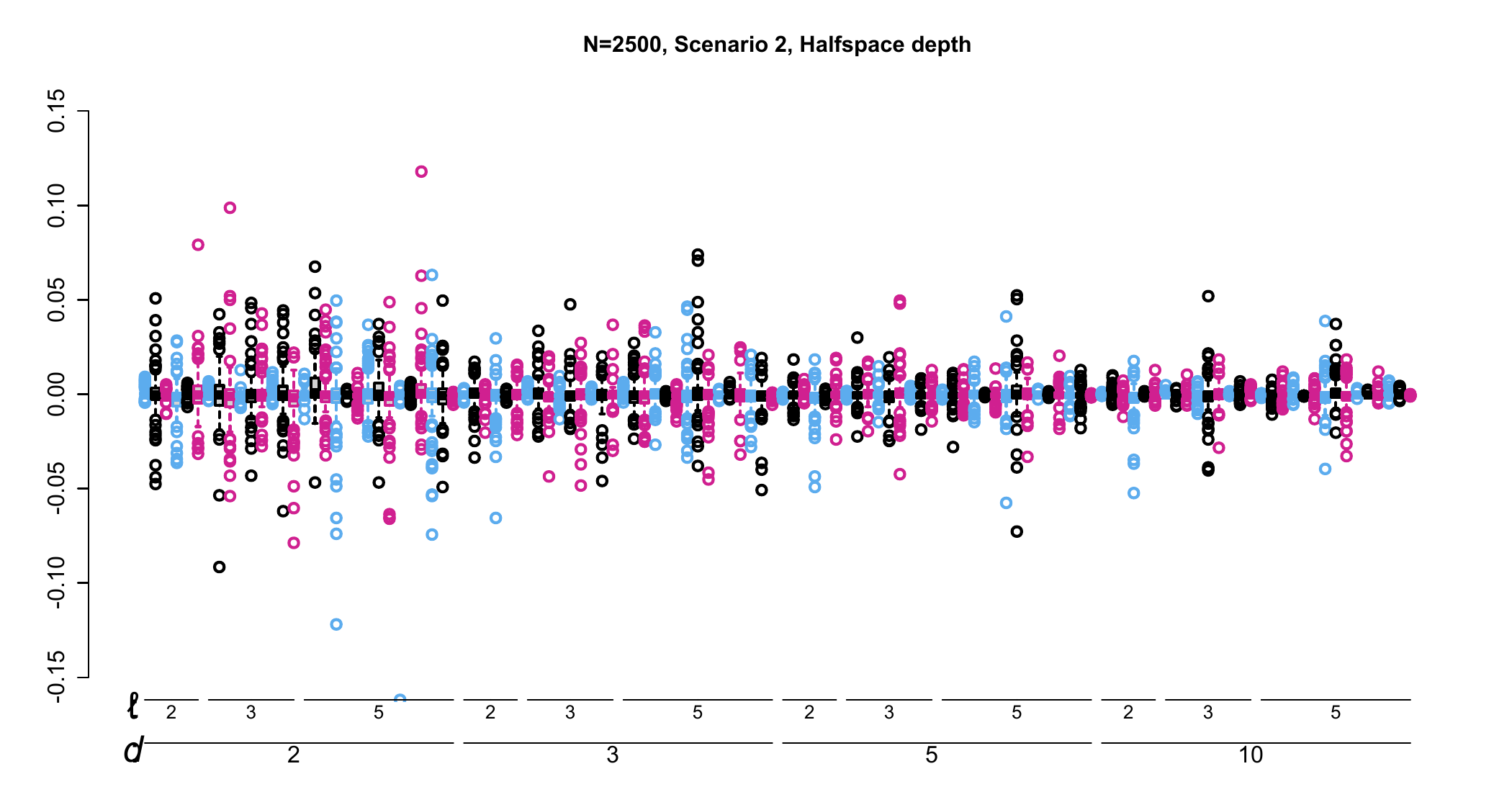}
\end{minipage}
\begin{minipage}[c]{.49\textwidth} 
\centering%
\includegraphics[width=\textwidth]{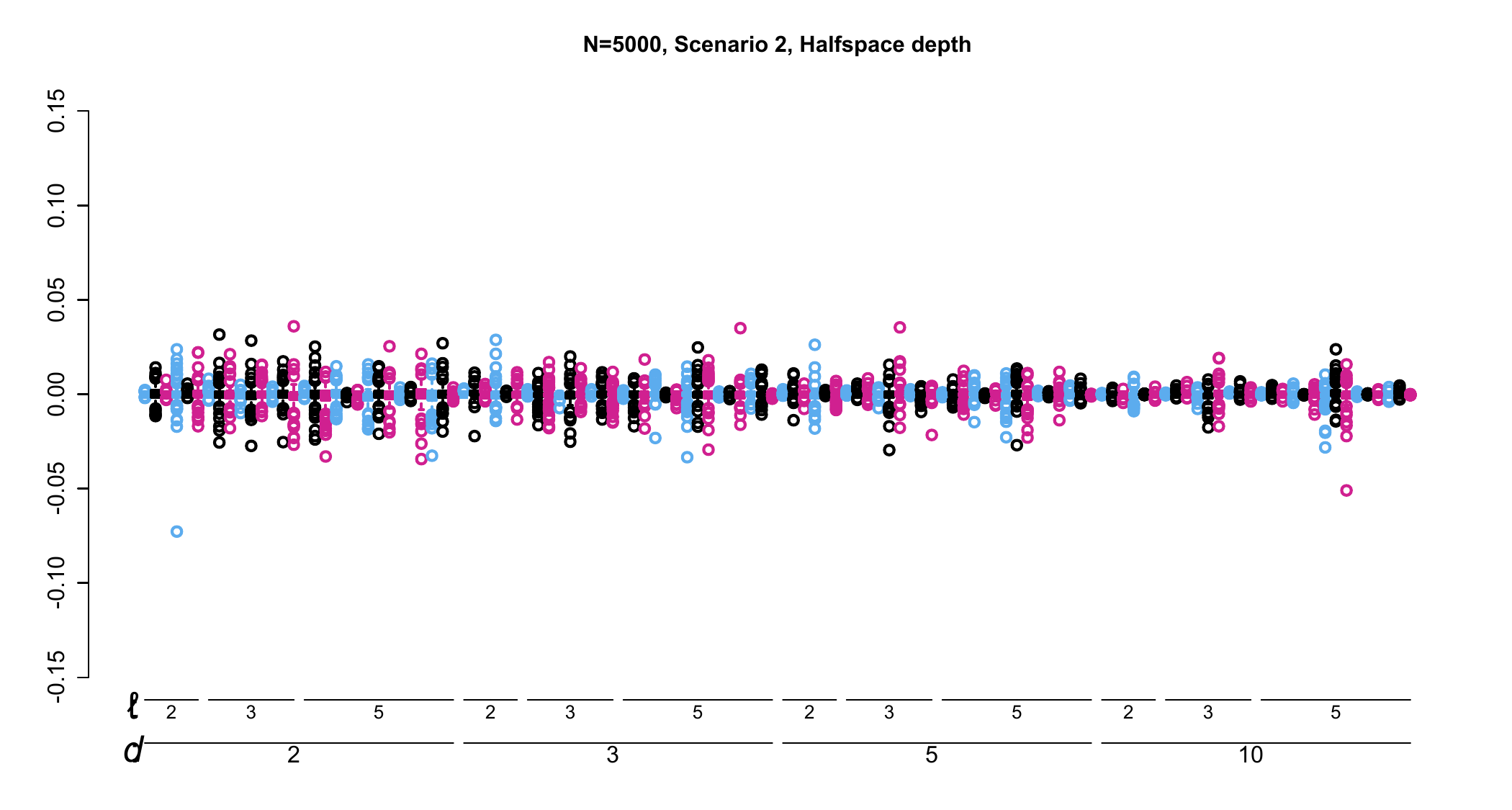}
\end{minipage}
\caption{Distribution of $\widehat{k}/N-\theta$ under the KW-PELT algorithm for halfspace depth.}%
\end{figure}
\begin{figure}
\begin{minipage}[c]{.49\textwidth} 
\centering%
\includegraphics[width=\textwidth]{Plots/K_BP_Rank/kbp_HS_KW_Big_N_1000_OC.pdf}
\end{minipage}
\begin{minipage}[c]{.49\textwidth} 
\centering%
\includegraphics[width=\textwidth]{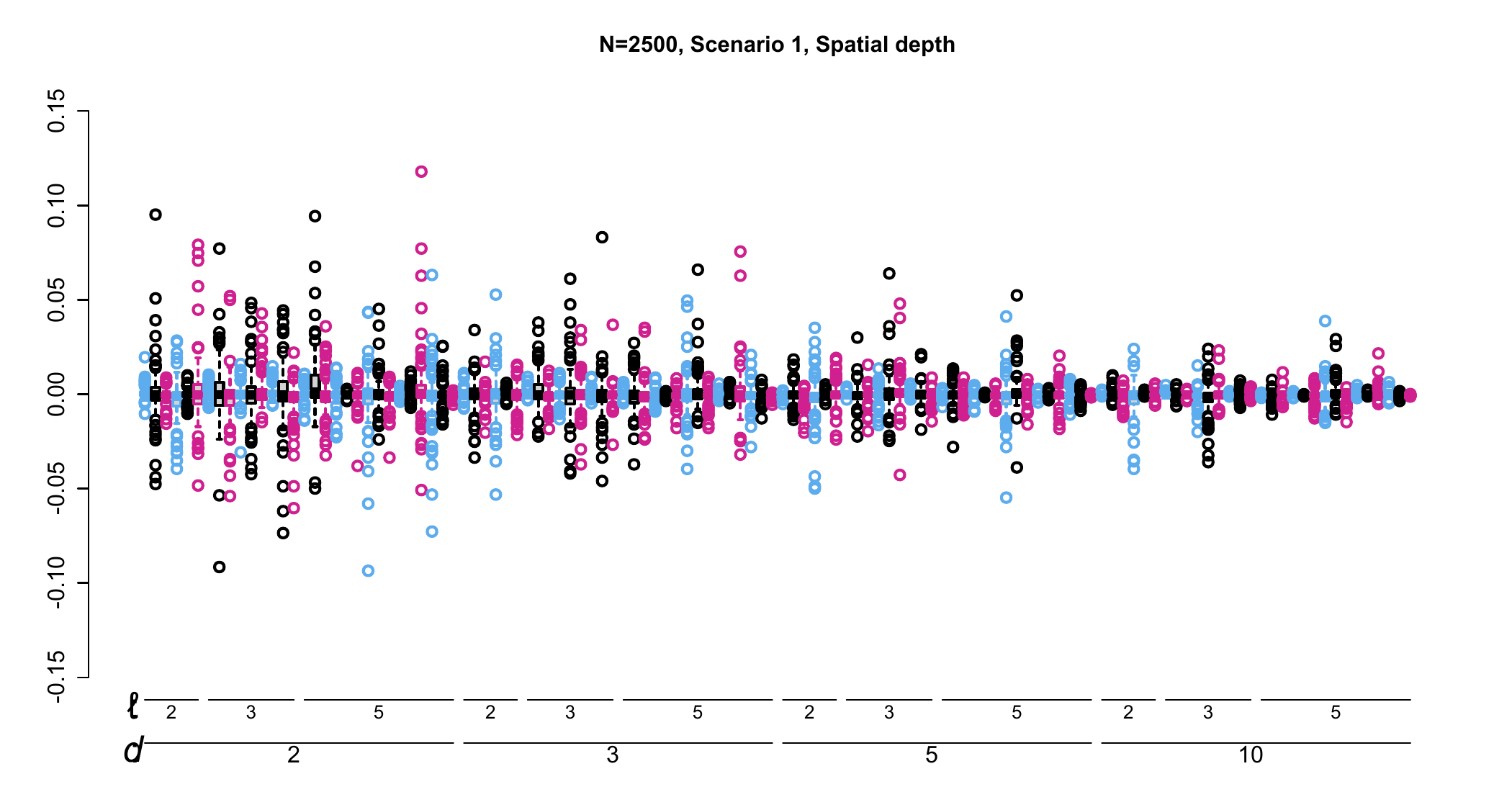}
\end{minipage}\hfill\newline
\begin{minipage}[c]{.49\textwidth} 
\centering%
\includegraphics[width=\textwidth]{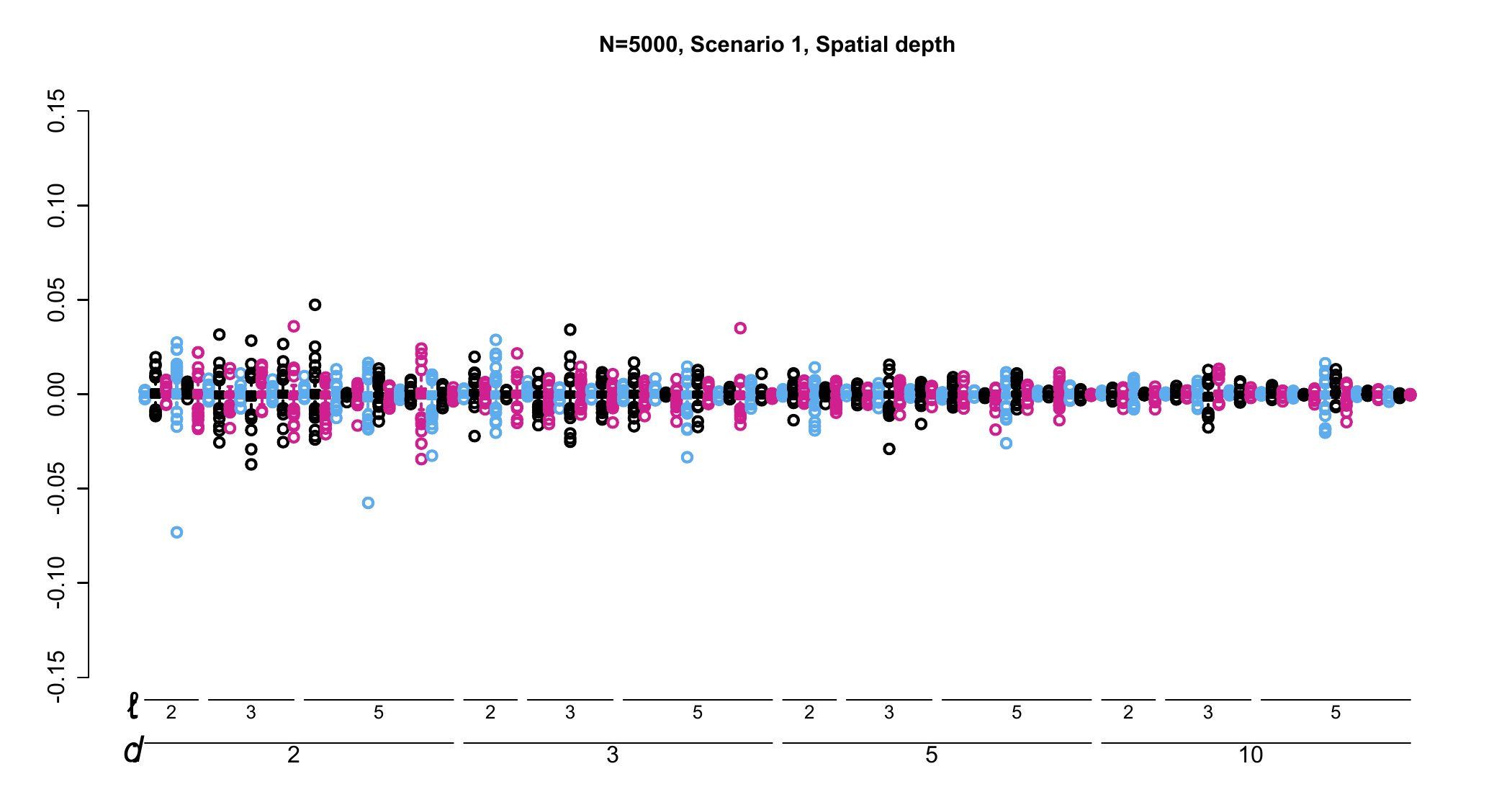}
\end{minipage}
\begin{minipage}[c]{.49\textwidth} 
\centering%
\includegraphics[width=\textwidth]{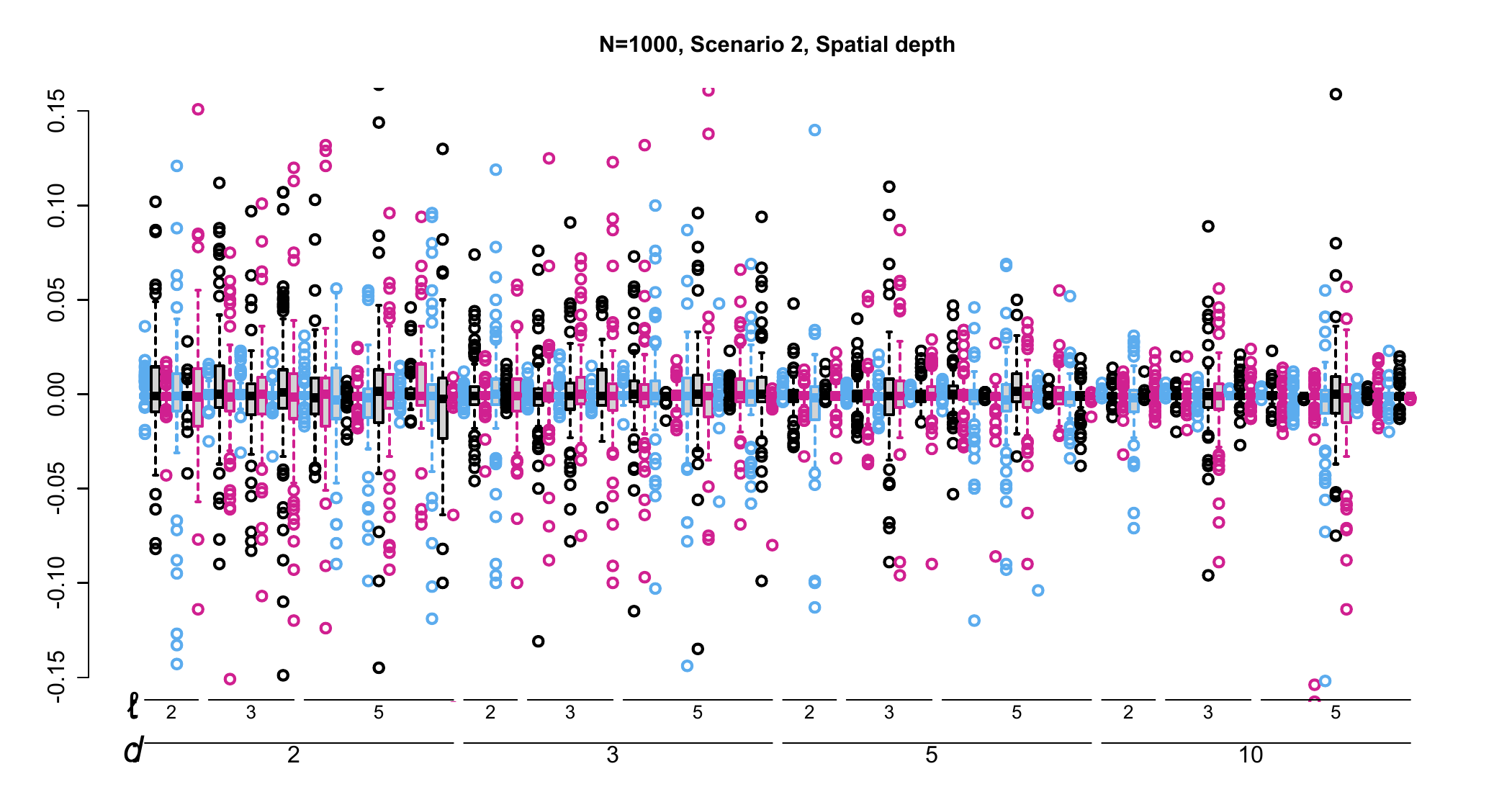}
\end{minipage}\hfill\newline
\begin{minipage}[c]{.49\textwidth} 
\centering%
\includegraphics[width=\textwidth]{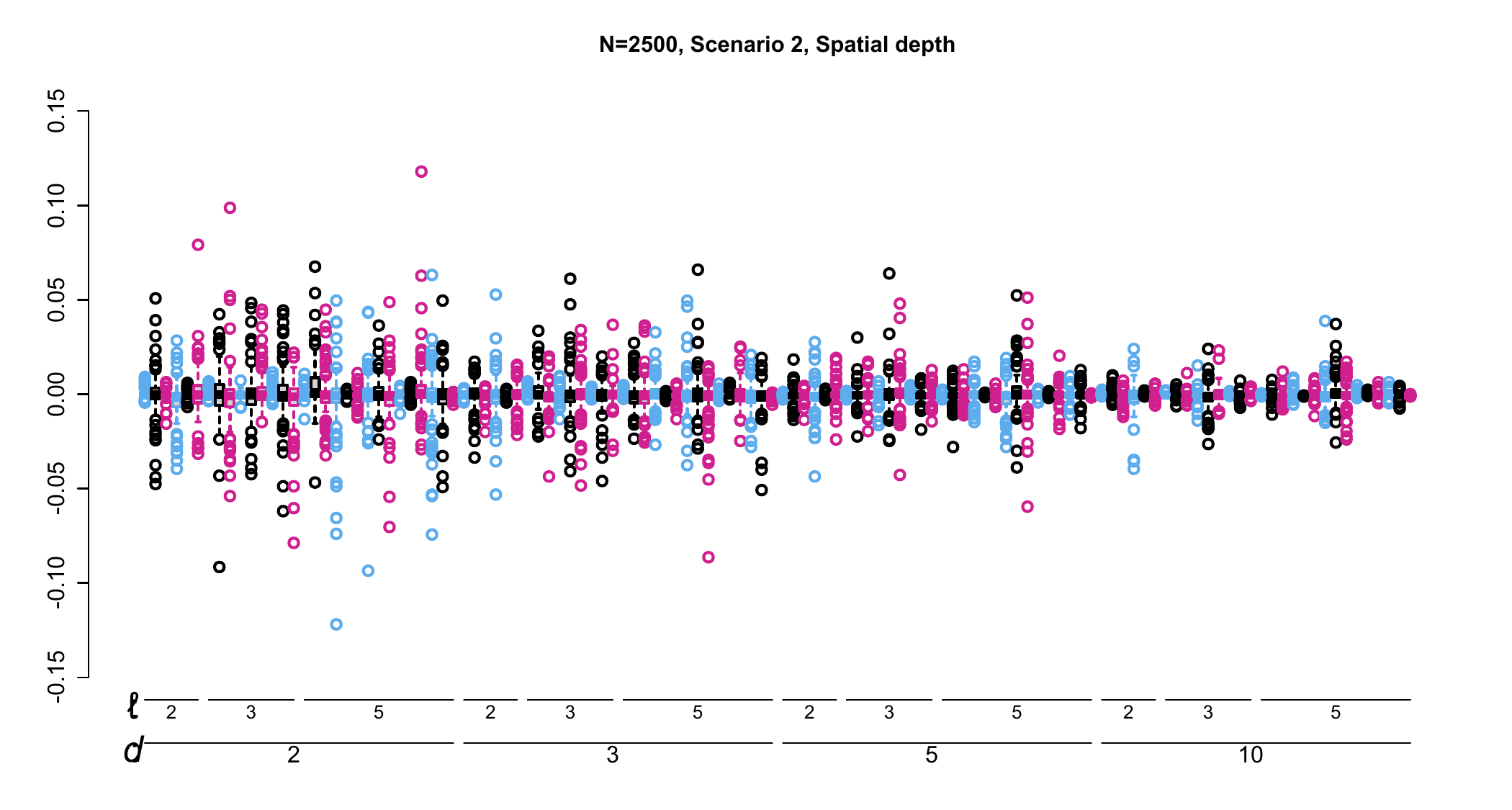}
\end{minipage}
\begin{minipage}[c]{.49\textwidth} 
\centering%
\includegraphics[width=\textwidth]{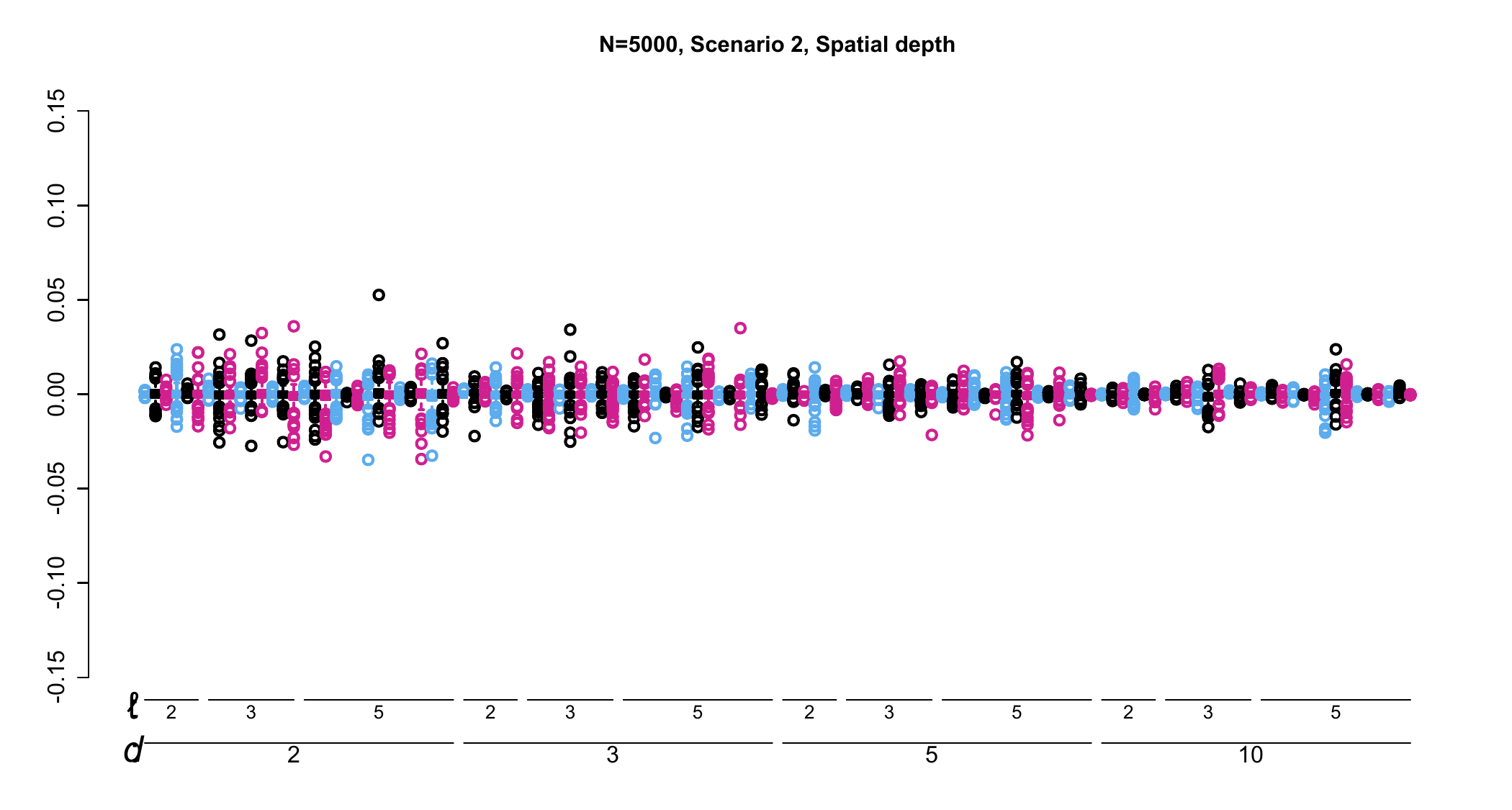}
\end{minipage}
\caption{Distribution of $\widehat{k}/N-\theta$ under the KW-PELT algorithm for spatial depth.}%
\end{figure}
\begin{figure}
\begin{minipage}[c]{.49\textwidth} 
\centering%
\includegraphics[width=\textwidth]{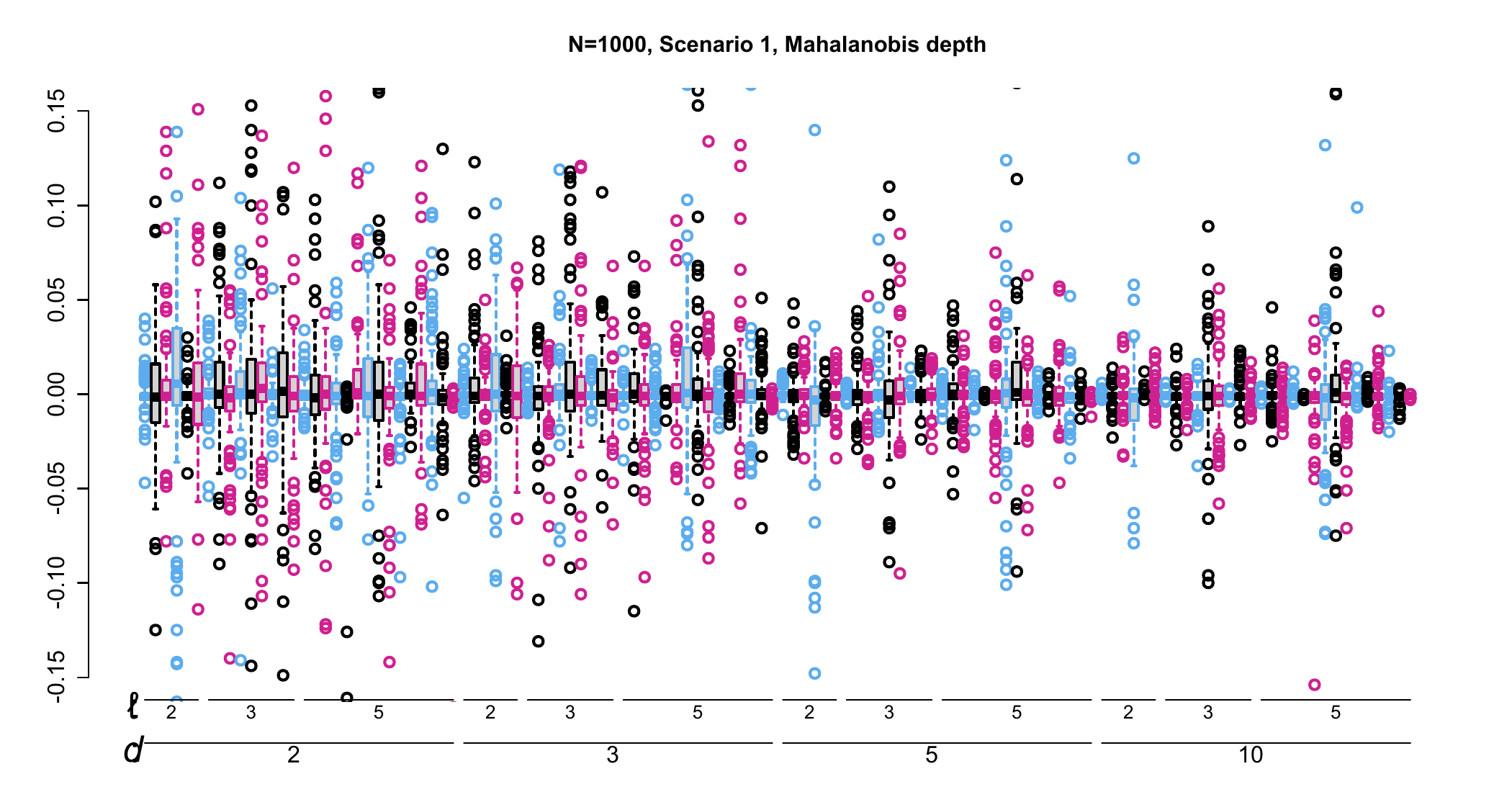}
\end{minipage}
\begin{minipage}[c]{.49\textwidth} 
\centering%
\includegraphics[width=\textwidth]{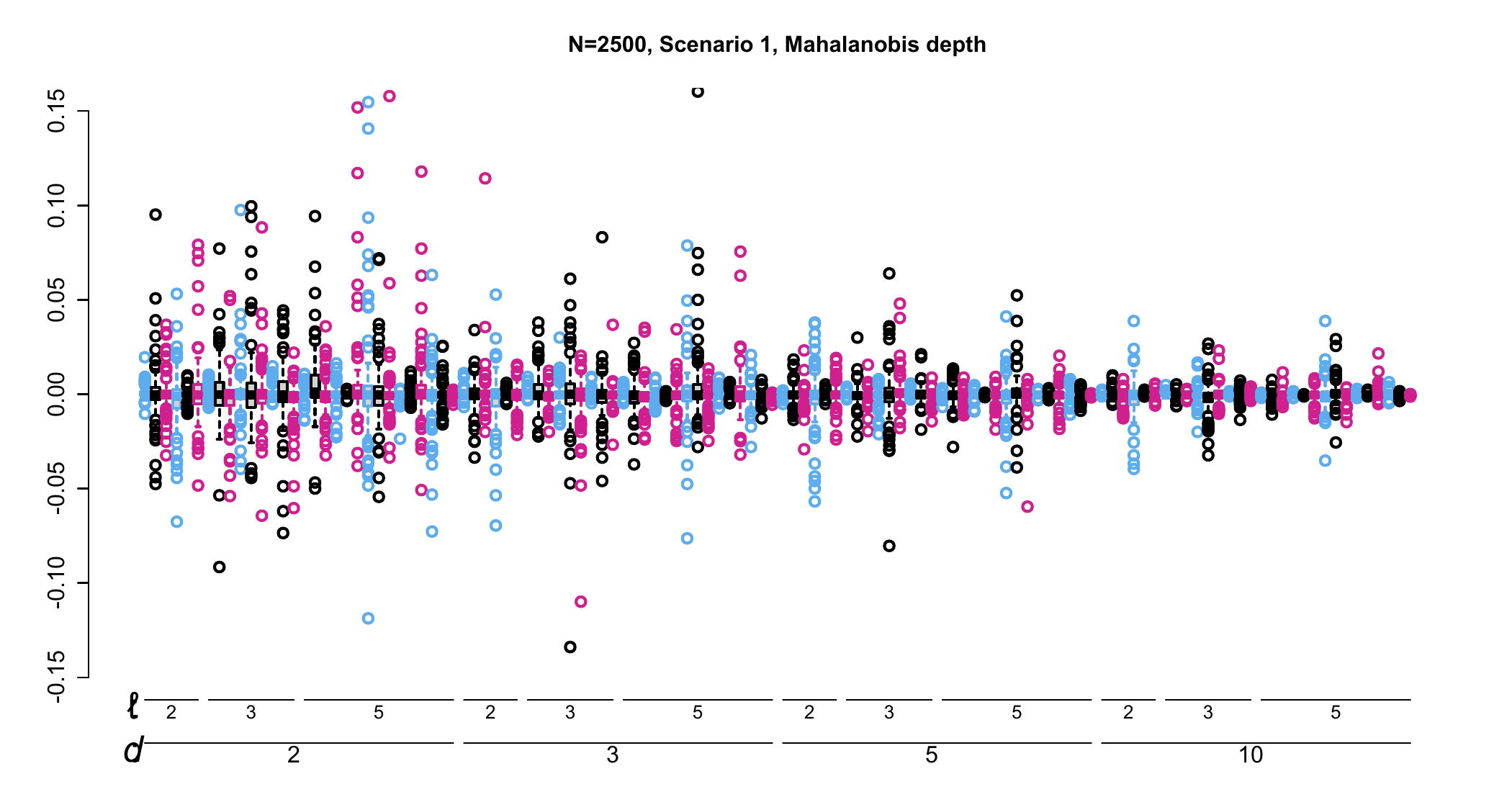}
\end{minipage}\hfill\newline
\begin{minipage}[c]{.49\textwidth} 
\centering%
\includegraphics[width=\textwidth]{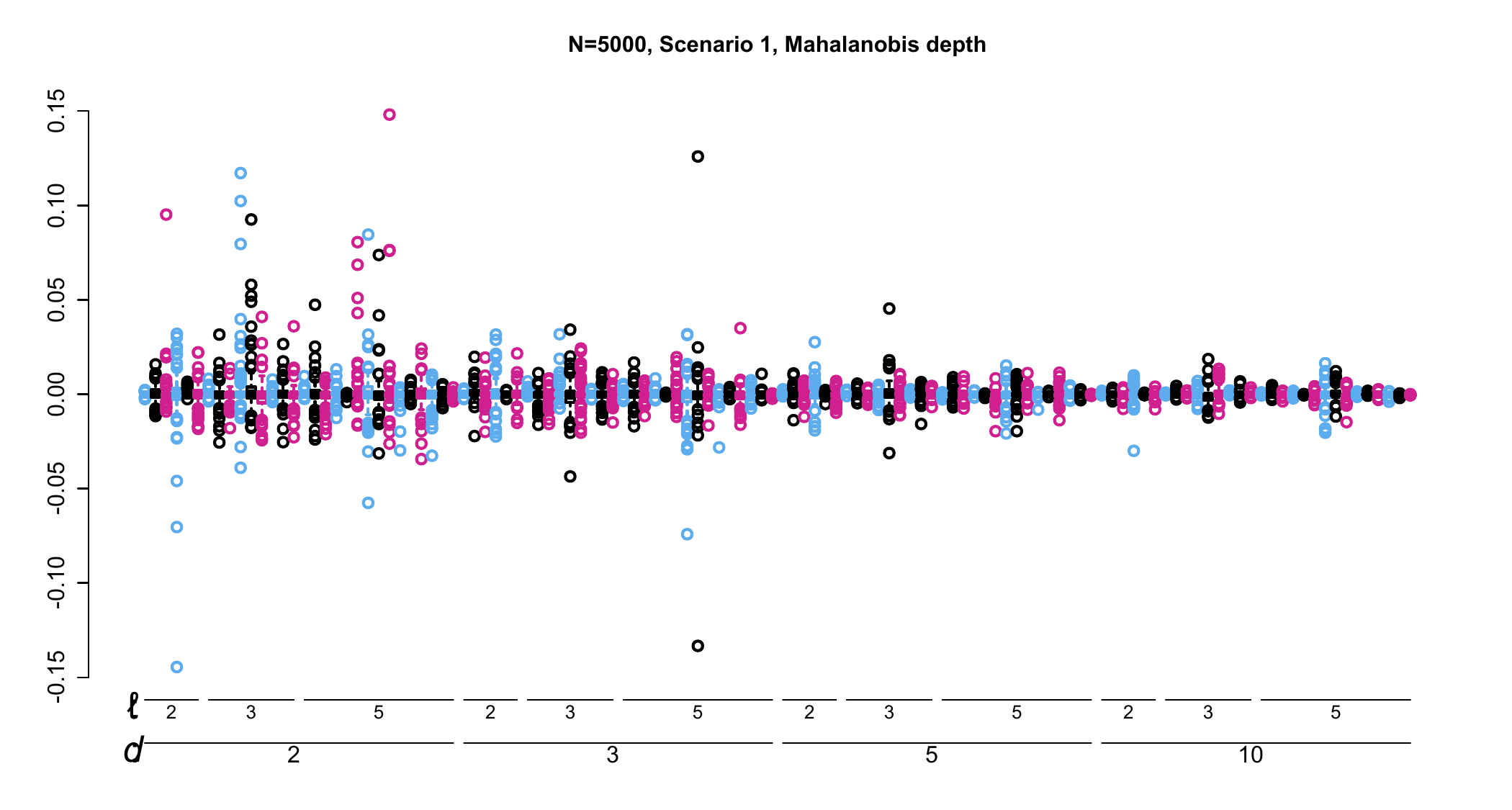}
\end{minipage}
\begin{minipage}[c]{.49\textwidth} 
\centering%
\includegraphics[width=\textwidth]{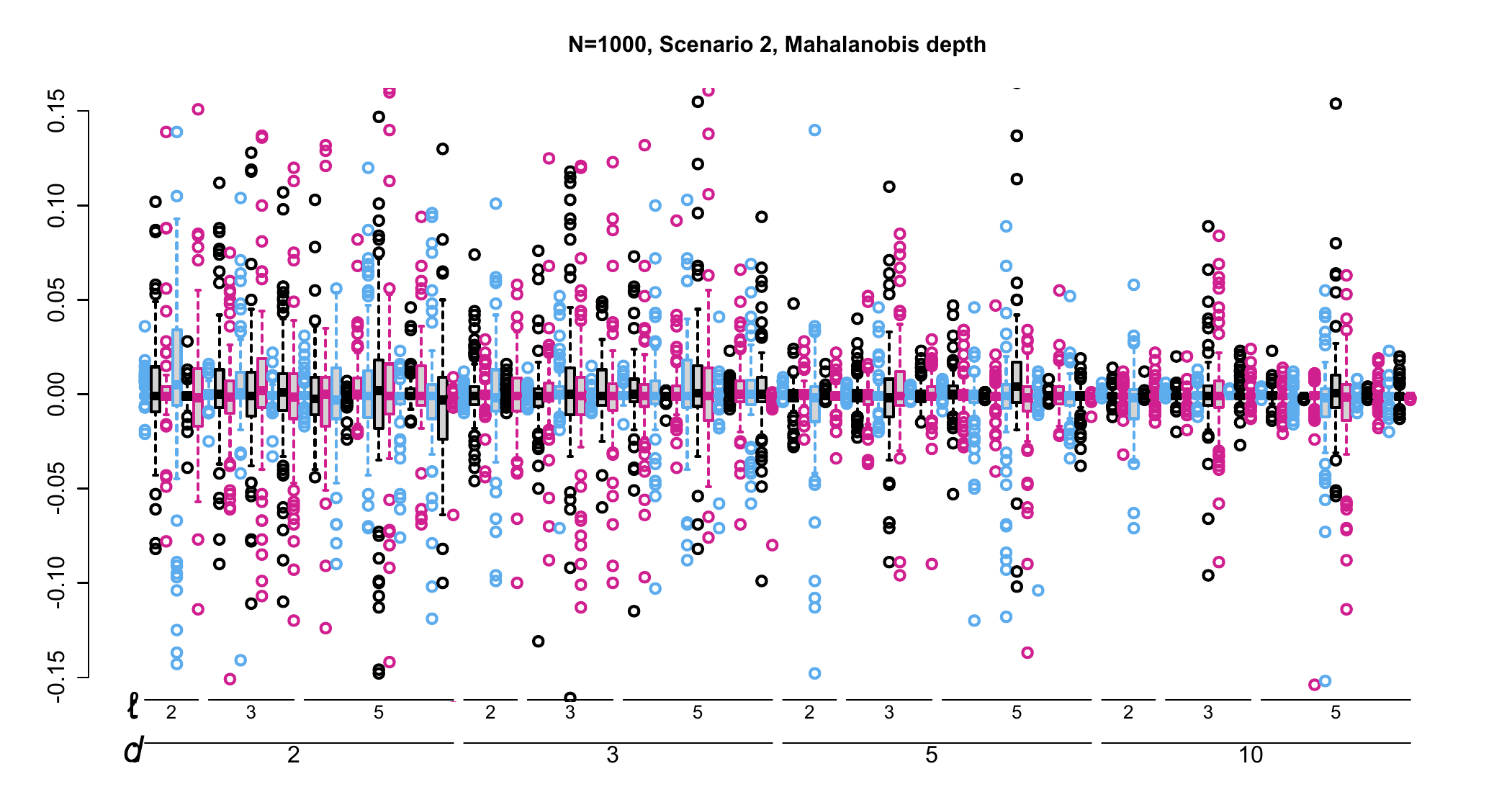}
\end{minipage}\hfill\newline
\begin{minipage}[c]{.49\textwidth} 
\centering%
\includegraphics[width=\textwidth]{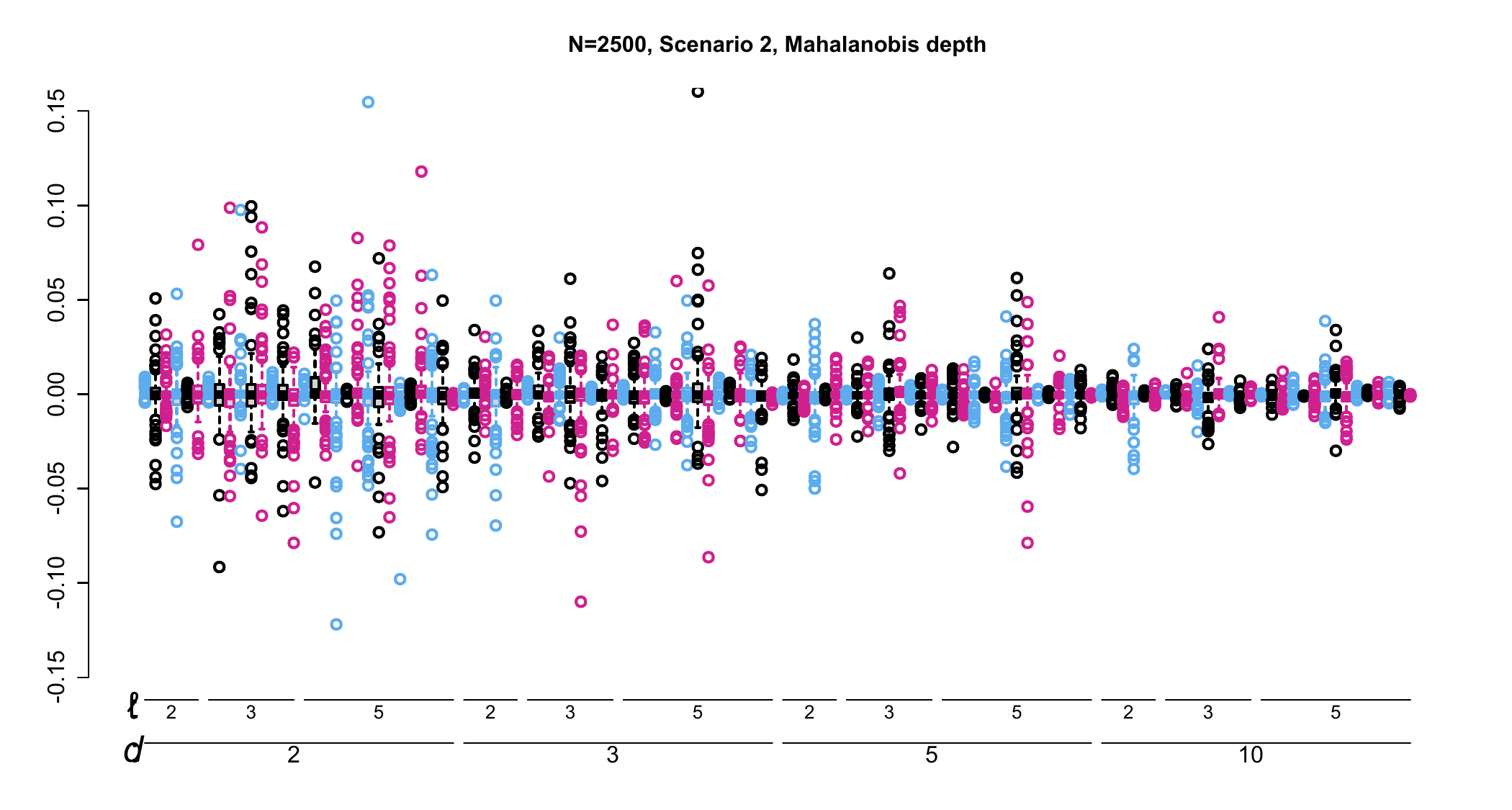}
\end{minipage}
\begin{minipage}[c]{.49\textwidth} 
\centering%
\includegraphics[width=\textwidth]{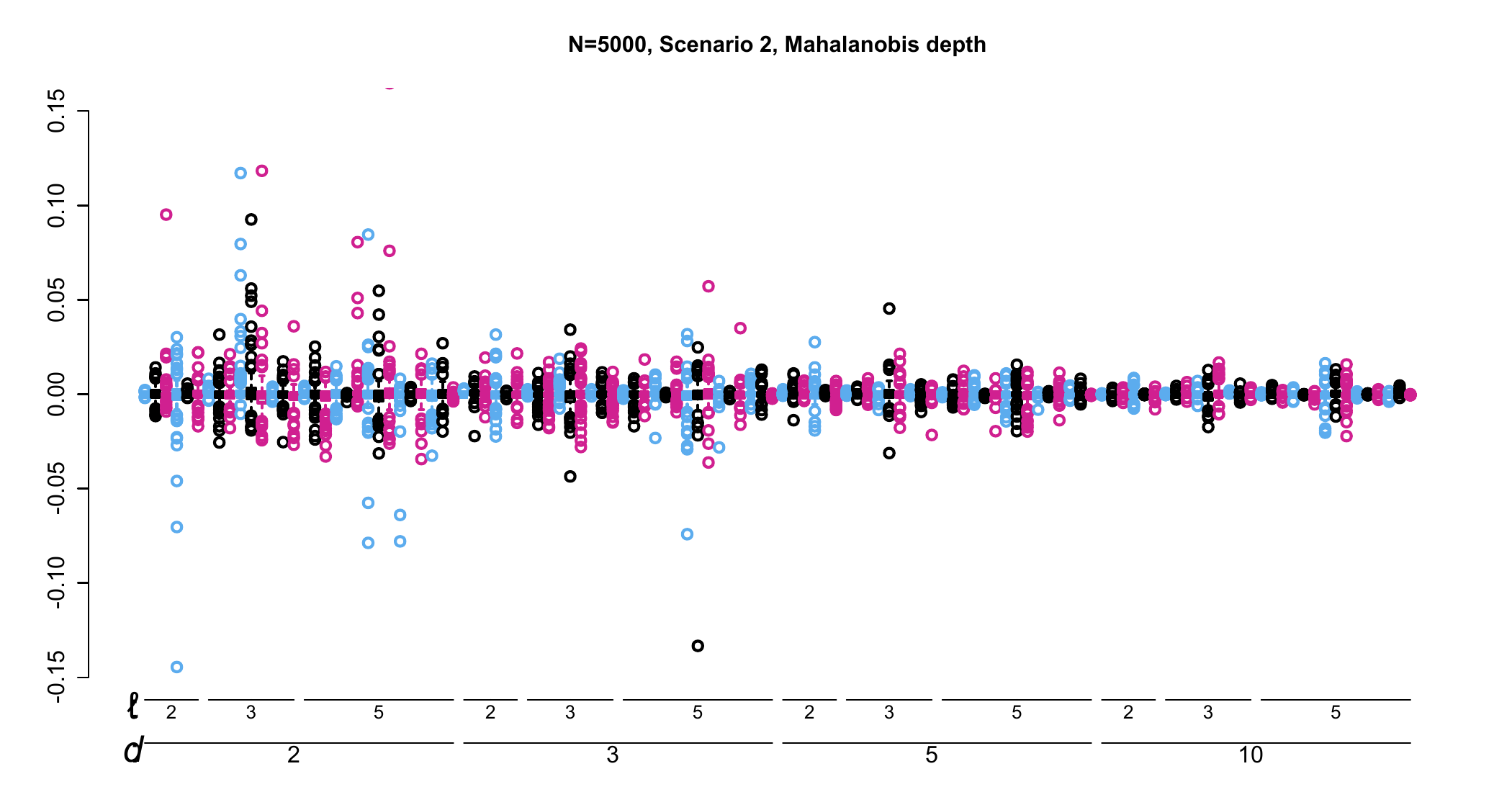}
\end{minipage}
\caption{Distribution of $\widehat{k}/N-\theta$ under the KW-PELT algorithm for Mahalanobis depth.}%
\end{figure}
\begin{figure}
\begin{minipage}[c]{.49\textwidth} 
\centering%
\includegraphics[width=\textwidth]{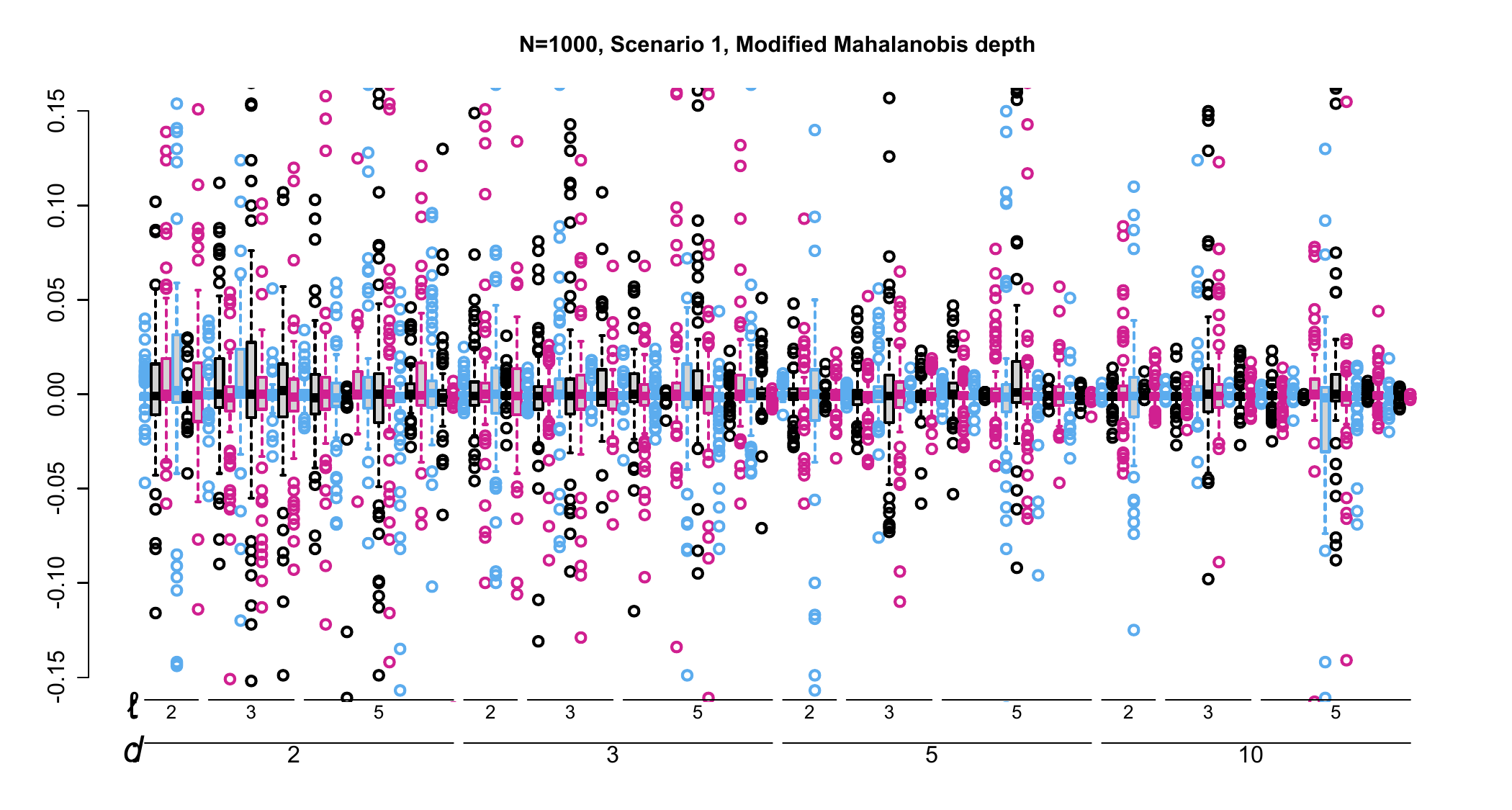}
\end{minipage}
\begin{minipage}[c]{.49\textwidth} 
\centering%
\includegraphics[width=\textwidth]{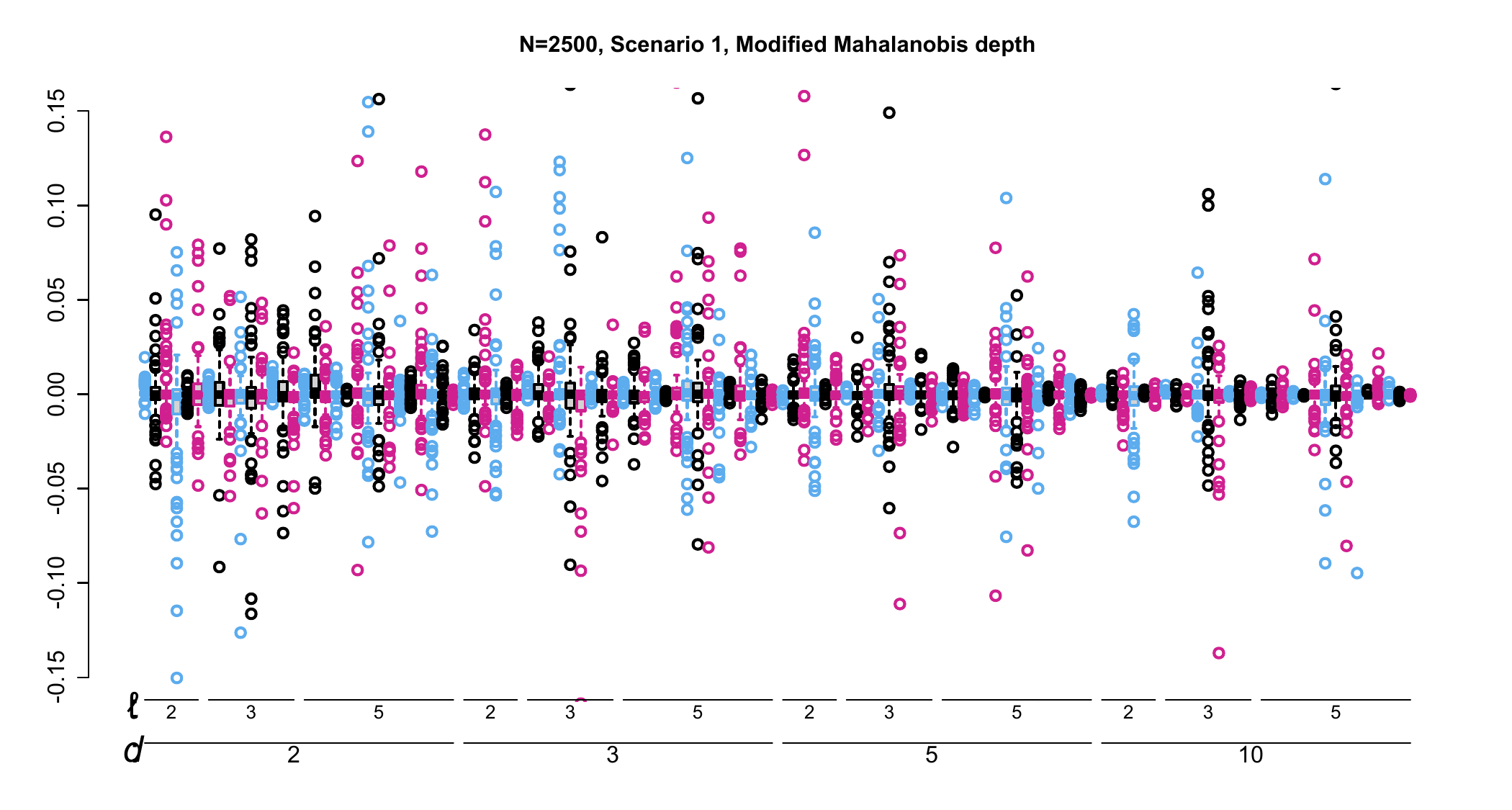}
\end{minipage}\hfill\newline
\begin{minipage}[c]{.49\textwidth} 
\centering%
\includegraphics[width=\textwidth]{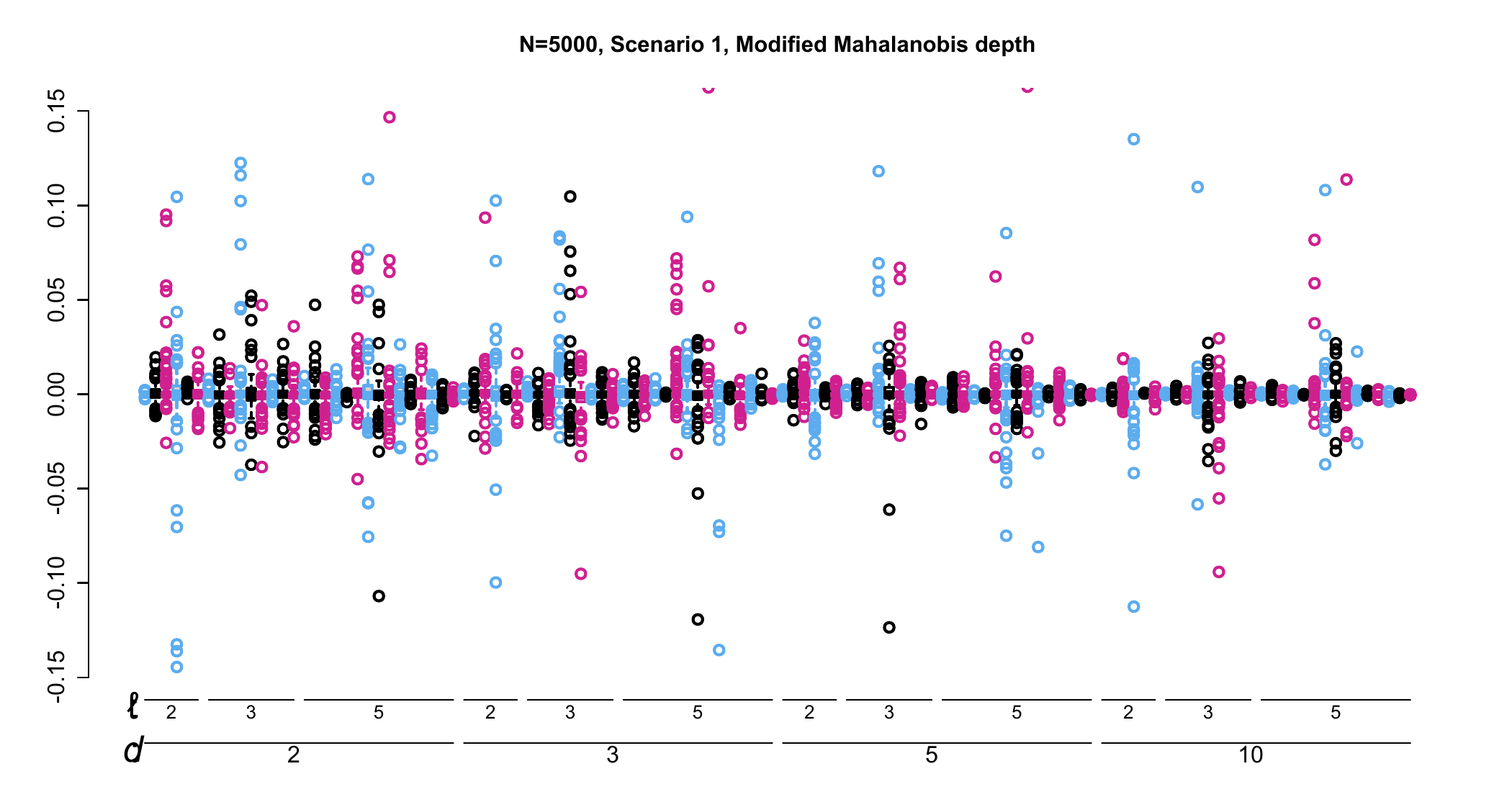}
\end{minipage}
\begin{minipage}[c]{.49\textwidth} 
\centering%
\includegraphics[width=\textwidth]{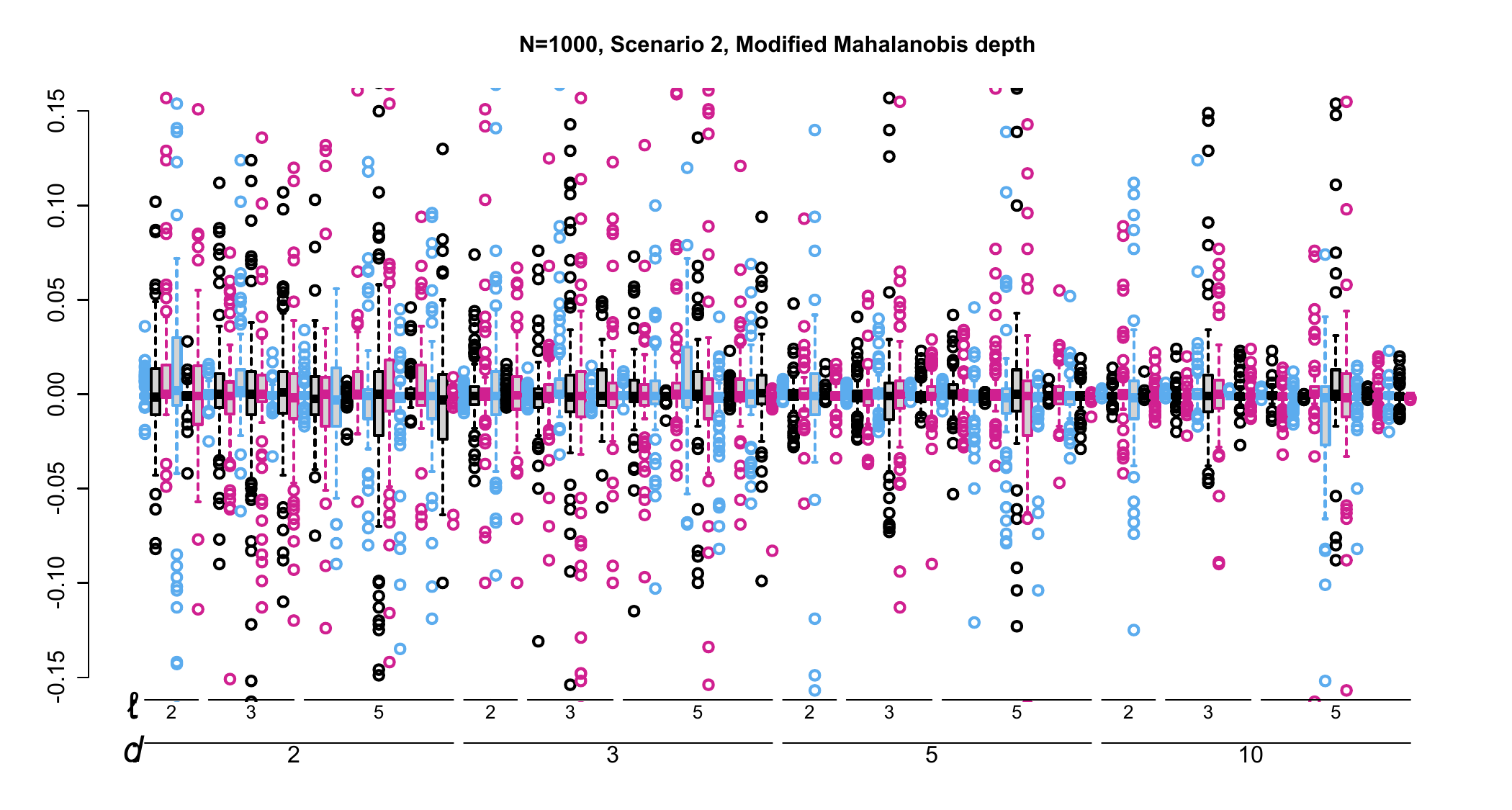}
\end{minipage}\hfill\newline
\begin{minipage}[c]{.49\textwidth} 
\centering%
\includegraphics[width=\textwidth]{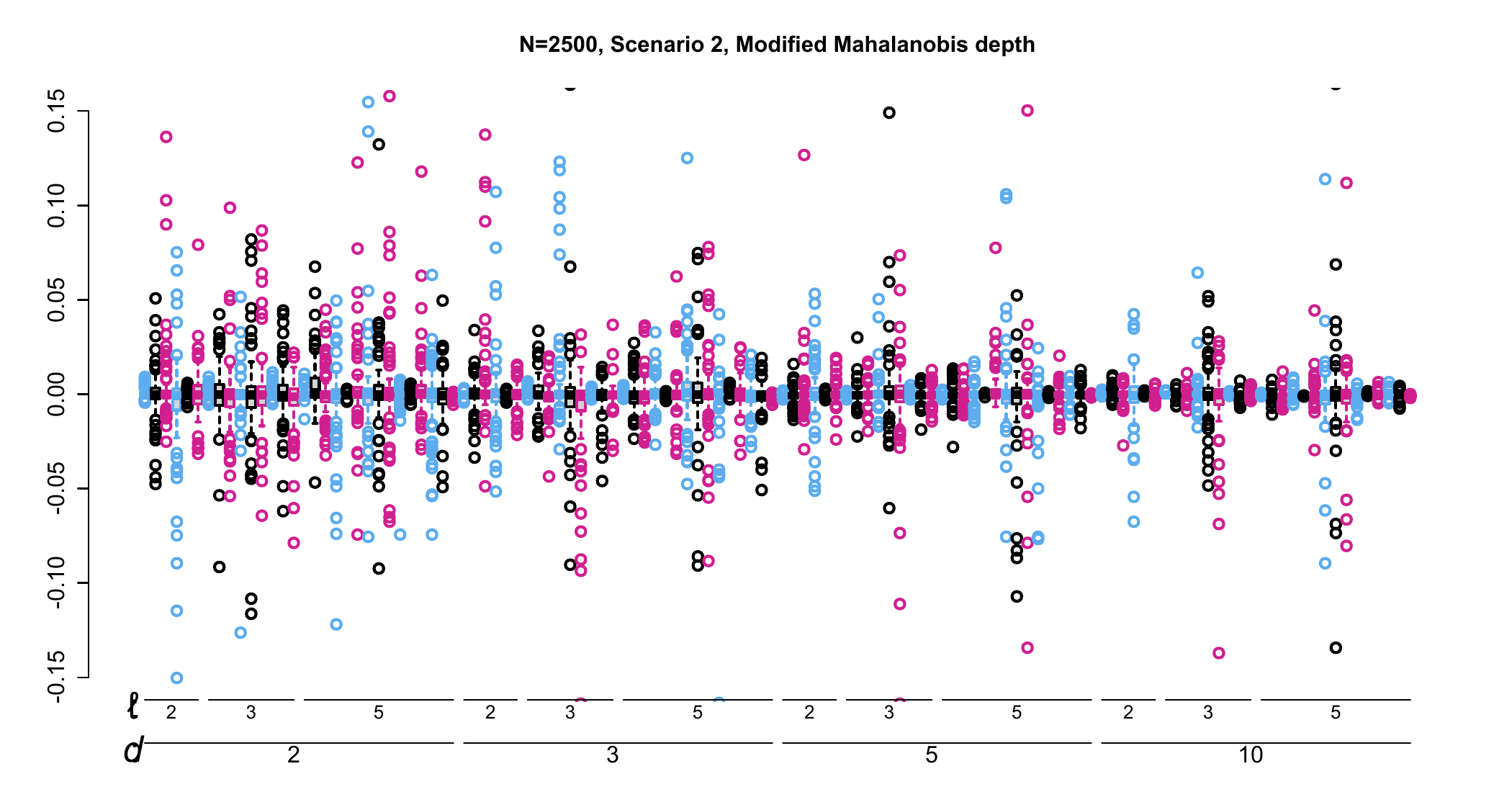}
\end{minipage}
\begin{minipage}[c]{.49\textwidth} 
\centering%
\includegraphics[width=\textwidth]{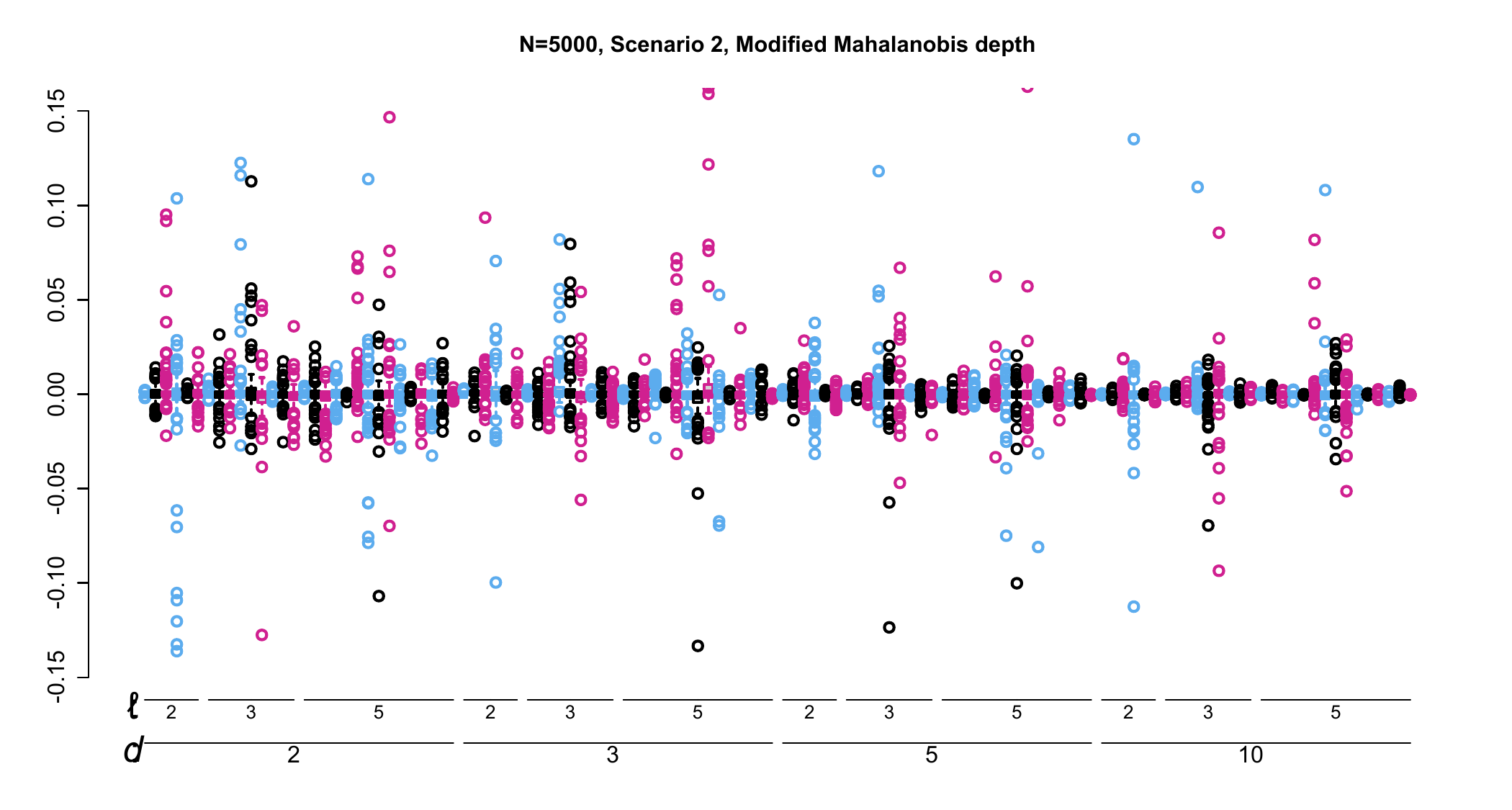}
\end{minipage}
\caption{Distribution of $\widehat{k}/N-\theta$ under the KW-PELT algorithm for modified Mahalanobis depth.}%
\end{figure}
\begin{figure}
\begin{minipage}[c]{.49\textwidth} 
\centering%
\includegraphics[width=\textwidth]{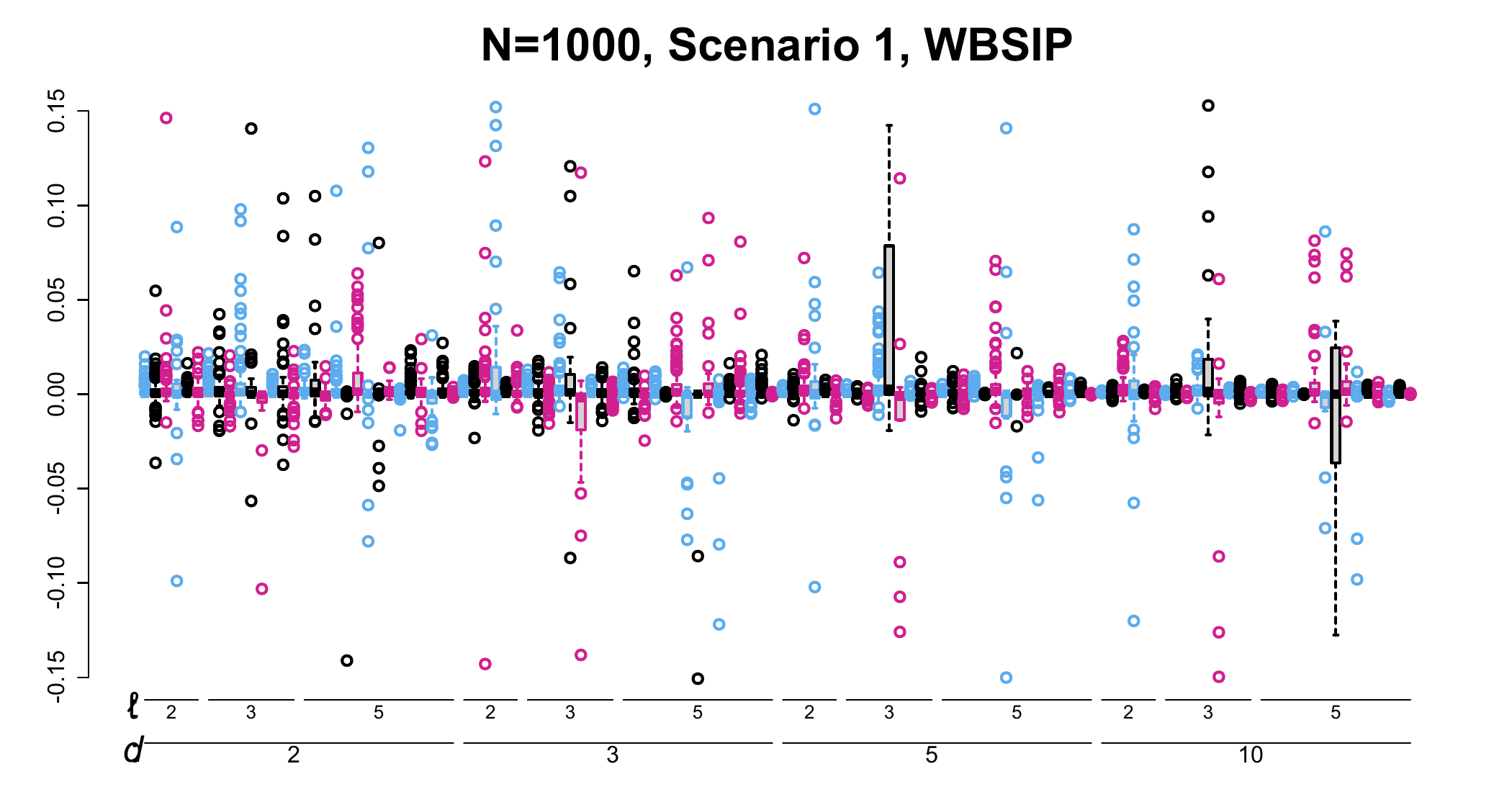}
\end{minipage}
\begin{minipage}[c]{.49\textwidth} 
\centering%
\includegraphics[width=\textwidth]{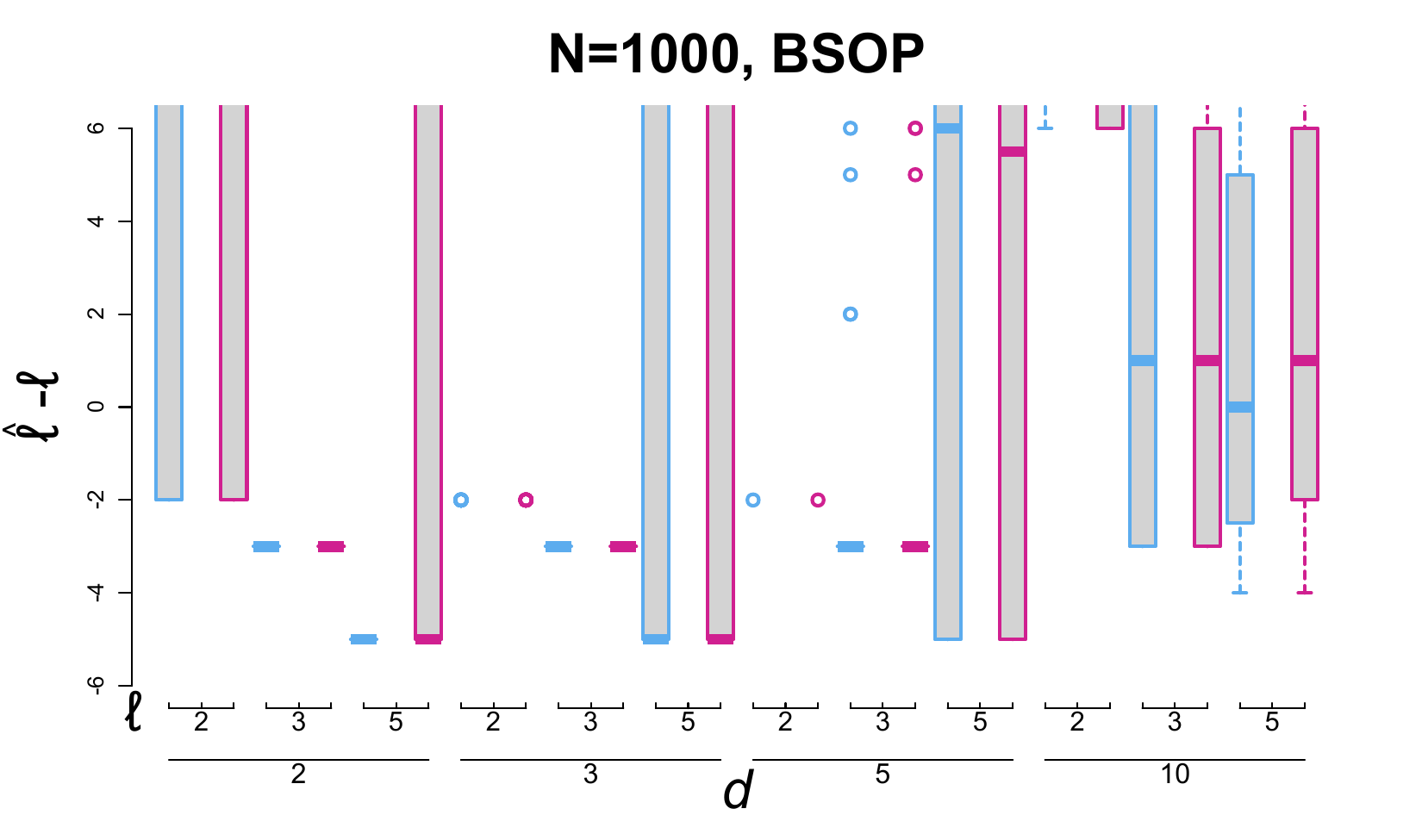}
\end{minipage}
\caption{(Left Pane) Distribution of $\widehat{k}/N-\theta$ under the WBSIP algorithm. (Right Pane) Distribution of $\widehat{\ell}-\ell$ under the BSOP algorithm.}
\end{figure}
\begin{figure}
\begin{minipage}[c]{.49\textwidth} 
\centering%
\includegraphics[width=\textwidth]{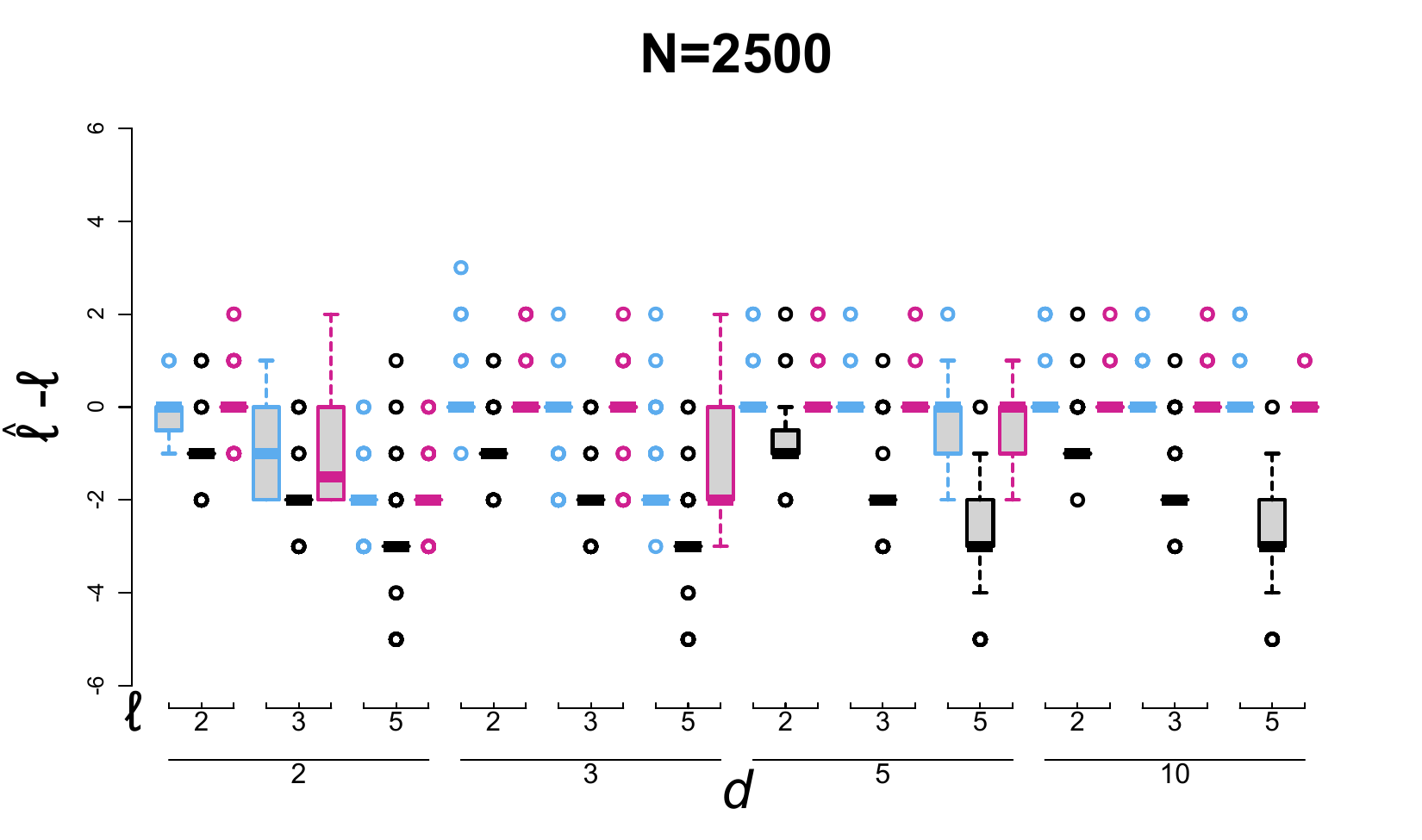}
\end{minipage}
\begin{minipage}[c]{.49\textwidth} 
\centering%
\includegraphics[width=\textwidth]{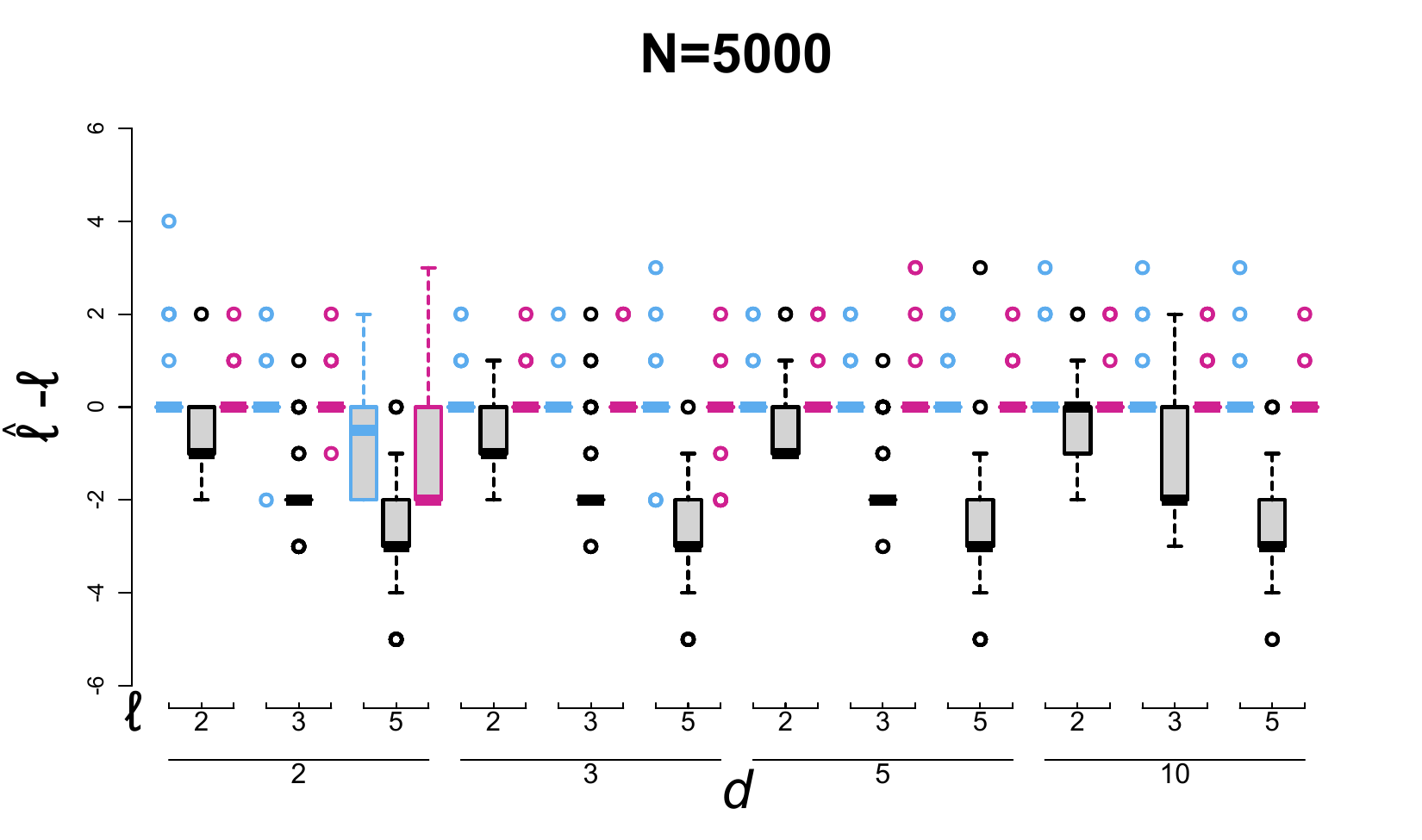}
\end{minipage}\hfill\newline
\begin{minipage}[c]{.49\textwidth} 
\centering%
\includegraphics[width=\textwidth]{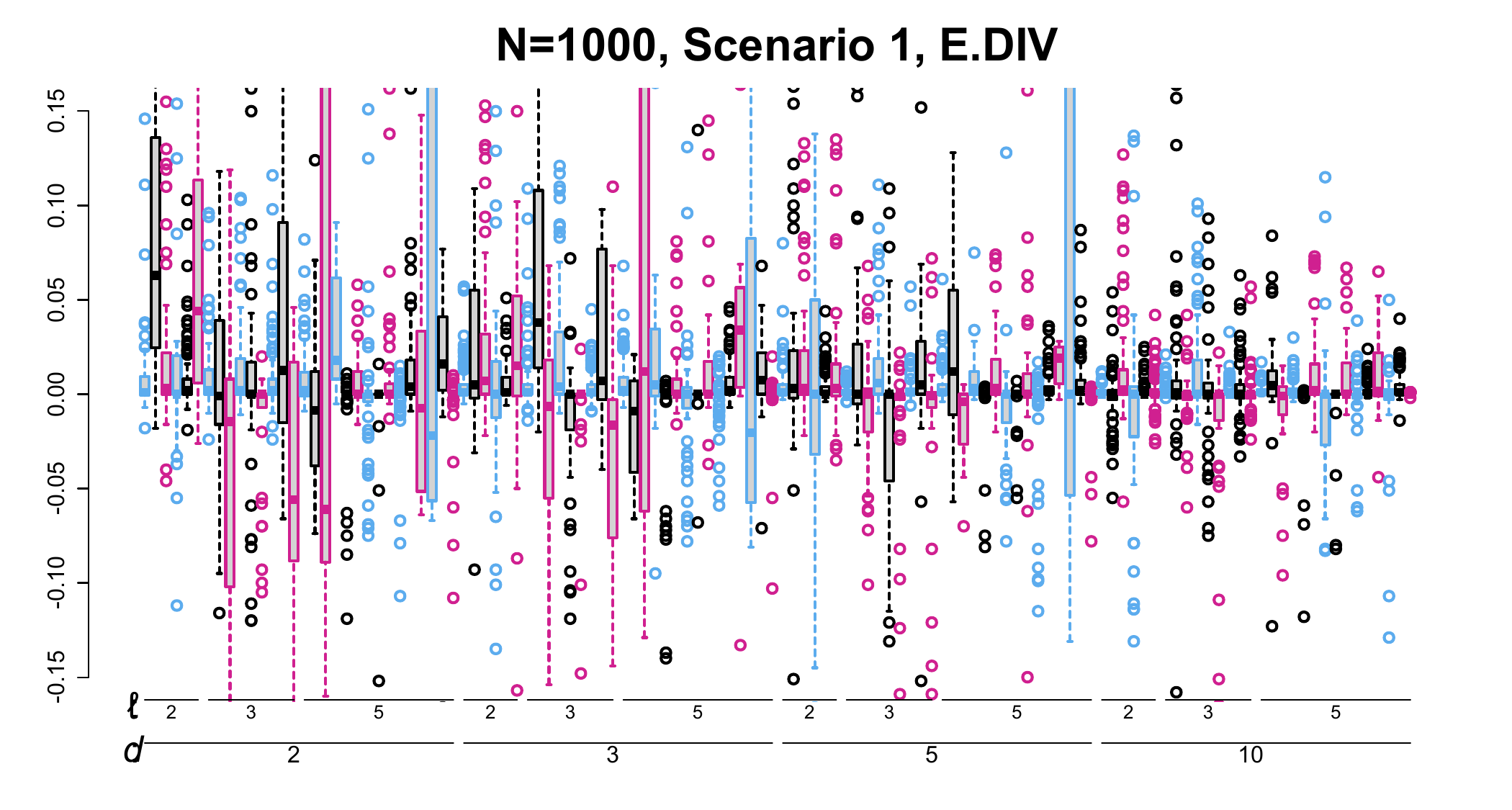}
\end{minipage}
\begin{minipage}[c]{.49\textwidth} 
\centering%
\includegraphics[width=\textwidth]{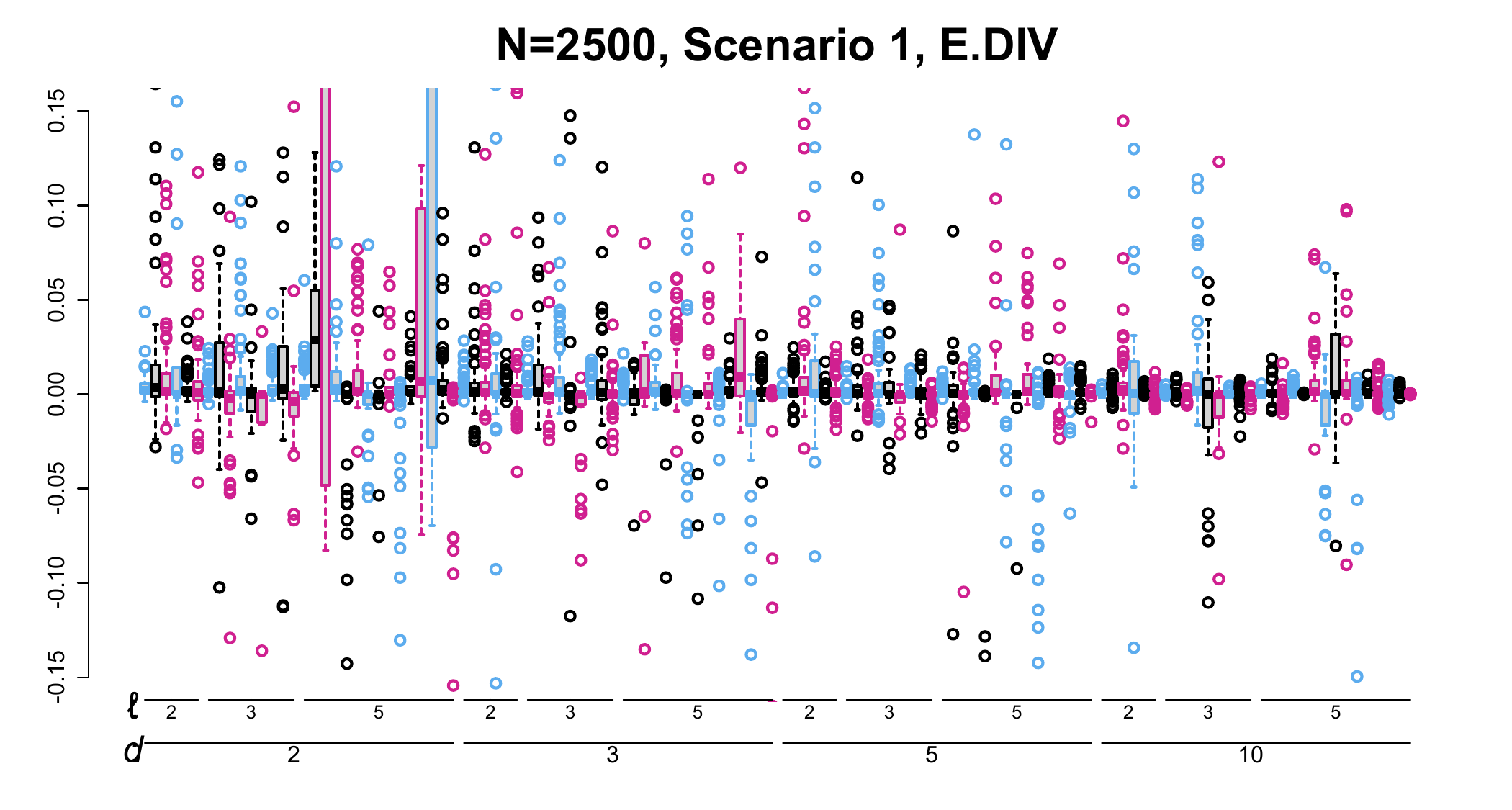}
\end{minipage}\hfill\newline
\begin{minipage}[c]{.49\textwidth} 
\centering%
\includegraphics[width=\textwidth]{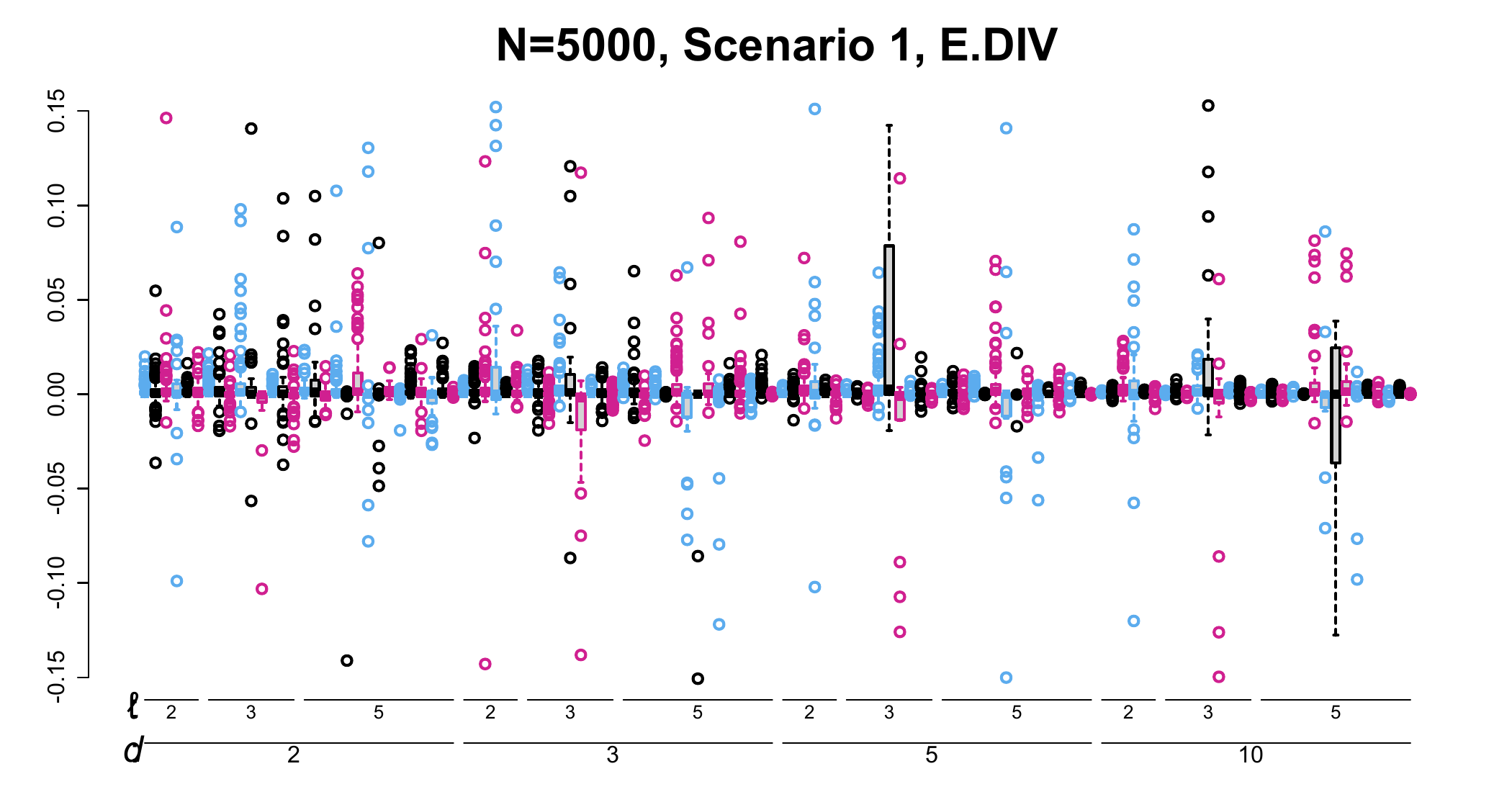}
\end{minipage}
\caption{Distribution of $\widehat{\ell}-\ell$ (first row) and $\widehat{k}/N-\theta$ (remaining rows) under simulation scenario 1 for the \texttt{e.divisive} method.}
\end{figure}
\begin{figure}
\begin{minipage}[c]{.32\textwidth} 
\centering%
\includegraphics[width=\textwidth]{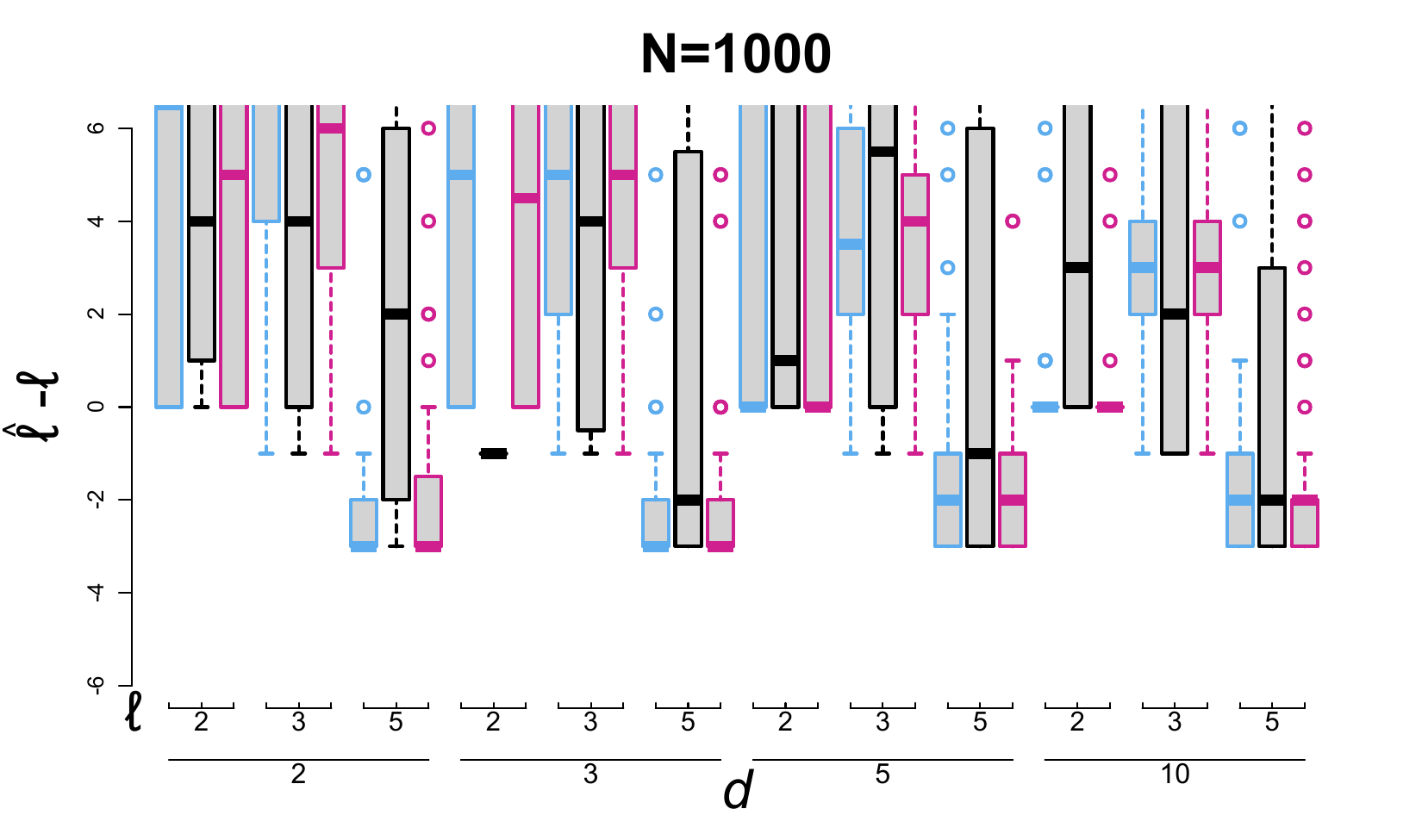}
\end{minipage}
\begin{minipage}[c]{.32\textwidth} 
\centering%
\includegraphics[width=\textwidth]{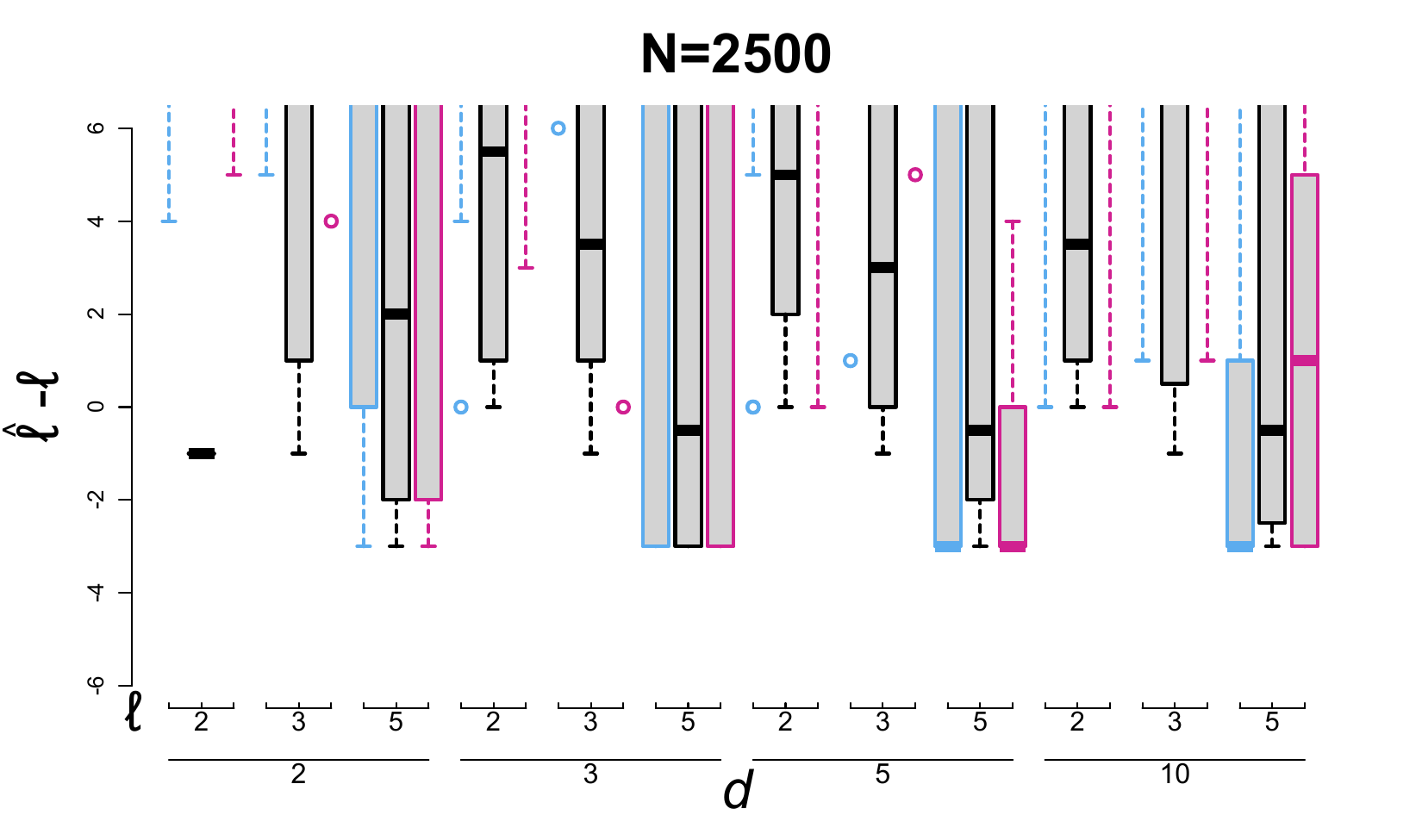}
\end{minipage}
\begin{minipage}[c]{.32\textwidth} 
\centering%
\includegraphics[width=\textwidth]{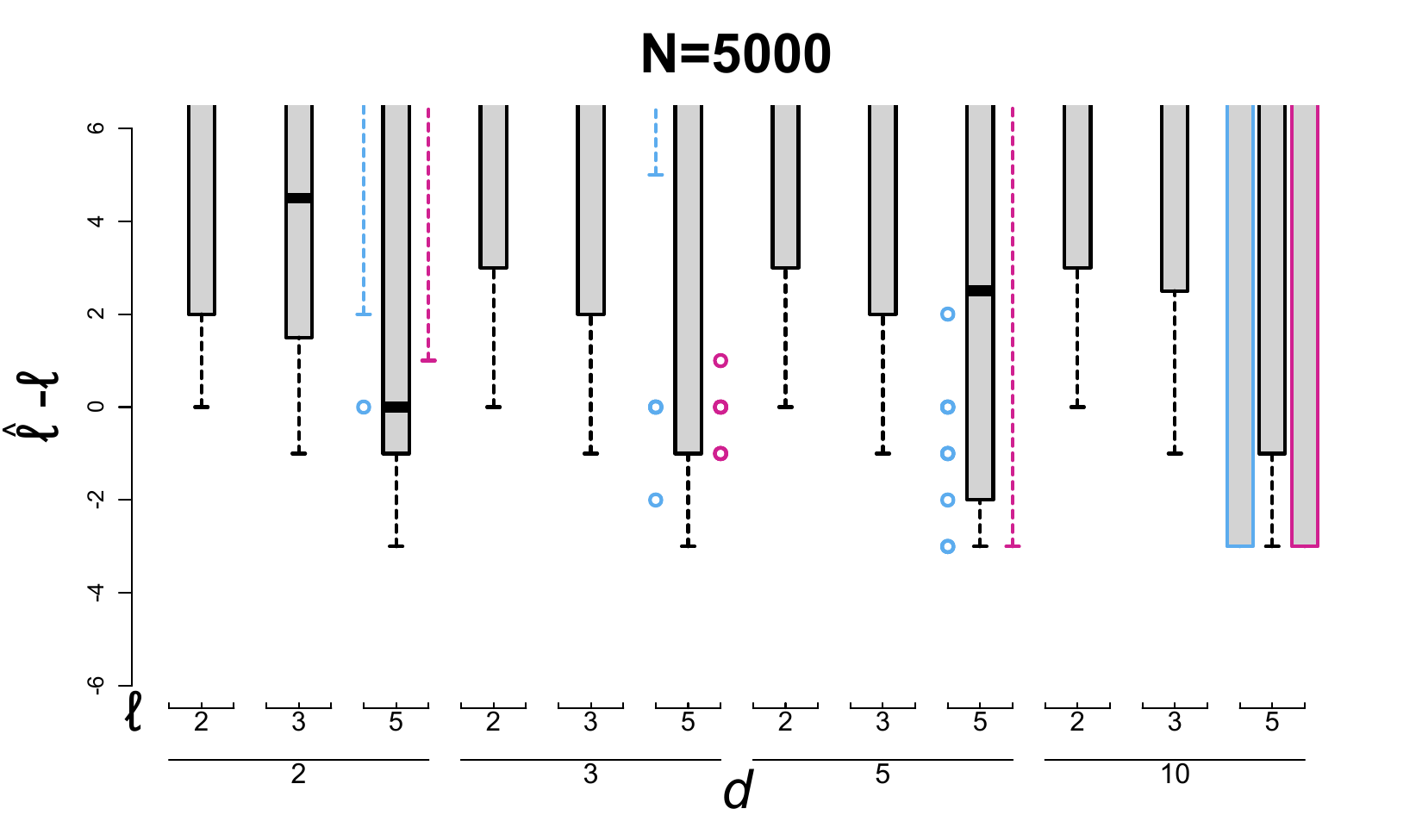}
\end{minipage}\hfill\newline
\begin{minipage}[c]{.32\textwidth} 
\centering%
\includegraphics[width=\textwidth]{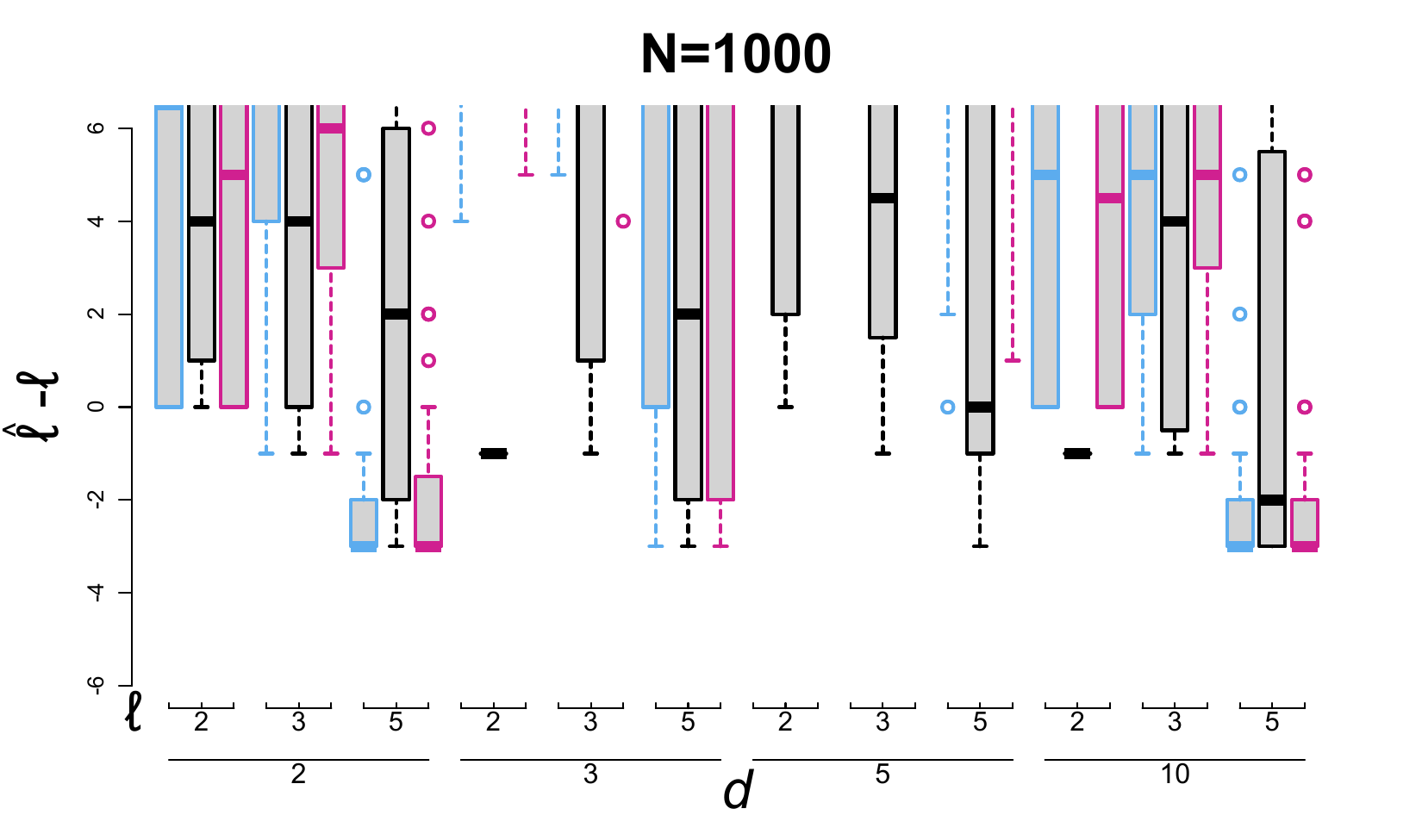}
\end{minipage}
\begin{minipage}[c]{.32\textwidth} 
\centering%
\includegraphics[width=\textwidth]{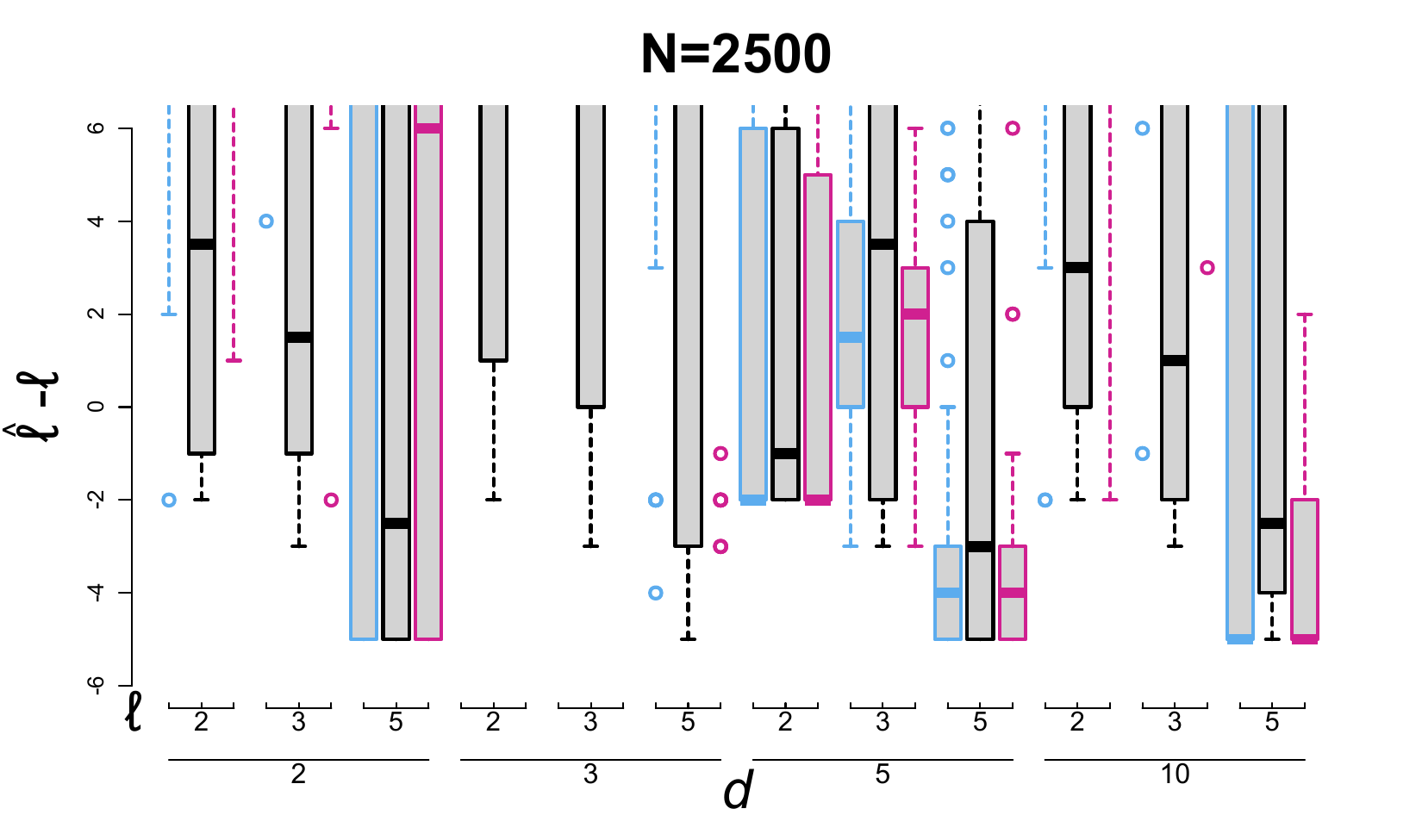}
\end{minipage}\hfill
\caption{Distribution of $\widehat{\ell}-\ell$ under simulation scenario 1 for the methods \texttt{e.cp3o\_delta} and \texttt{e.kcp3o}. Simulation was only ran for $N=1000$ and $N=2500$ for \texttt{e.kcp3o}, since the running time was very slow.}%
\end{figure}
\begin{figure}
\begin{minipage}[c]{.49\textwidth} 
\centering%
\includegraphics[width=.8\textwidth]{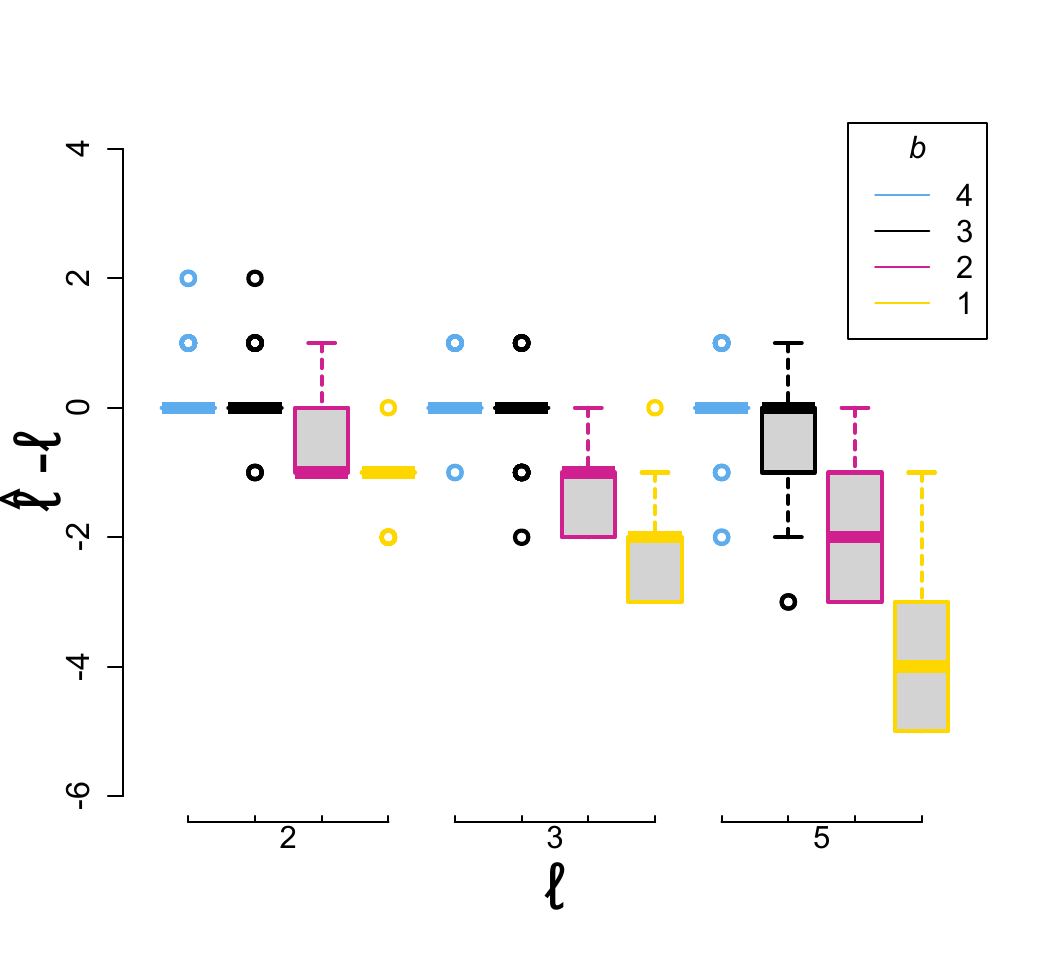}
\end{minipage}
\begin{minipage}[c]{.49\textwidth} 
\centering%
\includegraphics[width=.8\textwidth]{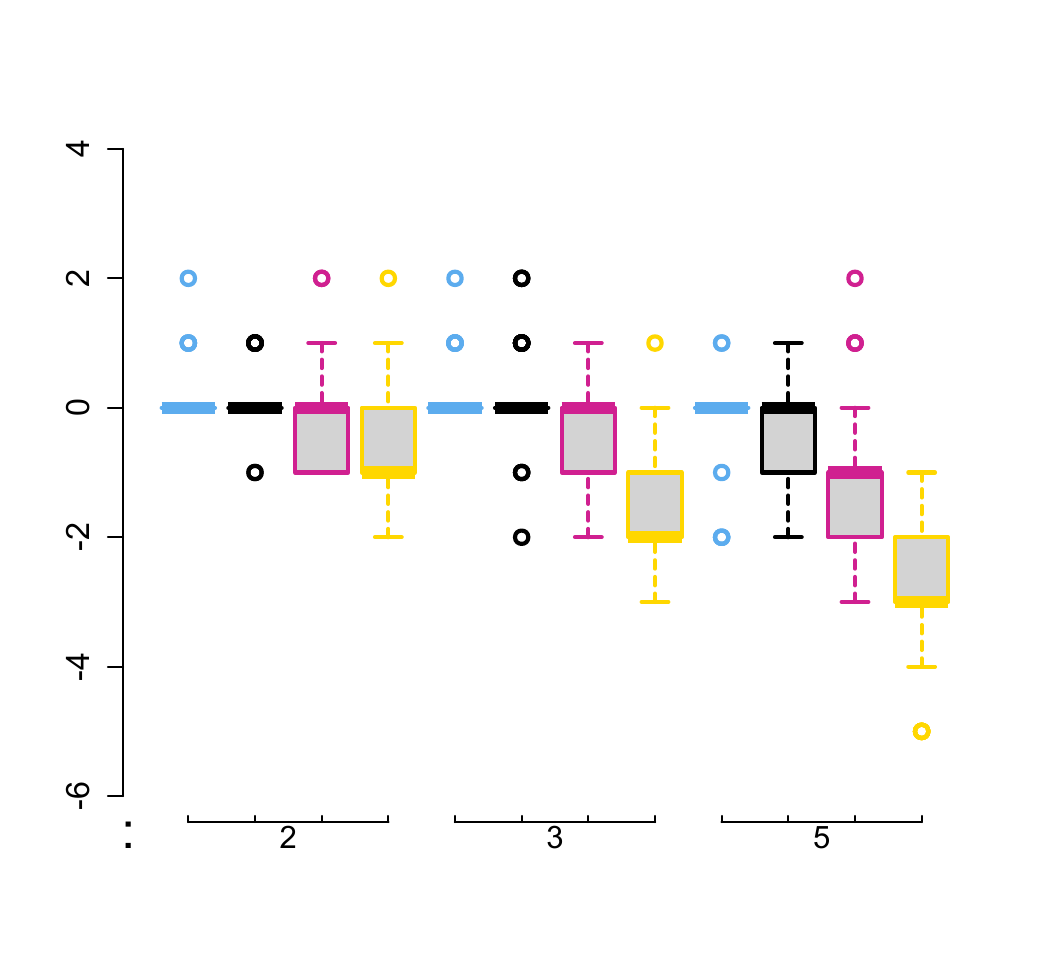}
\end{minipage}\hfill\newline
\begin{minipage}[c]{.49\textwidth} 
\centering%
\includegraphics[width=.8\textwidth]{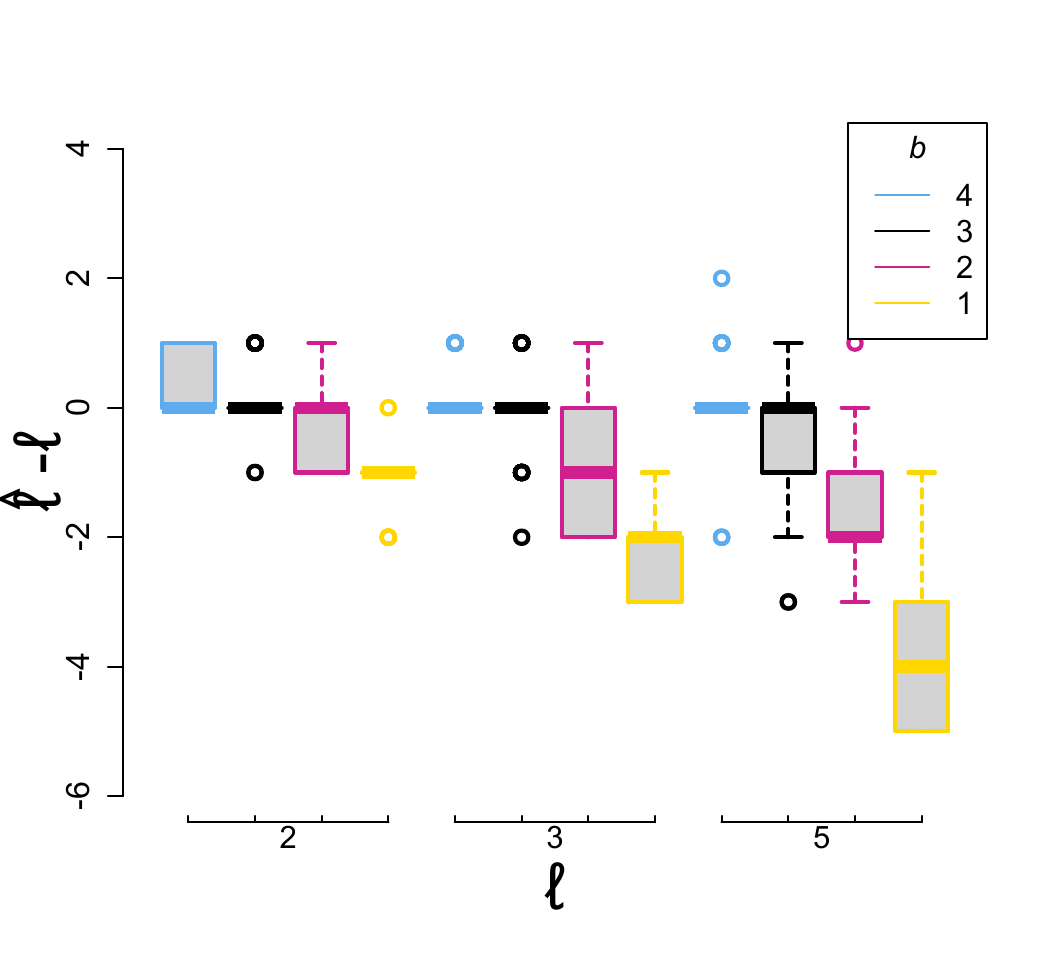}
\end{minipage}
\begin{minipage}[c]{.49\textwidth} 
\centering%
\includegraphics[width=.8\textwidth]{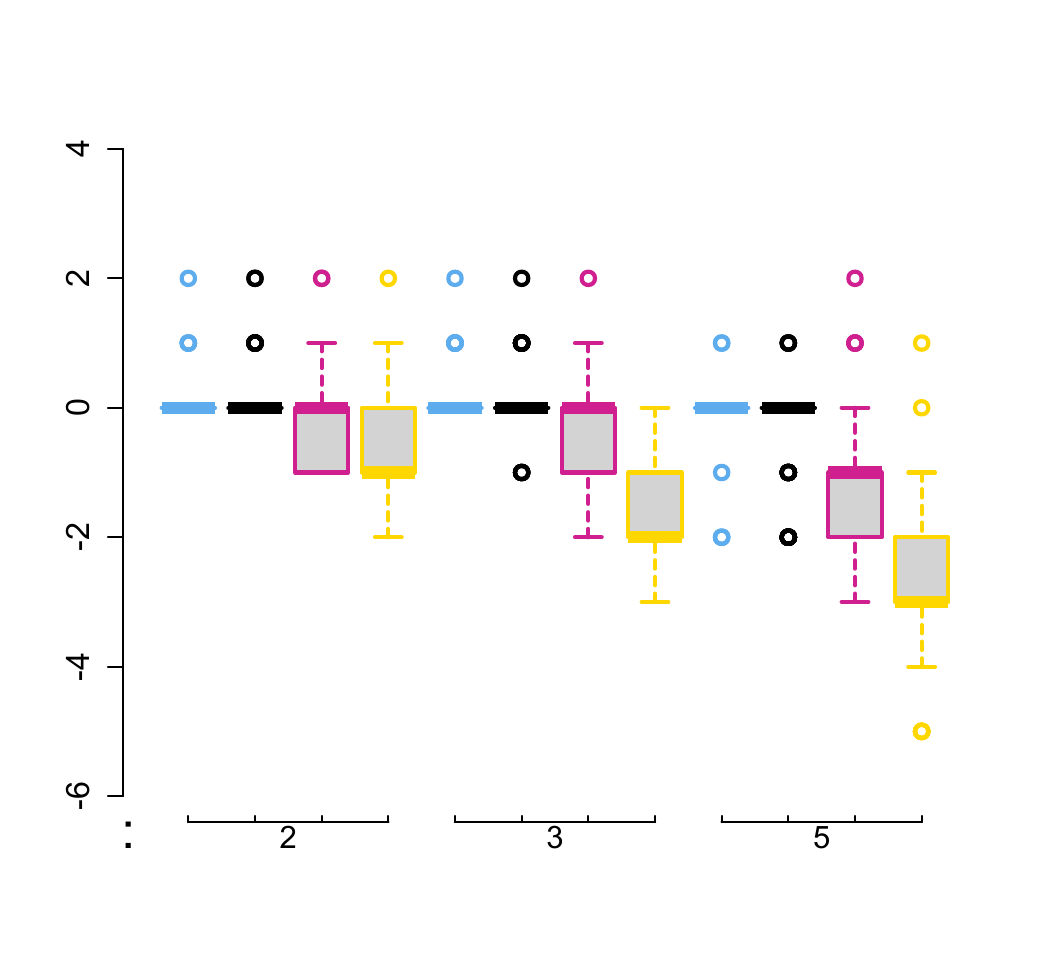}
\end{minipage}\hfill\newline
\begin{minipage}[c]{.49\textwidth} 
\centering%
\includegraphics[width=.8\textwidth]{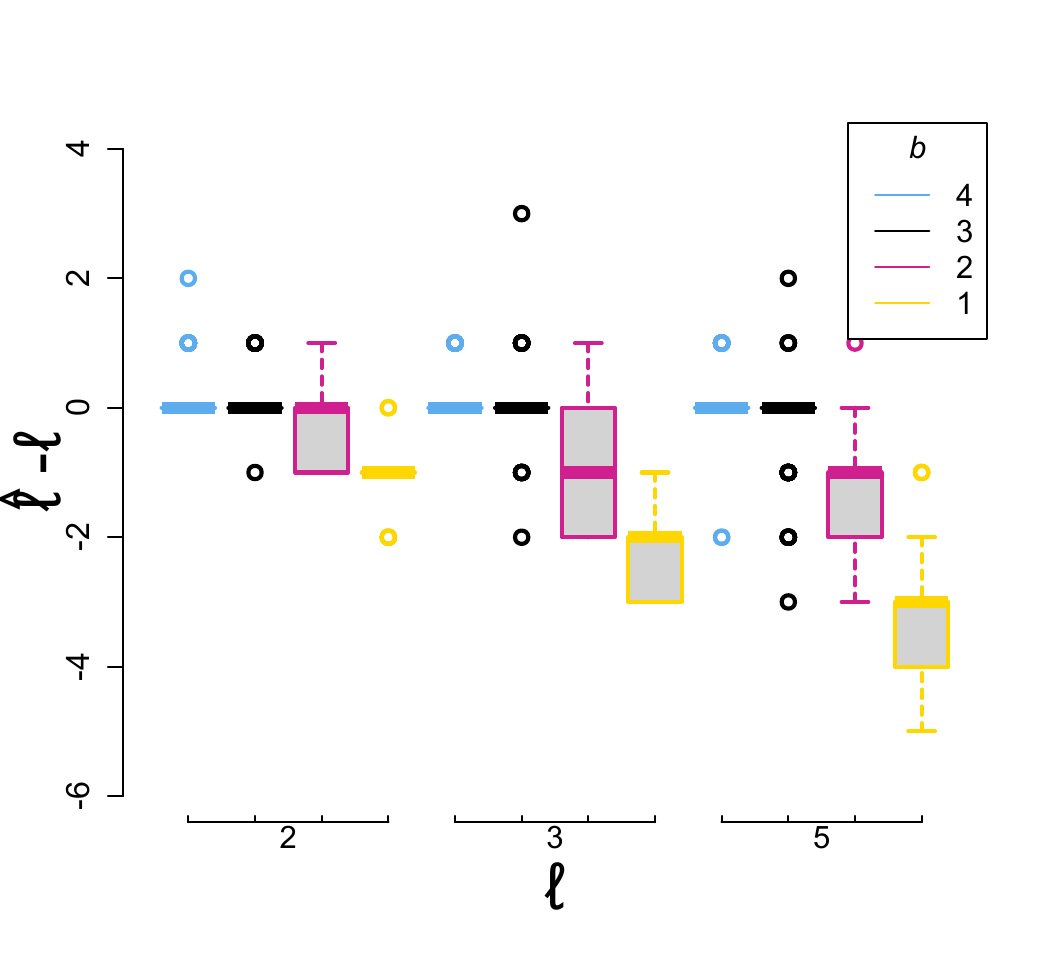}
\end{minipage}
\begin{minipage}[c]{.49\textwidth} 
\centering%
\includegraphics[width=.8\textwidth]{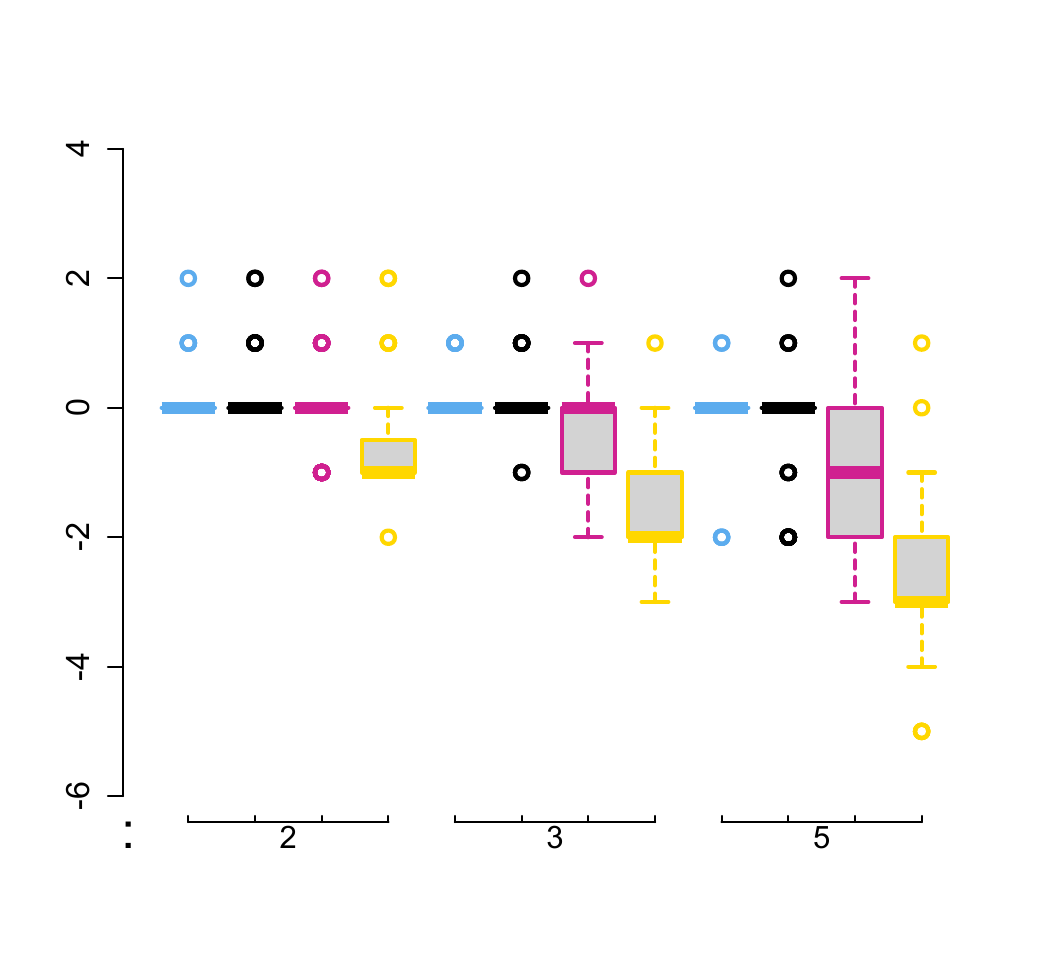}
\end{minipage}
\caption{Distribution of $\widehat{\ell}-\ell$ for simulation set-up 3, under the WBS algorithm (left column) and KW-PELT algorithm (right column) for half-space, Mahalanobis, and modified Mahalanobis depth, respectively.}%
\end{figure}
\begin{figure}
\begin{minipage}[c]{.31\textwidth} 
\centering%
\includegraphics[width=\textwidth]{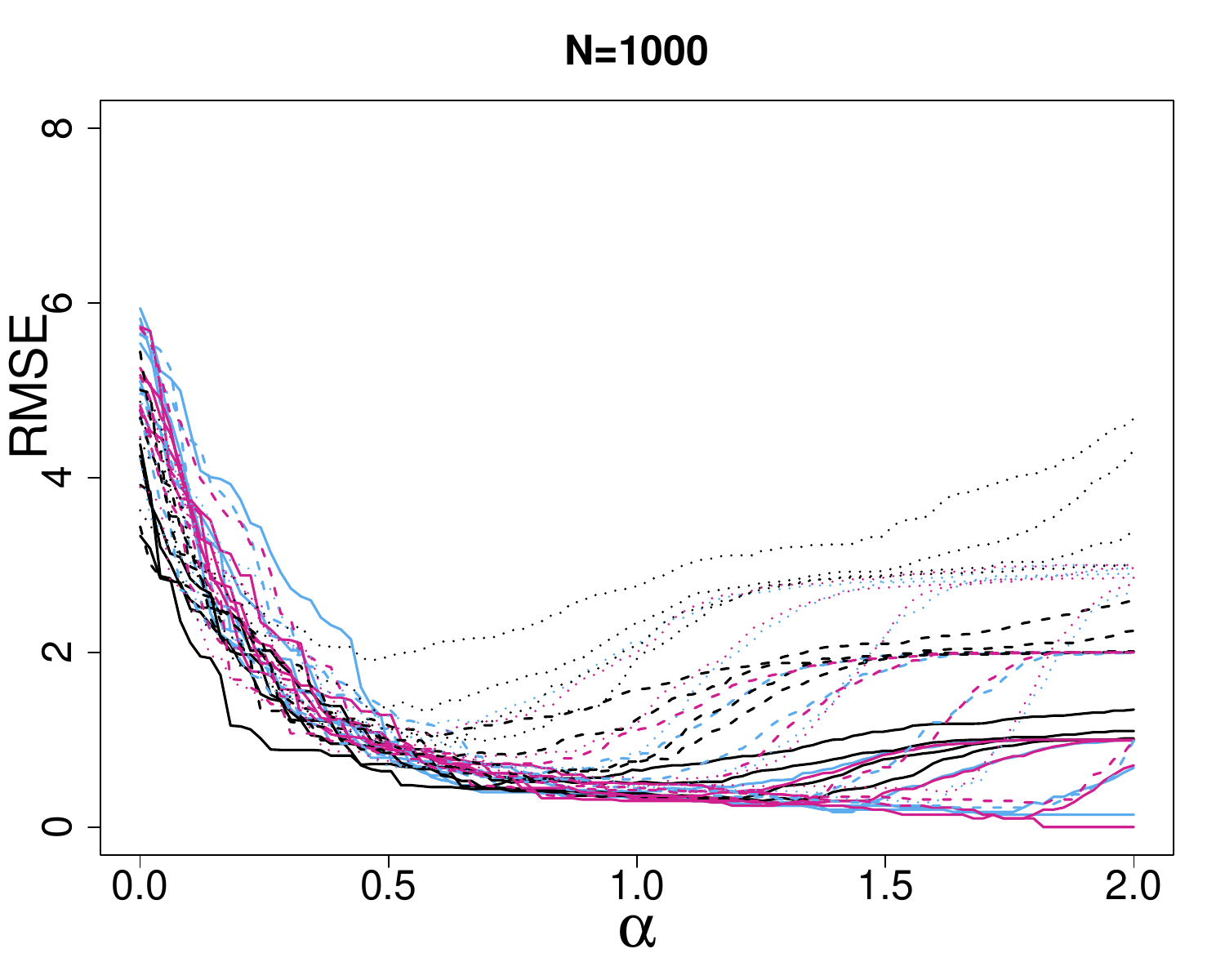}
\end{minipage}
\begin{minipage}[c]{.31\textwidth} 
\centering%
\includegraphics[width=\textwidth]{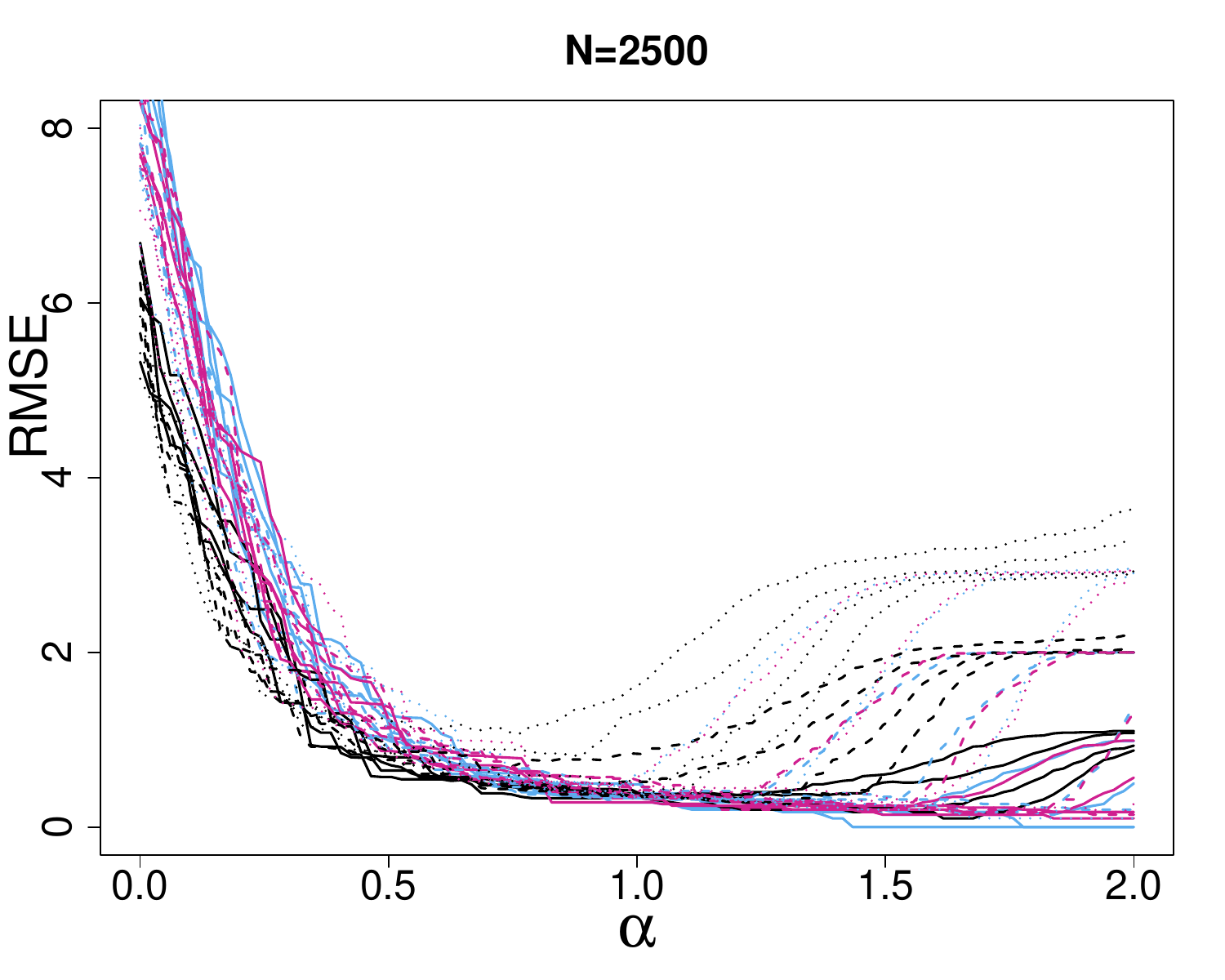}
\end{minipage}
\begin{minipage}[c]{.31\textwidth} 
\centering%
\includegraphics[width=\textwidth]{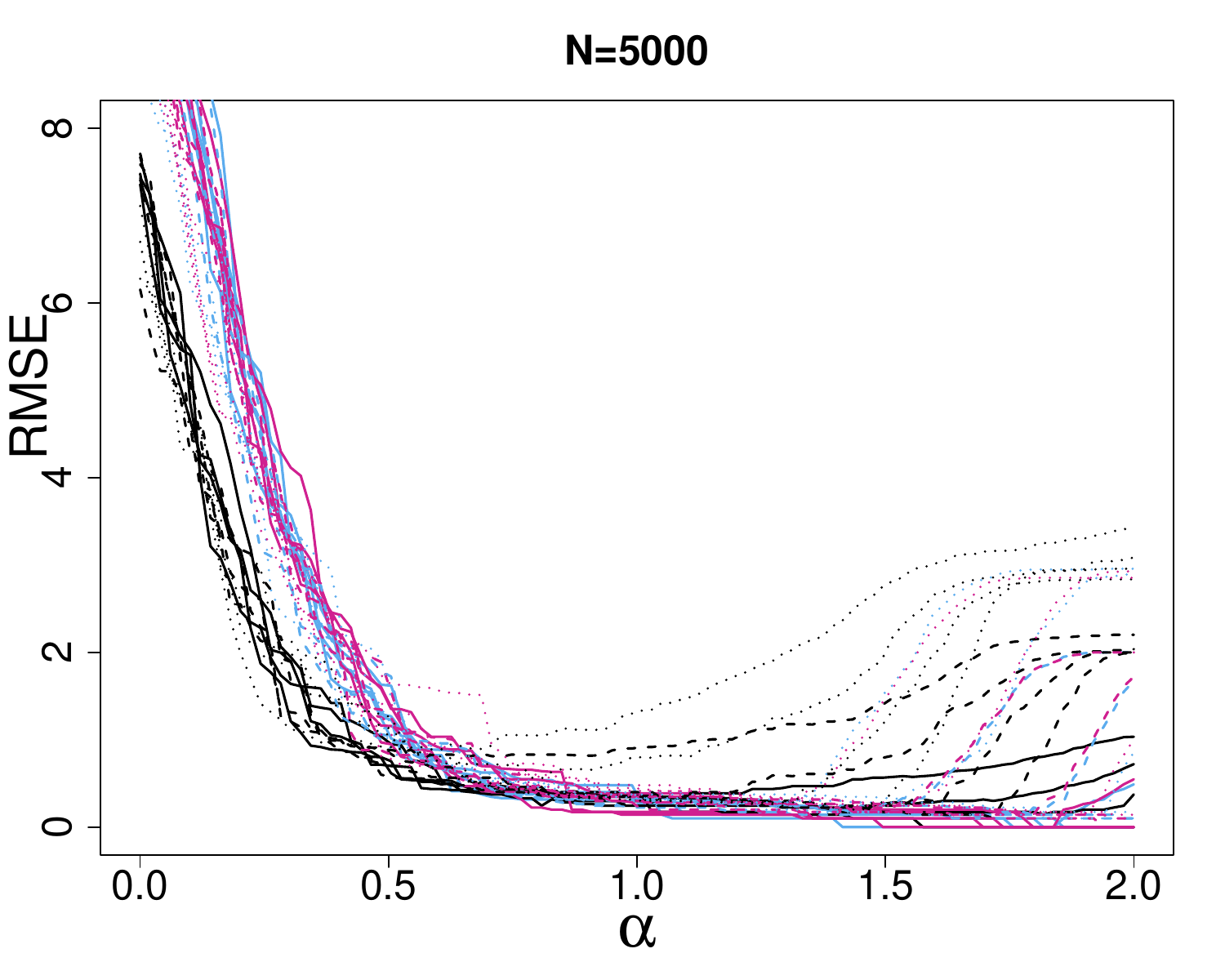}
\end{minipage}
\caption{RMSE for different values of $\alpha$ under Mahalanobis depth for the second simulation scenario.}%
\end{figure}
\begin{figure}
\begin{minipage}[c]{.31\textwidth} 
\centering%
\includegraphics[width=\textwidth]{Plots/Param_Plots/SPAT_C.pdf}
\end{minipage}
\begin{minipage}[c]{.31\textwidth} 
\centering%
\includegraphics[width=\textwidth]{Plots/Param_Plots/SPAT_C_OC_N_2500.pdf}
\end{minipage}
\begin{minipage}[c]{.31\textwidth} 
\centering%
\includegraphics[width=\textwidth]{Plots/Param_Plots/SPAT_C_OC_N_5000.pdf}
\end{minipage}\hfill\newline
\begin{minipage}[c]{.31\textwidth} 
\centering%
\includegraphics[width=\textwidth]{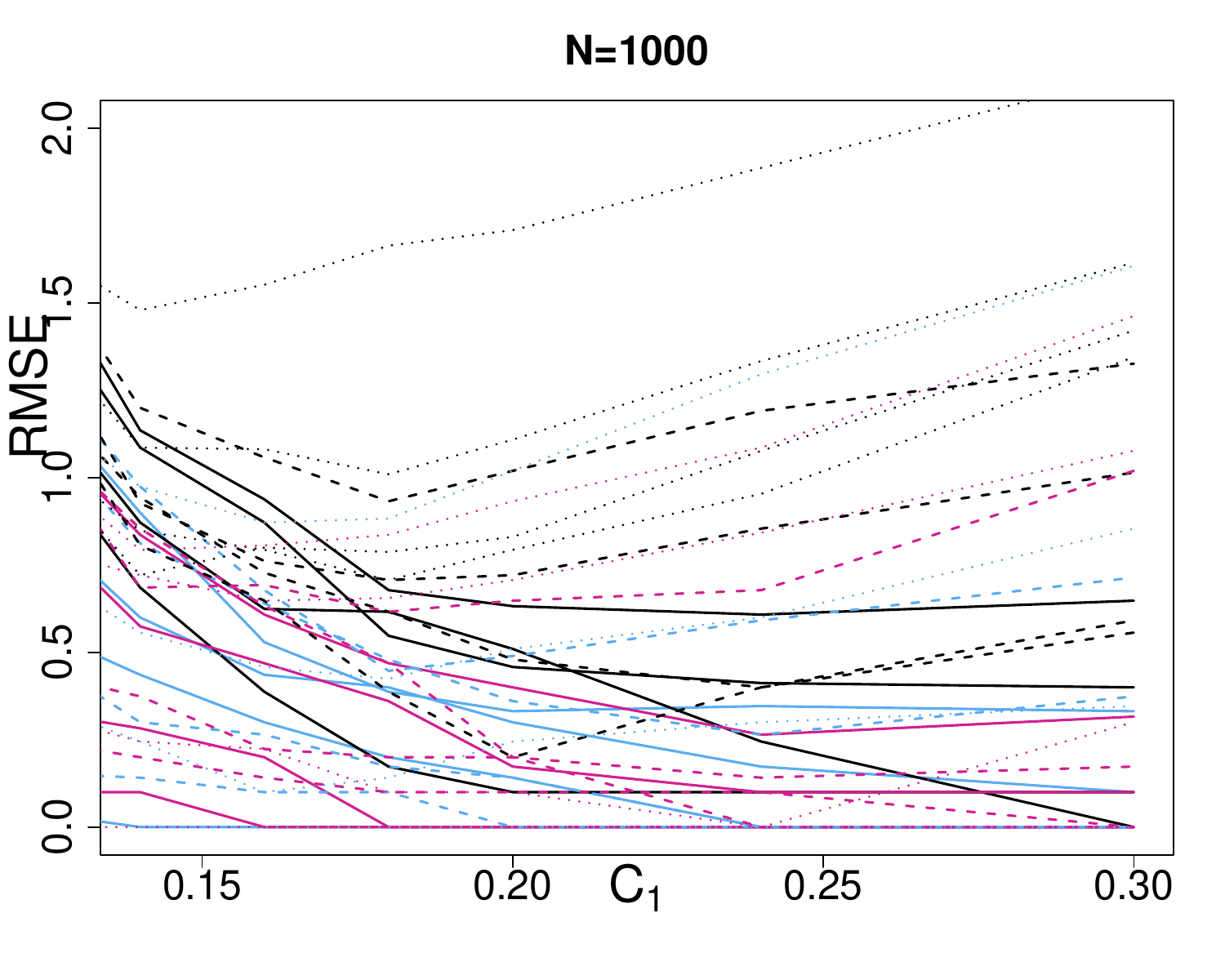}
\end{minipage}
\begin{minipage}[c]{.31\textwidth} 
\centering%
\includegraphics[width=\textwidth]{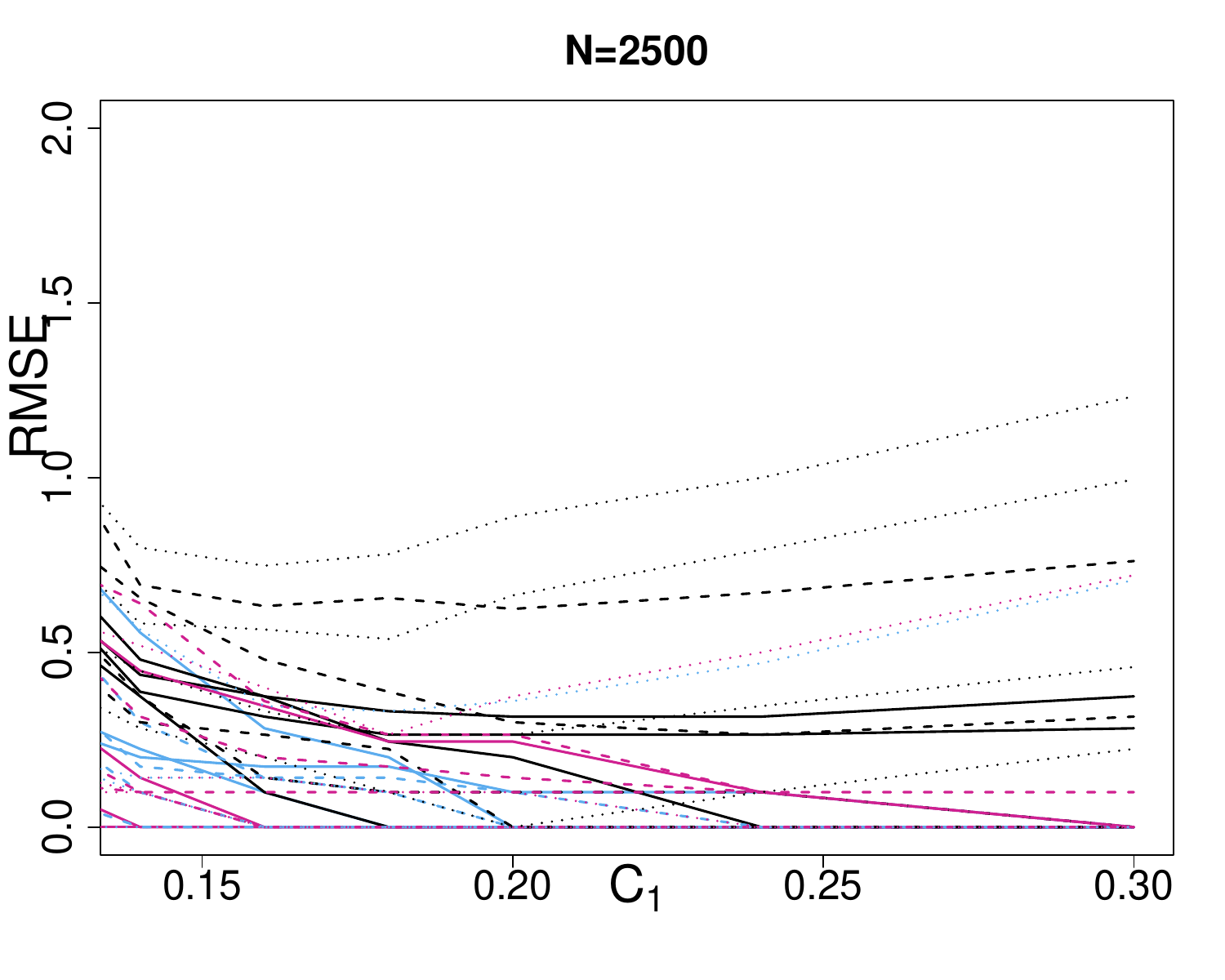}
\end{minipage}
\begin{minipage}[c]{.31\textwidth} 
\centering%
\includegraphics[width=\textwidth]{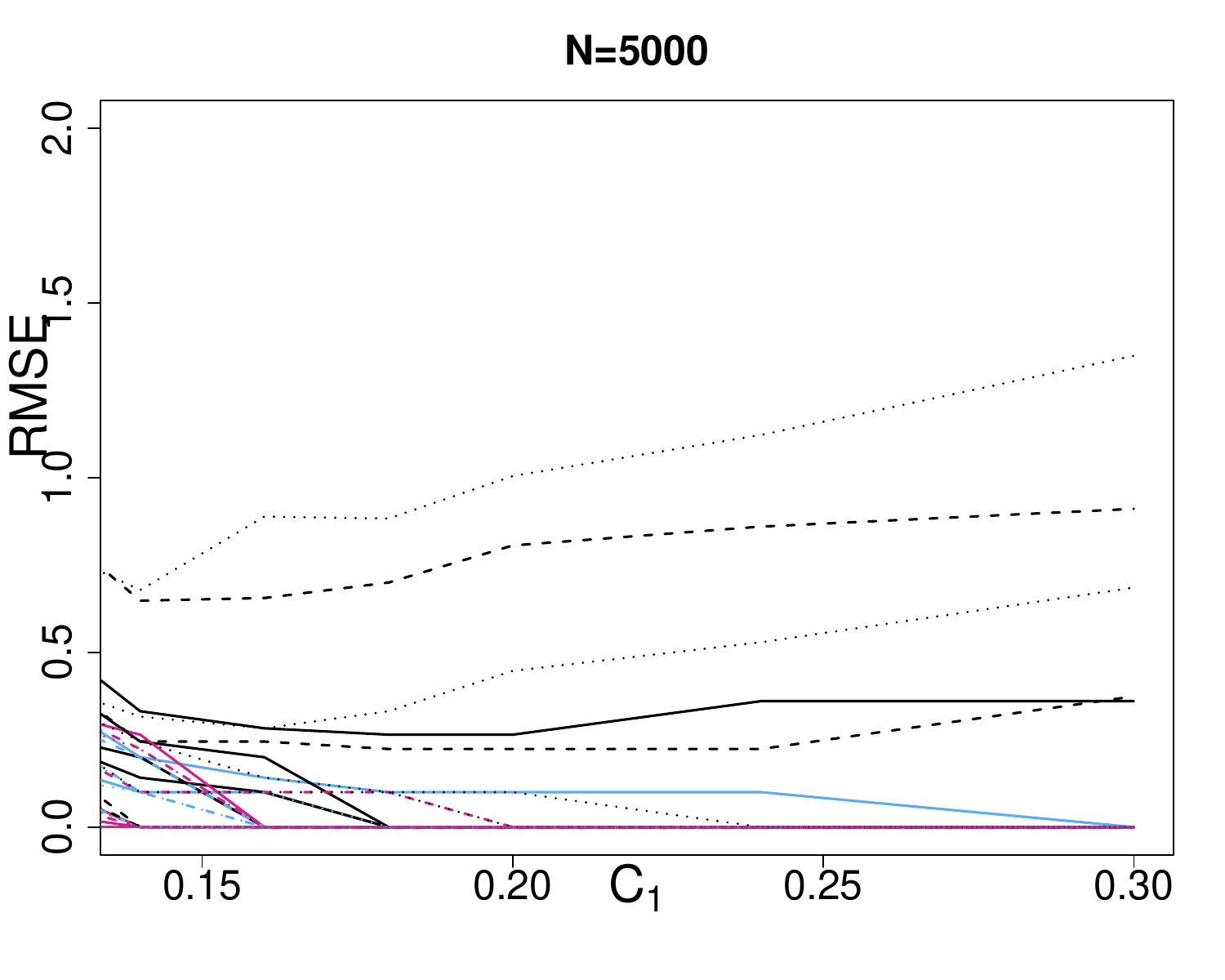}
\end{minipage}\hfill\newline
\begin{minipage}[c]{.31\textwidth} 
\centering%
\includegraphics[width=\textwidth]{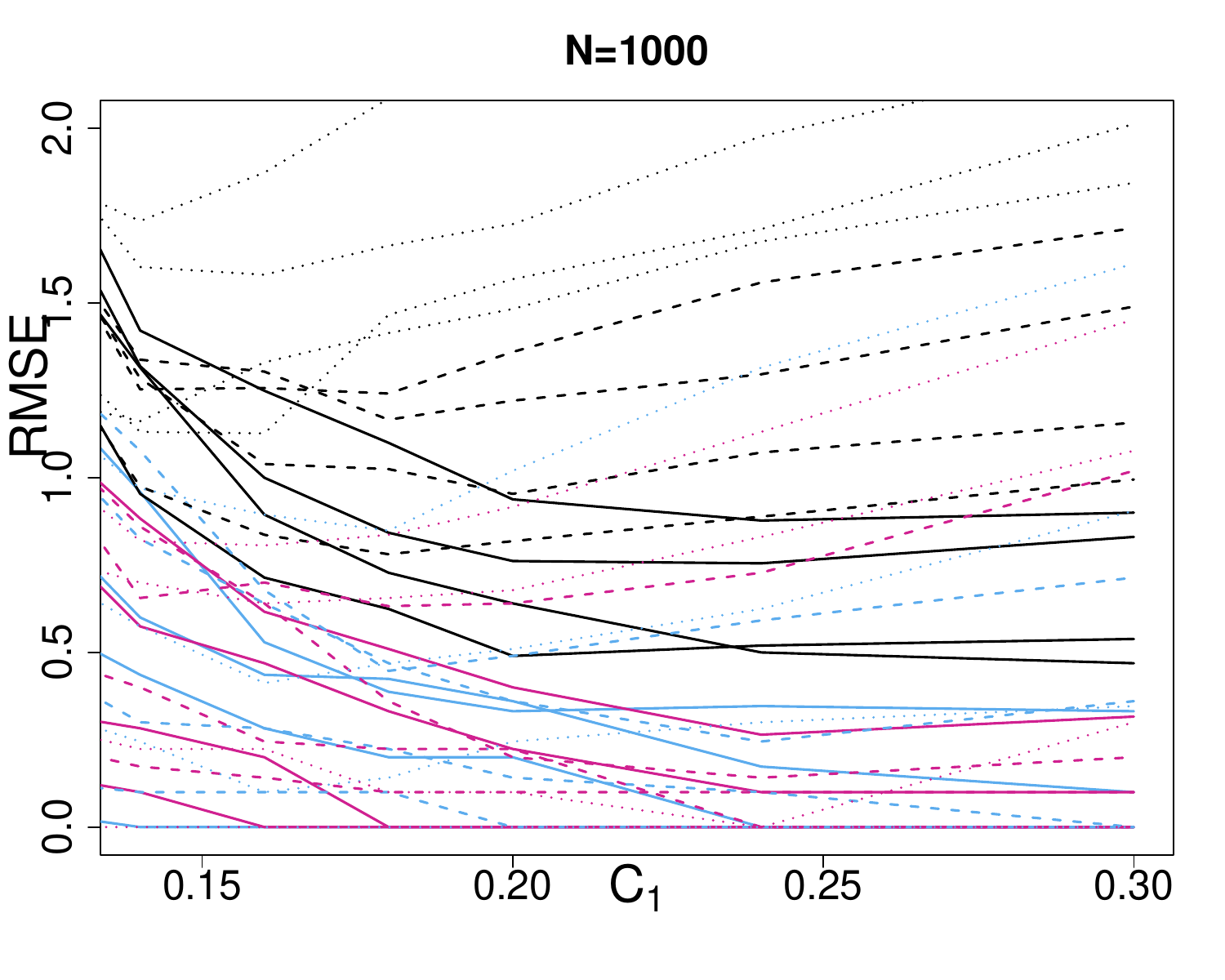}
\end{minipage}
\begin{minipage}[c]{.31\textwidth} 
\centering%
\includegraphics[width=\textwidth]{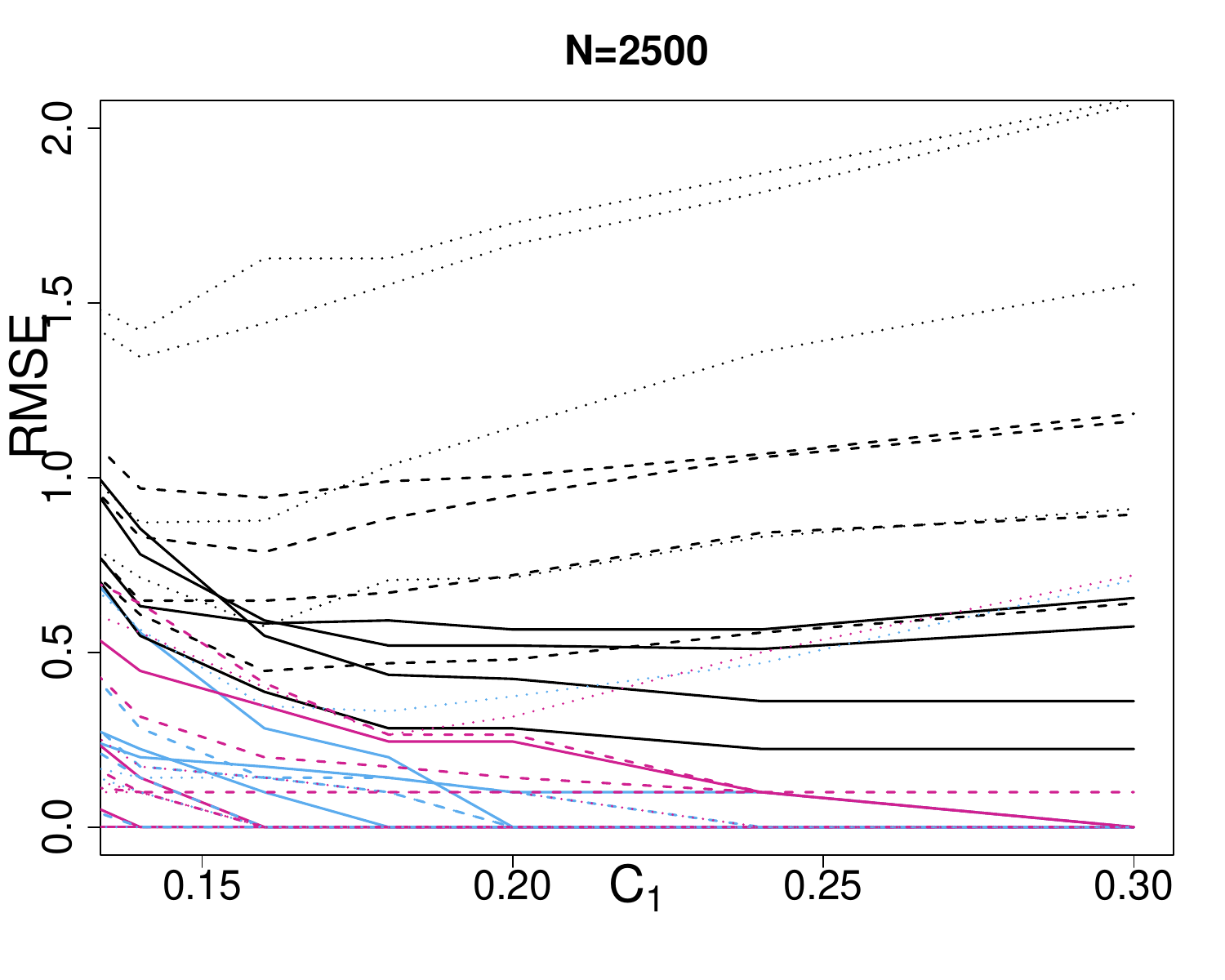}
\end{minipage}
\begin{minipage}[c]{.31\textwidth} 
\centering%
\includegraphics[width=\textwidth]{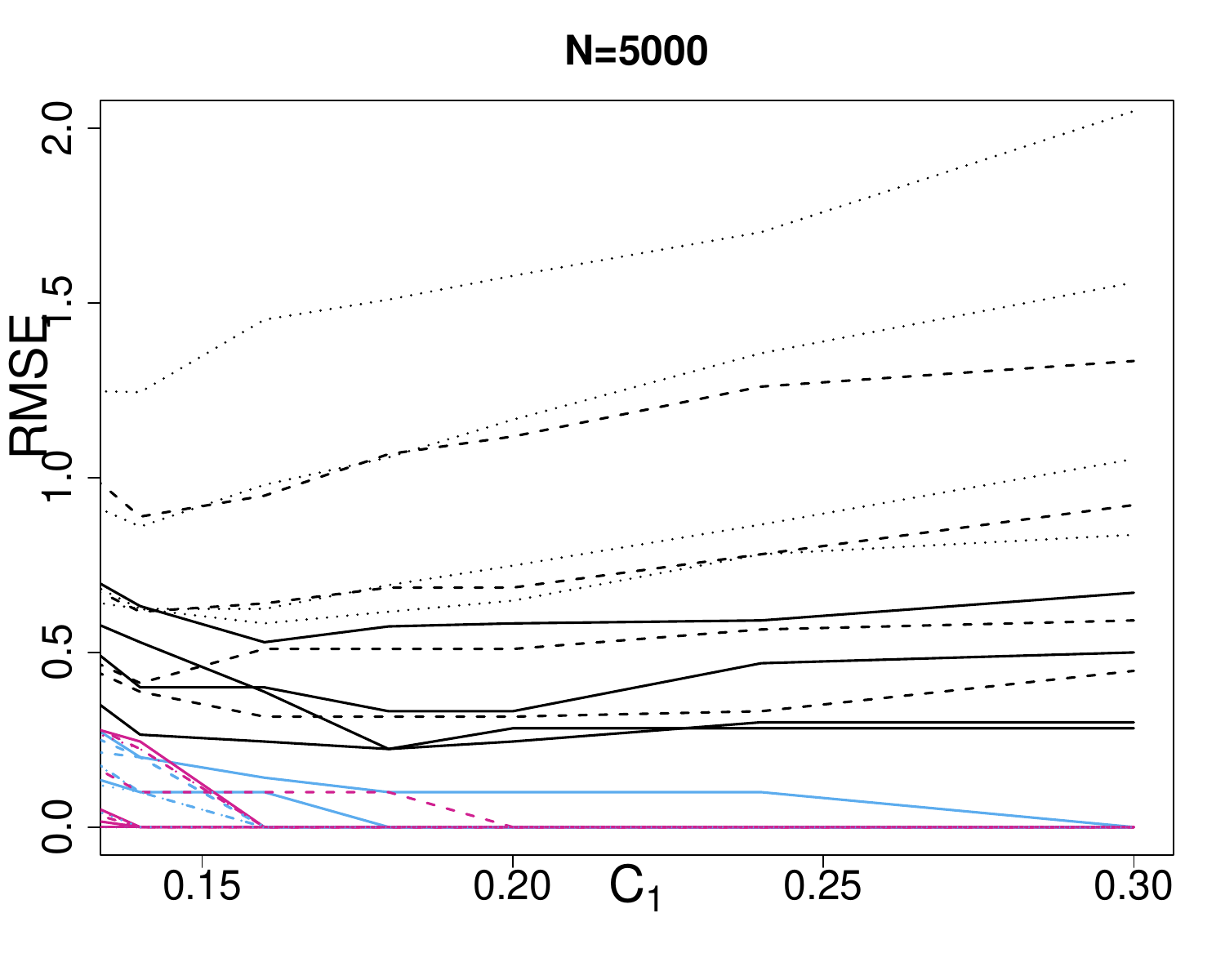}
\end{minipage}
\caption{RMSE Plots for the penalty constant $C_1$ under the first scenario for (top row) halfspace depth, (second row) spatial depth, (third row) Mahalanobis depth, (last row) Modified Mahalanobis depth.}%
\end{figure}
\begin{figure}
\begin{minipage}[c]{.31\textwidth} 
\centering%
\includegraphics[width=\textwidth]{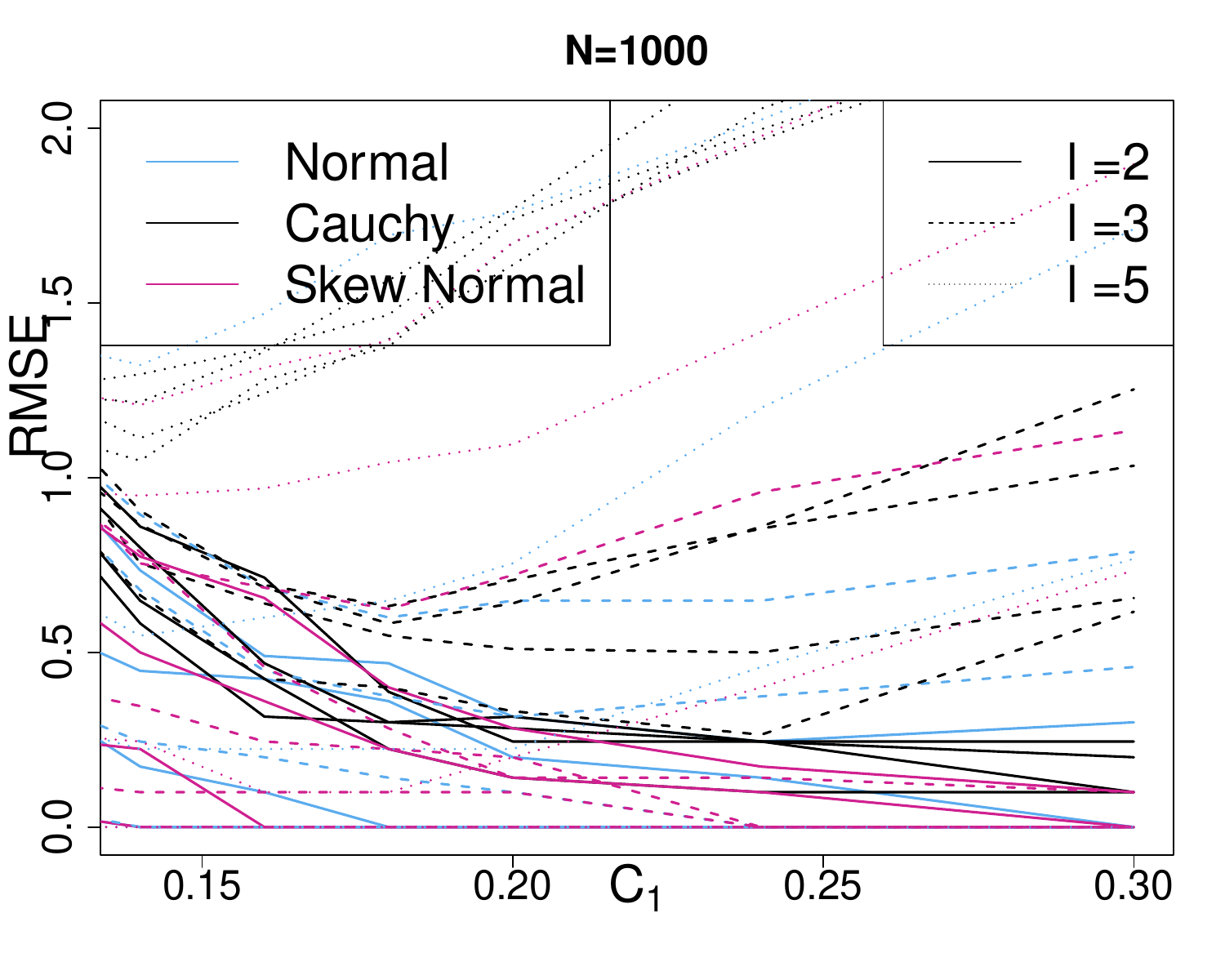}
\end{minipage}
\begin{minipage}[c]{.31\textwidth} 
\centering%
\includegraphics[width=\textwidth]{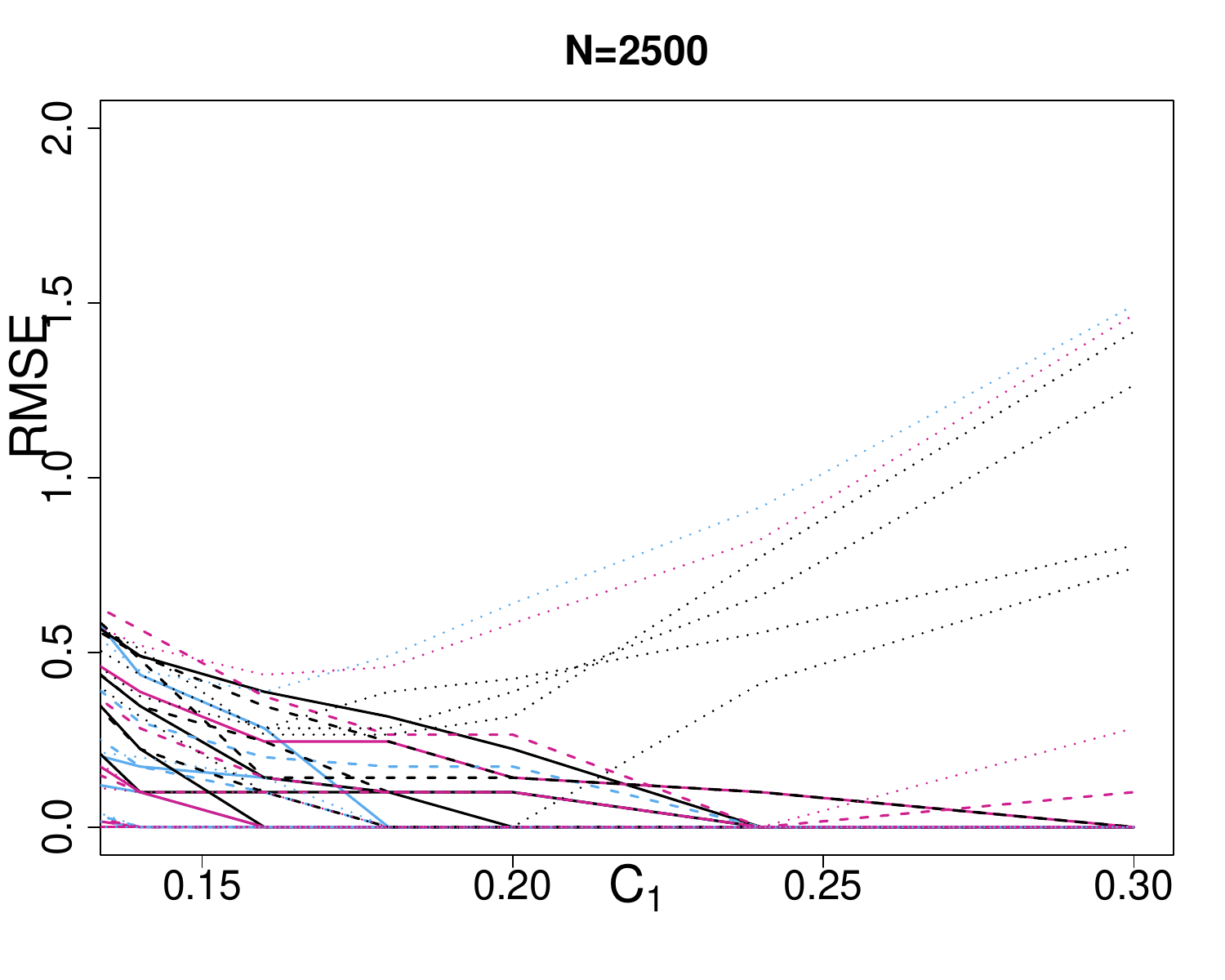}
\end{minipage}
\begin{minipage}[c]{.31\textwidth} 
\centering%
\includegraphics[width=\textwidth]{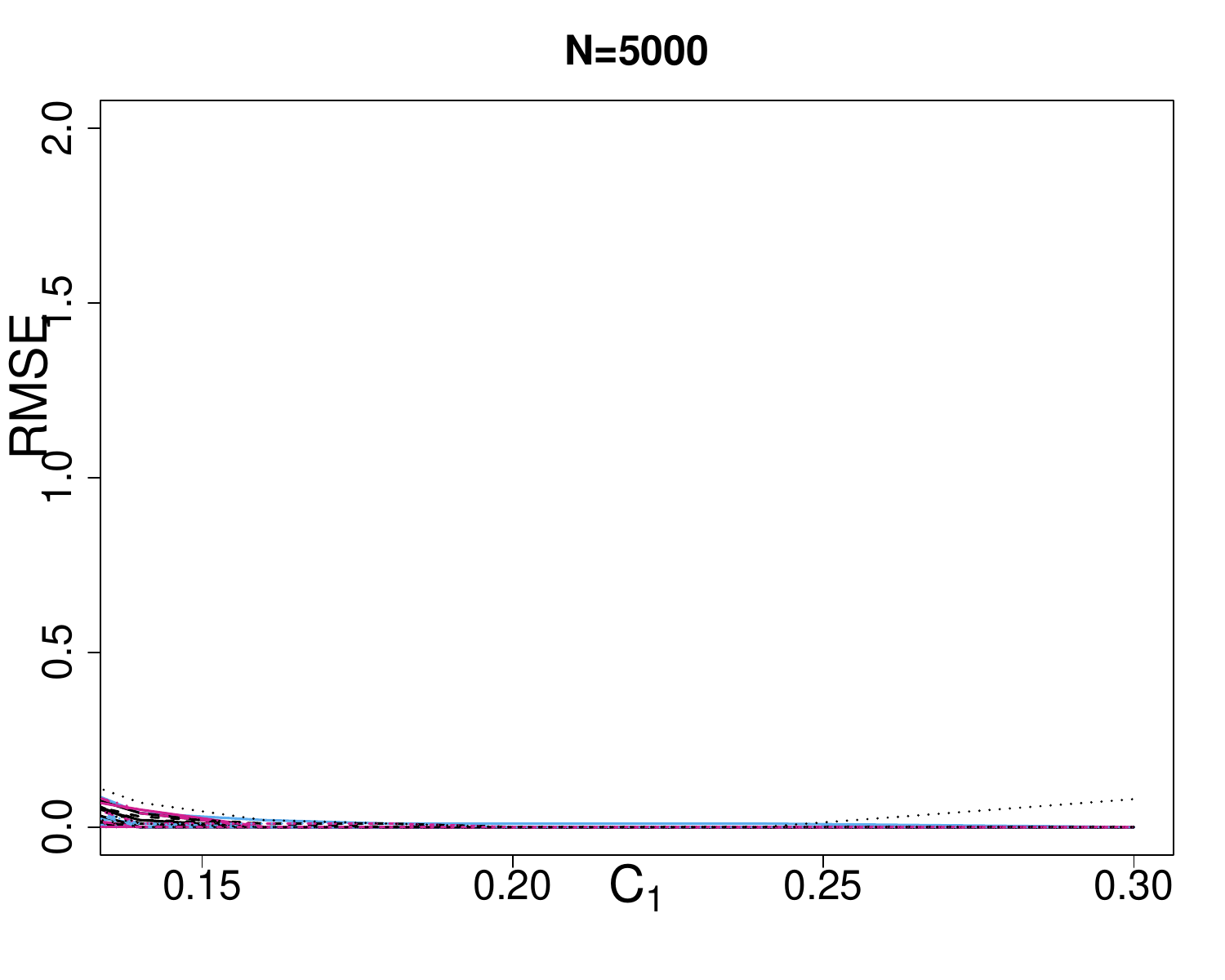}
\end{minipage}\hfill\newline
\begin{minipage}[c]{.31\textwidth} 
\centering%
\includegraphics[width=\textwidth]{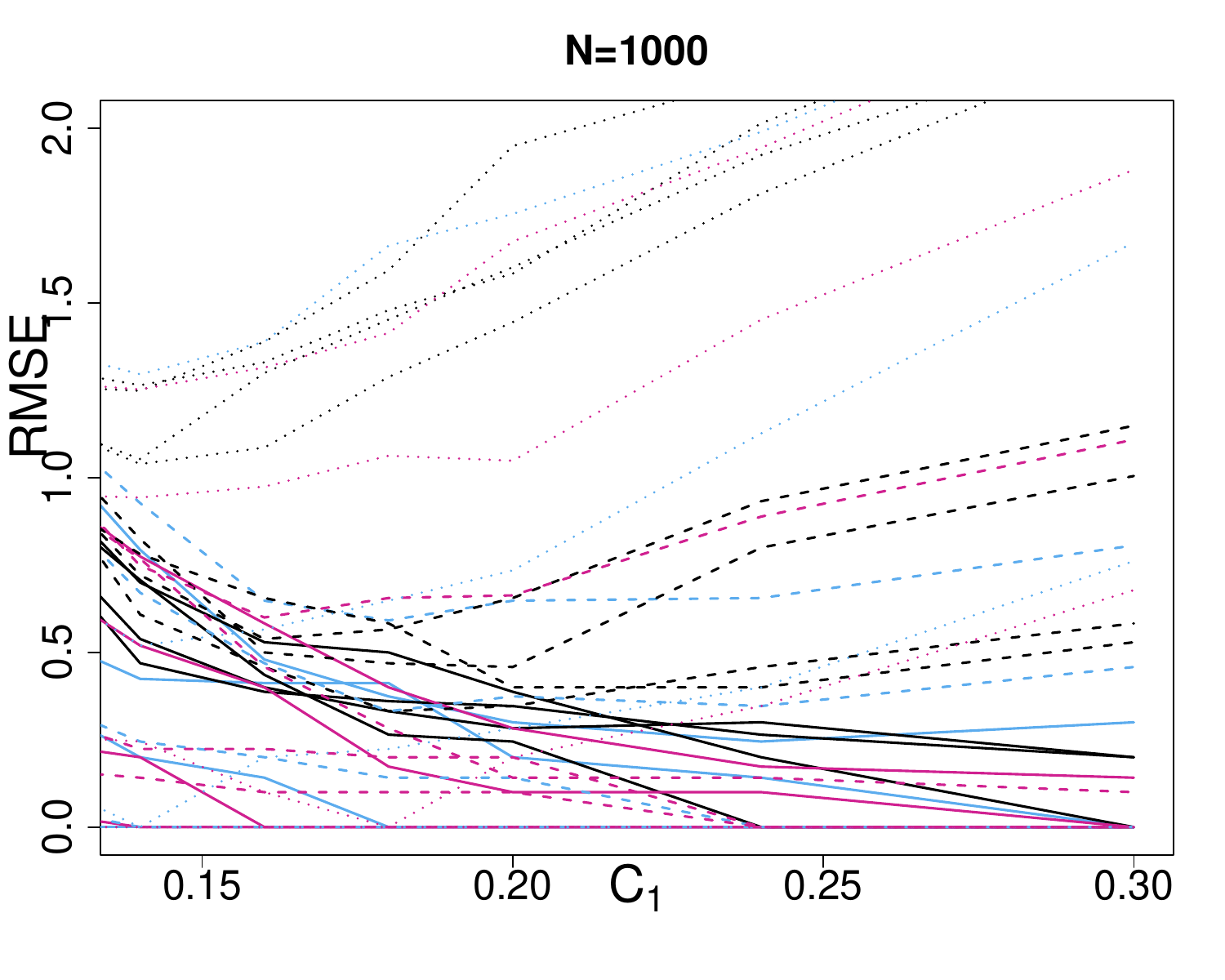}
\end{minipage}
\begin{minipage}[c]{.31\textwidth} 
\centering%
\includegraphics[width=\textwidth]{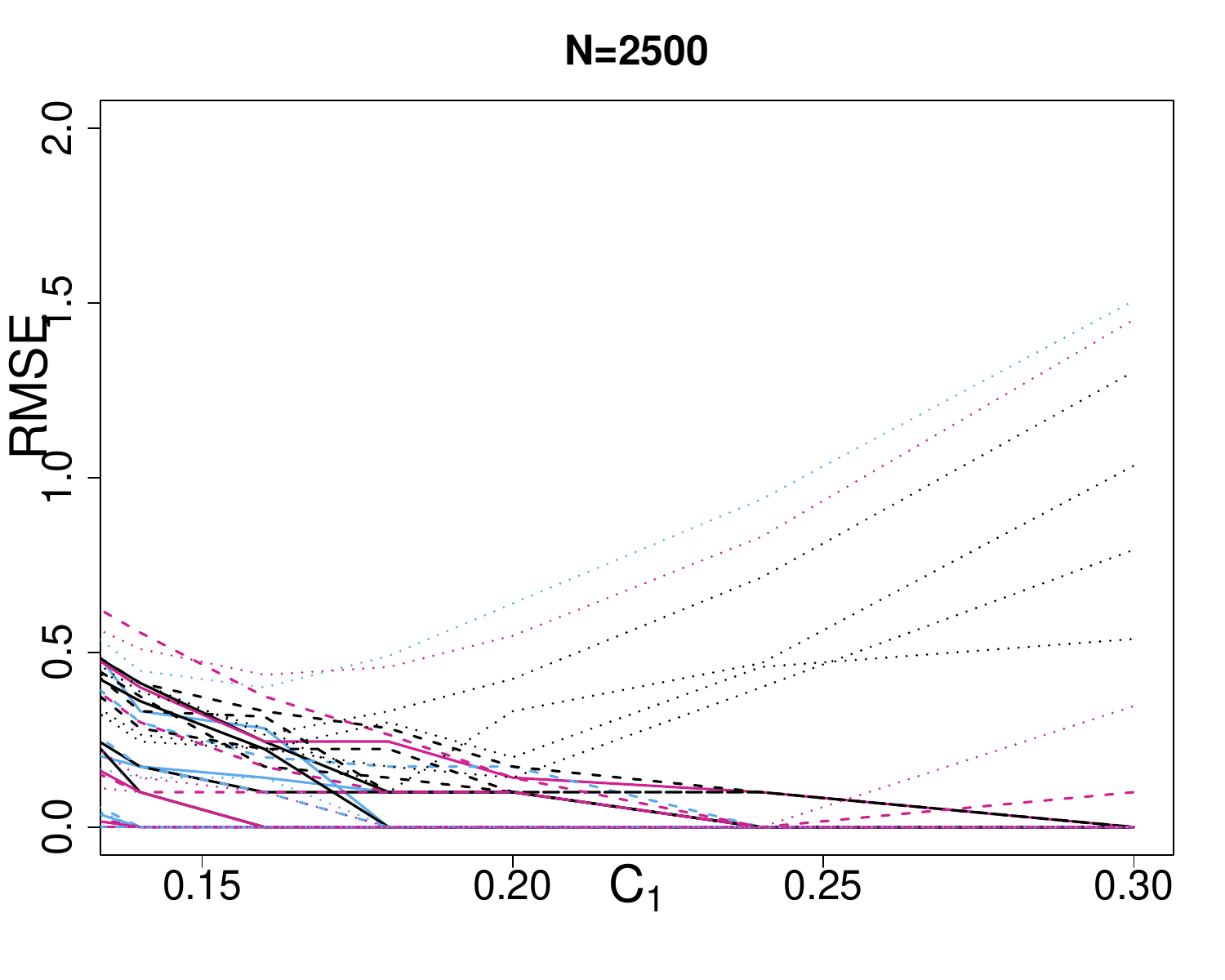}
\end{minipage}
\begin{minipage}[c]{.31\textwidth} 
\centering%
\includegraphics[width=\textwidth]{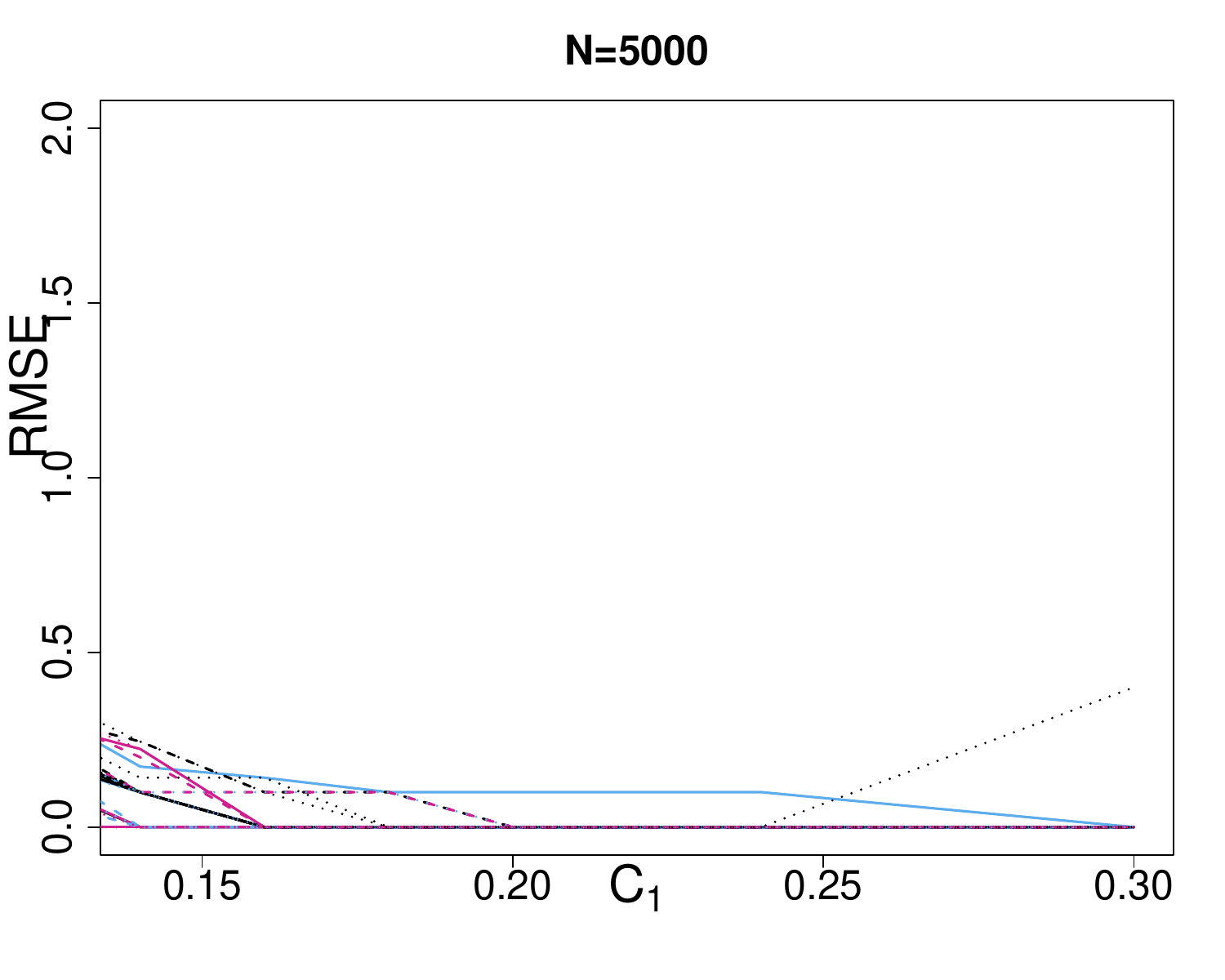}
\end{minipage}\hfill\newline
\begin{minipage}[c]{.31\textwidth} 
\centering%
\includegraphics[width=\textwidth]{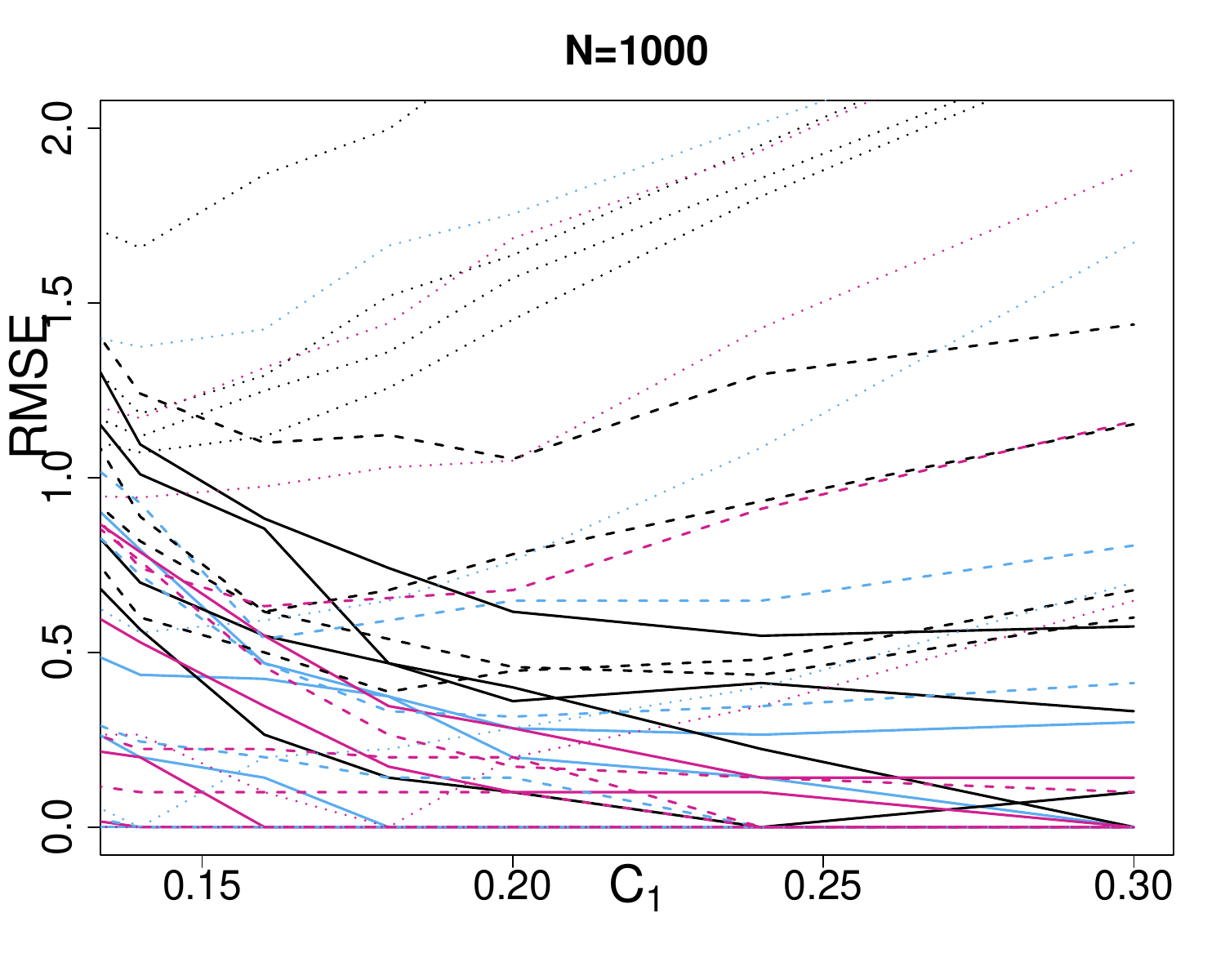}
\end{minipage}
\begin{minipage}[c]{.31\textwidth} 
\centering%
\includegraphics[width=\textwidth]{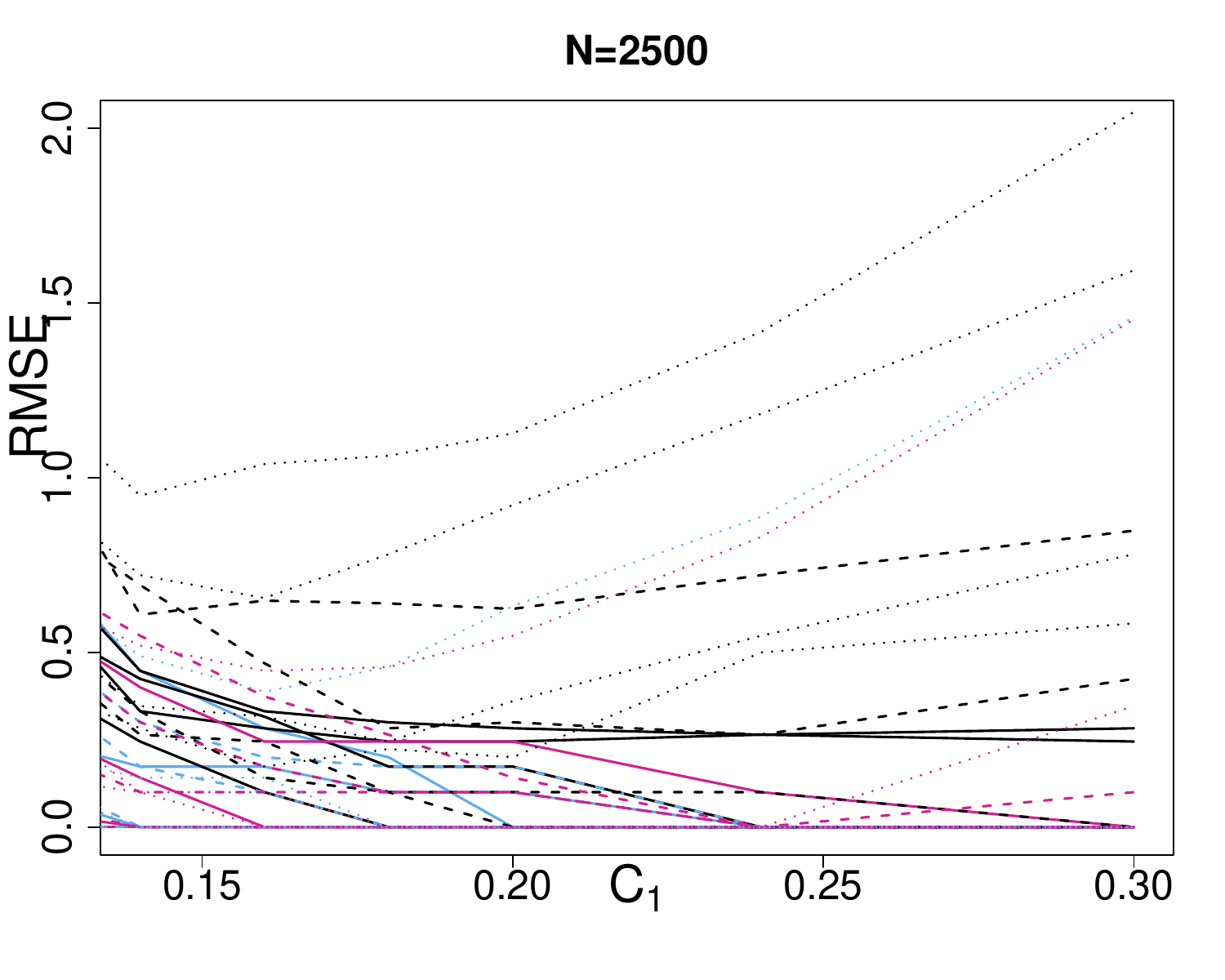}
\end{minipage}
\begin{minipage}[c]{.31\textwidth} 
\centering%
\includegraphics[width=\textwidth]{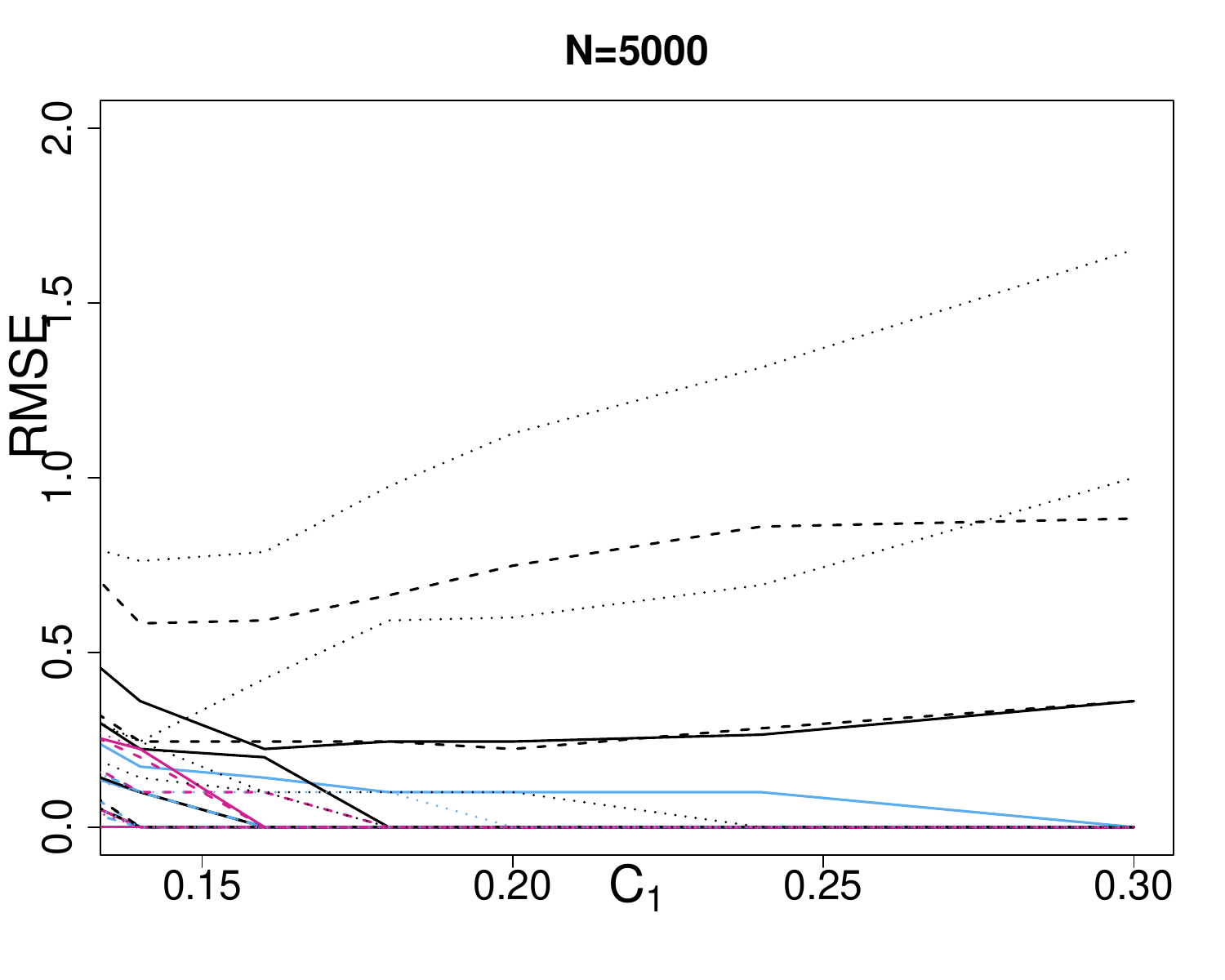}
\end{minipage}\hfill\newline
\begin{minipage}[c]{.31\textwidth} 
\centering%
\includegraphics[width=\textwidth]{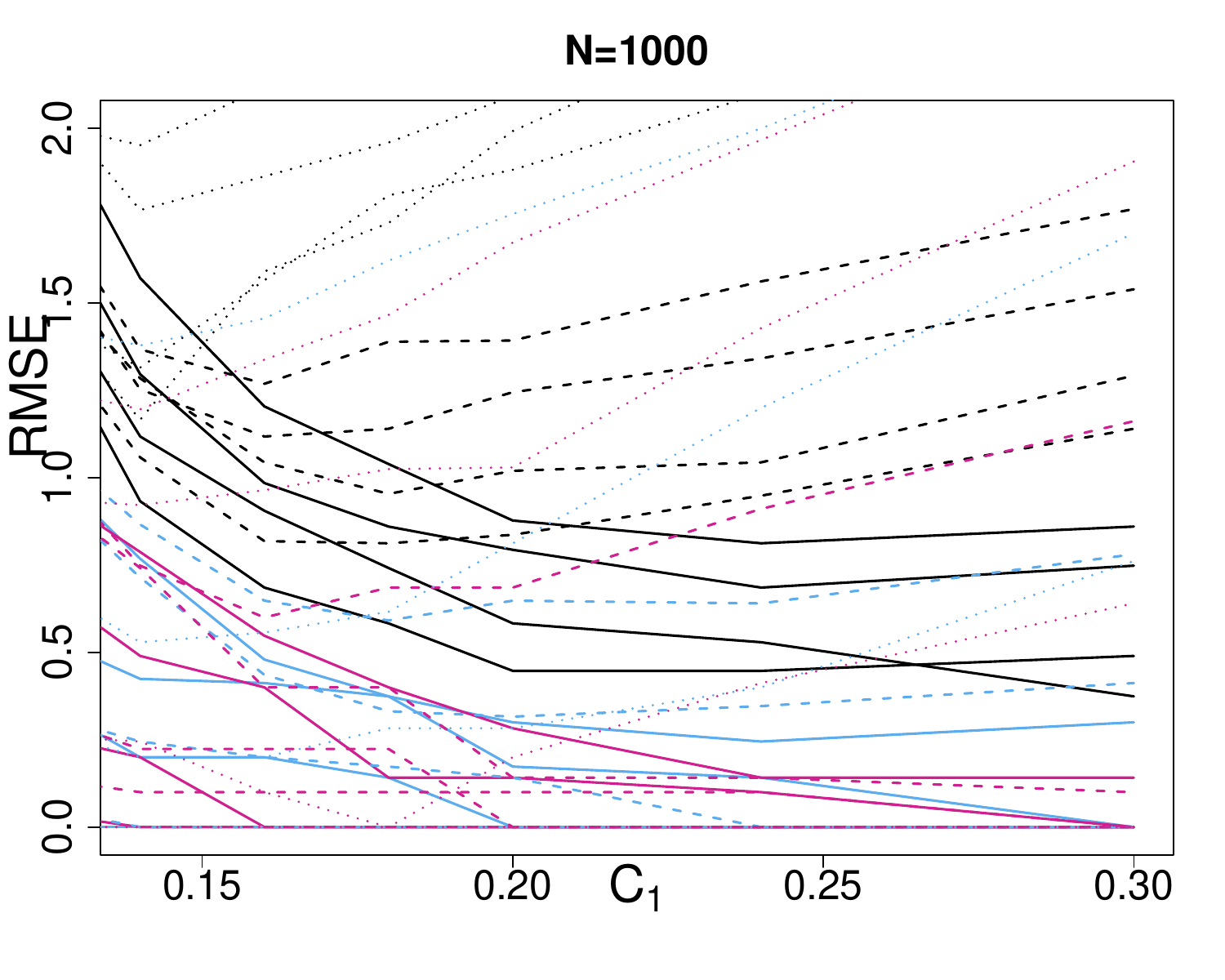}
\end{minipage}
\begin{minipage}[c]{.31\textwidth} 
\centering%
\includegraphics[width=\textwidth]{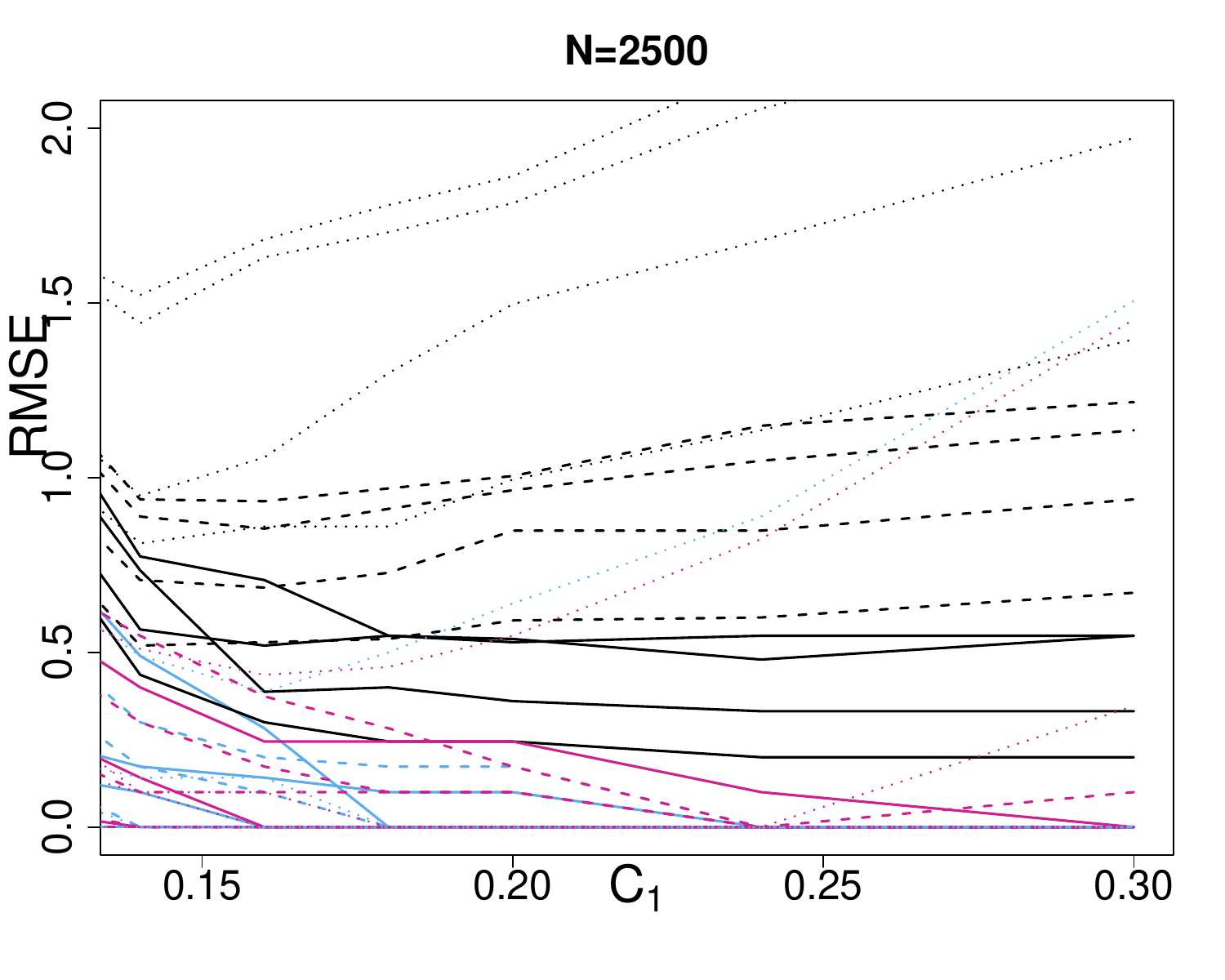}
\end{minipage}
\begin{minipage}[c]{.31\textwidth} 
\centering%
\includegraphics[width=\textwidth]{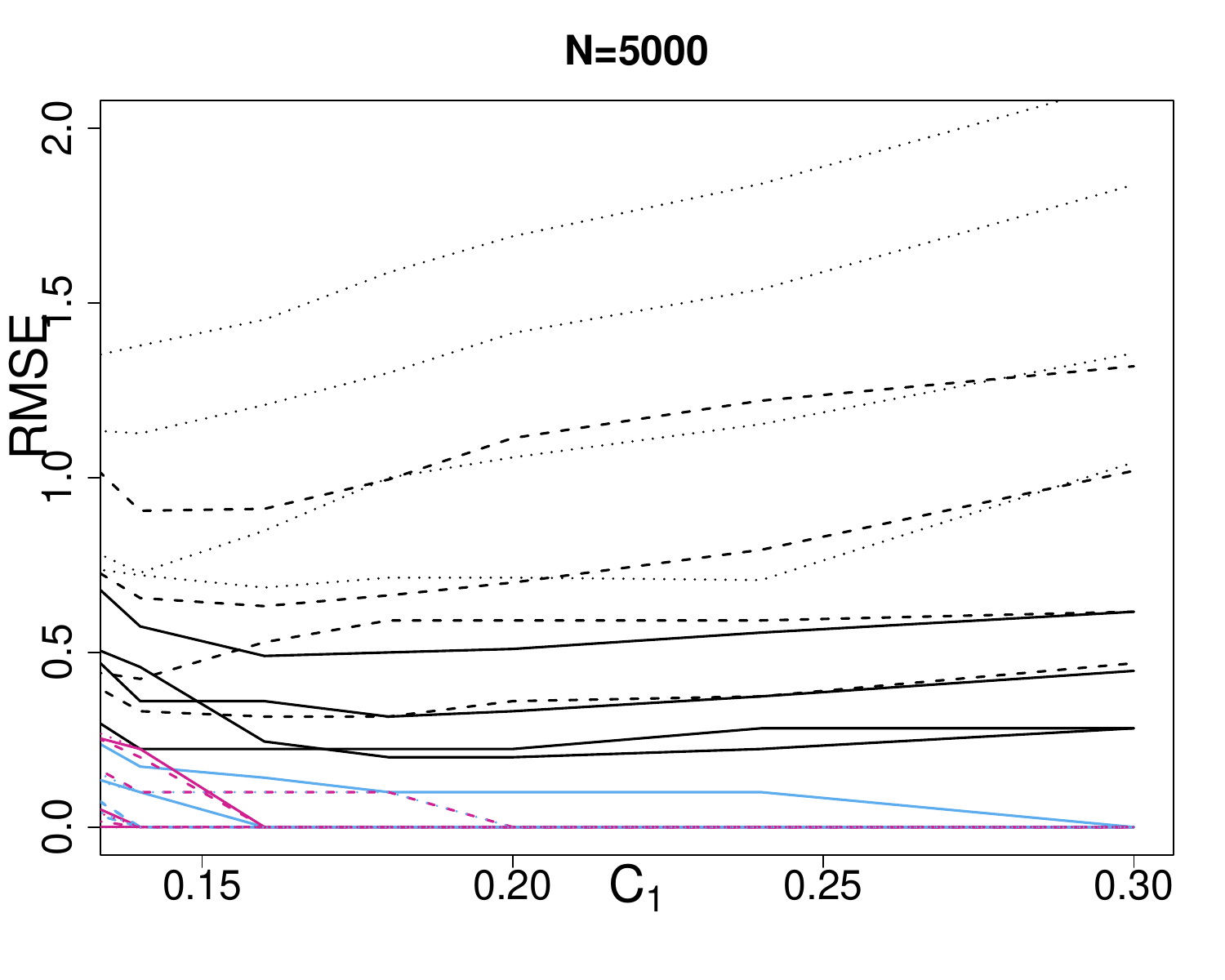}
\end{minipage}
\caption{RMSE Plots for the penalty constant $C_1$ under the second scenario for (top row) halfspace depth, (second row) spatial depth, (third row) Mahalanobis depth, (last row) Modified Mahalanobis depth.}%
\end{figure}
\end{document}